\documentclass[11pt]{article}
\usepackage{amsmath, amssymb}

\usepackage{hyperref}
\usepackage{comment}
\usepackage{authblk}

\usepackage{tikz}

\usetikzlibrary{automata, positioning, arrows,matrix, calc}
\usetikzlibrary{decorations.markings,arrows.meta,bending}
\tikzset{
        ->,  
        >=stealth',
        node distance=3cm, 
        every state/.style={thick, fill=gray!10}, 
        initial text=$ $,
        }

\tikzstyle{blocknf} = [minimum height=2em,minimum width=2em]
\tikzstyle{block} = [draw,rectangle,thick,minimum height=2em,minimum width=2em]
\tikzstyle{connector} = [->,thick]

\newtheorem{theorem}{Theorem}[section]
\newtheorem{proposition}[theorem]{Proposition}
\newtheorem{lemma}[theorem]{Lemma}
\newtheorem{corollary}[theorem]{Corollary}
\newtheorem{DefAux}[theorem]{Definition}
\newenvironment{definition}{\begin{DefAux} \rm}{\end{DefAux}}
\newtheorem{ExAux}[theorem]{Example}
\newenvironment{example}{\begin{ExAux} \rm}{\end{ExAux}}
\newtheorem{NotAux}[theorem]{Notation}

\newtheorem{ProbAux}[theorem]{Problem}

\newcommand{\tombstone}{\smash{\qquad\qquad \mbox{$\Box$}\medskip}}
\newenvironment{proof}{\smallskip\noindent{\bf Proof.\ }}{\tombstone 

\medskip}

\newcommand{\rowmult}{\mathbin{\begin{tikzpicture}
  \def\pLength{3.5pt}   
  \def\startWidth{0.0pt}
  \def\endWidth{.6pt}  
  \foreach \angle in {0, 45, 135, 180, 225, 315} {
    \begin{scope}[rotate=\angle]
      \filldraw[draw=black, fill=black, ultra thin] (0, \startWidth) 
        -- (\pLength-\endWidth, \endWidth) 
        arc (120:-120:\endWidth) 
        -- (0, -\startWidth) 
        -- cycle;
    \end{scope}
  }
\end{tikzpicture}}}

\newcommand{\colmult}{\mathbin{\begin{tikzpicture}
  \def\pLength{3.5pt}   
  \def\startWidth{0.0pt}
  \def\endWidth{.6pt}  
  \foreach \angle in { 45, 90, 135,  225, 270, 315} {
    \begin{scope}[rotate=\angle]
      \filldraw[draw=black, fill=black, ultra thin] (0, \startWidth) 
        -- (\pLength-\endWidth, \endWidth) 
        arc (120:-120:\endWidth) 
        -- (0, -\startWidth) 
        -- cycle;
    \end{scope}
  }
\end{tikzpicture} }}

\newcommand{\mmmult}{\mathbin{\begin{tikzpicture}
  \def\pLength{3.5pt}   
  \def\startWidth{0.0pt}
  \def\endWidth{.6pt}  
  \foreach \angle in {0, 45, 90, 135, 180, 225, 270, 315} {
    \begin{scope}[rotate=\angle]
      \filldraw[draw=black, fill=black, ultra thin] (0, \startWidth) 
        -- (\pLength-\endWidth, \endWidth) 
        arc (120:-120:\endWidth) 
        -- (0, -\startWidth) 
        -- cycle;
    \end{scope}
  }
\end{tikzpicture} }}

\newcommand{\Idc}{I_c}
\newcommand{\Idr}{I_r}
\newcommand{\ic}{{\mathsf{i}_{c}}}
\newcommand{\ir}{{\mathsf{i}_{r}}}
\newcommand{\ltacti}{\overset{\kern-2pt\scriptscriptstyle{\circ}}{\curvearrowright}}
\newcommand{\rtacti}{\overset{\kern2pt\scriptscriptstyle{\circ}}{\curvearrowleft}}
\newcommand{\ltact}{\mathrel{\curvearrowright}}
\newcommand{\rtact}{\mathrel{\curvearrowleft}}
\newcommand{\transp}{T}
\newcommand{\take}[1]{{\langle #1 \rangle}}
\newcommand{\un}[1]{\mathfrak{U}(#1)}
\newcommand{\ut}[1]{\mathfrak{UT}(#1)}

\newcommand{\Z}{\mathbb Z}
\newcommand{\Aut}{\mathop{\rm Aut}\nolimits}

\newcommand{\Sym}{\mathop{\rm Sym}\nolimits}
\newcommand{\Tr}{\mathop{\rm Tr}\nolimits}

\newcommand{\A}{\mathcal A}

\newcommand{\dz}{\mathfrak d}
\newcommand{\iz}{\mathfrak i}

\newcommand{\gridpoints}[2]
{\foreach \y in {1,..., #1} \foreach \x in {1,...,#2}
   \fill[black] (\x,\y) circle(5pt) ; }     

\newcommand*\circled[1]{\tikz[baseline=(char.base)]{
            \node[shape=circle,draw,inner sep=1pt] (char) {#1};}}

\title{Combinatorial structures connecting Latin squares and bireversible automata}

\author[1]{Brian Curtin}
\affil[1]{Department of Mathematics and Statistics\\
        University of South Florida\\
        4202 E Fowler Ave\\
        Tampa, FL 33620-5700\\
        \href{mailto:bcurtin@usf.edu}{bcurtin@usf.edu}, \href{mailto:savchuk@usf.edu}{savchuk@usf.edu}}

\author[1]{Dmytro Savchuk}

\begin{document}

\maketitle

\begin{abstract} 
This paper explores the theory of letter transducers, Mealy automata, and bireversible automata from a combinatorial perspective analogous to the theory of Latin squares.
We view the sets of transitions of letter transducers as analogs of orthogonal arrays, and discuss two other combinatorial encodings of Mealy automata analogous to orthogonal pairs of Latin squares and to $(k,n)$-nets. 
We characterize various classes of automata (Mealy, reversible, invertible, bireversible) in terms of these combinatorial structures. 
In particular, we represent the inversion and dualization of transducers as parastrophisms. Further, similarly to the notion of the isotopisms of the quasigroups associated to Latin squares, we develop the notion of isotopisms of letter transducers generalizing transducer symmetry and preserving the class of bireversible automata.
\end{abstract}

\tableofcontents


\section{Introduction}
\label{sec:introduction}

We develop connections between certain families of automata and Latin square-like combinatorial structures.
As we specialize from letter transducers to  Mealy automata to bireversible Mealy automata, the analogy between the corresponding combinatorial structures  and orthogonal Latin squares/orthogonal arrays/$(k,n)$-nets grows stronger.  
Our primary goal is to build a bridge between automata theory and combinatorial design theory in the hope that both areas will benefit from the connections.  With this in mind, we discuss background material for the benefit of those who may not be familiar with one side or the other of this connection.  We also take care in adapting several concepts from the theory of Latin squares to new combinatorial structures relevant to automata.  Of particular note is that every pair of orthogonal Latin squares encodes a bireversible automaton.  

\subsection{Discussion of automata in connection with group theory}

Letter transducers and Mealy automata have been central objects of study in computer science and formal language theory since the middle of the last century. In this paper, we take our motivation from the algebraic side of these objects, namely, from groups and semigroups generated by automata (specifically, by Mealy automata). These groups were formally introduced in the beginning of the 1960s~\cite{horejs:automata}, but it took a while to recognize their importance, utility, and, at the same time, complexity. The first understanding of this class presented in the 1980s when these groups provided counterexamples to several long-standing conjectures in group theory~\cite{grigorch:burnside,grigorch:milnor,grigorch_lsz:atiyah,grigorch_z:basilica}. Later, it became clear that this class of groups had connections to other areas of mathematics, such as holomorphic dynamics~\cite{nekrash:self-similar,bartholdi_n:rabbit}, combinatorics~\cite{grigorchuk-s:hanoi-cr}, analysis on graphs~\cite{grigorch_s:hanoi_spectrum}, computer science~\cite{cain:automaton_semigroups,miasnikov_s:cayley_automatic11,miasnikov_s:automatic_graph}, cryptography~\cite{myasnikov_u:random_subgroups08,myasnikov_su:non_commutative_crypto_book11,garzon_z:crypto,petrides:cryptoanalysis_grigorchuk} and coding theory~\cite{cull_n:hanoi_codes99,grigorchuk-s:hanoi-cr}, and $p$-adic analysis and dynamics~\cite{anashin:automata12,ahmed_s:polynomial_ergodicity,grigorch_s:mealy_moore}. 

Groups generated by automata are especially interesting from an algorithmic point of view: Many basic questions do not have general answers for the whole class but admit partial solutions in some cases. For example, the conjugacy problem is known to be generally undecidable for automaton groups~\cite{sunic_v:conjugacy}, but there are solutions of this problem in multiple specific cases (e.g.,~\cite{modi_su:linear_time_algorithm_conjugacy_grigorchuk21} offers a linear time solution to the conjugacy problem in the Grigorchuk group, and~\cite{bondarenko_bsz:conjugacy} offers a solution to this problem in the group of all bounded automata in the sense of Sidki). Several partial algorithms (finding the order of an element, fast solution of a word problem in contracting groups, etc.) are based on the so-called contraction property which uses the length shortening of group elements representing the states of automata as the length of the input word increases. This property is also responsible for most of the extraordinary examples of groups in this class, such as Burnside groups and groups of intermediate growth. Not all automaton groups possess this property and, perhaps, the most visible classes of this type are those that are generated by reversible and bireversible automata introduced in~\cite{macedonska-n-s:comm} to describe the abstract commensurator group of the free nonabelian groups. Many partial algorithms implemented in GAP packages~\cite{muntyan_s:automgrp,bartholdi:fr} often fail for groups generated by such automata. 

Bireversible automata and related classes represent the main topic of this study. We informally introduce these objects now and give a formal treatment of this topic in Section~\ref{sec:automata}. An automaton $\A$ is called \emph{invertible} if for each state the induced map on letters is invertible. In this case, the \emph{inverse} automaton $\iz\A$ is defined by ``swapping the input and output'' of each transition.
For every $n$-state $m$-letter letter transducer $\A$ one can define an $m$-state $n$-letter \emph{dual} automaton $\dz\A$ by ``swapping the roles'' of letters of the alphabet and the states of an automaton. That is, there is a transition $q\stackrel{x/y}{\longrightarrow}w$ in $\A$ for states $q,w$ of $\A$ and letters $x,y$ from the alphabet if and only if there is a transition $x\stackrel{q/w}{\longrightarrow}y$ in $\dz\A$. An automaton $\A$ is called \emph{reversible} if its dual $\dz\A$ is invertible, and it is called \emph{bireversible} if $\A$, $\dz A$ and $\dz(\iz\A)$ are all invertible. 

The first examples of bireversible automata go back to the work of Aleshin~\cite{aleshin:free}, in which he constructed two bireversible automata with 3 and 5 states, respectively, and claimed that two states of these automata generate the free group of rank 2. Although the original proof contained gaps, it is remarkable that Aleshin was able to pick the only possible 3-state 2-letter automaton among 194 nonsymmetric ones (see~\cite{bondarenko_gkmnss:full_clas32_short}) generating a nonabelian free group.  The statement about the freeness was finally established in~\cite{vorobets:aleshin} 25 years later. This 3-state automaton, now called the Aleshin automaton, initiated several generalizations~\cite{vorobets:series_free,steinberg_vv:series_free,savchuk_v:free_prods} constructing series of bireversible automata generating free groups of higher ranks and free products of groups of order two.

While the proof of freeness in the above-mentioned papers was based on certain transitivity properties of the groups generated by dual automata, another source of bireversible automata generating free groups came from topology in a pioneering work of Glasner and Mozes~\cite{gl_mo:compl}, who were the first to analyze automaton groups via the fundamental groups of the associated $\mathcal{VH}$ square complexes, which are built by gluing square cells that correspond to the transitions in the automaton. This topological approach allowed the authors of~\cite{gl_mo:compl} to construct explicit examples of free groups and groups with Kazhdan property (T) generated by bireversible automata. It was also shown in that paper that an automaton $\mathcal A$ is bireversible if and only if the associated 1-vertex $\mathcal{VH}$ square complex $\mathcal S_\A$ is complete (the link at the vertex of $\mathcal S_\A$ is a complete bipartite graph), and if and only if the universal cover of $\mathcal S_\A$ is a product of two trees (which is equivalent to the universal cover being CAT(0)). We elaborate more on results connecting automaton groups and $\mathcal{VH}$ square complexes in Subsection~\ref{subsec:square-complex}.

Bireversible automata also often generate lamplighter-type groups of the form $A\wr\Z$ for various abelian groups $A$~\cite{bondarenko_dr:lamplighter,ahmed_s:lamplighter,skipper_s:lamplighter20,francoeur:bireversible_lamplighters23,nowak_op:reversible_lamplighter_groups25}.

\subsection{Combinatorial encodings}

In this paper, we examine combinatorial structures that encode 
    letter transducers, 
    Mealy automata, and 
    bireversible automata. 
Let $\A=(Q,X, E)$ be a letter transducer (see Definition~\ref{def:letter-transducer}) with fixed orderings  
   $(q_1, q_2, \ldots, q_m)$ of the states $Q$ and 
   $(x_1, x_2, \ldots, x_n)$ of the letters $X$.
For any positive integer $d$, abbreviate $[d]=\{1, 2, \ldots, d\}$.
Encode the transitions $E$ in the following two structures.
\begin{itemize}
\item \textit{Interleaved Array:} Form the  collection $\mathcal{Y}_\A$ of 4-tuples 
      $\{ (i,j,k,\ell)\in[m]\times[n]\times[m]\times[n] : (q_i, x_j, q_k, x_\ell)\in E\}$.
      We refer to $\mathcal{Y}_\A$ as the \emph{interleaved array} associated with $\A$.
\item \textit{Proper Rete:} Form four ordered partitions 
      $\mathcal{P}_\A=(\mathcal{P}^1_\A, \mathcal{P}^2_\A, \mathcal{P}^3_\A, \mathcal{P}^4_\A)$ 
      of $\mathcal{Y}_\A$ with one (possibly empty) cell containing the elements $y\in \mathcal{Y}_\A$ that take each particular value in the corresponding component of $y$: Cell
      $P^s_h =\{y \in \mathcal{Y}_\A : y_s = h\}$.
      We refer to $\mathcal{P}_\A$ as the \emph{proper rete} associated with $\A$.
\end{itemize}

We characterize the 
    Mealy,
    invertible, 
    reversible, and
    coreversible
properties (Definitions~\ref{def:mealy} and~\ref{def:classes}) for letter transducers $\A$ in terms of the combinatorial properties of 
   $\mathcal{Y}_\A$ and 
   $\mathcal{P}_\A$.
To describe these results, we give some provisional definitions, 
with formal definitions appearing in the body of the paper.
We say that  $\mathcal{Y}_\A$ has property $\mathfrak{T}$ whenever $|\mathcal{Y}_\A|=mn$.
For distinct $s$, $t\in\{1,2,3,4\}$, say that $\mathcal{Y}_\A$ has property $\un{st}$ whenever 
    for all $y$, $z\in \mathcal{Y}_\A$, 
       if $y_s=z_s$ and $y_t=z_t$, then $y=z$.
We say that $\mathcal{Y}_\A$ has property $\ut{st}$ whenever it has properties $\un{st}$ and $\mathfrak{T}$.       
Say partitions $\mathcal{P}$  and $\mathcal{Q}$ of some common set are \emph{orthogonal} when
    $|{P}_i\cap{Q}_j|=1$ for all cells $P_i$ and $Q_j$ of the respective partitions. In this case, write $\mathcal{P}\perp\mathcal{Q}$.
We show the following (Theorem~\ref{thm:classes-U(st)}).

\begin{theorem}
\label{thm:intro-transducerproperties}
Let $\A$ be a letter transducer.  Then the following hold.
\begin{enumerate}
\item $\A$ is Mealy                           if and only if 
      $\mathcal{Y}_\A$ has property $\ut{12}$ if and only if 
      $\mathcal{P}^1_\A\perp \mathcal{P}^2_\A$.
\item $\A$ is invertible                      if and only if 
      $\mathcal{Y}_\A$ has property $\ut{41}$ if and only if 
      $\mathcal{P}^1_\A\perp \mathcal{P}^4_\A$.
\item $\A$ is reversible                       if and only if 
      $\mathcal{Y}_\A$ has property $\ut{23}$  if and only if
      $\mathcal{P}^2_\A\perp \mathcal{P}^3_\A$.
\item $\A$ is coreversible                     if and only if 
      $\mathcal{Y}_\A$ has property $\ut{34}$  if and only if
      $\mathcal{P}^3_\A\perp \mathcal{P}^4_\A$.
\end{enumerate} 
\end{theorem}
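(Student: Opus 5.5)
The argument rests on one combinatorial lemma, applied four times. After that, each part is a matter of unwinding the definition of the relevant automaton class.

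\textbf{The lemma.} Let $Y\subseteq[m]\times[n]\times[m]\times[n]$ and let $s\neq t$ be two coordinates. Let $d_s,d_t$ be the sizes of the ranges of these coordinates, so $d_s d_t=mn$ for each of the pairs $12$, $41$, $23$, $34$. Then the following are equivalent:
\begin{enumerate}
\item[(a)] $Y$ has property $\ut{st}$;
\item[(b)] the projection $\pi_{st}\colon Y\to[d_s]\times[d_t]$ is a bijection;
\item[(c)] $\mathcal{P}^s\perp\mathcal{P}^t$.
\end{enumerate}

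To prove (a)$\Leftrightarrow$(b), note that $\un{st}$ says exactly that $\pi_{st}$ is injective. An injective map into a set of size $mn$ is a bijection if and only if its domain has size $mn$, which is property $\mathfrak{T}$.

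To prove (b)$\Leftrightarrow$(c), observe that $P^s_h\cap P^t_k=\pi_{st}^{-1}(h,k)$. Hence $|P^s_h\cap P^t_k|=1$ for all $h,k$ if and only if every fibre of $\pi_{st}$ is a singleton, i.e.\ $\pi_{st}$ is bijective. One point needs care here: "cells" must include the empty ones, which the definition of the proper rete allows. So orthogonality requires the partition to have exactly $d_s$ indexed cells, each meeting each cell of the other partition once. I would make this indexing convention explicit.

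\textbf{Part 1.} By Definition~\ref{def:mealy}, a letter transducer is Mealy when for each state $q$ and input letter $x$ there is exactly one transition $(q,x,q',x')\in E$. Under the fixed orderings, this says $\pi_{12}$ is a bijection $\mathcal{Y}_\A\to[m]\times[n]$. The lemma then gives both equivalences.

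\textbf{Parts 2--4.} These are the same argument with different coordinate pairs. I would read each class in Definition~\ref{def:classes} as a unique-existence statement about transitions:
\begin{itemize}
\item \emph{Invertible:} for each state $q$ and output letter $y$ there is exactly one transition from $q$ producing $y$. This is bijectivity of $\pi_{41}$, i.e.\ $\pi_{14}$ up to reordering.
\item \emph{Reversible:} $\dz\A$ is invertible. Equivalently, for each input letter $x$ and target state $q'$ there is exactly one transition. This is bijectivity of $\pi_{23}$.
\item \emph{Coreversible:} for each target state and output letter there is exactly one transition, i.e.\ $\pi_{34}$ is bijective.
\end{itemize}

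If the paper's definitions of these classes presuppose that $\A$ is Mealy, I would handle it as follows. Either the definitions are phrased for general letter transducers, in which case the above applies directly. Or part 1 first reduces to the Mealy case, and then the Mealy condition must be shown to be implied, or else included, on both sides.

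\textbf{Main obstacle.} The only real difficulty is matching the paper's formal definitions of invertible, reversible and coreversible (through $\iz$ and $\dz$) with the counting statements above. In particular, "the map on letters induced by a state is a permutation" must be translated into "exactly one transition per (state, output) pair". This is where I would be careful that totality, functionality and bijectivity are each captured by $\mathfrak{T}$ together with injectivity.
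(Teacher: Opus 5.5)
Your proposal is correct and follows essentially the paper's route: the paper likewise proves that $\ut{st}$ is equivalent to $\mathcal{P}^s\perp\mathcal{P}^t$ for coordinates of opposite parity (Theorem~\ref{lem:GA-EAorth}), then reads each class off Definitions~\ref{def:mealy} and~\ref{def:classes}. Those definitions are already unique-existence conditions for arbitrary letter transducers, so your hedge about a Mealy presupposition is unnecessary. Your intermediate condition (bijectivity of the projection) is also the correct form of the paper's condition $\mathcal{Y}_{\take{st}}=[m]\times[n]$: that condition alone is only surjectivity and needs $\mathfrak{T}$ to recover $\un{st}$. Likewise, your insistence that the indexed cells include the empty ones is exactly what lets orthogonality capture completeness.
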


We turn our attention to Mealy automata.  
The special cases of interleaved arrays and proper rete given by Theorem~\ref{thm:intro-transducerproperties}
are respectively referred to as \emph{grid arrays} and \emph{grid rete} because the first two components of the four-tuples are the points of a grid.  This leads to a third combinatorial encoding for Mealy automata.
Let $\A=(Q, X, \pi,\lambda)$ be a Mealy automaton with fixed orderings of 
   the states $Q$ and 
   the letters $X$ ordered as above.
Encode  
   the transition function $\pi\colon Q\times X\rightarrow Q$ and 
   the output function $\lambda\colon Q\times X\rightarrow X$ as follows.
\begin{itemize}
\item \textit{Matched Matrices:} Define a  pair of $m\times n$ matrices $C_\A$ and $R_\A$,
    where the $(i,j)$-entries are respectively $k$ and $\ell$ with 
     $\pi(q_i, x_j)=q_k$ and $\lambda(q_i, x_j)=x_\ell$. 
    We refer to $(C_\A, R_\A)$ as the \emph{matched matrices} associated with $\A$.
\end{itemize}

To describe the various families of Mealy automata in terms of matched matrices, we need some additional provisional definitions.
Say that $C_\A$ is \emph{column-Latin} 
     whenever every element of $[m]$ appears exactly once in each column of $C_\A$,
that $R_\A$ is \emph{row-Latin} 
     whenever every element of $[n]$ appears exactly once in each row of $R_\A$, and
that $(C_\A,R_\A)$ is \emph{orthogonal} 
    whenever every possible ordered pair of entries appears exactly once 
    when $C_\A$ and $R_\A$ are superimposed. 
Say that the matched pair of matrices $(C_\A, R_\A)$ is a \emph{cooperative pair} when 
     $C_\A$ is column-Latin, 
     $R_\A$ is row-Latin, and 
     $(C_\A, R_\A)$ is orthogonal.
We show the following 
    (Theorems~\ref{thm:array-matrix-automaton-cond} and~\ref{thm:automatafamily-birevconditions}).

\begin{theorem}\label{thm:intro-mealyconditions}
Let $\A$ be a Mealy automaton.  Then the following hold.
\begin{enumerate}
\item $\A$ is invertible                       if and only if 
      $R_\A$ is row-Latin.
\item $\A$ is reversible                       if and only if 
      $C_\A$ is column-Latin.
\item $\A$ is coreversible                     if and only if 
      $C_\A$ and $R_\A$ are orthogonal.
\item $\A$ is bireversible                     if and only if 
      $(C_\A, R_\A)$ is a cooperative pair.
\end{enumerate} 
\end{theorem}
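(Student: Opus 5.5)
The plan is to deduce Theorem~\ref{thm:intro-mealyconditions} from Theorem~\ref{thm:intro-transducerproperties}, translating each property $\ut{st}$ of the grid array $\mathcal{Y}_\A$ into a condition on the matched matrices $(C_\A,R_\A)$. Since $\A$ is Mealy, part 1 of Theorem~\ref{thm:intro-transducerproperties} shows that $\mathcal{Y}_\A$ has property $\ut{12}$. Concretely,
\[
\mathcal{Y}_\A=\{(i,j,(C_\A)_{ij},(R_\A)_{ij}) : i\in[m],\ j\in[n]\},
\]
so $|\mathcal{Y}_\A|=mn$ and property $\mathfrak{T}$ holds automatically. Each remaining property $\ut{st}$ therefore reduces to the injectivity condition $\un{st}$ on this explicit set.

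\emph{Invertibility.} By part 2, $\A$ is invertible if and only if $\un{41}$ holds. This says that for each fixed row $i$, the map $j\mapsto (R_\A)_{ij}$ is injective. Since it maps $[n]$ to $[n]$, injectivity is equivalent to bijectivity, which is exactly the statement that $R_\A$ is row-Latin.

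\emph{Reversibility.} By part 3, $\A$ is reversible if and only if $\un{23}$ holds. This says that for each fixed column $j$, the map $i\mapsto (C_\A)_{ij}$ from $[m]$ to $[m]$ is injective. Equivalently, $C_\A$ is column-Latin.

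\emph{Coreversibility.} By part 4, $\A$ is coreversible if and only if $\un{34}$ holds. This says that the map $(i,j)\mapsto((C_\A)_{ij},(R_\A)_{ij})$ from $[m]\times[n]$ to $[m]\times[n]$ is injective. Because the domain and codomain have the same size $mn$, injectivity is equivalent to bijectivity. This means every ordered pair appears exactly once when the matrices are superimposed, i.e.\ $(C_\A,R_\A)$ is orthogonal.

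\emph{Bireversibility.} Part 4 of Theorem~\ref{thm:intro-mealyconditions} then follows by conjunction, once bireversibility is identified with the conjunction of invertible, reversible and coreversible. The introduction defines $\A$ to be bireversible when $\A$, $\dz\A$ and $\dz(\iz\A)$ are all invertible. I expect Definition~\ref{def:classes} to define coreversibility as invertibility of $\dz(\iz\A)$, which would make this identification immediate. If it is defined differently, the step needing care is checking that $\dz(\iz\A)$ is invertible exactly when $\un{34}$ holds.

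To check this directly: the transitions of $\iz\A$ are the tuples $(q,y,w,x)$ with $(q,x,w,y)\in E$. The dual of $\iz\A$ then has, at each state $y$ and input $w$, a transition coming from some tuple $(q,x,w,y)\in E$. Invertibility of $\dz(\iz\A)$ requires that each state of the dual acts as a bijection between the components in positions $1$ and $3$ of the original tuples, for fixed values of positions $3$ and $4$. Since $\A$ is Mealy, this is exactly $\ut{34}$.

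The only real obstacle is this bookkeeping of which tuple coordinates correspond to which roles after inversion and dualization. The rest is the finite-set pigeonhole argument, injective $\Leftrightarrow$ bijective, applied uniformly to rows, columns and superimposed pairs.
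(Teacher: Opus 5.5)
Your proof is correct for the statement as posed in the introduction. For parts 1--3 it is essentially the paper's proof of Theorem~\ref{thm:array-matrix-automaton-cond}. The paper reads invertibility and reversibility directly as ``each $\lambda(q,\cdot)$, resp.\ each $\pi(\cdot,x)$, is a permutation'' and cites Lemma~\ref{lem:col-row-Latin}. It gets coreversibility from $\mathcal{Y}_{\take{34}}=[m]\times[n]$, which is your injective-equals-bijective count.

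The genuine difference is in part 4. You use the introduction's provisional definition of a cooperative pair (column-Latin, row-Latin, orthogonal), under which part 4 is just the conjunction of parts 1--3. In effect this is the paper's equivalence (v)$\Leftrightarrow$(vi) in Theorem~\ref{thm:automatafamily-birevconditions}, combined with the definition of bireversibility.

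The paper instead works with the formal Definition~\ref{def:cooppair}. There orthogonality is replaced by (CP-3) $C\rtact R\in\mathsf{CL}(m,n)$ and (CP-4) $C\ltact R\in\mathsf{RL}(m,n)$. The substantive step is showing that each of these is equivalent to orthogonality once $C$ is column-Latin and $R$ is row-Latin. Since $(C\rtact R)(i,R(i,j))=C(i,j)$, a repeated value in column $\ell$ of $C\rtact R$ is exactly a pair $(i,j)\neq(i',j')$ with $R(i,j)=R(i',j')=\ell$ and $C(i,j)=C(i',j')$. That is precisely a failure of orthogonality, and the argument for $\ltact$ is symmetric. Combined with $C_{\iz\A}=C_\A\rtact R_\A$, this shows that (CP-3) says exactly that $\iz\A$ is reversible, which is the form of coreversibility the paper uses to link bireversibility to the matrices.

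Your route is shorter and treats all four conditions uniformly. The paper's route buys the action/monoid description of cooperative pairs, which it later uses to tabulate all eight parastrophes (Theorem~\ref{thm:birev-all-parastrophes}). If you want part 4 to hold for the formal definition, you must add that lemma.

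Your side check on $\dz(\iz\A)$ is unnecessary. The paper defines bireversible directly as Mealy, invertible, reversible and coreversible, with coreversibility given by the $(\cdot,\cdot,r,y)$ condition. As written, the bookkeeping is also off. A tuple $(q,x,w,y)\in E$ yields the transition $(y,q^{-1},x,w^{-1})$ of $\dz\iz\A$, so $w$ is the output, not the input. Invertibility of $\dz\iz\A$ then says directly that each value of positions $(3,4)$ occurs exactly once, i.e.\ $\ut{34}$, with no appeal to $\A$ being Mealy.
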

  
If $\A$ is bireversible, we refer to the associated interleaved array and proper rete as 
a \emph{semi-orthogonal array} and a \emph{reticulation}, respectively.

\subsection{Latin square analogs}

The combinatorial encodings of Mealy automata as 
    grid arrays, 
    grid rete, and 
    matched matrices 
discussed above have analogs in the theory of Latin squares. 

\begin{definition}
\label{def:latin-square}
A {\em Latin square} of order $n$ is an $n\times n$ matrix with entries from $\{1, \ldots, n\}$ in which every possible entry appears exactly once in each row and each column. 
\end{definition}

Our key observation is the following.

\begin{proposition}
\label{prop:OLS-coop}
Any pair of orthogonal Latin squares forms a cooperative pair and therefore encodes a bireversible automaton.  
\end{proposition}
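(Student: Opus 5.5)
We take a pair $(L_1,L_2)$ of orthogonal Latin squares of order $n$ and set $C=L_1$ and $R=L_2$. Both are $n\times n$ matrices with entries in $[n]$, so the setting of matched matrices applies with $m=n$.

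The first step is to build the Mealy automaton $\A$. Its states are $Q=\{q_1,\dots,q_n\}$ and its letters are $X=\{x_1,\dots,x_n\}$. Define $\pi(q_i,x_j)=q_k$ where $k=C_{ij}$, and $\lambda(q_i,x_j)=x_\ell$ where $\ell=R_{ij}$. Every entry lies in $[n]$, so these are well-defined functions $Q\times X\to Q$ and $Q\times X\to X$. Hence $\A$ is a Mealy automaton, and by construction $(C_\A,R_\A)=(L_1,L_2)$.

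The second step is to check the three conditions defining a cooperative pair.
\begin{itemize}
\item $C_\A=L_1$ is column-Latin, because each column of a Latin square contains every element of $[n]$ exactly once.
\item $R_\A=L_2$ is row-Latin, for the same reason applied to rows.
\item $(C_\A,R_\A)$ is orthogonal, since this is exactly the definition of orthogonality of the Latin squares $L_1,L_2$: every ordered pair in $[n]\times[n]$ occurs exactly once when they are superimposed.
\end{itemize}

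Finally, Theorem~\ref{thm:intro-mealyconditions}(4), whose full version is Theorem~\ref{thm:automatafamily-birevconditions}, states that a Mealy automaton is bireversible if and only if its matched matrices form a cooperative pair. It follows that $\A$ is bireversible.

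No step is a real obstacle. The only point needing care is the bookkeeping in the first step: one must confirm that the matched matrices of the constructed automaton are exactly the given squares, which requires fixing the orderings of $Q$ and $X$ as indexed. Note that the pair uses only part of the Latin property of each square (columns of $L_1$, rows of $L_2$), so the statement is an instance of a more general fact about cooperative pairs.
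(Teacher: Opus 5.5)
Your proof is correct and matches the argument the paper relies on: a Latin square is both row- and column-Latin, so $(L_1,L_2)$ satisfies condition (v) of Theorem~\ref{thm:automatafamily-birevconditions}, and that theorem shows (v) is equivalent to bireversibility of the associated automaton. One point to make explicit is that your ``three conditions'' are the introduction's provisional definition of a cooperative pair; passing to the formal definition (CP-1)--(CP-4) is exactly the equivalence (ii)$\Leftrightarrow$(v) of that same theorem, which you already invoke.
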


\begin{example}
\label{ex:LatinSquare}
Euler~\cite{Euler:magicsquares} found that the smallest pair of orthogonal Latin squares (up to isotopy) is the following:
\[ L_1=\left[ \begin{array}{ccc} 
     1 & 2 & 3\\
     2 & 3 & 1\\
     3 & 1 & 2
\end{array}\right],\qquad 
L_2=\left[ \begin{array}{ccc} 
     1 & 3 & 2\\
     2 & 1 & 3\\
     3 & 2 & 1
\end{array}\right].
\]
We revisit this pair of orthogonal Latin squares in Example~\ref{ex:latin_squares_correspondence}. 
\end{example}

Orthogonal Latin squares provide a prototype for the present discussion of cooperative pairs 
(and ultimately bireversible automata).  
In this case, the associated 
     semi-orthogonal arrays, 
     reticulations, and
     cooperative pairs     
are strong analogs of 
    orthogonal arrays, 
    $(k,n)$-nets (Bruck nets), and
    Latin squares,
respectively. 
We shall adapt the notions of parastrophy and isotopy to cooperative pairs and 
discuss their meaning for letter transducers and Mealy automata.  
Before further discussing this topic, 
we briefly review a few classical results concerning Latin squares. 

Latin squares lie at a nexus of combinatorics,  geometry, algebra, and statistics.
General references for Latin squares include~\cite{Denes+Keedwell-LSa_SecondEdition_2015, Evans:orthLSbasedongroups, Laywin+Mullen:discretemath,vanLintWilson:coursecomb}.
Euler~\cite{Euler:magicsquares} appears to have given the first analytic description of Latin squares and orthogonal Latin squares in 1782~\cite{Richardson:introducedLatinSq}.  
A discussion of earlier instances and implicit appearances of Latin squares can be found in~\cite{Andersen:Ch11LatinSquares}. 
By the early 1900s, Latin squares were a firmly established topic in combinatorics as regular structures and sets of discordant permutations, such as in~\cite{Cayley:oLS,MacMahon:CA}.  
Many early results discussed enumeration/construction and nonexistence. 
Euler conjectured that there is no pair of orthogonal Latin squares for orders $n \equiv 2 \pmod 4$~\cite{Euler:magicsquares}.
Tarry~\cite{Tarry1900} proved this to be true for $n=6$.  
It was not until 1959 that Bose, Shrikhande, and Parker~\cite{boseshrikhandeparker1960} proved that 
orthogonal Latin squares exist for every order \(n\neq 2,6\).
MacNeish gave a direct product construction~\cite{MacNeish1922} for families of mutually orthogonal Latin squares. 

There are strong connections between Latin squares and finite geometry.
A $(k,n)$-net consists of 
    $n^2$-many points and 
    $nk$-many lines 
such that  
    the lines comprise $k$-many parallel classes, 
    each containing $n$-many lines, 
    every line contains $n$-many points, and 
    every point is incident with $k$-many lines, one from each parallel class. 
 A set of $k-2$ mutually orthogonal Latin squares of order $n$ is equivalent to a $(k,n)$-net~\cite{Bruck:FNI, Bruck:FNII}.   
This equivalence motivates our notion of a rete (the Latin term for net).
We note that $(k,n)$-nets can be viewed as generalizations of finite affine planes, and
a complete set of $n-1$ many mutually orthogonal Latin squares of order $n$ is equivalent to a finite projective plane~\cite{moore:tactical} 
(see~\cite{Denes+Keedwell-LSa_SecondEdition_2015, Evans:orthLSbasedongroups, Laywin+Mullen:discretemath}, among many other texts).
The reference~\cite{Andersen:Ch11LatinSquares} discusses some background concerning connections to finite geometry.
We note that $(k,n)$-nets are the discrete analogs/extensions of the three-webs introduced and studied 
by Blaschke and the members of his school~\cite{BlaschkeBol:GdGTFdD} in the 1920s and 1930s 
as a means of studying local invariants of three foliations of curves in the plane in differential geometry.  
The reference~\cite{AKIVIS20001} discusses the differential geometry of webs, including a brief discussion of their history. 


Latin squares can be studied from an algebraic perspective.
A Latin square can be viewed as the Cayley table of a quasigroup. 
A quasigroup consists of a set $Q$ together with a binary operation $\circ$ such that 
    for all $a$, $b\in Q$ there exist unique $x$, $y\in Q$ such that $a\circ x=b$ and $y\circ a=b$. 
These conditions assert that each element of $Q$ appears exactly once 
in each row and in each column of the Cayley table.  
Quasigroups were introduced by Moufang~\cite{Moufang:ZScA} in the 1930s in connection with three-webs. 
See~\cite{Pflugfelder:HNoLT} for a discussion of the history of quasigroups.  
The texts~\cite{Scherbacov:Eqta,SmithRomanowsak:PMA, Smith:IQGR} discuss quasigroups as algebraic objects.  

In addition to their Cayley tables, groups are used to construct mutually orthogonal Latin squares in other ways. 
We note two such uses.
Mann~\cite{Mann1942} introduced what is now called Mann's automorphism method to 
construct orthogonal Latin squares based on orthomorphisms of groups.
Jungnickel~\cite{Jungnickel1980} used difference matrices with entries from abelian groups to construct families of orthogonal Latin squares (see also~\cite{Boykett2009}~\cite[Section 8.2.2]{Evans:orthLSbasedongroups}).
A construction of a complete set of mutually orthogonal Latin squares from finite fields was given in connection with geometric considerations~\cite{moore:tactical}, see also~\cite{Denes+Keedwell-LSa_SecondEdition_2015}.
Another classical construction was given by Bose in 1938.

\begin{theorem}[\cite{Bose1938}]
Let \(q=p^m\) be a prime power and let \(\mathbb F_q\) be the finite field of order \(q\). 
For each \(a\in \mathbb F_q^\times\), define $L_a(x,y)=ax+y$. 
Then each \(L_a\) is a Latin square of order \(q\), and \(L_a\) and \(L_b\) are orthogonal whenever \(a\neq b\).
\end{theorem}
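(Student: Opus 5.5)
We index the rows and columns of each $L_a$ by the elements of $\mathbb F_q$ under a fixed bijection $\mathbb F_q\leftrightarrow[q]$, and we use the same bijection for the entries, so that $L_a$ is the $q\times q$ matrix whose $(x,y)$-entry is $ax+y$. The proof has two independent parts, the Latin property and orthogonality. Both rest on one fact: a linear map $\mathbb F_q\to\mathbb F_q$, or $\mathbb F_q^2\to\mathbb F_q^2$, with nonzero determinant is a bijection. Since everything is finite, surjectivity and injectivity are interchangeable, so we only need to solve the relevant equations uniquely.

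For the Latin property, fix $a\in\mathbb F_q^\times$.

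\emph{Rows.} In row $x$ the map $y\mapsto ax+y$ is a translation of $\mathbb F_q$, hence a bijection. So every element appears exactly once in each row.

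\emph{Columns.} In column $y$ the map $x\mapsto ax+y$ is a bijection because $a\neq 0$ is invertible. For a given $c$, the unique solution of $ax+y=c$ is $x=a^{-1}(c-y)$. So every element appears exactly once in each column.

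This is precisely where $a\in\mathbb F_q^\times$ is needed; for $a=0$ the columns would be constant.

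For orthogonality, let $a\neq b$ in $\mathbb F_q^\times$. We must show that for every ordered pair $(c,d)\in\mathbb F_q^2$ there is exactly one cell $(x,y)$ with
\[
ax+y=c \quad\text{and}\quad bx+y=d.
\]
Subtracting the equations gives $(a-b)x=c-d$. Since $a-b\neq 0$ and $\mathbb F_q$ is a field, this forces $x=(a-b)^{-1}(c-d)$, and then $y=c-ax$. Hence the solution exists and is unique. Equivalently, the coefficient matrix $\begin{pmatrix}a&1\\ b&1\end{pmatrix}$ has determinant $a-b\neq0$. Therefore the superimposed pairs $(L_a(x,y),L_b(x,y))$ run over all $q^2$ ordered pairs exactly once, which is orthogonality in the sense of the definition preceding Theorem~\ref{thm:intro-mealyconditions}.

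There is no real obstacle here. The only point needing care is that the field structure is used essentially: the invertibility of $a$ and of $a-b$ requires $\mathbb F_q$ to be a field, not merely a ring such as $\mathbb Z_n$ with $n$ composite. This is why the hypothesis that $q$ is a prime power appears. Combined with Proposition~\ref{prop:OLS-coop}, any two of these squares then give a bireversible automaton, though that consequence is not part of the statement.
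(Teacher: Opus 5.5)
Your proof is correct and complete. The Latin property uses only that translations and multiplication by $a\neq 0$ are bijections of $\mathbb F_q$, and orthogonality follows because the $2\times 2$ system has determinant $a-b\neq 0$ and is therefore uniquely solvable. The paper gives no proof of its own; it simply cites Bose, and your argument is the standard one.
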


Latin squares also play a role in the design of experiments (statistics).  
In the 1920s Fischer encouraged the use of Latin squares to design agricultural experiments~\cite{Fisher:tDoE}
(precursors are noted in~\cite{Andersen:Ch11LatinSquares}). 
Latin squares allow the experimenter to control for two nuisance variables by isolating treatment effects and removing systematic bias. 
A set of mutually orthogonal Latin squares allows one to control for more nuisance variables.  
Mutually orthogonal Latin squares are subsumed by the theory of orthogonal arrays.
An orthogonal array of type $(M,n,s,t)$  is 
    an $M\times n$ array with 
    entries from a set $S$ of size $s$ such that 
    every $M\times t$ subarray contains each $t$-tuple of elements of $S$ exactly $(M/s^t)$-many times as a row.  
Orthogonal arrays were introduced by Rao in the late 1940s~\cite{Rao:hypercubesofstrength, Rao:factorialexperiments}. 
See~\cite{HedayatSloneSufken:OA,LinStufken:ortharray} for a general discussion of orthogonal arrays.
Orthogonal arrays are often viewed as a collection of row vectors.  
In this sense, interleaved arrays are analogous to orthogonal arrays. 

Quasigroups have been studied as reversible automata with three mutually interacting state spaces~\cite{GVarmiyaPlotkin:hqua, Smith:QHSRA} (see also~\cite{SmithRomanowsak:PMA}).  
Cellular automata have been used to construct mutually orthogonal Latin squares~\cite{Mariotetal:molsca, Mariotetal:EOLSBCA}, where~\cite{Manzonietal:CDCAsurvey} surveys this topic. 
These prior connections between Latin squares and various automata do not appear to be directly related to the results of the present paper. 
Here we introduce combinatorial structures to study finite automata rather than adapt automata to study Latin squares or equivalent structures.

The theory of Latin squares is a rich and active area of study.  
The above discussion has not touched on the many recent developments.  
We note that Latin squares have not only  been studied in their own right, but their regularity has also been used to construct other regular combinatorial structures, such as 
    association schemes, 
    block designs, 
    error correcting codes,  
    tournaments, and
    cryptographic schemes
\cite{bailey:ASDEAC,Denes+Keedwell-LSa_SecondEdition_2015, Denes-Keedwell-LSnewdev,Laywin+Mullen:discretemath,vanLintWilson:coursecomb}.
Here we develop combinatorial encodings of finite automata in the hope that techniques developed in connection with Latin squares and associated objects can be adapted to the benefit of finite automata and vise versa.  
For example, it may be interesting to examine the groups generated by the bireversible automata encoded by various families of orthogonal Latin squares. 
Relating the quasigroups associated with such Latin squares to the group generated by the automaton might be interesting.
Another interesting direction to explore is to adapt techniques of random generation of Latin squares to the case of bireversible automata. 
It would also be interesting to investigate how isotopisms of Mealy automata impact the structure of the associated automaton (semi)group. 
We hope that the proposed connections may produce new enumeration methods for bireversible automata.

\subsection{Parastrophy and isotopy}

In the present work, we adapt the notions of  parastrophy and isotopy from the theory of Latin squares to automata and connect them to known symmetries of letter transducers and automata.  
The interleaved array provides a unifying framework for these topics. 
Parastrophy and isotopy are transformations of interleaved arrays (and the equivalent structures).

Parastrophy plays a significant role in Latin squares~\cite[Section 1.4]{Denes+Keedwell-LSa_SecondEdition_2015}~\cite[Chapter 17]{vanLintWilson:coursecomb}.  
In the theory of quasigroups, parastrophy corresponds to the quasigroups with operations derived from the original by left division, right division, and opposite multiplication.  
Parastrophisms are often used to define equivalence classes to simplify the enumeration/listing of Latin squares~\cite[Chapter 4]{Denes+Keedwell-LSa_SecondEdition_2015}. 
In Section~\ref{sec:parastrophism}, we introduce parastrophy for interleaved arrays and relate parastrophisms of interleaved arrays to duals and inverses of the corresponding finite automata.
A \emph{parastrophism} is any of the eight permutations (in disjoint cycle notation) 
    $()$, $(2,4)$, $(1,3)$, $(1,3)(2,4)$, $(1,2)(3,4)$, $(1,2,3,4)$, $(1,4,3,2)$, $(1,4)(2,3)$ 
acting on elements of an interleaved array by permuting the components.  

Isotopy is an important concept in Latin squares~\cite[Section 1.1]{Evans:orthLSbasedongroups}~\cite[Section 1.3]{Denes+Keedwell-LSa_SecondEdition_2015}~\cite[Chapter 17]{vanLintWilson:coursecomb}.  
Isotopisms are used to define equivalence classes to simplify the enumeration and analysis of Latin squares. 
In Section~\ref{sec:isotopism}, we introduce isotopism for interleaved arrays.
An \emph{isotopism} is 
    a four-tuple of permutations $\sigma=(\sigma_1,\sigma_2,\sigma_3,\sigma_4)$ 
    with $\sigma_1,\sigma_3\in\Sym(m)$ and $\sigma_2,\sigma_4\in\Sym(n)$ 
    acting on elements of $\mathcal{Y}_\A$ by permuting the entries in component $i$ by $\sigma_i$.
From the point of view of automata theory, the value of interleaved array representation of an automaton lies in the unifying nature of multiple automata operations as natural symmetries of the corresponding interleaved array. 
Namely, we prove the following theorem (Theorem~\ref{thm:birev-all-parastrophes} and~\ref{thm:isotopism_and_automata}).

\begin{theorem}
\label{thm:intro:paraiso}
Let $\A=(Q=[m],X=[n],E)$ be a letter transducer, and let $\iota_n$ and $\iota_m$ denote the trivial permutations of $X$ and $Q$, respectively. Let $\mathcal Y_\A$ be the interleaved array corresponding to $\A$.  Then
\begin{itemize}
    \item[(i)] Inversion of $\A$ corresponds to the parastrophism of $\mathcal{Y}_\A$ induced by permutation $(2,4)$.
    \item[(ii)] Dualization of $\A$ corresponds to the parastrophism of $\mathcal{Y}_\A$ induced by permutation $(1,2)(3,4)$.
    \item[(iii)] Permuting the set of states $Q$ of $\A$ by permutation $\rho\in\Sym(Q)$  corresponds to the isotopism $\sigma=(\rho, \iota_n, \rho, \iota_n)$ of $\mathcal{Y}_\A$.
    \item[(iv)] Permuting the letters of $X$ by permutation $\kappa\in\Sym(X)$ corresponds to the isotopism $\sigma=(\iota_m, \kappa,  \iota_m, \kappa)$ of $\mathcal{Y}_\A$.
\end{itemize}
\end{theorem}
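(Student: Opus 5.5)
The plan is to unwind the definitions and compare transition sets directly. Every construction in the statement (inverse, dual, relabeling of states or letters) is defined as an explicit transformation of the transition set $E$. The interleaved array $\mathcal{Y}_\A$ is, after the identifications $Q=[m]$ and $X=[n]$, literally the set $E$ viewed as a set of 4-tuples $(q,x,w,y)$ coming from transitions $q\stackrel{x/y}{\longrightarrow}w$. So in each case I will show that applying the stated parastrophism or isotopism componentwise to every tuple of $\mathcal{Y}_\A$ gives exactly $\mathcal{Y}_{\mathcal B}$, where $\mathcal B$ is the transformed transducer. I will also record that the state set and the alphabet of $\mathcal{B}$ are the ones the resulting array expects: $[m]$ and $[n]$ in cases (i), (iii) and (iv), and $[n]$ and $[m]$ in case (ii).

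For (i), the inverse $\iz\A$ swaps input and output, so its transitions are $q\stackrel{y/x}{\longrightarrow}w$. Hence $(q,x,w,y)\in\mathcal{Y}_\A$ if and only if $(q,y,w,x)\in\mathcal{Y}_{\iz\A}$, which is transposition of components $2$ and $4$.

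For (ii), the dual $\dz\A$ has a transition $x\stackrel{q/w}{\longrightarrow}y$ exactly when $q\stackrel{x/y}{\longrightarrow}w$ is a transition of $\A$. So its 4-tuples are $(x,q,y,w)$, which is $(q,x,w,y)$ with components $1\leftrightarrow 2$ and $3\leftrightarrow 4$ exchanged, i.e.\ the permutation $(1,2)(3,4)$.

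One point needs care: whether a permutation acts by moving entries to new positions or by reindexing them. The two conventions differ by taking the inverse of the permutation. Since $(2,4)$ and $(1,2)(3,4)$ are involutions, either convention gives the same result, and I will note this explicitly.

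For (iii) and (iv), renaming states by $\rho$ turns each transition $q\stackrel{x/y}{\longrightarrow}w$ into $\rho(q)\stackrel{x/y}{\longrightarrow}\rho(w)$. This applies $\rho$ to components $1$ and $3$ and fixes components $2$ and $4$, which is the isotopism $(\rho,\iota_n,\rho,\iota_n)$. Similarly, renaming letters by $\kappa$ sends each transition to $q\stackrel{\kappa(x)/\kappa(y)}{\longrightarrow}w$, which is the isotopism $(\iota_m,\kappa,\iota_m,\kappa)$.

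There is no real obstacle here; the proof is bookkeeping. The only points needing attention are the orientation convention for the parastrophism action, settled above by the involution remark, and the observation that the transformed tuples form the full transition set and not merely a subset. The latter holds because each map used is a bijection on 4-tuples, so its image of $E$ is exactly the transition set of the new transducer.
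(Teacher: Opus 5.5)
Your proof is correct. For (iii) and (iv) it is exactly the paper's proof of Theorem~\ref{thm:isotopism_and_automata}, which sends each transition $(q_i,x_j,q_k,x_\ell)$ to $(q_{\rho(i)},x_j,q_{\rho(k)},x_\ell)$, respectively to $(q_i,x_{\kappa(j)},q_k,x_{\kappa(\ell)})$.

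For (i) and (ii) the paper takes a different route. It obtains them from Theorem~\ref{thm:birev-all-parastrophes}, whose proof composes Propositions~\ref{prop:dagger-dual-correspond} and~\ref{prop:sharp-inv-correspond}. Those propositions work at the level of matched matrices. The dual is identified through $(C_{\dz\A},R_{\dz\A})=(R_\A^\transp,C_\A^\transp)$. The inverse is identified through Lemma~\ref{lem:invertible-action-matrices}, i.e.\ $(C_{\iz\A},R_{\iz\A})=(C_\A\rtact R_\A,R_\A^{\ir})$, which rests on the wreath-recursion inversion formula~\eqref{eqn:inv}. Only after this are the arrays compared.

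You instead read (i) and (ii) straight off the letter-transducer definitions of $\iz\A$ (via $E^{-1}$) and $\dz\A$ (via $E'$). Your route is shorter and covers arbitrary letter transducers, as the statement claims. The paper's matrix argument presupposes that $\A$ is Mealy, and also invertible for (i), so that matched matrices exist. In return, the paper's route produces the explicit matched-matrix forms of the parastrophes, such as $(C\rtact R,R^{\ir})$, tabulated in Theorem~\ref{thm:birev-all-parastrophes}.

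Your involution remark is worth keeping. The paper writes both $(y^\tau)_s=y_{\tau(s)}$ and $y^\tau=(y_{\tau^{-1}(1)},\ldots,y_{\tau^{-1}(4)})$, and these agree only for involutions.

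One small bookkeeping point in (i): $\iz\A$ has state set $Q^{-1}$. You should say explicitly that it is ordered as $(q_1^{-1},\ldots,q_m^{-1})$, since that is what leaves components $1$ and $3$ of each tuple unchanged.
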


We note that while the isotopisms from operations (iii) and (iv) in Theorem~\ref{thm:intro:paraiso} preserve the structure of a (semi)group generated by automaton $\A$, 
generally an isotopism can change the isomorphism class of the corresponding (semi)group as shown in Example~\ref{ex:bellaterra_aleshin}. 
This is analogous to the fact that the isotopisms of a Latin square may change the isomorphism type of a quasigroup with the multiplication table defined by this Latin square~\cite[Section 1.3]{Denes+Keedwell-LSa_SecondEdition_2015}.

\noindent\textbf{Acknowledgments.} The authors thank Ievgen Bondarenko for providing valuable input on the connection between various types of automata and $\mathcal{VH}$-square complexes and suggesting relevant references. The second author was partially supported by NSF grant DMS-2342254.

\section{Classes of automata}
\label{sec:automata}

We start this section by introducing the notions and terminology of automata-transducers, Mealy automata, and their actions on regular rooted trees. For automata-transducers we will mainly use the terminology equivalent to the standard one described in~\cite{sakarovitch:elements_of_automata_theory_book09}, and for Mealy automata and their actions on rooted trees we refer the reader to~\cite{gns00:automata}. We will not introduce the transducers in the most general terms; instead, we will restrict our attention only to the case relevant to this paper. In particular, we will assume that the input and output alphabets always coincide, and we will assume that each state is terminal, so that we will omit discussion about the set of terminal states. With these conventions, we start with the most general definition of finite-state transducer needed for our purposes.

\subsection{Finite state transducers, letter transducers, and Mealy automata}
 
Let $X$ be a finite set with cardinality $d\geq 2$ whose elements will be called \emph{letters}. 
While in many references it is customary to take $X=\{0,1,\ldots ,d-1\}$ and endow it with the structure of a ring $\Z/d\Z$, throughout this paper we will mostly consider $X=[d]$. Let $X^*$ denote the set of finite words over $X$. This set can be equipped with the structure of a rooted $d$-ary tree by declaring that $v$ is adjacent to $vx$ for every $v\in X^*$ and $x\in X$. The empty word $\varepsilon\in X^*$ corresponds to the root of the tree, and for each positive integer $n$ the set $X^n$ corresponds to the $n$-th level of the tree. All types of initial transducers defined below transform words from $X^*$ to other words from $X^*$. 

\begin{definition}
    A \emph{finite-state transducer (FST)} is a triple $\A=(Q,X,E)$ such that:
    \begin{itemize}
        \item $Q$ is a finite set of states,
        \item $X$ is a finite alphabet,
        \item $E\subset Q\times X^*\times Q\times X^*$ is the set of transitions (edges).
    \end{itemize}
    If a state $q\in Q$ is selected, then the 4-tuple $\A_q=(Q,X,E,q)$ is called the \emph{initial FST}. 
\end{definition}

A finite transducer can be represented by a transition diagram whose nodes are the states and for every transition $(q_1,w_1,q_2,w_2)\in E$ there is an oriented edge 
\begin{center}
    \begin{tikzpicture}
    \node[state] (q1) {$q_1$};
    \node[state, right of=q1] (q2) {$q_2$};
    \draw (q1) edge[above] node{$w_1\,|\,w_2$} (q2);
    \end{tikzpicture}
\end{center}
Then each initial FST $\A_q$ defines a \emph{rational relation} $\rho_q\subset X^*\times X^*$ (sometimes called a \emph{transduction relation}), in which $(w_1,w_2)\in\rho_q$ if and only if there is a path $p$ in the transition diagram of $\A$ initiating from $q$ such that $w_1$ and $w_2$ are obtained as concatenations of the first components and the second components of the edge labels along the path $p$, respectively. The formal definition of this relation can be found in~\cite[Section~1.3.1]{roche_s:sinite-state_language_processing97}.

The notion of FST allows for several useful specializations. The most general one that we will need in this paper is the letter (or letter-to-letter) transducer.  

\begin{definition}
\label{def:letter-transducer}
A \emph{letter transducer} is an FST $(Q,X,E)$ such that $E\subset Q\times X\times Q\times X$.   
\end{definition}

We note that there are different terms in the literature corresponding to letter transducers. For example, in~\cite{roche_s:sinite-state_language_processing97} these machines are called $\varepsilon$-free letter transducers, reserving the term ``letter transducers'' for machines that can read/output empty string as well.

In general, the rational relation does not define a transformation of $X^*$, but with a natural extra condition on $E$ this property holds. This represents the case of main interest of group and semigroup theory, as it allows one to define (semi)groups generated by transformations of $X^*$ induced by initial transducers. To avoid unnecessary notational clutter, we restrict the definitions below to the relevant case of letter transducers.

\begin{definition}
\label{def:mealy}
Let $\A=(Q,X,E)$ be a letter transducer. Then $\A$ is called 
    \begin{itemize}
        \item \emph{deterministic} if for all $q\in Q$ and $x\in X$ there is at most one element of $E$ of the form $(q,x,\cdot,\cdot)$;
        \item \emph{complete} if for all $q\in Q$ and $x\in X$ there is at least one element of $E$ of the form $(q,x,\cdot,\cdot)$;
        \item \emph{Mealy automaton} if it is both deterministic and complete. 
    \end{itemize}
\end{definition}

We give below another equivalent definition of Mealy automata more common in group and semigroup theory.

\begin{definition}
\label{def:mealy2}
A \emph{Mealy automaton} (or simply \emph{automaton}) is a tuple $(Q,X,\pi,\lambda)$, where $Q$ is a finite set (the set of \emph{states}), $X$ is a finite alphabet, $\pi\colon Q\times X\to Q$ is the \emph{transition function}, 
and $\lambda\colon Q\times X\to X$ is the \emph{output function}. Selecting a state $q\in Q$ produces an \emph{initial automaton} $\A_q$.
\end{definition}

The equivalence of the two definitions of Mealy automata is straightforward and achieved through the relation that $(q,x,r,y)\in E$ if and only if $\pi(q,x)=r$ and $\lambda(q,x)=y$. 
We will adopt the notation from~\cite{bondarenko_gkmnss:full_clas32_short} and refer to Mealy automata with $m$ states over an $n$-letter alphabet as $(m,n)$-automata. Note that in some cases in the literature, the condition of finiteness of the set of states is not included in the definition of a Mealy automaton. For example, to each infinite automaton group one can associate an infinite automaton.
In this paper to make notation more straightforward and compatible with the standard definition of finite state transducers, we will make the assumption that all Mealy automata have finite number of states.

The transition diagrams of Mealy automata are usually referred to as \emph{Moore diagrams}. The Moore diagram of an automaton $\A=(Q,X,\pi,\lambda)$ is a directed graph in which the vertices are the states of $Q$ and the edges have the form

\begin{center}
    \begin{tikzpicture}
    \node[state] (q1) {\phantom{a}\hspace{1.6mm}$q$\hspace{1.6mm}\phantom{a}};
    \node[state, right of=q1] (q2) {\!\!$\pi(q,x)$\!\!};
    \draw (q1) edge[above] node{$x\,|\,\lambda(q,x)$} (q2);
    \end{tikzpicture}
\end{center}

\noindent for $q\in Q$ and $x\in X$. 
For example, the upper left corner of Figure~\ref{fig:latin_correspondence} shows the Moore diagram of the automaton studied in~\cite{bondarenko_dr:lamplighter}: 
This automaton generates the lamplighter group $(\Z/3\Z)\wr\Z$ as explained below.

The rational relation $\rho_q$ for an initial Mealy automaton $\A_q$ defines a transformation of $X^*$ as for each $w_1\in X^*$ there is exactly one $w_2\in X^*$ such that $(w_1,w_2)\in\rho_q$. Moreover, this transformation is a graph endomorphism of $X^*$, viewed as a regular rooted tree. This transformation will also be denoted by $\A_q$ and is defined as follows. Given a word
$v=x_1x_2x_3\ldots x_n\in X^*$, it scans its first letter $x_1$ and outputs $\lambda(q,x_1)$. The rest of the word is handled similarly by the initial automaton $\A_{\pi(q,x_1)}$. This allows us to extend the functions $\pi$ and $\lambda$ to $\pi\colon Q\times X^*\to Q$ and $\lambda\colon  Q\times X^*\to X^*$ via the equations
\[\begin{array}{l}
\pi(q,x_1x_2\ldots x_n)=\pi(\pi(q,x_1),x_2x_3\ldots x_n),\\
\lambda(q,x_1x_2\ldots x_n)=\lambda(q,x_1)\lambda(\pi(q,x_1),x_2x_3\ldots x_n).\\
\end{array}
\]

\subsection{Invertible, reversible, coreversible, and bireversible automata}

The class of Mealy automata as a subclass within the class of letter transducers has three additional natural counterparts.

\begin{definition}
\label{def:classes}
Let $\A=(Q,X,E)$ be a letter transducer. Then $\A$ is called 
    \begin{itemize}
        \item \emph{invertible} if for all $q\in Q$ and $y\in X$ there is exactly one element of $E$ of the form $(q,\cdot,\cdot,y)$;
        \item \emph{reversible} if for all $x\in X$ and $r\in Q$ there is exactly one element of $E$ of the form $(\cdot,x,r,\cdot)$;
        \item \emph{coreversible} if for all $r\in Q$ and $y\in X$ there is exactly one element of $E$ of the form $(\cdot,\cdot,r,y)$.
    \end{itemize}
\end{definition}

\begin{definition}
A letter transducer is called 
    \begin{itemize}
        \item an \emph{IR-automaton} if it is both invertible and reversible; 
        \item \emph{bireversible} if it is Mealy, invertible, reversible, and coreversible. 
    \end{itemize}
\end{definition}

Reversible, invertible, and bireversible Mealy automata have been studied extensively in the literature. However, symmetry considerations suggest that it is more natural to define these notions for the class of  transducers. We note that in the case of reversible, invertible, and bireversible Mealy automata, the notions completely coincide.  The notion of a coreversible Mealy automaton introduced in~\cite{godin_kp:torsion-free_semigroups2015} is slightly different from the one introduced here.

There is a well-known natural interpretation of the classes of automata described above coming from the theory of square complexes and introduced in~\cite{gl_mo:compl}. While in that paper and others in the area the construction is  applied only to Mealy automata, in fact, it is more natural to apply it to letter transducers as we suggest here. Under this interpretation, each automaton $\A=(Q,X,E)$ corresponds to a square complex with the one-skeleton being a wedge (a one point union) of $(2|Q|+2|X|)$-many circles labeled by elements of $Q^{\pm1}$ and $X^{\pm1}$, and whose 2-cells are squares corresponding to transitions in $E$. Namely, the transition $(q,x,r,y)\in E$ corresponds to a topological square with a labeling scheme $xry^{-1}q^{-1}$ along its boundary as shown in Figure~\ref{fig:classes}. 

Under this interpretation, according to Definition~\ref{def:classes}, the classes of Mealy, invertible, reversible, and coreversible automata correspond to square complexes in which for every pair of labels at the SW-, NE-, SW-, and NE-corners, respectively, there is exactly one pair of labels of the other two sides corresponding to a transition in the automaton, as depicted in Figure~\ref{fig:classes}.  Bireversible automata, for which a transition is completely determined by the labels at any of the corners, correspond to complete directed square complexes as defined in~\cite{gl_mo:compl}.

\begin{figure}
    \centering
\begin{tikzpicture}[
    thick,
    every node/.style={font=\Large},
    midarrow/.style={
        -,
        postaction={
            decorate,
            decoration={
                markings,
                mark=at position 0.55 with {\arrow{Stealth}}
            }
        }
    }
]

\coordinate (A) at (0,0);
\coordinate (B) at (6,0);
\coordinate (C) at (0,6);
\coordinate (D) at (6,6);

\draw[midarrow] (A) -- node[below] {$x$} (B);
\draw[midarrow] (A) -- node[left, yshift=6pt] {$q$} (C);
\draw[midarrow] (C) -- node[above] {$y$} (D);
\draw[midarrow] (B) -- node[right, yshift=6pt] {$r$} (D);

\draw[dashed,-] (-0.525,3) -- (6.525,3);

\draw[dashed,-] (-0.525,-0.525) -- (6.525,6.525);

\fill[red] (A) circle (4pt);
\fill (B) circle (4pt);
\fill (C) circle (4pt);
\fill (D) circle (4pt);

\node[left,xshift=-2pt]  at (C) {Invertible};
\node[right,xshift=2pt] at (D) {Coreversible};
\node[left,xshift=-2pt]  at (A) {Mealy};
\node[right,xshift=2pt] at (B) {Reversible};

\draw[<->,>=Stealth,bend left=18]
    (0.6,2.55) to (0.6,3.45);

\node[right] at (0.75,3.3) {inversion};

\draw[<->,>=Stealth,bend right=18]
    (0.625,1.225) to (1.225,0.625);

\node[right,rotate=45] at (1.18,0.855) {dualization};

\end{tikzpicture}
\caption{Classes of letter transducers}
    \label{fig:classes}
\end{figure}

For each letter transducer $\A$, one can define the inverse and the dual to $\A$ letter transducers as follows.

\begin{definition}
    Let $\A=(Q,X,E)$ be a letter transducer. Denote by $Q^{-1}$ the set $\{q^{-1}: q\in Q\}$ of formal inverses of states in $Q$. Then the \emph{inverse to $\A$} is the letter transducer $\iz\A$  defined by $\iz\A=(Q^{-1},X, E^{-1})$, where for $q,r\in Q$ and $x$, $y\in X$, $(q,x,r,y)\in E$ if and only if $(q^{-1},y,r^{-1},x)\in E^{-1}$. 
\end{definition}

The following characterization follows immediately from the above definition.

\begin{proposition}
    A letter transducer $\A$ is invertible if and only if $\iz\A$ is a Mealy automaton.
\end{proposition}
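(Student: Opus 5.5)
The plan is to unwind the definitions and compare the two conditions quantifier by quantifier. By the definition of the inverse transducer, the map $(q,x,r,y)\mapsto(q^{-1},y,r^{-1},x)$ is a bijection from $E$ onto $E^{-1}$. It exchanges the second and fourth coordinates and renames states by $q\mapsto q^{-1}$, which is itself a bijection $Q\to Q^{-1}$. So I will translate the Mealy condition on $\iz\A$ (Definition~\ref{def:mealy}) through this bijection and check that it becomes exactly the invertibility condition on $\A$ (Definition~\ref{def:classes}).

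By Definition~\ref{def:mealy}, $\iz\A=(Q^{-1},X,E^{-1})$ is Mealy exactly when it is deterministic and complete. That is, for every $p\in Q^{-1}$ and $x'\in X$ there is exactly one element of $E^{-1}$ of the form $(p,x',\cdot,\cdot)$. Since every $p\in Q^{-1}$ is $q^{-1}$ for a unique $q\in Q$, this says: for all $q\in Q$ and $y\in X$ there is exactly one element of $E^{-1}$ of the form $(q^{-1},y,\cdot,\cdot)$.

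Now pull this back along the bijection. Every element of $E^{-1}$ has the form $(q'^{-1},y',r^{-1},x)$ for a unique $(q',x,r,y')\in E$, and it equals $(q^{-1},y,\cdot,\cdot)$ precisely when $q'=q$ and $y'=y$. Here I use that $q\mapsto q^{-1}$ is injective, which holds because the $q^{-1}$ are distinct formal symbols. Hence the elements of $E^{-1}$ of the form $(q^{-1},y,\cdot,\cdot)$ correspond bijectively to the elements of $E$ of the form $(q,\cdot,\cdot,y)$, and so the two sets have the same cardinality.

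Therefore ``exactly one'' holds for every pair $(q,y)$ on one side if and only if it holds on the other. The latter condition is precisely invertibility of $\A$ in Definition~\ref{def:classes}, which proves both directions at once. There is no real obstacle here. The only point needing care is the cardinality-preserving correspondence between the two fibres, which rests on $(q,x,r,y)\mapsto(q^{-1},y,r^{-1},x)$ being a bijection $E\to E^{-1}$ and on the injectivity of $q\mapsto q^{-1}$.
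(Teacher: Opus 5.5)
Your proposal is correct and takes the same approach as the paper, which gives no separate proof and only remarks that the characterization follows immediately from the definition of $\iz\A$. Your fibre-by-fibre comparison via the bijection $(q,x,r,y)\mapsto(q^{-1},y,r^{-1},x)$ from $E$ onto $E^{-1}$ simply writes out that unwinding explicitly.
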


By definition, the rational relation of $\iz\A_{q^{-1}}$ is the reflection of the rational relation of $\A_q$ for each state $q$ of $\A$. Note that one can define the inverse of any finite state transducer in exactly the same way, but we will only need to talk about inverses of letter transducers. For a Mealy automaton $\A$, whose rational relations induce functions $\A_q\colon X^*\to X^*$, the inverse $\iz\A$ is a Mealy automaton if and only if all these functions are bijections and define graph automorphisms of $X^*$ as a regular tree. 

The dual letter transducer is defined by switching the roles of the set of states and the alphabet.

\begin{definition}
    Let $\A=(Q,X,E)$ be a letter transducer. Then the \emph{dual} $\dz\A$ of $\A$ is the letter transducer $\dz\A=(X, Q, E')$, where $E'=\{(x,q,y,r):(q,x,r,y)\in E\}$.
\end{definition}

It is easy to see that by definition, the transitions of $\iz\A$ are obtained by reflecting the labels of all the squares associated with transitions in $\A$ about the horizontal axis of symmetry of the square. Similarly, the transitions of $\dz\A$ are obtained by reflecting the labels of all the squares associated with transitions in $\A$ about the diagonal axis of symmetry of the square going from the SW corner to the NE corner. This is also shown in Figure~\ref{fig:classes}. This defines the action of the dihedral group $D_4$ of size 8 on the set of letter transducers. Note that none of these operations changes the topological square complex associated with $\A$ as discussed in Subsection~\ref{subsec:square-complex}.

Since dualization corresponds to applying a symmetry along the SW-NE diagonal and SE corner is swapped with NW corner in every transition square we immediately obtain:

\begin{proposition}
A letter transducer $\A$ is reversible if and only if the $\dz\A$ is invertible.
\end{proposition}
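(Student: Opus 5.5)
The claim is a direct unwinding of the definitions, so I will compare the defining conditions term by term rather than appeal to any earlier result. Recall that $\dz\A=(X,Q,E')$ with $E'=\{(x,q,y,r):(q,x,r,y)\in E\}$. This is a letter transducer whose set of states is $X$ and whose alphabet is $Q$. By Definition~\ref{def:classes} applied to $\dz\A$, the transducer $\dz\A$ is invertible exactly when the following holds: for every state $x\in X$ of $\dz\A$ and every letter $r\in Q$ of $\dz\A$, there is exactly one element of $E'$ of the form $(x,\cdot,\cdot,r)$.

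The key step is to transport this condition back to $E$ along the bijection $E\to E'$, $(q,x,r,y)\mapsto(x,q,y,r)$. Under this bijection, the elements of $E'$ of the form $(x,\cdot,\cdot,r)$ correspond exactly to the elements of $E$ of the form $(\cdot,x,r,\cdot)$. The first and fourth coordinates of the image are the second and third coordinates of the preimage. Since the map is a bijection, the number of elements in the two families is the same for each pair $(x,r)\in X\times Q$.

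Consequently, $\dz\A$ is invertible if and only if for all $x\in X$ and $r\in Q$ there is exactly one element of $E$ of the form $(\cdot,x,r,\cdot)$. This is verbatim the definition of $\A$ being reversible, which proves the proposition.

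There is essentially no obstacle here. The only point needing care is bookkeeping: the coordinates of $E'$ must be read with the roles of states and letters swapped, so that "state" in the invertibility condition for $\dz\A$ refers to an element of $X$ and "letter" refers to an element of $Q$. This coordinate swap is exactly the SW–NE reflection of the transition square described before the statement, under which the NW (invertible) corner is exchanged with the SE (reversible) corner.
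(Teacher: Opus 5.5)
Your proof is correct and is essentially the paper's argument. The paper justifies the statement in one line: dualization reflects each transition square across the SW--NE diagonal, which exchanges the NW (invertible) corner with the SE (reversible) corner. Your coordinate bookkeeping along the bijection $E\to E'$, $(q,x,r,y)\mapsto(x,q,y,r)$, is the explicit version of that observation.
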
 

Finally, a Mealy invertible reversible automaton is bireversible if and only if it is coreversible, which is equivalent to either $\dz\iz\A$ being invertible or $\iz\A$ being reversible. The action of $D_4$ on the transition squares also yields $\iz\dz\iz\dz\A=\dz\iz\dz\iz\A$.

In case of invertible Mealy automata, one can give an alternative, equivalent definition of the inverse automaton in line with Definition~\ref{def:mealy2}.

\begin{definition}
\label{def:invertible}
Let $\A=(Q,X,\pi,\lambda)$ be a Mealy automaton.
If for every state $q\in Q$ the output function $\lambda_q(x)=\lambda(q,x)$ induces
a permutation of $X$, then the automaton $\A$ is called \emph{invertible}.
In this case, we define its \emph{inverse} as the automaton $\iz\A=(Q^{-1},X,\tilde{\pi},\tilde{\lambda})$, where
\[\begin{array}{l}
\tilde{\pi}(q^{-1},x)=\pi(q,\lambda_q^{-1}(x))^{-1},\\
\tilde{\lambda}(q^{-1},x)=\lambda_q^{-1}(x).
\end{array}\]
\end{definition}

It follows immediately from the definition of the inverse automaton of $\A=(Q,X,\pi,\lambda)$ that for $q,r\in Q$ and $x,y\in X$,  $\A$  has a transition 
\begin{center}
    \begin{tikzpicture}
    \node[state] (q1) {$q$};
    \node[state, right of=q1] (q2) {$r$};
    \draw (q1) edge[above] node{$x\,|\,y$} (q2);
    \end{tikzpicture}
\end{center}
if and only if $\iz\A$ has the following transition
\begin{center}
    \begin{tikzpicture}
    \node[state] (q1) {$q^{-1}$};
    \node[state, right of=q1] (q2) {$r^{-1}$};
    \draw (q1) edge[above] node{$y\,|\,x$} (q2);
    \end{tikzpicture}.
\end{center}

\begin{definition}
The semigroup (group) generated by all states of an (invertible) Mealy automaton $\A$ viewed as graph endomorphisms (automorphisms) of the rooted tree $X^*$ under the operation of composition is called an \emph{automaton semigroup (group)} and is denoted by $\mathbb{S}(\A)$ (respectively, $\mathbb{G}(\A)$).
\end{definition}

Similarly, an alternative equivalent definition of a \emph{dual automaton} of a Mealy automaton in line with Definition~\ref{def:mealy2} is the following.

\begin{definition}
Given a finite automaton $\A=(Q,X,\pi,\lambda)$, its {\em dual} is the finite automaton $\dz\A=(X,Q,\hat{\lambda},\hat{\pi})$, where for every $x\in X$ and $q\in Q$,
\[\begin{array}{l}
\hat\lambda(x,q)=\lambda(q,x),\\
\hat\pi(x,q)=\pi(q,x).
\end{array}\]
\end{definition}

It follows immediately from the definition of the dual automaton of $\A=(Q,X,\pi,\lambda)$ that for $q,r\in Q$ and $x,y\in X$,
$\A$ has a transition 
\begin{center}
    \begin{tikzpicture}
    \node[state] (q1) {$q$};
    \node[state, right of=q1] (q2) {$r$};
    \draw (q1) edge[above] node{$x\,|\,y$} (q2);
    \end{tikzpicture}
\end{center}
if and only if $\dz\A$ has the following transition
\begin{center}
    \begin{tikzpicture}
    \node[state] (q1) {$x$};
    \node[state, right of=q1] (q2) {$y$};
    \draw (q1) edge[above] node{$q\,|\,r$} (q2);
    \end{tikzpicture}.
\end{center}

It is easy to see that the dual of the dual of an automaton $\A$ coincides with $\A$.



The semigroup $\mathbb{S}(\dz\A)$ generated by $\dz\A$ acts on the free monoid $Q^*$. This induces the action on $\mathbb{S}(\A)$. Similarly, $\mathbb{S}(\A)$ acts on $\mathbb{S}(\dz\A)$.

We point out that classes of finite-state transducers more general than letter transducers that give rise to various classes of groups and semigroups have been studied in the literature. For example, if the set of transitions $E$ of $\A=(Q,X,E)$ is a subset of $Q\times X\times Q\times X^*$, then states of $\A$ define the so called rational transformations of the Cantor set $X^{\mathbb N}$ and $\A$ is referred to as asynchronous automaton. This class was initially studied in~\cite{gns00:automata} and recently drew a lot of attention, as it played a significant role in the solution of the Boone-Higman conjecture for hyperbolic groups~\cite{belk_bmz:hyperbolic_boone-higman}. Thompson's group $F$ belongs to this class too. On the other hand, one can extend the definition of a finite-state transducer to include the transitions from $Q\times X\times Q^*\times X$, in which case one obtains the class of finitely generated self-similar or functionally recursive semigroups (or groups) acting on rooted trees~\cite{nekrash:self-similar}. The (semi)groups from this class may have undecidable word and order problems~\cite{gillibert:order_undecidable18,bartholdi_m:order_undecidable}.

\subsection{Wreath recursion for rooted tree endomorphisms}

For endomorphisms of rooted trees we now introduce the notion of a \emph{state} (also called \emph{section}) at a vertex of the tree $X^*$. Let $s$ be an endomorphism of the tree and $x\in X$. For any $v\in
X^*$ we can write \[s(xv)=s(x)v'\] for some $v'\in X^*$. Then the map $s|_x\colon X^*\to X^*$ given by \[s|_x(v)=v'\] defines an endomorphism of $X^*$ which we call the \emph{state} of $s$ at $x$. We can inductively extend the definition of a section at a letter $x\in X$ to a section at any finite word $x_1x_2\ldots x_n\in X^*$ as follows: \[s|_{x_1x_2\ldots x_n}=s|_{x_1}|_{x_2}\ldots|_{x_n}.\]


We will adopt the following convention throughout this paper. If $s$ and $t$ are elements of some (semi)group acting on a set $Y$ and $y\in Y$, then $$st(y)=t(s(y)).$$
Hence, the state $s|_v$ at $v\in X^*$ of any product $s=s_1s_2\cdots s_n$ for some endomorphisms $s_i$ of $X^*$, $1\leq i\leq n$, can be computed as follows:
$$s|_v=s_1|_v\cdot s_2|_{s_1(v)}\cdots
s_n|_{s_1s_2\cdots s_{n-1}(v)}.$$

Let $\Tr(X)$ denote the monoid of all transformations of a set $X$ with the composition operation. For each automaton semigroup $S$ there is a natural embedding
\[S\hookrightarrow S \wr \Tr(X)\] given by
\begin{equation}
\label{eq:wreath}
G\ni s\mapsto (s_1,s_2,\ldots,s_d)\tau_s\in S\wr \Tr(X),
\end{equation}
where $s_1,s_2,\ldots,s_d$ are the states of $s$ at the vertices of the first level, and $\tau_s$ is the transformation of $X$ induced by the action of $s$ on the first level of the tree. If $\tau_s$ is the trivial permutation, it is customary to omit it in the notation. Similarly, in the case where $G$ is an automaton group, there is a natural embedding
\[G\hookrightarrow G \wr \Sym(X)\] given by
\begin{equation}
\label{eq:wreath_gp}
G\ni g\mapsto (g_1,g_2,\ldots,g_d)\sigma_g\in G\wr \Sym(X),
\end{equation}
where $g_1,g_2,\ldots,g_d$ are the states of $g$ at the vertices of the first level, and $\sigma_g$ is the permutation of $X$ induced by the action of $g$ on the first level of the tree. If $\sigma_g$ is the trivial permutation, it is customary to omit it in the notation. In the case of a binary rooted tree $\{0,1\}^*$, there is only one nontrivial permutation, namely the transposition $(01)$, which is usually denoted simply by $\sigma$. With a slight abuse of notation, we will identify an endomorphism with its image under the embedding~\eqref{eq:wreath} and write $s=(s_1,s_2,\ldots,s_d)\tau_s$ and $g=(g_1,g_2,\ldots,g_d)\sigma_g$ in the group case.

The decomposition of all generators of an automaton (semi)group under the embeddings~\eqref{eq:wreath} and~\eqref{eq:wreath_gp} is called the \emph{wreath recursion} defining the (semi)group. It is a convenient language when doing computations involving the states of endomorphisms. Indeed, the product of endomorphisms can be found as follows. If $s=(s_1,s_2,\ldots,s_d)\tau_s$ and $t=(t_1,t_2,\ldots,t_d)\tau_h$ are two endomorphisms of $X^*$, then \begin{equation}
    \label{eqn:prod}
st=\left(s_1t_{\tau_s(1)},s_2t_{\tau_s(2)},\ldots,s_{d}t_{\tau_s(d)}\right)\tau_s\tau_t
\end{equation}
and if $g=(g_1,g_2,\ldots,g_d)\sigma_g$ is an automorphism of $X^*$, then
\begin{equation}
    \label{eqn:inv}
    g^{-1}=\left(g^{-1}_{\sigma_g^{-1}(1)},g^{-1}_{\sigma_g^{-1}(2)},\ldots,g^{-1}_{\sigma_g^{-1}(d)}\right)\sigma_g^{-1}.
\end{equation}
We will refer to the above formula while discussing matrix representations of Mealy automata in Section~\ref{sec:matchedmatrix-rep}.

\section{Combinatorial encodings}
     \label{sec:comb-encode-automata}

For the rest of the paper, fix positive integers $m$ and $n$.     
For any positive integer $d$, write $[d]=\{1,2,\ldots, d\}$.

In Subsection~\ref{subsec:letter-transducers}, we discuss bijective correspondences among  
   letter transducers with $m$ states and $n$ letters, 
   subsets of $[m]\times[n]\times[m]\times[n]$,
   and certain four-tuples of partitions.
In Subsection~\ref{subsec:square-complex}, we discuss connections with square complexes.   
In Subsection~\ref{subsec:comb-encode-automata}, we examine Mealy automata, which admit an additional 
combinatorial encoding as a pair of matrices, as noted in Section~\ref{sec:introduction}.

\subsection{Combinatorial encodings of letter transducers}
     \label{subsec:letter-transducers}

Our path between automata and Latin square-like objects runs through an analog of orthogonal arrays.
We say more about orthogonal arrays in Section~\ref{sec:array-rep}, but for now we note that 
they are often viewed as a collection of rows (tuples) and that we shall use the term array in this sense.
With this in mind, we introduce the following.
By an \emph{interleaved array} of order $m\times n$ 
we mean a subset of 
   $[m]\times[n]\times [m]\times [n]$.  
Let $\mathsf{IA}(m,n)$  denote the set of interleaved arrays of order $m\times n$.
For any tuple $y$, let $y_s$ denote the $s$-th entry of $y$.

To implement the combinatorial encodings of a letter transducer $\A=(Q,X,E)$, we must fix
   an ordering $(q_1, \ldots, q_m)$ of the set of states $Q$ and 
   an ordering $(x_1, \ldots, x_n)$ of the set of letters $X$. 
We discuss the effect of permuting $Q$ and $X$ during our consideration of isotopy in Section~\ref{sec:isotopism}.
We may identify the elements of $Q$ and $X$ with their indices, 
      so there is no loss in taking $Q=[m]$ and $X=[n]$ when convenient.
At this point, the correspondence between letter transducers and interleaved arrays is natural.

\begin{proposition}
\label{prop:FA-EFLT-bijection}
Fix ordered lists $Q=(q_1, \ldots, q_m)$ and $X=(x_1, \ldots, x_n)$.
\begin{enumerate}
    \item Suppose $\A=(Q, X, E)$ is a letter transducer. 
          Then\\
           $\mathcal{Y}_\A := \{(i,j,k,\ell)\in [m]\times[n]\times [m]\times [n]
                                           : (q_i, x_j, q_k, x_\ell)\in E \}\in \mathsf{IA}(m,n)$.

    \item Suppose $\mathcal{Y}\in \mathsf{IA}(m,n)$.  
          Let $E_\mathcal{Y} = \{(q_i, x_j, q_k, x_\ell )\in Q\times X\times Q\times X
                                           : (i,j,k,\ell)\in\mathcal{Y} \}$.
          Then $\A_\mathcal{Y} := (Q , X, E_\mathcal{Y})$ is a letter transducer.  
\end{enumerate}
Moreover, the maps 
     $\A\mapsto \mathcal{Y}_\A$  and 
     $\mathcal{Y}\mapsto\A_\mathcal{Y}$ 
are inverse bijections between 
    the set of letter transducers of the form $(Q, X, E)$ and 
    elements of $\mathsf{IA}(m,n)$.
We say that an interleaved array and a letter transducer related by these bijections are \emph{associated}.     
\end{proposition}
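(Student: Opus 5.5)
The plan is to verify the two well-definedness claims directly from the definitions, and then check that the two composites are identities.

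For part~1, take a letter transducer $\A=(Q,X,E)$, so $E\subseteq Q\times X\times Q\times X$ by Definition~\ref{def:letter-transducer}. The set $\mathcal{Y}_\A$ is defined as a subset of $[m]\times[n]\times[m]\times[n]$, hence lies in $\mathsf{IA}(m,n)$ by definition of interleaved arrays. It is well defined because the fixed orderings give bijections $[m]\to Q$, $i\mapsto q_i$, and $[n]\to X$, $j\mapsto x_j$. Consequently every element of $E$ has the form $(q_i,x_j,q_k,x_\ell)$ for a unique index tuple $(i,j,k,\ell)$.

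For part~2, $E_\mathcal{Y}$ is by construction a subset of $Q\times X\times Q\times X$. Therefore $(Q,X,E_\mathcal{Y})$ satisfies Definition~\ref{def:letter-transducer} and is a letter transducer.

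For the bijection statement, I would compute both composites. First, $E_{\mathcal{Y}_\A}$ consists of the tuples $(q_i,x_j,q_k,x_\ell)$ with $(i,j,k,\ell)\in\mathcal{Y}_\A$. By definition of $\mathcal{Y}_\A$, these are exactly the tuples $(q_i,x_j,q_k,x_\ell)\in E$. Since the index maps are onto, every element of $E$ arises this way, so $E_{\mathcal{Y}_\A}=E$ and $\A_{\mathcal{Y}_\A}=\A$. Conversely, $(i,j,k,\ell)\in\mathcal{Y}_{\A_\mathcal{Y}}$ if and only if $(q_i,x_j,q_k,x_\ell)\in E_\mathcal{Y}$. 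Because the index maps are injective, this holds if and only if $(i,j,k,\ell)\in\mathcal{Y}$, giving $\mathcal{Y}_{\A_\mathcal{Y}}=\mathcal{Y}$. Two maps that are mutual inverses are bijections.

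There is no real obstacle here. The one point that needs care is the use of injectivity of $i\mapsto q_i$ and $j\mapsto x_j$ in the second composite; this is where the hypothesis that the lists $Q$ and $X$ have distinct entries is used.
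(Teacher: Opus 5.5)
Your proposal is correct and follows the same approach as the paper. The paper gives no written proof here: it calls the correspondence natural and treats the two maps as inverse by construction, as it says explicitly for the analogous Theorem~\ref{thm:bijection-pairs}. Your check of both composites spells that out, using surjectivity of $i\mapsto q_i$ and $j\mapsto x_j$ for $\A_{\mathcal{Y}_\A}=\A$ and injectivity for $\mathcal{Y}_{\A_\mathcal{Y}}=\mathcal{Y}$.
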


The elements of an interleaved array may be organized according to the value in each given coordinate.   
This induces four ordered partitions  of $[m]\times[n]\times [m]\times [n]$, 
with one (possibly empty) cell for each value in a given entry. 
This collection of partitions can be viewed as a geometric object that will resemble a $(k,n)$-net in the bireversible automaton case. 
We will say more about $(k,n)$-nets in Subsection~\ref{subsec:reticulations}. 
For now, we note that they consist of families of parallel lines on a common set of points.
%
Write $\mathsf{P}(S, d)$ to denote the set of ordered partitions of a set $S$ into $d$-many parts, 
some of which may be empty.  
Write $\mathsf{R}(S,m,n)
=\mathsf{P}(S, m)\times\mathsf{P}(S, n)\times\mathsf{P}(S, m)\times\mathsf{P}(S, n)$. 
We call an element of $\mathsf{R}(S,m,n)$ a \emph{rete} on $S$ of order $m\times n$.
Say that a rete $\mathcal{P}=(\{P^1_i\}, \{P^2_j\}, \{P^3_k\}, \{P^4_\ell\})$ is \emph{proper} when $|P^1_i\cap P^2_j\cap P^3_k\cap P^4_\ell|\leq 1$ for all $(i,j,k,\ell)\in[m]\times[n]\times [m]\times [n]$.
Let $\mathsf{PR}(S,m,n)$ be the set of proper rete on $S$ of order $m\times n$.
Write $\mathsf{PR}(\mathcal{Y})$ for $\mathsf{PR}(\mathcal{Y},m,n)$ when $\mathcal{Y}\in\mathsf{IA}(m,n)$.


\begin{lemma}
\label{lem:bijection-FA-R}
Fix a finite set $S$ such that $\mathsf{PR}(S,m,n)\not=\emptyset$.
\begin{enumerate}
\item  Suppose $\mathcal{Y}\in \mathsf{IA}(m,n)$. Define four ordered partitions as follows.
      \begin{enumerate}
          \item For $i\in[m]$, let  $P^1_i= \{(i',j,k,\ell)\in\mathcal{Y} : i'=i\}$,
                and let $\mathcal{P}^1=\{P^1_i\}_{i\in[m]}$.   
          \item For $j\in[n]$, let  $P^2_j= \{(i,j',k,\ell)\in\mathcal{Y} : j'=j\}$,
                and let $\mathcal{P}^2=\{P^2_j\}_{j\in[n]}$.           
          \item For $k\in[m]$, let  $P^3_k= \{(i,j,k',\ell)\in\mathcal{Y} : k'=k\}$,
                and let $\mathcal{P}^3=\{P^3_k\}_{k\in[m]}$.                            
          \item For $\ell\in[n]$, let  $P^4_\ell= \{(i,j,k,\ell')\in\mathcal{Y} : \ell'=\ell\}$,
                and let $\mathcal{P}^4=\{P^4_\ell\}_{\ell\in[n]}$.                   
      \end{enumerate}
      Then $\mathcal{P}_{\!\mathcal{Y}} := (\mathcal{P}^1,\mathcal{P}^2,\mathcal{P}^3,\mathcal{P}^4)\in \mathsf{PR}(\mathcal{Y})$.  
      
\item Let $\mathcal{P}\in\mathsf{PR}(S,m,n)$. 
      Then $\mathcal{Y}_\mathcal{P}
          :=\{ (i,j,k,\ell)\in[m]\times[n]\times [m]\times [n] 
                             : P^1_i\cap P^2_j\cap P^3_k \cap P^4_\ell\not=\emptyset \}\in\mathsf{IA}(m,n)$.
\end{enumerate}
Moreover, the maps   
    $\mathcal{Y}\mapsto \mathcal{P}_{\!\mathcal{Y}}$ and 
    $\mathcal{P}\mapsto\mathcal{Y}_\mathcal{P}$ 
are inverse bijections between the set of interleaved arrays of order $m\times n$ and the set of proper rete on $S$ of order $m\times n$.  
We say that an interleaved array and a proper rete related by these bijections are \emph{associated}.
\end{lemma}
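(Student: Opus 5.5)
The plan is to verify the two constructions directly from the definitions, and then to check that the two composites are identities. The only real subtlety is the ground set: $\mathcal{P}_{\!\mathcal{Y}}$ is a rete on $\mathcal{Y}$ itself, while $\mathcal{P}$ is a rete on an arbitrary $S$. So the statement "inverse bijections" has to be read up to the natural relabelling of the ground set, and I will make that relabelling explicit.

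\emph{Part 1.} Each $y\in\mathcal{Y}$ has exactly one value in each coordinate. Hence for each $s\in\{1,2,3,4\}$ the cells $P^s_h$ are pairwise disjoint and cover $\mathcal{Y}$, possibly with some cells empty. So each $\mathcal{P}^s$ is an ordered partition of $\mathcal{Y}$ into $m$ or $n$ parts, as required. For properness, note that
\[
P^1_i\cap P^2_j\cap P^3_k\cap P^4_\ell=\{(i,j,k,\ell)\}\cap\mathcal{Y},
\]
which has at most one element. Thus $\mathcal{P}_{\!\mathcal{Y}}\in\mathsf{PR}(\mathcal{Y})$.

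\emph{Part 2.} This is immediate, since $\mathcal{Y}_\mathcal{P}$ is by definition a subset of $[m]\times[n]\times[m]\times[n]$.

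\emph{The composites.} For $\mathcal{Y}\mapsto\mathcal{P}_{\!\mathcal{Y}}\mapsto\mathcal{Y}_{\mathcal{P}_{\!\mathcal{Y}}}$, the displayed intersection is nonempty exactly when $(i,j,k,\ell)\in\mathcal{Y}$. Hence $\mathcal{Y}_{\mathcal{P}_{\!\mathcal{Y}}}=\mathcal{Y}$.

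For the other composite, start from $\mathcal{P}\in\mathsf{PR}(S,m,n)$ and define $\varphi\colon S\to[m]\times[n]\times[m]\times[n]$ by letting $\varphi(s)=(i,j,k,\ell)$ be the indices of the unique cells of $\mathcal{P}^1,\mathcal{P}^2,\mathcal{P}^3,\mathcal{P}^4$ containing $s$. These indices exist and are unique because each $\mathcal{P}^t$ partitions $S$.
\begin{itemize}
\item The image of $\varphi$ is exactly $\mathcal{Y}_\mathcal{P}$. Indeed, $P^1_i\cap P^2_j\cap P^3_k\cap P^4_\ell\neq\emptyset$ if and only if some $s$ has $\varphi(s)=(i,j,k,\ell)$.
\item Properness says each such intersection has at most one element, so $\varphi$ is injective.
\item Hence $\varphi\colon S\to\mathcal{Y}_\mathcal{P}$ is a bijection, and $\varphi(P^t_h)=\{y\in\mathcal{Y}_\mathcal{P}:y_t=h\}$, which is the corresponding cell of $\mathcal{P}_{\!\mathcal{Y}_\mathcal{P}}$.
\end{itemize}
So $\mathcal{P}_{\!\mathcal{Y}_\mathcal{P}}$ is the transport of $\mathcal{P}$ along $\varphi$. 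In other words, $\mathcal{P}$ and $\mathcal{P}_{\!\mathcal{Y}_\mathcal{P}}$ coincide once $S$ is identified with $\mathcal{Y}_\mathcal{P}$ via $\varphi$, and this gives the second inverse relation.

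I expect the main obstacle to be stating this identification cleanly rather than any computation. I would make precise that proper rete are considered up to relabelling of the ground set by bijections preserving all four ordered partitions. The argument above shows that $\varphi$ is the canonical such relabelling, and that it is unique, because a point of $S$ is determined by its four cell indices. The hypothesis $\mathsf{PR}(S,m,n)\neq\emptyset$ only guarantees that the target set is nonempty. Properness forces $|S|\le m^2n^2$, and the injectivity of $\varphi$ shows that it is this condition that makes the correspondence one-to-one.
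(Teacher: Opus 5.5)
Your proof is correct and is essentially the paper's argument, written out in full: $\mathcal{P}_{\!\mathcal{Y}}$ is proper because $\mathcal{Y}$ is a set, and properness of $\mathcal{P}$ forces uniqueness. You are right that ``inverse bijections'' only makes sense after identifying $S$ with $\mathcal{Y}_\mathcal{P}$ via your $\varphi$, which the paper's one-line proof leaves implicit; the bijectivity of $\varphi$ also shows that, for fixed $S$, the correspondence only reaches arrays with $|\mathcal{Y}|=|S|$.
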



\begin{proof}
By construction $\mathcal{P}_{\!\mathcal{Y}}$ is a rete on $\mathcal{Y}$ of order $m\times n$.
Note that $\mathcal{P}_{\!\mathcal{Y}}$ is proper since  $\mathcal{Y}$ is a set.
By construction, $\mathcal{Y}_\mathcal{P}$ is an interleaved array.  
Since $\mathcal{P}$ is proper, $\mathcal{Y}_\mathcal{P}$ is uniquely determined, so the maps are bijections.
\end{proof}

Composing the bijections of Proposition~\ref{prop:FA-EFLT-bijection} and Lemma~\ref{lem:bijection-FA-R}, we directly describe a correspondence between letter transducers and proper rete.

\begin{proposition}
\label{prop:EFLT=rete-bijection}
Fix ordered lists $Q=(q_1, \ldots, q_m)$ and $X=(x_1, \ldots, x_n)$ and a finite set $S$.
\begin{enumerate}
    \item Suppose $\A=(Q, X, E)$ is a letter transducer. 
    For $s\in\{1,3\}$ and $h\in[m]$, let $P^s_h=\{y\in E : y_s=q_h\}$.
    For $s\in\{2,4\}$ and $h\in[m]$, let $P^s_h=\{y\in E : y_s=x_h\}$.
    Then 
        $\mathcal{P}_\A := (\{P^1_i\}_{i\in[m]},\{P^2_j\}_{j\in[n]},\{P^3_k\}_{i\in[m]},\{P^4_\ell\}_{i\in[n] })
                        \in \mathsf{PR}(E,m,n)$.

    \item Suppose 
        $\mathcal{P}=(\{P^1_i\}_{i\in[m]},\{P^2_j\}_{j\in[n]},\{P^3_k\}_{i\in[m]},\{P^4_\ell\}_{i\in[n] })\in \mathsf{PR}(S,m,n)$. 
    Let 
        $E_\mathcal{P}:=\{(q_i,x_j,q_k,x_\ell) : P^1_i\cap P^2_j\cap P^3_k\cap P^4_\ell\not=\emptyset\}$. 
    Then $\A_{\mathcal{P}}:= (Q,X, E_\mathcal{P})$ is a letter transducer.
 
\end{enumerate}
Moreover, the maps 
     $\A\mapsto \mathcal{P}_\A$  and 
     $\mathcal{P}\mapsto\A_\mathcal{Y}$ 
are inverse bijections between 
    the set of letter transducers of the form $(Q, X, E)$ and 
    proper rete on $E$.
We say that a letter transducer and a proper rete that are related by these bijections are \emph{associated}.     
\end{proposition}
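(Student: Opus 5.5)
The plan is to obtain Proposition~\ref{prop:EFLT=rete-bijection} by composing the bijection of Proposition~\ref{prop:FA-EFLT-bijection} with that of Lemma~\ref{lem:bijection-FA-R}. The one point needing care is the ground set: Lemma~\ref{lem:bijection-FA-R} places the rete on $\mathcal{Y}$, whereas here it should sit on $E$. We transport it along the canonical bijection $\phi\colon \mathcal{Y}_\A\to E$, $(i,j,k,\ell)\mapsto (q_i,x_j,q_k,x_\ell)$, which is well defined and bijective by Proposition~\ref{prop:FA-EFLT-bijection}, since the lists $Q$ and $X$ have no repeated entries.

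For part~1, start from the letter transducer $\A$ and form $\mathcal{Y}_\A$. Then apply Lemma~\ref{lem:bijection-FA-R}(1) to get $\mathcal{P}_{\mathcal{Y}_\A}\in\mathsf{PR}(\mathcal{Y}_\A)$. Next check that $\phi$ carries each cell $P^s_h$ of $\mathcal{P}_{\mathcal{Y}_\A}$ onto the set $\{y\in E: y_s=q_h\}$ when $s\in\{1,3\}$, or onto $\{y\in E:y_s=x_h\}$ when $s\in\{2,4\}$. This holds because $\phi$ preserves each coordinate up to the indexing. The image is therefore exactly $\mathcal{P}_\A$. Being a proper rete is invariant under relabeling the ground set by a bijection, because $\phi$ preserves both intersections and cardinalities. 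Hence $\mathcal{P}_\A\in\mathsf{PR}(E,m,n)$. Properness can also be seen directly: an element of $P^1_i\cap P^2_j\cap P^3_k\cap P^4_\ell$ must equal the single tuple $(q_i,x_j,q_k,x_\ell)$.

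For part~2, take $\mathcal{P}\in\mathsf{PR}(S,m,n)$. Lemma~\ref{lem:bijection-FA-R}(2) gives $\mathcal{Y}_\mathcal{P}\in\mathsf{IA}(m,n)$, and Proposition~\ref{prop:FA-EFLT-bijection}(2) then gives the letter transducer $\A_{\mathcal{Y}_\mathcal{P}}$. Comparing definitions, its edge set is precisely $E_\mathcal{P}$, so $\A_\mathcal{P}=\A_{\mathcal{Y}_\mathcal{P}}$ is a letter transducer.

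For the inverse-bijection claim, both maps are composites of inverse bijections from the two earlier results, together with the relabeling by $\phi$. Concretely, $\A\mapsto\mathcal{P}_\A\mapsto\A_{\mathcal{P}_\A}$ returns $E$, because the nonempty quadruple intersections of $\mathcal{P}_\A$ are exactly the singletons $\{e\}$ for $e\in E$. In the other direction, $\mathcal{P}\mapsto\A_\mathcal{P}\mapsto\mathcal{P}_{\A_\mathcal{P}}$ recovers $\mathcal{P}$ after identifying $S$ with $E_\mathcal{P}$. This identification works because properness makes each nonempty quadruple intersection a singleton, which yields a bijection between the covered part of $S$ and $E_\mathcal{P}$.

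I expect the main obstacle to be only this bookkeeping of ground sets. The statement says ``proper rete on $E$'', yet $E$ depends on $\A$, and a proper rete on an arbitrary $S$ may have elements lying in no quadruple intersection. I would handle this the way Lemma~\ref{lem:bijection-FA-R} implicitly does: treat rete up to the canonical identification of the ground set with the set of nonempty cells $P^1_i\cap P^2_j\cap P^3_k\cap P^4_\ell$. With that identification fixed, the bijection is immediate.
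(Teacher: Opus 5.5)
Your proposal is correct and takes the same route as the paper, which obtains this proposition simply by composing the bijections of Proposition~\ref{prop:FA-EFLT-bijection} and Lemma~\ref{lem:bijection-FA-R}; your transport along $\phi$ and your identification of the ground set up to relabeling supply exactly the bookkeeping the paper leaves implicit. One small correction: no element of $S$ can lie outside every quadruple intersection, because each $\mathcal{P}^s$ is a partition of $S$, so properness already gives a bijection from all of $S$ (not merely a ``covered part'') onto $E_\mathcal{P}$.
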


\begin{figure}[ht]
\centering
\begin{tikzpicture}
  \matrix[ampersand replacement=\&, row sep=.9cm, column sep=2cm] {
     \node[block] (aut) {$\begin{matrix} 
                \hbox{letter transducer}\\ 
                \A=(Q,X,E) \\ 
                Q=\{q_i\},\ X=\{x_j\}
                \end{matrix}$ }; \&  \& \& \\
    \node[block] (arr) {$\begin{matrix} \hbox{interleaved array}\\ \mathcal{Y} \end{matrix}$ }; \&  \& \&
    \node[block] (par) {$\begin{matrix} 
         \hbox{proper rete}\\ 
         \{P^1_i\}_{i\in[m]},   
         \{P^2_j\}_{j\in[n]} \\
         \{P^3_k\}_{k\in[m]},  
         \{P^4_\ell\}_{\ell\in[n]} 
         \end{matrix}$ };
\\
  };

\draw[connector,<->] (aut) -- node[right] 
    {$\begin{array}{l} 
          (q_i,x_j,q_k,x_\ell)\in E \\ 
          (i,j,k,\ell)\in \mathcal{Y} 
    \end{array}$} (arr); 

\draw[connector,<->] (arr) -- node 
   {$\begin{array}{c} 
       P^s_h =\{y\in\mathcal{Y} : y_s=h\}  \\[4pt]
       \{(i,j,k,\ell) : P^1_i\cap P^2_j\cap P^3_k\cap P^4_\ell\not=\emptyset\}
    \end{array}$}  (par);

\draw[connector,<->] (aut) --  ++(5.5,0) node[yshift=7pt]
   {$\begin{array}{c} 
       P^s_h =\{y\in{E} : y_s=q_h\} \hbox{ ($s$ odd)} \\
       P^s_h =\{y\in{E} : y_s=x_h\} \hbox{ ($s$ even)}  \\[4pt]
       \{(q_i,x_j,q_k,x_\ell) : P^1_i\cap P^2_j\cap P^3_k\cap P^4_\ell\not=\emptyset\}
    \end{array}$}  -|   (par);
\end{tikzpicture}
\caption{Summary of bijections}
\end{figure}

\begin{example}
\label{ex:eflt}
In Figure~\ref{fig:eftl} we present a letter transducer (and its diagram) together with its associated interleaved array and rete.
\begin{figure}[ht]
    \centering
    \noindent
    \fbox{
$\begin{array}{lll}
\\[-5pt]
Q=\{1,2,3\} \\ 
X=\{1,2\}\\
E=\left\{\!\! \begin{array}{ccc}(1,1,2,2) & (1,2,2,2) & (1,1,3,1) \\ 
                                (2,1,2,2) & (2,2,3,1)
\end{array}\!\!\right\}
\\[-7pt]
\\
\end{array}$} 
  \ 
\fbox{
    \begin{tikzpicture}[node distance=.85cm and 2cm,  baseline={([yshift=-3pt]current bounding box.center)}]
    \node[state, inner sep=2pt,minimum size=1pt] (q2) {$2$};
    \node[state,inner sep=2pt,minimum size=1pt, below left =of q2,xshift=13pt,yshift=-20pt] (q1) {$1$};
    \node[state,inner sep=2pt,minimum size=1pt, below right =of q2,xshift=-13pt,yshift=-20pt] (q3) {$3$};

    \draw (q2) edge[loop right] node[right,xshift=0pt,yshift=-2pt]{$1\,|\, 2$}  (q2);
    \draw (q2) edge [out=330,in=110]  node[right]{$2\,|\, 1$} (q3);
 
    \draw (q1) edge [out=0,in=180] node[above,yshift=-4pt]{$1\,|\,1$} (q3);
    \draw (q1) edge [out=80,in=210] node[left,xshift=-3pt]{$1\,|\,2$} (q2);
    \draw (q1) edge [out=30, in=250] node[right,xshift=2pt,yshift=0pt]{$2\,|\, 1$}  (q2);
    \end{tikzpicture}
}

\fbox{$\begin{array}{c}\\ \\ \\ \\[-8pt] \quad \mathcal{Y}=E \quad\\ \\ \\ \\ \\[-8pt] \end{array}$}
\quad\!\!\!
{
$\begin{array}{|c|c|c||c|c|c|c|}
\hline
\mathcal{P}^1_1 & \mathcal{P}^1_2 & \mathcal{P}^1_3 & \mathcal{P}^2_1& \mathcal{P}^2_2\\
\hline
(1,1,2,2)       & (2,1,2,2)        &                & (1,1,2,2)      &(1,2,2,2)\\ 
(1,2,2,2)       & (2,2,3,1)        &                & (1,1,3,1)      &(2,2,3,1)\\ 
(1,1,3,1)       &                  &                & (2,1,2,2)      & \\
\hline
\hline
\mathcal{P}^3_1 & \mathcal{P}^3_2 & \mathcal{P}^3_3 & \mathcal{P}^4_1& \mathcal{P}^4_2\\
\hline
                & (1,1,2,2)       & (2,2,3,1)       & (1,1,3,1)      & (1,1,2,2)\\
                & (1,2,2,2)       & (1,1,3,1)       & (2,2,3,1)      & (1,2,2,2)\\
                & (2,1,2,2)       &                 &                & (2,1,2,2)\\
\hline
\end{array}$}
    \caption{A letter transducer, its diagram, and the associated array and rete}
    \label{fig:eftl}
\end{figure}    
\end{example}

We next discuss conditions on interleaved arrays and rete that correspond to the Mealy, reversible, invertible, and coreversible properties for the associated letter transducer. 
For an interleaved array $\mathcal{Y}$ and distinct $s$, $t\in[4]$, say 
\begin{equation}
\hbox{
    $\mathcal{Y}$ has property $\un{st}$ whenever for all $y$, $z\in \mathcal{Y}$, 
    if $y_s=z_s$ and $y_t=z_t$, then $y=z$.
}
\tag{$\mathfrak{U}$}
\end{equation}
Note that $\mathcal{Y}$ has property $\un{st}$ if and only if it has property $\un{ts}$.

We refer to the cardinality of an interleaved array $\mathcal{Y}$ as its \emph{size}. 
The size of an interleaved array is the number of transitions in an associated letter transducer.

\begin{lemma}
\label{lem:UNsizebound}
If $\mathcal{Y}\in \mathsf{IA}(m,n)$ has property $\un{st}$ for $s$ and $t$ of different parity, 
then $\mathcal{Y}$ has size at most $mn$.
\end{lemma}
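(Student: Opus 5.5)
The plan is to use the projection of $\mathcal{Y}$ onto the coordinates $s$ and $t$ and check that it is injective. Consider the map
\[
\phi_{st}\colon \mathcal{Y}\to [d_s]\times[d_t],\qquad y\mapsto (y_s,y_t),
\]
where $d_s$ denotes the size of the range of the $s$-th coordinate. That range is $[m]$ when $s$ is odd and $[n]$ when $s$ is even, since $\mathcal{Y}\subseteq[m]\times[n]\times[m]\times[n]$. Property $\un{st}$ says exactly that $\phi_{st}(y)=\phi_{st}(z)$ implies $y=z$, so $\phi_{st}$ is injective. Hence $|\mathcal{Y}|\le d_sd_t$.

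Next we use the parity hypothesis. Because $s$ and $t$ have different parity, one of them is odd and the other is even. So $\{d_s,d_t\}=\{m,n\}$ and $d_sd_t=mn$, which gives $|\mathcal{Y}|\le mn$.

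There is no real obstacle here. The only point that needs care is the parity hypothesis: it is what guarantees that the codomain of $\phi_{st}$ is (up to order of factors) $[m]\times[n]$, and not $[m]\times[m]$ or $[n]\times[n]$. In those other cases the same argument would only give the bounds $m^2$ or $n^2$.
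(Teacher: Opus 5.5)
Your proof is correct. It is essentially the argument the paper relies on: the lemma is stated there without a separate proof, but the proof of Theorem~\ref{thm:U(st)-Y<st>} uses the same observation, namely that $\un{st}$ makes $y\mapsto(y_s,y_t)$ a bijection from $\mathcal{Y}$ onto $\mathcal{Y}_{\take{st}}$, and that the parity condition places this image inside a set of size $mn$.
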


We are interested in the case where equality holds in the preceding lemma. We say that

\begin{equation}
\hbox{
    $\mathcal{Y}\in\mathsf{IA}(m,n)$ has property $\mathfrak{T}$ whenever $|\mathcal{Y}|=mn$.
}
\tag{$\mathfrak{T}$}
\end{equation}
Say that $\mathcal{Y}\in\mathsf{IA}(m,n)$ has property $\ut{s,t}$ when it has properties $\un{st}$ and $\mathfrak{T}$. 

\begin{definition}
Let $\mathcal{Y}\in \mathsf{IA}(m,n)$.
For distinct  $s$, $t\in \{1,2,3,4\}$,
   let $\mathcal{Y}_{\take{st}}  = \{ (y_s, y_t) \,|\,  y\in \mathcal{Y}\}$, 
   where $y_h$ denotes the $h$-th entry of $y$.
\end{definition}

View $\mathcal{Y}_{\take{st}}$ as taking columns $s$ and $t$ of the array whose rows are the tuples in $\mathcal{Y}$.

\begin{definition}
Let $S$ be a set, and let  
     ${\mathcal{P}^1}=\{P^{1}_i\}_{i\in \mathcal{M}}$ and 
     ${\mathcal{P}^2}=\{P^{2}_j\}_{j\in \mathcal{N}}$ 
be partitions of $S$. 
Then ${\mathcal{P}^1}$ and ${\mathcal{P}^2}$ are said to be \emph{orthogonal} whenever 
     $|P^1_i\cap P^2_j|=1$ for all $i\in \mathcal{M}$ and $j\in \mathcal{N}$. 
In this case, we write ${\mathcal{P}^1} \perp {\mathcal{P}^2}$.
\end{definition}

See~\cite{bailey:ASDEAC} for a discussion of orthogonal partitions.
If $\mathcal{P}^1$ and $\mathcal{P}^2$ are orthogonal partitions of some set $S$, then for $\{i,j\}=\{1,2\}$
every cell of $\mathcal{P}^i$ contains as many elements as $\mathcal{P}^j$ has cells. 
Moreover, for each of the partitions, the size of $S$ is the product of the common cell size and the number of cells.

\begin{theorem}
\label{thm:U(st)-Y<st>}\label{lem:GA-EAorth}
Let $\mathcal{Y}\in\mathsf{IA}(m,n)$.
Let $({\mathcal{P}^1}, {\mathcal{P}^2}, {\mathcal{P}^3},{\mathcal{P}^4})$ be the rete associated with $\mathcal{Y}$. 
Let $s\in\{1,3\}$ and $t\in\{2,4\}$.
Then the following are equivalent.
\begin{enumerate}
\item $\mathcal{Y}$  has property $\ut{st}$. 
%
\item $\mathcal{Y}_\take{st}=[m]\times [n]$. 
\item ${\mathcal{P}^s}$ and ${\mathcal{P}^t}$ are orthogonal.
\end{enumerate}
\end{theorem}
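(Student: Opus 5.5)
The plan is to organize everything around the projection map $\phi\colon \mathcal{Y}\to[m]\times[n]$, $\phi(y)=(y_s,y_t)$. Its image is exactly $\mathcal{Y}_\take{st}$. Since $s\in\{1,3\}$ and $t\in\{2,4\}$, component $s$ takes values in $[m]$ and component $t$ takes values in $[n]$. Property $\un{st}$ says precisely that $\phi$ is injective. Likewise, $P^s_i\cap P^t_j=\phi^{-1}(i,j)$ for every $(i,j)\in[m]\times[n]$. Once these translations are recorded, each condition becomes a statement about the fibers of $\phi$, and I will prove (1)$\Rightarrow$(2)$\Rightarrow$(3)$\Rightarrow$(1).

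For (1)$\Rightarrow$(2): property $\un{st}$ makes $\phi$ injective, so $|\mathcal{Y}_\take{st}|=|\mathcal{Y}|=mn$ by property $\mathfrak{T}$. A subset of $[m]\times[n]$ of size $mn$ is the whole set, which gives (2). (This is also where Lemma~\ref{lem:UNsizebound} comes from: injectivity into a set of size $mn$.)

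For (2)$\Rightarrow$(3): surjectivity of $\phi$ gives $|P^s_i\cap P^t_j|\ge 1$ for all $i,j$. The upper bound is the only real subtlety, because (2) by itself says nothing about injectivity. Here I would check whether the intended reading of (2) is as an equality of multisets, that is, taking the list of pairs $(y_s,y_t)$ over rows of the array. Under that reading each pair occurs exactly once, each fiber has size exactly one, and orthogonality follows immediately. If $\mathcal{Y}_\take{st}$ is read only as a set, this implication needs injectivity, and I would flag that the statement must be interpreted with multiplicity, in line with the remark that $\mathcal{Y}_\take{st}$ is "taking columns $s$ and $t$ of the array." I expect this interpretive point to be the main obstacle.

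For (3)$\Rightarrow$(1): orthogonality says every fiber $\phi^{-1}(i,j)$ has exactly one element. Hence $\phi$ is a bijection, which gives $\un{st}$, and $|\mathcal{Y}|=\sum_{i,j}|P^s_i\cap P^t_j|=mn$, which gives $\mathfrak{T}$. One further point needs checking: all $m$ (respectively $n$) cells are included in the indexing of the partition, even empty ones. If an empty cell is present, orthogonality fails, so nothing is lost by this convention.
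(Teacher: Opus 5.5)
Your argument is correct and is essentially the paper's. Both reduce everything to the fibers $P^s_i\cap P^t_j=\{y\in\mathcal{Y} : (y_s,y_t)=(i,j)\}$ of the projection onto components $s,t$, and your steps (3)$\Rightarrow$(1) and (1)$\Rightarrow$(2) match the paper's counting.

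The subtlety you flag in (2)$\Rightarrow$(3) is real. With $\mathcal{Y}_{\take{st}}$ defined as a set, condition (ii) implies neither (i) nor (iii). For a counterexample, take $\mathcal{Y}=[m]\times[n]\times[m]\times[n]$ with $mn>1$: its projection is all of $[m]\times[n]$, yet every fiber has $mn$ elements. So no argument can prove the literal set-reading version.

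The paper repairs this differently. Its proof opens with ``by assumption $|\mathcal{Y}|=mn$,'' which tacitly adds property $\mathfrak{T}$ as a hypothesis. After that, (ii)$\Rightarrow$(i) is just the fact that a surjection from an $mn$-element set onto $[m]\times[n]$ is injective.

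Your multiset reading is the same repair in another form. ``Each pair of $[m]\times[n]$ occurs exactly once among the $(y_s,y_t)$'' is equivalent to ``(ii) as sets and $|\mathcal{Y}|=mn$.'' So your proof goes through under either formulation.
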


\begin{proof}
By assumption $|\mathcal{Y}|=mn$, and 
by construction  $\mathcal{Y}_\take{st}$ is a subset of    $[m]\times [n]$.
Now $\mathcal{Y}$ has property $\un{st}$ if and only if
no two elements of $\mathcal{Y}$ have the same entries in positions $s$ and $t$ if and only if
 $|\mathcal{Y}_\take{st}| = mn$ if and only if 
 $\mathcal{Y}_\take{st}=[m]\times [n]$.

Write ${\mathcal{P}^s}=\{P^{s}_i\}_{i\in [m]}$ and ${\mathcal{P}^t}=\{P^{t}_j\}_{j\in [n]}$.
In the rete of $\mathcal{Y}$, $P^s_i\cap P^t_j= \{(i,j,k,\ell)\in\mathcal{Y} : y_s=i, y_t=j\}$.
Now  
${\mathcal{P}^1}$ and ${\mathcal{P}^2}$ are orthogonal
         if and only if 
$|P^s_i\cap P^t_j|=1$ for all $(i,j)\in[m]\times[n]$ 
          if and only if
$|\mathcal{Y}|\geq mn$ and         
for $y$, $z\in\mathcal{Y}$,  $y_s=z_s$ and $y_t=z_t$ implies $y=z$, that is, $\un{st}$ holds.  
Now $\mathcal{Y}$ has property $\mathfrak{T}$ by Lemma~\ref{lem:UNsizebound}.
\end{proof}

We now relate the combinatorial properties of Theorem~\ref{thm:U(st)-Y<st>} to classes of letter transducers.

\begin{theorem}
\label{thm:classes-U(st)}
Let $\A$, $\mathcal{Y}$, and $\mathcal{P}= (\mathcal{P}^1,\mathcal{P}^2,\mathcal{P}^3,\mathcal{P}^4)$ 
be an associated  letter transducer, interleaved array, and rete, respectively.
Then $\A$ is  
       Mealy, 
       reversible, 
       coreversible, or 
       invertible, respectively, 
   if and only if 
   the equivalent conditions of Theorem~\ref{thm:U(st)-Y<st>} hold for 
       $(s,t)=(1,2)$, 
       $(s,t)=(2,3)$, 
       $(s,t)=(3,4)$, or  
       $(s,t)=(4,1)$, respectively.
\end{theorem}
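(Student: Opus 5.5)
The plan is to reduce everything to condition (2) of Theorem~\ref{thm:U(st)-Y<st>}, reformulated as a counting statement. By the bijection of Proposition~\ref{prop:FA-EFLT-bijection}, we identify $E$ with $\mathcal{Y}$ via $(q_i,x_j,q_k,x_\ell)\leftrightarrow(i,j,k,\ell)$. In each of the four cases, the pair $\{s,t\}$ consists of one odd and one even coordinate: $\{1,2\}$, $\{2,3\}$, $\{3,4\}$, $\{4,1\}$. So Theorem~\ref{thm:U(st)-Y<st>} applies, after using the symmetry $\un{st}=\un{ts}$ and noting that $\mathcal{Y}_\take{st}=[m]\times[n]$ is just a reindexing of $\mathcal{Y}_\take{ts}=[n]\times[m]$. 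It therefore suffices to show, for each listed pair, that the automaton property holds if and only if the map $\mathcal{Y}\to[m]\times[n]$, $y\mapsto(y_s,y_t)$ (suitably ordered), is a bijection.

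The key observation is that each defining condition has the same form. Mealy means deterministic plus complete. Definitions~\ref{def:mealy} and~\ref{def:classes} then say, respectively, that for every $(q,x)$, every $(x,r)$, every $(r,y)$, and every $(q,y)$ there is \emph{exactly one} transition with those entries in positions $(1,2)$, $(2,3)$, $(3,4)$, $(1,4)$. Translated to $\mathcal{Y}$, this says precisely that every element of $[m]\times[n]$ (or $[n]\times[m]$) has exactly one preimage under projection onto coordinates $(s,t)$.

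I would carry out the two directions as follows. \emph{At least one} preimage for every pair is exactly surjectivity of the projection, i.e.\ $\mathcal{Y}_\take{st}=[m]\times[n]$. \emph{At most one} preimage is exactly injectivity, i.e.\ property $\un{st}$. Injectivity together with surjectivity gives $|\mathcal{Y}|=mn$, which is property $\mathfrak{T}$, hence $\ut{st}$. Conversely, $\ut{st}$ gives injectivity, and $|\mathcal{Y}|=mn$ forces surjectivity, which is condition (1) of Theorem~\ref{thm:U(st)-Y<st>}. I would write this out once for a generic pair $(s,t)$ and then instantiate it for the four pairs.

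There is no real obstacle here. The only care needed is bookkeeping: matching each class to the correct coordinate pair (invertible to $(4,1)$, not $(2,4)$), and handling the order of the factors $[m]\times[n]$ when $s$ is even, as in $(2,3)$ and $(4,1)$. I would dispose of the latter by the symmetry remark above.
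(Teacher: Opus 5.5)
Your proposal is correct and matches the paper's approach: the paper's proof just says the result is ``straightforward from Definitions~\ref{def:mealy} and~\ref{def:classes},'' and you spell out that each ``exactly one transition'' condition says the projection $y\mapsto(y_s,y_t)$ is a bijection, which is exactly property $\ut{st}$. You were right to land on condition (1) rather than on $\mathcal{Y}_{\take{st}}=[m]\times[n]$ alone, since surjectivity by itself only captures the ``at least one'' half (e.g.\ completeness rather than Mealy) unless $|\mathcal{Y}|=mn$ is also known.
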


\begin{proof}
The result is straightforward from Definitions~\ref{def:mealy} and~\ref{def:classes}.
\end{proof}

\subsection{Connection with $\mathcal{VH}$ square complexes}
     \label{subsec:square-complex}

A construction similar to interleaved arrays in the context of $\mathcal{VH}$ square complexes under the name $\mathcal{VH}$-datum was introduced in the paper of Burger and Mozes~\cite{burger_m:lattices_in_product_of_trees00}.

Fix finite sets $Q$ and $X$ of cardinalities $m\geq 2$ and $n\geq 2$, respectively. Let $Q^{-1}$ and $X^{-1}$ denote the sets containing formal inverses to elements of $Q$ and $X$. A $(2m,2n)$ $\mathcal{VH}$ \emph{square complex} $\mathcal S$ is a 1-vertex square complex whose 1-skeleton is a wedge of $2m+2n$ circles, labeled by elements of $Q$, $X$, $Q^{-1}$, and $X^{-1}$, and whose 2-cells are squares with the labeling scheme in which labels from $Q\sqcup Q^{-1}$ alternate with the ones from $X\sqcup X^{-1}$. A complex $\mathcal S$ is called \emph{complete} if for each $q\in Q\sqcup Q^{-1}$ and $x\in X\sqcup X^{-1}$ there are unique $q'\in Q\sqcup Q^{-1}$ and $x'\in X\sqcup X^{-1}$ such that there is a square in $\mathcal S$ with labeling scheme $qxq'x'$ (taking into account that each square gives rise to up to 8 labeling schemes coming from the action of the dihedral group $D_4$ on this square). This condition is equivalent to the, so-called, \emph{link condition}, requiring the link at the unique vertex of $\mathcal S$ to be complete bipartite graph with the two parts coming from the partition of edges into horizontal and vertical ones.

Such complexes appear naturally in the theory of lattices. Namely, the universal cover of a complete square complex $\mathcal S$ is homeomorphic to a product of regular unrooted trees $\mathcal T_m\times\mathcal T_n$, and its fundamental group $\pi_1(\mathcal S)<\Aut(\mathcal T_m\times\mathcal T_n)$ is a finitely presented torsion-free cocompact lattice that naturally acts on this space freely and transitively on the vertices. The notion of $\mathcal{VH}$-datum from~\cite{burger_m:groups_acting_on_trees00} in our notation is a 5-tuple $(Q\sqcup Q^{-1},X\sqcup X^{-1}, \phi_Q, \phi_X, R)$, where $\phi_Q\colon Q\sqcup Q^{-1}\to Q\sqcup Q^{-1}$ and $\phi_X\colon X\sqcup X^{-1}\to X\sqcup X^{-1}$ are involutions sending every element to its inverse, and 
\[R\subset (Q\sqcup Q^{-1})\times (X\sqcup X^{-1})\times (Q\sqcup Q^{-1})\times (X\sqcup X^{-1}).\] 
Therefore, the set $R$ corresponds to the set of transitions of a letter transducer and can be described by an interleaved array.

Glasner and Mozes~\cite{gl_mo:compl} were the first to connect these complexes to Mealy automata via a natural link associating a square form Figure~\ref{fig:classes} to each transition
\begin{center}
    \begin{tikzpicture}
    \node[state] (q1) {$q$};
    \node[state, right of=q1] (q2) {$r$};
    \draw (q1) edge[above] node{$x\,|\,y$} (q2);
    \end{tikzpicture}
\end{center}
in a Mealy automaton. Clearly, the same construction can be applied in the more general case of letter transducers. Each such automaton $\A$ gives rise to a $\mathcal{VH}$ square complex $\mathcal S_{\A}$ as described in the previous paragraph. Moreover, each such complex will be \emph{directed} in the sense that the parallel edges in the squares are required to have the same orientation. One can also construct a letter transducer $\A_{\mathcal S}$ from a $\mathcal{VH}$ square complex $\mathcal S$, but this construction is not exactly inverse to the above procedure. Namely, each $(2m,2n)$ $\mathcal{VH}$ square complex $\mathcal S$ defines a letter transducer $\mathcal A_{\mathcal S}$ with the set of states $Q\sqcup Q^{-1}$ acting on $(X\sqcup X^{-1})^*$. Each of the squares of $\mathcal S$ defines 4 transitions in $\A_{\mathcal S}$ corresponding to two corners in each of the orientations of this square. For a letter transducer $\A$ with the set of states $Q$ acting on $X^*$, the letter transducer $\A_{\mathcal S_{\A}}$ describes not only the action of $Q$ on letters from $X$, but also of $Q\sqcup Q^{-1}$ on $X\sqcup X^{-1}$. The automata $\A, \iz\A, \dz\iz\dz\A$, and $\iz\dz\iz\dz\A=\dz\iz\dz\iz\A$ can be recovered from $\A_{\mathcal S_{\A}}$ by restricting the state sets to either $Q$ or $Q^{-1}$ and considering the actions on the subtrees $X^*$ or $(X^{-1})^*$ of $(X\sqcup X^{-1})^*$. In a similar fashion, the automata $\dz\A, \iz\dz\A, \dz\iz\A$, and $\dz\iz\dz\A$ can be recovered from $\A_{\mathcal S_{\dz\A}}$.

Bondarenko and Kivva in~\cite{bondarenko_k:automaton_groups_and_square_complexes22} used this connection to prove that fundamental groups of certain complete square complexes constructed in~\cite{wise:PhD96} as finitely presented small-cancellation, CAT(0), and automatic groups are non-residually finite, thus contributing to an open question of whether word-hyperbolic groups are residually finite. In particular, one of the complexes considered by Wise turned out to be associated with the Aleshin automaton.

Stix and Vdovina in~\cite{stix_v:simply_transitive_quaternionic_lattices17} produced a mass formula to calculate the number of complete $\mathcal{VH}$-square complexes generalizing earlier results of Rattaggi~\cite{rattaggi:PhD04}. More recently, variants of this structure were used in~\cite{bondarenko_gv:ramanujan_subshifts} in the context of quaternionic lattices and Ramanujan subshifts. In particular, families of Ramanujan graphs as Schreier graphs of automaton groups associated to certain $\mathcal{VH}$-square complexes were explicitly constructed. In a recent preprint~\cite{pask:full_mealy_automata_complete_square_complexes_and_anti-tori}, Pask provides conditions under which a square complex associated to co-reversible Mealy automaton contains an anti-torus, one of which requires an automaton to be bireversible.

We note that most of the mentioned papers do not concentrate on \emph{directed} $\mathcal{VH}$ square complexes. Therefore, while each of such complete square complex does define a bireversible automaton, such an automaton will have the set of states that contains inverses to all generators and will act on a rooted $2n$-regular tree corresponding to $X\sqcup X^{-1}$ (see, for example, Theorem~5.1 in~\cite{bondarenko_k:automaton_groups_and_square_complexes22}). In particular, the mass formula for complete square complexes given in~\cite{stix_v:simply_transitive_quaternionic_lattices17} describes this class of bireversible automata. In this paper, we investigate bireversible automata corresponding to directed $\mathcal{VH}$ square complexes. On one hand, this is a more restricted class of bireversible automata as we impose an extra condition on the corresponding square complexes; on the other hand, we are not requiring the state sets and alphabets of our automata to be symmetric (to contain inverses of all elements). 

We note that similar complexes consisting of triangles have been studied for Latin squares
\cite{Grannell:biembeddings,Griggs:maximumgenus}.

\subsection{Combinatorial encodings of Mealy automata}
     \label{subsec:comb-encode-automata}

We now consider combinatorial representations of Mealy automata. 
We begin by establishing some notation needed to describe the encodings. 
The encodings resemble 
      orthogonal arrays, 
      Latin squares, and 
      $(k,n)$-nets,
and they are related similarly.
As was the case for letter transducers,  orderings of the sets of states and letters must be fixed.

Viewing a Mealy $(m,n)$-automaton as a letter transducer (Definition~\ref{def:mealy}), we encode its transition set as the relevant interleaved arrays from Theorem~\ref{thm:classes-U(st)}.   
Let $\mathsf{GA}(m,n)$ denote the subset of $\mathsf{IA}(m,n)$ consisting of 
     interleaved arrays with property $\ut{12}$.  
We refer to an element of $\mathsf{GA}(m,n)$ as a \emph{grid array} of order $m\times n$.
The restriction of a grid array of order $m\times n$ to the first two components is $[m]\times[n]$ by Theorem~\ref{thm:U(st)-Y<st>}, 
so these components name positions in an $m\times n$ matrix or the points of an $m\times n$ grid. 
The latter observation motivates the name.

Another encoding of a Mealy $(m,n)$-automaton arises from specializing the associated rete with the corresponding conditions of Theorem~\ref{thm:U(st)-Y<st>}.
Write $\mathsf{GR}(m,n)$ to denote the set of rete on grid arrays of order $m\times n$.
In a grid rete the first two partitions are orthogonal by Lemma~\ref{lem:bijection-FA-R} and Theorem~\ref{thm:classes-U(st)}; in particular, for all $(i,j)\in[m]\times[n]$ there is a unique element in $\mathcal{P}^1_i\cap\mathcal{P}^2_j$.
We refer to an element of $\mathsf{GR}(m,n)$ as a \emph{grid rete}. 

The structure of a Mealy $(m,n)$-automaton admits another combinatorial encoding that is not available to general letter transducers. 
Starting from  Definition~\ref{def:mealy2}, a natural approach  is to encode each of 
     the transition function $\pi\colon Q\times X\rightarrow Q$ and 
     the output function $\lambda\colon Q\times X\rightarrow X$ as a matrix.
Let $\mathsf{M}_{m\times n}(E)$ denote the set of $m\times n$ matrices with entries from some set $E$.
Let  $\mathsf{MM}(m,n)=\mathsf{M}_{m\times n}([m])\times \mathsf{M}_{m\times n}([n])$. 
We refer to an element of $\mathsf{MM}(m,n)$ as a \emph{matched (pair of) matrices} of order $m\times n$. 
We elaborate on the encoding of finite automata by matched matrices in Section~\ref{sec:matchedmatrix-rep}.

\begin{example}
The letter transducer  in Example~\ref{ex:eflt} is neither deterministic nor complete, 
so it does not admit an encoding by matched matrices:  
Some entries would be assigned multiple values, and others would be assigned no value. 
\end{example}

To the bijections in 
    Proposition~\ref{prop:FA-EFLT-bijection}, 
    Lemma~\ref{lem:bijection-FA-R}, and 
    Proposition~\ref{prop:EFLT=rete-bijection}, 
we add bijections with matched matrices.
Here we use the notation of Definition~\ref{def:mealy2} for Mealy automata.

\begin{definition}
\label{def:bijections}
Fix ordered lists $Q=(q_1, \ldots, q_m)$ and $X=(x_1, \ldots, x_n)$.
\begin{enumerate}
\item  Let $(C,R)\in \mathsf{MM}(m,n)$.
	\begin{enumerate}
	\item 
            Define 
    		$\pi_{(C,R)}\colon Q\times X\rightarrow Q$ by $\pi_{(C,R)}(q_i, x_j) = q_{C(i,j)}$ and 
    		$\lambda_{(C,R)}\colon Q\times X\rightarrow X$ by $\lambda_{(C,R)}(q_i, x_j) = x_{R(i,j)}$ $((i,j)\in[m]\times[n])$.  
    		Let $\A_{(C,R)} = (Q, X, \pi_{(C,R)}, \lambda_{(C,R)})$.
	
	\item Define     
     		$\mathcal{Y}_{(C,R)} \in \mathsf{GA}(m,n)$ by\\
    		$\mathcal{Y}_{(C,R)} = \{ (i,j,k,\ell) :  (i,j)\in[m]\times[n], k=C(i,j), \ell=R(i,j)\}$.
		 
	\item Define  $\mathcal{P}_{(C,R)} \in \mathsf{GR}(m,n)$ by
         $\mathcal{P}_{\!(C,R)}=(\mathcal{P}^1_{\!(C,R)},\mathcal{P}^2_{\!(C,R)},\mathcal{P}^3_{\!(C,R)},\mathcal{P}^4_{\!(C,R)})$,\\
            where
               $\mathcal{P}^s_{\!(C,R)}=\{P^s_h\}_{h\in[d]}$ 
            with 
               $P^s_h = \{y=(i,j,C(i,j),R(i,j)) :  (i,j)\in[m]\times[n],y_s=h\}$ 
               ($d=m$ if $s\in\{1,3\}$, $d=n$ if $s\in\{2,4\}$).

   \end{enumerate}        

\item 
    \begin{enumerate}
    \item Let $\mathcal{A}=(Q, X, \pi,\lambda)$ be a Mealy automaton.
      Define 
    	 $(C_\A,R_\A)\in \mathsf{MM}(m,n)$ by 
    		 $C_\A(i,j) = k$, where $\pi(q_i, x_j)=q_k$, and 
   		 $R_\A(i,j) = \ell$, where $\lambda(q_i, x_j)=x_\ell$    
    		         $((i,j)\in[m]\times[n])$.

    \item Let $\mathcal{Y}\in\mathsf{GA}(m,n)$. 
      Define $(C_\mathcal{Y}, R_\mathcal{Y})\in \mathsf{MM}(m,n)$ by 
		$C_\mathcal{Y}(i,j)= k$ and 
		$R_\mathcal{Y}(i,j)=\ell$ 
		when $(i,j,k,\ell)\in \mathcal{Y}$.

    \item Let $\mathcal{P}=(\mathcal{P}^1, \mathcal{P}^2,\mathcal{P}^3, \mathcal{P}^4)\in\mathsf{GR}(m,n)$.
      Let $p_{i,j}$ denote the unique element of $\mathcal{P}^1_i\cap \mathcal{P}^2_j$.
      Define $(C_\mathcal{P}, R_\mathcal{P})\in \mathsf{MM}(m,n)$ by 
		$C_\mathcal{P}(i,j)= k$ where $p_{i,j}\in P^3_k$ and 
		$R_\mathcal{P}(i,j)=\ell$ where $p_{i,j}\in P^4_\ell$.
    \end{enumerate}              
\end{enumerate}
\end{definition}

Let $\mathsf{MA}(m,n)$ denote the set of Mealy automata with states $Q=[m]$ and letters $X=[n]$.

\begin{theorem}
\label{thm:bijection-pairs}
\label{lem:bijection-automata-matrices}
The following pairs of maps in Proposition~\ref{prop:FA-EFLT-bijection}, 
    Lemma~\ref{lem:bijection-FA-R},  
    Proposition~\ref{prop:EFLT=rete-bijection}, and 
    Definition~\ref{def:bijections} are inverse bijections.
\[
\begin{array}{| rcl || rl | rl |}
\hline
\mathsf{MA}(m,n) &\leftrightarrow & \mathsf{GA}(m,n)   & \A&\mapsto\mathcal{Y}_\A                      & \mathcal{Y}&\mapsto \A_\mathcal{Y}                 \\ \hline
\mathsf{MA}(m,n) &\leftrightarrow & \mathsf{MM}(m,n)   & \A&\mapsto(C_\A, R_\A)                        & (C,R)&\mapsto \A_{(C,R)}                           \\ \hline
\mathsf{MA}(m,n) &\leftrightarrow & \mathsf{GR}(m,n)   & \A&\mapsto\mathcal{P}_\A                      & \mathcal{P}&\mapsto \A_\mathcal{P}                 \\ \hline
\mathsf{MM}(m,n)&\leftrightarrow & \mathsf{GA}(m,n)   & (C,R)&\mapsto \mathcal{Y}_{(C,R)}             & \mathcal{Y}&\mapsto(C_\mathcal{Y}, R_\mathcal{Y})  \\ \hline
\mathsf{MM}(m,n)&\leftrightarrow & \mathsf{GR}(m,n)   & (C,R)&\mapsto \mathcal{P}_{(C,R)}             & \mathcal{P}&\mapsto (C_\mathcal{P}, R_\mathcal{P}) \\ \hline
\mathsf{GA}(m,n)&\leftrightarrow & \mathsf{GR}(m,n)   & \mathcal{Y}&\mapsto \mathcal{P}_\mathcal{Y}   &\mathcal{P}&\mapsto \mathcal{Y}_\mathcal{P}         \\ \hline
 \end{array}
\]
We say that a 
    Mealy automaton, 
    matched pair of matrices, 
    grid array, and 
    grid rete 
related by these bijections are \emph{associated}.
\end{theorem}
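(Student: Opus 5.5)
The plan is to establish only three of the six pairs directly and obtain the rest by composition. The pair $\mathsf{MA}(m,n)\leftrightarrow\mathsf{GA}(m,n)$ is the restriction of the bijection of Proposition~\ref{prop:FA-EFLT-bijection}. By Theorem~\ref{thm:classes-U(st)} with $(s,t)=(1,2)$, a letter transducer is Mealy exactly when its interleaved array has property $\ut{12}$, that is, lies in $\mathsf{GA}(m,n)$. Since a bijection restricts to a bijection between a subset and its image, this pair is done once we identify the image. For $\mathsf{GA}(m,n)\leftrightarrow\mathsf{GR}(m,n)$, note that $\mathsf{GR}(m,n)$ is by definition the set of rete $\mathcal{P}_\mathcal{Y}$ with $\mathcal{Y}\in\mathsf{GA}(m,n)$. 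Lemma~\ref{lem:bijection-FA-R} then gives injectivity of $\mathcal{Y}\mapsto\mathcal{P}_\mathcal{Y}$ and the inverse $\mathcal{P}\mapsto\mathcal{Y}_\mathcal{P}$. The pair $\mathsf{MA}(m,n)\leftrightarrow\mathsf{GR}(m,n)$ is likewise the restriction of Proposition~\ref{prop:EFLT=rete-bijection}. I would also check that it agrees with the composite of the previous two maps, which holds because $E$ and $\mathcal{Y}_\A$ are identified via the index maps.

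The substantive step is $\mathsf{MM}(m,n)\leftrightarrow\mathsf{GA}(m,n)$. First I would verify that $\mathcal{Y}_{(C,R)}$ lies in $\mathsf{GA}(m,n)$. Its elements are indexed injectively by $(i,j)\in[m]\times[n]$, so it has size $mn$. Two elements that agree in the first two coordinates come from the same $(i,j)$ and are therefore equal, so property $\ut{12}$ holds. Conversely, for $\mathcal{Y}\in\mathsf{GA}(m,n)$, Theorem~\ref{thm:U(st)-Y<st>} gives $\mathcal{Y}_{\take{12}}=[m]\times[n]$. Property $\un{12}$ then says that each $(i,j)$ occurs in exactly one tuple, so $C_\mathcal{Y}$ and $R_\mathcal{Y}$ are well defined. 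Checking that both composites are the identity is immediate: $C_{\mathcal{Y}_{(C,R)}}(i,j)=C(i,j)$, and $\mathcal{Y}_{(C_\mathcal{Y},R_\mathcal{Y})}\subseteq\mathcal{Y}$ with both sets of size $mn$.

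Next I would do $\mathsf{MA}(m,n)\leftrightarrow\mathsf{MM}(m,n)$ directly. The functions $\pi$ and $\lambda$ are exactly encoded by the entries of the matrices, so $\A_{(C_\A,R_\A)}=\A$ and $(C_{\A_{(C,R)}},R_{\A_{(C,R)}})=(C,R)$. I would also note that this map is the composite of $\A\mapsto\mathcal{Y}_\A$ with $\mathcal{Y}\mapsto(C_\mathcal{Y},R_\mathcal{Y})$, using the identification $(q,x,r,y)\in E\iff \pi(q,x)=r,\ \lambda(q,x)=y$.

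Finally, for $\mathsf{MM}(m,n)\leftrightarrow\mathsf{GR}(m,n)$, I would show that $\mathcal{P}_{(C,R)}=\mathcal{P}_{\mathcal{Y}_{(C,R)}}$ by comparing definitions. I would then show that $(C_\mathcal{P},R_\mathcal{P})=(C_{\mathcal{Y}_\mathcal{P}},R_{\mathcal{Y}_\mathcal{P}})$. This holds because $p_{i,j}\in P^3_k\cap P^4_\ell$ exactly when $(i,j,k,\ell)\in\mathcal{Y}_\mathcal{P}$, and $p_{i,j}$ is unique by orthogonality of $\mathcal{P}^1$ and $\mathcal{P}^2$ (Theorem~\ref{thm:U(st)-Y<st>}). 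These maps are thus composites of the bijections already established, and hence are inverse bijections. The only real obstacle is bookkeeping: confirming well-definedness, in particular the uniqueness of $p_{i,j}$ and of the $k,\ell$ attached to each $(i,j)$, and that the composites agree with the stated maps. There is no conceptual difficulty.
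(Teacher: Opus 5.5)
Your proposal is correct and takes essentially the same approach as the paper, whose entire proof is the remark that each pair of maps is mutually inverse by construction. You make that bookkeeping explicit, which the paper leaves implicit: you identify the images via Theorem~\ref{thm:classes-U(st)}, check that $C_\mathcal{Y}$, $R_\mathcal{Y}$, and $p_{i,j}$ are well defined, and factor the $\mathsf{MM}(m,n)\leftrightarrow\mathsf{GR}(m,n)$ maps through $\mathsf{GA}(m,n)$.
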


\begin{proof}
Each pair of maps are inverses by construction and hence bijections.
\end{proof}

The bijections among matched matrices, grid arrays, and grid rete are essentially those relating 
    orthogonal Latin squares,  
    (certain) orthogonal arrays, and 
    $(k,n)$-nets.  
As we move toward 
    cooperative pairs, 
    narrow semi-orthogonal arrays, and 
    reticulations 
(the bireversible Mealy automata case), the analogy grows stronger.   
Indeed,  in Example~\ref{ex:LatinSquare} we saw that orthogonal Latin squares are cooperative pairs.  
We shall show that orthogonal arrays are to a narrow semi-orthogonal array in Subsection~\ref{subsec:birev-array}.
We shall show that $(k,n)$-nets are reticulations in Subsection~\ref{subsec:reticulations}.

\begin{figure}[ht]
\begin{tikzpicture}
  \matrix[ampersand replacement=\&, row sep=2.45cm, column sep=1.9cm] {
     \node[block] (aut) {$\begin{matrix} 
                             \hbox{automaton}\\ 
                             \A=(Q,X,\pi,\lambda) \\ 
                             Q=\{q_i\},\ X=\{x_j\}
                          \end{matrix}$ }; \& 
 \& \&
     \node[block] (mat) {$\begin{matrix} \hbox{matched matrices}\\ (C,R) \end{matrix}$ };\\
     \node[block] (arr) {$\begin{matrix} \hbox{grid array}\\ \mathcal{Y} \end{matrix}$ }; \& 
 \& \&
    \node[block] (par) {$\begin{matrix} \hbox{grid rete}\\ 
        \{P^1_i\}_{i\in[m]},
        \{P^2_j\}_{j\in[n]} \\
        \{P^3_k\}_{k\in[m]}, 
        \{P^4_\ell\}_{\ell\in[n]} \end{matrix}$ };
\\
  };

\draw[connector,<->] (aut) -- node {
          $\begin{array}{c}
             C(i,j)=k\\[0.6mm] 
            \pi(q_i,x_j)=q_k\end{array}$ 
         $\begin{array}{c}
             R(i,j)=\ell\\[0.6mm] 
             \lambda(q_i,x_j)=x_\ell \end{array}$ } (mat);

\draw[connector,<->] (aut) -- node[left] 
   {$\begin{array}{cc}  \pi(q_i,x_j)=q_k\\  
                        \lambda(q_i,x_j)=x_\ell\\ 
                        (i,j,k,\ell)\in \mathcal{Y}\\ \end{array}$} (arr); 

\draw[connector,<->] (arr) -- node[sloped] {$\begin{array}{c}C(i,j)=k \quad R(i,j)=\ell\\[0.6mm] (i,j,k,\ell)\in \mathcal{Y} \end{array}$ } (mat);

\draw[connector,<-> ] (mat) -- node[xshift=-0.6mm]
{$\begin{array}{@{}c@{}} 
    \  C(i,j)=k, \ R(i,j)=\ell  :\\ \qquad\quad P^1_i\cap P^2_j\cap P^3_k\cap P^4_\ell\not=\emptyset\\
     \hline
    \!\!\!\!\!\! \{(i,j,k=C(i,j),\ \ \ \ell=R(i,j)) :\\ 
    i'\overset{P^1_{i'}}{=}i | j'\overset{P^2_{j'}}{=}j | k'\overset{P^3_{k'}}{=}k |\ell'\overset{P^4_{\ell'}}{=}\ell 
    \} \ \
 \end{array}$ } (par);

\draw[connector,<->] (arr) -- node
  {$\begin{array}{c} 
       P^s_h =\{y\in\mathcal{Y} : y_s=h\}  \\[4pt]
       \{(i,j,k,\ell) : P^1_i\cap P^2_j\cap P^3_k\cap P^4_\ell\not=\emptyset\}
    \end{array}$}  (par);

\draw[connector,<->] (aut.north) |- ++(0,5mm) -- ++(5.5,0) node[yshift=7pt]{
$\begin{array}{c}
\begin{array}{rcc|c|c|c} 
\!\! && P^1_{i'}& P^2_{j'} & P^3_{k'} & P^4_{\ell'}\\
\{(q_{i}, x_{j},q_{k},x_{\ell}) \!\!\!&\!\!\! :\!\!\!&\!\!\!
i=i'& j=j'& k=k' &\ell=\ell'\}
\end{array}\\[8pt]
\pi(q_{i},x_{j})=q_{k},\lambda(q_{i},x_{j})=x_{\ell}  : P^1_i\cap P^2_j\cap P^3_k\cap P^4_\ell\not=\emptyset
\end{array}$}
-|  ($(mat.north east) +(12mm,0)$) |- (par);

\end{tikzpicture}
\caption{Summary of bijections}
\end{figure}

\begin{example}
\label{ex:runningexample1}
In Figure~\ref{fig:runningexample}, we present an associated 
    automaton,  
    matched pair of matrices,  
    grid array, and 
    grid rete.
We shall refer to this example throughout the paper, as they are in fact associated
    bireversible Mealy automaton, 
    cooperative pair, 
    narrow semi-orthogonal array, and 
    reticulation.

\begin{figure}[htbp!]
    \centering
\begin{tikzpicture}
     \node[blocknf,yshift=-16pt] (aut) at (0,5.21) { \fbox{\qquad\ \  \begin{tikzpicture}[node distance=1.5cm, baseline={([yshift=-.8ex]current bounding box.center)}]
    \node[state] (q1) {$1$};
    \node[state, below of=q1] (q2) {$2$};
    \draw (q1) edge[right, bend left] node{$3\,|\,1$} (q2);
    \draw (q2) edge[left, bend left] node (e) {$3\,|\,1$} (q1);
    \node[left=.4cm of e] (A) {$\A$};
    \draw (q1) edge[loop left] node{$1\,|\, 2$} (q1);
    \draw (q1) edge[loop right] node{$2\,|\, 3$}  (q1);
    \draw (q2) edge[loop left] node{$1\,|\, 3$} (q2);
    \draw (q2) edge[loop right] node{$2\,|\, 2$}  (q2);
    \end{tikzpicture}\qquad\ \ }}; 
 %
     \node[blocknf] (mat) at (7,5.2) {  \fbox{\qquad\ \,$\begin{array}{ll}
    C_\A = & R_\A=\\
    \begin{bmatrix} 1&1&2\\2&2&1\end{bmatrix} & \begin{bmatrix} 2&3&1\\3&2&1\end{bmatrix}
    \end{array}$\qquad\ \,}};
 %
     \node[blocknf] (arr) at (0,0) {\fbox{$ \begin{array}{l}
      \mathcal{Y}_\A=\\
      \left\{\!\!\! \begin{array}{ccc}
                         (1,1,1,2) & (1,2,1,3) &  (1,3,2,1)\\ 
                         (2,1,2,3) & (2,2,2,2) & (2,3,1,1) \end{array} \!\!\!\right\}                            
    \end{array}$ }}; 
 %
    \node[blocknf, yshift=49pt] (par) at (7,0){\fbox{$      \begin{array}{lccc}
       P^1_1 &  \{1112,  1213, 1321\} &   \!\!\!\!\!   \!\!\!\!\! 
       \raisebox{0pt}[0pt][0pt]{
       \begin{tikzpicture}[scale=.5,baseline=5ex]
          \draw[thick, red, -] (1,1)--(2,1)--(3,1); 
          \draw[thick, red, -] (1,2)--(2,2)--(3,2); 
          \node (l1) at (.3,2) {$1$};
          \node (l2) at (.3,1) {$2$};
          \gridpoints{2}{3};
       \end{tikzpicture}}\\
    P^1_2 & \{2123, 2222, 2311\} \\
       \hline
     P^2_1&  \{1112, 2123\}\\
     P^2_2 &  \{1212, 2222\}\\
     P^2_3 &  \{1321,2311\} &       \!\!\!\!\!    \!\!\!\!\! 
         \raisebox{0pt}[0pt][0pt]{
         \begin{tikzpicture}[scale=.5,baseline=2ex]
             \draw[thick, blue, -] (1,1)--(1, 2);
             \draw[thick, blue, -] (2,1)--(2, 2);
             \draw[thick, blue, -] (3,1)--(3, 2);
       	    \gridpoints{2}{3};  
	    \node (l1) at (1,2.6) {$1$};
            \node (l2) at (2,2.6) {$2$};
            \node (l3) at (3,2.6) {$3$};   
	\end{tikzpicture}}  \\
\hline         
       P^3_1 &  \{1112,  1213, 2311\} &      \!\!\!\!\!  \!\!\!\!\! 
       \raisebox{0pt}[0pt][0pt]{
       \begin{tikzpicture}[scale=.5,baseline=5ex]
          \draw[thick, red, -] (1,1)--(2,1)--(3,2); 
          \draw[thick, red, -] (1,2)--(2,2)--(3,1); 
          \node (l1) at (.3,2) {$1$};
          \node (l2) at (.3,1) {$2$};
          \gridpoints{2}{3};
       \end{tikzpicture}}\\
     P^3_2 & \{2123, 2222, 1321\} \\
       \hline
     P^4_1&  \{1321, 2311\}\\
     P^4_2 &  \{1112, 2222\}\\
     P^4_3 &  \{1212,2123\} &         \!\!\!\!\!  \!\!\!\!\! 
         \raisebox{0pt}[0pt][0pt]{
         \begin{tikzpicture}[scale=.5,baseline=2ex]
             \draw[thick, blue, -] (1,1)--(2, 2);
             \draw[thick, blue, -] (1,2)--(2, 1);
             \draw[thick, blue, -] (3,1)--(3, 2);
       	    \gridpoints{2}{3};  
	    \node (l1) at (1,2.6) {$2$};
            \node (l2) at (2,2.6) {$3$};
            \node (l3) at (3,2.6) {$1$};   
	\end{tikzpicture}}  \\
     \end{array}$ }};
\end{tikzpicture}
    \caption{Running example of associated objects}
    \label{fig:runningexample}
\end{figure}
\end{example}

\begin{example}
\label{ex:latin_squares_correspondence}
In Figure~\ref{fig:latin_correspondence}, for the matched pair of orthogonal Latin squares from Example~\ref{ex:LatinSquare}, we present an associated automaton, grid array, and partition pair. 
Note that the automaton $\A$ in this example is isomorphic to the inverse of the automaton that generates the lamplighter group $\mathcal L_3=(\Z/3\Z)\wr\Z$ and studied in~\cite{bondarenko_dr:lamplighter}. 
Therefore, the group $\mathbb{G}(\A)$ generated by $\A$ is also isomorphic to $\mathcal L_3$.

\begin{figure}[htbp!]
    \centering
\begin{tikzpicture}
     \node[blocknf,yshift=4pt] (aut) at (0, 4.9) { \fbox{
     \begin{tikzpicture}[node distance=3cm, baseline={([yshift=-.8ex]current bounding box.center)}]
    \node[state] (q2) {$2$};
    \node[state, below left of=q2,xshift=13pt,yshift=-20pt] (q1) {$1$};
    \node[state, below right of=q2,xshift=-13pt,yshift=-20pt] (q3) {$3$};

    \draw (q2) edge[loop right] node[right,xshift=3pt,yshift=-2pt]{$1\,|\, 2$}  (q2);
    \draw (q2) edge [out=310,in=110]  node[right]{$2\,|\, 1$} (q3);
    \draw (q2) edge [out=250,in=50]  node[right,xshift=-6pt,yshift=-8pt]{$3\,|\, 3$} (q1);
    
    \draw (q1) edge [out=10,in=170] node[above,yshift=-4pt]{$3\,|\,2$} (q3);
    \draw (q1) edge [out=70,in=230] node[left,xshift=1pt]{$2\,|\,3$} (q2);
    \draw (q1) edge [loop left] node[below,xshift=2pt,yshift=-2pt]{$1\,|\, 1$}  (q1);

    \draw (q3) edge [out=190,in=350] node[below,yshift=1pt]{$2\,|\,2$} (q1);
    \draw (q3) edge [out=130,in=290] node[left,xshift=6pt,yshift=-8pt]{$3\,|\,1$} (q2);
    \draw (q3) edge [loop right] node[below,xshift=-3pt,yshift=-2pt]{$1\,|\, 3$}  (q1);

    \node[left=.9cm of q2] (A) {$\A$};
    
    \end{tikzpicture}
     }}; 
 %
     \node[blocknf] (mat) at (7, 6.2) {\fbox{\qquad\ \,$\begin{array}{ll}
    C_\A = & R_\A=\\
    \begin{bmatrix} 1&2&3\\2&3&1\\3&1&2\end{bmatrix} & 
    \begin{bmatrix} 1&3&2\\2&1&3\\3&2&1\end{bmatrix}
    \end{array} $\qquad\ \,}};
 %
     \node[blocknf, yshift=-4pt] (arr) at (0, 0) {\fbox{$ \begin{array}{l}
      \mathcal{Y}_\A=\\
      \left\{\!\!\!\begin{array}{c} (1,1,1,1), (1,2,2,3), (1,3,3,2),\\
    (2,1,2,2), (2,2,3,1), (2,3, 1,3),\\
    (3,1,3,3), (3,2,1,2), (3,3,2,1)\end{array} \!\!\!\right\}                            
    \end{array}$ }};
 %
    \node[blocknf, yshift=54pt] (par)  at (7,0) {\fbox{$      \begin{array}{lccc} 
       P^1_1 &\!\!  \{1111,  1223, 1332\} &    \!\!\!\!\! 
       \raisebox{-15pt}[0pt][0pt]{
       \begin{tikzpicture}[scale=.5,baseline=5ex]
          \draw[thick, red, -] (3,1)--(2,1)--(1,1); 
          \draw[thick, red, -] (3,2)--(2,2)--(1,2);
          \draw[thick, red, -] (3,3)--(2,3)--(1,3);
          \node (l1) at (.3,3) {$1$};
          \node (l2) at (.3,2) {$2$};
          \node (l3) at (.3,1) {$3$};
          \gridpoints{3}{3};
       \end{tikzpicture}}\\
        P^1_2 &\!\! \{2122, 2231, 2313\} \\
        P^1_3 &\!\! \{3133, 3212, 3321\} \\     
       \hline
       P^2_1& \!\! \{1111, 2122, 333\}\\
       P^2_2 & \!\! \{1223, 2231, 3212\}\\
       P^2_3 & \!\! \{1332,2313, 3133\} &      \!\!\!\!\!
         \raisebox{-10pt}[0pt][0pt]{
         \begin{tikzpicture}[scale=.5,baseline=2ex]
             \draw[thick, blue, -] (3,1)--(3, 2)--(3,3);
             \draw[thick, blue, -] (2,1)--(2, 2)--(2,3);
             \draw[thick, blue, -] (1,1)--(1, 2)--(1,3);
       	    \gridpoints{3}{3};  
	        \node (l1) at (1,3.55) {$1$};
            \node (l2) at (2,3.55) {$2$};
            \node (l3) at (3,3.55) {$3$};   
	\end{tikzpicture}}  \\[2mm]
\hline  
       P^3_1 &\!\!  \{1111,  3212, 2313\} &    \!\!\!\!\! 
       \raisebox{-15pt}[0pt][0pt]{
       \begin{tikzpicture}[scale=.5,baseline=5ex]
          \draw[thick, red, -] (3,1)--(2,3)--(1,2); 
          \draw[thick, red, -] (3,2)--(2,1)--(1,3);
          \draw[thick, red, -] (3,3)--(2,2)--(1,1);
          \node (l1) at (.3,3) {$1$};
          \node (l2) at (.3,2) {$2$};
          \node (l3) at (.3,1) {$3$};
          \gridpoints{3}{3};
       \end{tikzpicture}}\\
       P^3_2 &\!\! \{2122, 1223, 3321\} \\
       P^3_3 &\!\! \{3133, 2231, 1332\} \\     
       \hline
         P^4_1& \!\! \{1111, 2232, 3321\}\\
      P^4_2 & \!\! \{1332, 2122, 3212\}\\
      P^4_3 & \!\! \{1223,2313, 3133\} &      \!\!\!\!\!
         \raisebox{-10pt}[0pt][0pt]{
         \begin{tikzpicture}[scale=.5,baseline=2ex]
             \draw[thick, blue, -] (3,1)--(2, 2)--(1,3);
             \draw[thick, blue, -] (2,1)--(1, 2)--(3,3);
             \draw[thick, blue, -] (1,1)--(3, 2)--(2,3);
       	    \gridpoints{3}{3};  
	    \node (l1) at (1,3.55) {$1$};
            \node (l2) at (2,3.55) {$3$};
            \node (l3) at (3,3.55) {$2$};   
	\end{tikzpicture}}  \\[1mm]
     \end{array}$ }};
\end{tikzpicture}
    \caption{Objects associated with the pair of orthogonal Latin squares in Example~\ref{ex:LatinSquare}}
    \label{fig:latin_correspondence}
\end{figure}
\end{example}

Examples~\ref{ex:runningexample1} and~\ref{ex:latin_squares_correspondence}
consist of associated Mealy automata/matched matrices/grid arrays/grid rete.
This allows the points of a rete to be drawn on a grid, with line segments defining a path  
among points in the same cell of the partition. 
The (positions of) entries in the matched matrices 
immediately give the (positions of) points in each line of $\mathcal{P}^3$ and $\mathcal{P}^4$.
To obtain this correspondence of positions, 
the points must be arranged so that 
  $\mathcal{P}^1$ consists of horizontal lines arranged in increasing order from top to bottom
and  
  $\mathcal{P}^2$ consists of vertical lines arranged in increasing order from left to right.
We refer to such an arrangement of points as the {\em natural arrangement}.   
In Section~\ref{sec:isotopism}, we shall meet transformations that disrupt the natural arrangement 
and then adjust accordingly to restore it.

\section{Matched Matrix representations of Mealy automata}
\label{sec:matchedmatrix-rep}

In Subsection~\ref{subsec:comb-encode-automata}, we saw a means to encode a Mealy automaton as a matched pair of matrices.  
To formalize this relationship, we describe isomorphic monoid structures.  
See, for example,~\cite{SmithRomanowsak:PMA} for an introduction to monoids.
Such a monoid structure on matrices was described in~\cite{Norton:rowlatin}.  
Automata have been viewed as pairs of monoids~\cite{akhavi_klmp:finiteness_problem}.

\subsection{Monoid of Matched Matrices}
\label{subsec:matchedmatrix}

%
Let $\mathsf{F}_d=\{f\colon[d]\rightarrow[d]\}$ be the monoid of functions from $[d]$ to $[d]$ with composition as the monoid operation, and 
let $\mathsf{F}_d^e$ denote the direct product of $\mathsf{F}_d$ with itself $e$ times. Naturally, elements of $\mathsf{F}_d$ can be represented by row or column vectors. 
The symmetric group $\Sym(d)$ is the group of units of $\mathsf{F}_d$ and its elements are represented by permutational vectors in which every element of $[d]$ appears exactly once. 
In the following proposition, we endow the set of matrices 
$\mathsf{M}_{m\times n}([m])$ (resp., $\mathsf{M}_{m\times n}([n])$) 
with a monoid structure corresponding to 
 column-wise (resp., row-wise) multiplication of matrix columns (resp., rows) 
 as elements of $\mathsf{F}_m$ (resp., $\mathsf{F}_n$).  
Below, we denote the corresponding operations by $\colmult$ and $\rowmult$, where the vertical and horizontal bars in the signs indicate column and row multiplication, respectively.


\begin{proposition}
\label{prop:monoidisom}
The following hold.
 \begin{enumerate}
\item \label{prop:monoid-C}
    $\mathsf{M}_{m\times n}([m])$ is a monoid with operation 
         $(C_1 \colmult  C_2) (i,j) = C_2(C_1(i,j),j)$ for all $C_1$, $C_2\in \mathsf{M}_{m\times n}([m])$ and 
         identity $\Idc$, 
         where $\Idc(i,j)=i$ $((i,j)\in[m]\times[n])$.

\item \label{prop:monoidisom-C}
    For $C\in \mathsf{M}_{m\times n}([m])$ and $j\in [n]$, let $C^{(j)}\colon[m]\rightarrow[m]$ be the map 
     given by $C^{(j)}(i) = C(i,j)$ for all $i\in[m]$. 
    Then the map $C\mapsto(C^{(1)}, C^{(2)},\ldots, C^{(n)})$ is a monoid isomorphism 
       $\mathsf{M}_{m\times n}([m])\rightarrow \mathsf{F}_m^n$.
          
\item \label{prop:monoid-R}
    $\mathsf{M}_{m\times n}([n])$ is a monoid with operation 
           $(R_1\rowmult R_2)(i,j) = R_1(i, R_2(i,j))$ for all  $R_1$, $R_2\in \mathsf{M}_{m\times n}([n])$
           and identity $\Idr$, 
           where $\Idr(i,j)=j$ $((i,j)\in[m]\times[n])$.

\item \label{prop:monoidisom-R}
    For $R\in \mathsf{M}_{m\times n}([n])$ and $i\in [m]$, let $R_{(i)}\colon[n]\rightarrow[n]$  be the map 
    given by $R_{(i)}(j)=R(i,j)$ for all $j\in[n]$.
    Then the map $R\mapsto(R_{(1)}, R_{(2)}, \ldots, R_{(m)})$ is a monoid isomorphism 
       $\mathsf{M}_{m\times n}([n])\rightarrow \mathsf{F}_n^m$.

\item $\mathsf{MM}(m,n)$ is a monoid with operation
$(C_1,R_1) \mmmult  (C_2,R_2)=(C_1 \colmult  C_2,  R_1\rowmult R_2)$ and identity $(\Idc,\Idr)$.

\item 
$(\mathsf{MM}(m,n), \mmmult)$ is isomorphic to $\mathsf{F}_m^n\times \mathsf{F}_n^m$.
\end{enumerate}
\end{proposition}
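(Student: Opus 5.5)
The cleanest route is to prove the isomorphism statements (2) and (4) first and then transport the monoid structure back, rather than checking associativity directly.

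\textbf{Columns.} For item (2), observe that the map $C\mapsto(C^{(1)},\ldots,C^{(n)})$ is a bijection $\mathsf{M}_{m\times n}([m])\to\mathsf{F}_m^n$. A matrix is determined by its columns, and any $n$-tuple of functions $[m]\to[m]$ arises as the columns of some matrix. Next compute
\[
(C_1\colmult C_2)^{(j)}(i)=C_2(C_1(i,j),j)=C_2^{(j)}\bigl(C_1^{(j)}(i)\bigr).
\]
So the $j$-th column of the product is the composite of the $j$-th columns, applying $C_1^{(j)}$ first. This matches the paper's convention $st(y)=t(s(y))$. Hence the bijection carries $\colmult$ to the componentwise product in $\mathsf{F}_m^n$, taken with that composition convention. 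Since $\mathsf{F}_m$ is a monoid under either order of composition, $\mathsf{F}_m^n$ is a monoid. Transporting its structure along the bijection shows that $\colmult$ is associative, giving item (1). The identity is the preimage of $(\mathrm{id},\ldots,\mathrm{id})$, which is exactly $\Idc$, because $\Idc^{(j)}(i)=i$.

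\textbf{Rows.} Items (3) and (4) follow the same way using rows. Compute
\[
(R_1\rowmult R_2)_{(i)}(j)=R_1\bigl(i,R_2(i,j)\bigr)=R_{1(i)}\bigl(R_{2(i)}(j)\bigr).
\]
This is again rowwise composition, now with $R_{2(i)}$ applied first. Here the order of composition is opposite to the column case, so I will state explicitly which composition convention on $\mathsf{F}_n$ is intended. Since both conventions give monoids (anti-isomorphic via reversal), the conclusion of a monoid isomorphism onto $\mathsf{F}_n^m$ holds with the appropriate convention. The identity is $\Idr$, because $\Idr_{(i)}(j)=j$.

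\textbf{Pairs.} Item (5) is the direct product of the monoids from (1) and (3). Its operation is componentwise and its identity is $(\Idc,\Idr)$, so it is a monoid. Item (6) then follows by taking the product of the isomorphisms in (2) and (4).

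The only step needing care is this bookkeeping of composition order, consistently with the paper's left-to-right convention. Everything else is direct verification from the entrywise formulas.
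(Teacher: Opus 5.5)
Your proof is correct and is essentially the paper's argument (the same column and row bijections, the same entrywise composition identities, and the direct-product step for (v) and (vi)); the one difference is that you get associativity and the identities $\Idc$, $\Idr$ by transporting the monoid structures of $\mathsf{F}_m^n$ and $\mathsf{F}_n^m$ along the bijections, whereas the paper checks them entrywise before proving (ii). Your flag on composition order is needed, and here you are more careful than the paper: the paper proves (ii) with the left-to-right convention ($C_1^{(j)}\circ C_2^{(j)}$ sends $i$ to $C_2^{(j)}(C_1^{(j)}(i))$) and calls (iv) similar, yet under that convention $(R_1\rowmult R_2)_{(i)}$ is $(R_2)_{(i)}$ followed by $(R_1)_{(i)}$, so the row map is an anti-isomorphism, and since for $n\ge 2$ constant maps are left zeros under one convention and right zeros under the other (with no zeros on the opposite side in either), the two versions of $\mathsf{F}_n^m$ are not even isomorphic, so (iv) and (vi) really do require composing in $\mathsf{F}_n$ right-to-left, as you arrange.
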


\begin{proof}
Let $C_1$, $C_2\in \mathsf{M}_{m\times n}([m])$.
By construction $C_1 \colmult  C_2\in \mathsf{M}_{m\times n}([m])$.
Observe that 
$((C_1 \colmult  C_2)\colmult  C_3) (i,j) 
    = C_3((C_1 \colmult  C_2)(i,j),j) = C_3(C_2(C_1(i,j),j),j)
    =(C_2\colmult  C_3))(C_1(i,j),j)
    =(C_1 \colmult  (C_2\colmult  C_3)) (i,j)$.
Thus, $\colmult $ is associative. 

Compute  $(C_1\colmult \Idc )(i,j)  = \Idc(C_1(i,j),j) = C_1(i,j)$ and
$(\Idc \colmult C_1)(i,j)= C_1(\Idc(i,j),j) = C_1(i,j)$. 
Thus, $\Idc$ is the identity of $\colmult $.
Hence, $(\mathsf{M}_{m\times n}([m]),\, \colmult )$ is a monoid.
Similarly, $(\mathsf{M}_{m\times n}([n]), \,\rowmult )$ is a monoid with identity
$\Idr$.

The map in (ii) is a bijection, as it maps $(f_1, \ldots f_n)$ to $C(i,j)=f_j(i)$.
Since $(C_1\colmult C_2)(i,j) = C_2(C_2(i,j))$, $(C_1\colmult C_2)^{(j)}$ maps $i$ to $C_2(C_1(i,j),j)$.
In addition, $C_1^{(j)}\circ C_2^{(j)}$ maps $i$ to $C_2^{(j)}(C_1^{(j)}(i)) = C_2(C_1(i,j),j)$. 
Thus, (ii) holds.  A similar argument gives (iv).   
Parts (v) and (vi) are consequences of the direct product construction.
\end{proof}


Write $M^\transp$ to denote the transpose of any matrix. In Proposition~\ref{prop:dagger-dual-correspond}  we shall see that transposition plays a role in relating the matched matrices associated with a Mealy  automaton and its dual automaton.  As such, we note the following.

\begin{corollary}
For all  $C_1$, $C_2\in\mathsf{M}_{m\times n}([m])$  and 
    all  $R_1$, $R_2\in \mathsf{M}_{m\times n}([n])$,
    \[ 
    \begin{array}{rclrcl}
    (C_1\colmult C_2)^\transp &=& C_2^\transp\rowmult C_1^\transp, &  
    (R_1\rowmult R_2)^\transp & =& R_2^\transp \colmult R_1^\transp.
    \end{array}
    \]
\end{corollary}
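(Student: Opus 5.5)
The plan is a direct entrywise computation from the operations of Proposition~\ref{prop:monoidisom}, once we fix how the operations apply to transposed matrices. If $C\in\mathsf{M}_{m\times n}([m])$, then $C^\transp$ is an $n\times m$ matrix with entries in $[m]$, so $C^\transp\in\mathsf{M}_{n\times m}([m])$. Here $[m]$ is the index set of the columns. Hence the row multiplication $\rowmult$ of Proposition~\ref{prop:monoidisom}(iii) applies to $C^\transp$ with the roles of $m$ and $n$ interchanged: for $n\times m$ matrices $A_1,A_2$ with entries in $[m]$,
\[
(A_1\rowmult A_2)(j,i)=A_1(j,A_2(j,i)).
\]
Similarly, if $R\in\mathsf{M}_{m\times n}([n])$, then $R^\transp\in\mathsf{M}_{n\times m}([n])$, and $[n]$ now indexes the rows. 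So the column multiplication of Proposition~\ref{prop:monoidisom}(i) applies, again with $m$ and $n$ swapped:
\[
(B_1\colmult B_2)(j,i)=B_2(B_1(j,i),i).
\]
I will state this interpretation explicitly first, since it is the only point that needs care. It is exactly what makes both sides of each identity live in the same monoid.

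For the first identity, fix $(i,j)\in[m]\times[n]$ and evaluate both sides at position $(j,i)$. On the right,
\[
(C_2^\transp\rowmult C_1^\transp)(j,i)=C_2^\transp\bigl(j,C_1^\transp(j,i)\bigr)=C_2^\transp\bigl(j,C_1(i,j)\bigr)=C_2\bigl(C_1(i,j),j\bigr).
\]
By the definition of $\colmult$, the last expression is $(C_1\colmult C_2)(i,j)$, which is the $(j,i)$-entry of $(C_1\colmult C_2)^\transp$.

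The second identity is the mirror computation. We have
\[
(R_2^\transp\colmult R_1^\transp)(j,i)=R_1^\transp\bigl(R_2^\transp(j,i),i\bigr)=R_1^\transp\bigl(R_2(i,j),i\bigr)=R_1\bigl(i,R_2(i,j)\bigr).
\]
This equals $(R_1\rowmult R_2)(i,j)$, i.e.\ the $(j,i)$-entry of $(R_1\rowmult R_2)^\transp$.

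The computations are routine, so the main obstacle is only the bookkeeping in the first step: the transposed matrices must be read in the correct monoid, with $m$ and $n$ swapped and $\colmult$ and $\rowmult$ interchanged. This reversal of order is also what one expects from Proposition~\ref{prop:monoidisom}(ii),(iv). Transposition turns the columns $C^{(j)}$ into the rows $(C^\transp)_{(j)}$, so the isomorphisms with $\mathsf{F}_m^n$ match up. Under the paper's right-action composition convention ($st(y)=t(s(y))$), the order of the factors is reversed accordingly.
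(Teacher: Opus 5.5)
Your proof is correct and is essentially the paper's own argument: the paper also verifies both identities entry by entry at position $(j,i)$, unwinding the definitions of the column and row products from Proposition~\ref{prop:monoidisom} through the same chain of equalities. The one thing you add is an explicit statement that the transposes are multiplied in the $n\times m$ versions of these monoids, with $m$ and $n$ interchanged; the paper uses this implicitly.
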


\begin{proof}
Compute as follows:
\[\begin{array}{r}
(C_1\colmult C_2)^\transp(j,i) 
   = (C_1\colmult C_2)(i,j)
   = C_2(C_1(i,j),j) 
   = C_2^\transp(j,C_1^\transp(j,i)) 
   = (C_2^\transp\rowmult C_1^\transp)(j,i),\\
(R_1\rowmult R_2)^\transp(j,i) 
   = (R_1\rowmult R_2)(i,j)
   = R_1(i,R_2(i,j))
   = R_1^\transp(R_2^\transp(j,i),i)  
   = (R_2^\transp\colmult R_1^\transp)(j,i).\\
\end{array}
\]
Thus, the equations  are verified entry-by-entry.
\end{proof}

Below (Theorem~\ref{thm:array-matrix-automaton-cond}), we will add the conditions,   individually and together,  on a matched pair of matrices that $C$, $R$, and $(C,R)$ are invertible elements in the appropriate monoid.
Ultimately (Theorem~\ref{thm:automatafamily-birevconditions}), we shall show that bireversible automata are in bijective correspondence  with cooperative pairs (all three of $C$, $R$, and $(C,R)$ are invertible).

\begin{lemma}
\label{lem:col-row-Latin}
The following hold.
\begin{enumerate}
\item \label{lem:col-Latin} 
         Let $C\in  \mathsf{M}_{m\times n}([m])$.  The following are equivalent.
         \begin{enumerate}
         \item Every element of $[m]$ appears exactly once in each column of $C$.
         \item Each column of $C$ is a permutation of $[m]$.
         \item $C$ has an inverse in the monoid $(\mathsf{M}_{m\times n}([m]),\, \colmult)$. 
         \end{enumerate}
         Whenever these equivalent conditions hold, we say that $C$ is \emph{column-Latin}.
         In this case, the $j$-th column of $C$ is the transpose of the one row form of the permutation $C^{(j)}$.  We write $C^\ic$ to denote the inverse of $C$ in this monoid.
\item  \label{lem:row-Latin} 
       \cite{Norton:rowlatin}
         Let $R\in  \mathsf{M}_{m\times n}([n])$. The following are equivalent.
         \begin{enumerate}
         \item Every element of $[n]$ appears exactly once in each row of $R$.
         \item Each row of $R$ is a permutation of $[n]$.
         \item $R$ has an inverse in the monoid $(\mathsf{M}_{m\times n}([n]),\, \rowmult)$. 
         \end{enumerate}
         Whenever these equivalent conditions hold, we say that $R$ is \emph{row-Latin}.
         In this case, the $i$-th row of $R$ is the one-row form of the permutation $R_{(i)}$.
         We write $R^\ir$ to denote the inverse of $R$ in this monoid.
\item \label{lem:orthogonal}
      Let $(C,R)\in  \mathsf{MM}(m,n)$.  The following are equivalent.
         \begin{enumerate}
         \item Every element of $[m]\times[n]$ appears exactly once among the pairs $(C(i,j), R(i,j))$ as 
        $(i,j)$ runs over $[m]\times[n]$.
         \item The map $(m,n)\mapsto(C(m,n), R(m,n))$ is a permutation of $[m]\times[n]$.
         \item  $(C,R)$ has an inverse in the monoid $(\mathsf{MM}(m,n), \mmmult)$.         
         \end{enumerate}
         Whenever these equivalent conditions hold, we say that $C$ and $R$ are \emph{orthogonal}. 
\end{enumerate}
\end{lemma}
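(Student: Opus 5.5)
The plan is to reduce each part to Proposition~\ref{prop:monoidisom}, which identifies each of the three monoids with a direct product of full transformation monoids $\mathsf{F}_d$. Invertibility in a direct product is componentwise, and the units of $\mathsf{F}_d$ are exactly $\Sym(d)$. Recall also that a self-map of a finite set is bijective if and only if it is surjective if and only if it is injective.

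For part~(i), the equivalence (a)$\Leftrightarrow$(b) is immediate. Column $j$ of $C$ lists the values $C^{(j)}(1),\dots,C^{(j)}(m)$, so every element of $[m]$ appearing exactly once in that column says precisely that $C^{(j)}\colon[m]\to[m]$ is a bijection. For (b)$\Leftrightarrow$(c), I use the isomorphism $C\mapsto(C^{(1)},\dots,C^{(n)})$ onto $\mathsf{F}_m^n$ from Proposition~\ref{prop:monoidisom}(ii). Then $C$ is a unit exactly when each $C^{(j)}$ is a unit of $\mathsf{F}_m$, that is, a permutation. Explicitly, the inverse is $C^\ic(i,j)=(C^{(j)})^{-1}(i)$, and one can check directly that $C\colmult C^\ic=C^\ic\colmult C=\Idc$. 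The statement that column $j$ is the transposed one-row form of $C^{(j)}$ is just the definition of $C^{(j)}$. Part~(ii) is symmetric: use rows, $R_{(i)}$, and Proposition~\ref{prop:monoidisom}(iv) in place of columns, $C^{(j)}$, and (ii).

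Part~(iii) is the step needing care. The monoid structure on $\mathsf{MM}(m,n)$ is the componentwise product of $\colmult$ and $\rowmult$. So an inverse of $(C,R)$ in $(\mathsf{MM}(m,n),\mmmult)$ would require $C$ and $R$ to be invertible separately. This is a different condition from orthogonality: a pair can be orthogonal without $C$ being column-Latin, as happens when $C=\Idc$ is paired with a non-row-Latin $R$. Hence (c), read literally with $\mmmult$, is not equivalent to (a).

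I would therefore interpret (c) as invertibility of the associated map $\phi_{(C,R)}\colon[m]\times[n]\to[m]\times[n]$, $(i,j)\mapsto(C(i,j),R(i,j))$, in the monoid of self-maps of $[m]\times[n]$ under composition. That is presumably the intended structure for ``orthogonal''. With this reading, (a)$\Leftrightarrow$(b) holds because ``every pair appears exactly once'' is exactly the bijectivity of $\phi_{(C,R)}$. Further, (b)$\Leftrightarrow$(c) holds because units of a full transformation monoid are the bijections.

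The main obstacle is this mismatch in~(iii): reconciling the stated monoid $(\mathsf{MM}(m,n),\mmmult)$ with orthogonality. I would first check whether some other product on matched pairs, corresponding to composition of $\phi_{(C,R)}$, is meant. If none is, I would prove the equivalence with $\mathsf{F}_{[m]\times[n]}$ and note the correction explicitly.
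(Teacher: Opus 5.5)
Your treatment of (i) and (ii) is the paper's argument: reduce to the isomorphisms of Proposition~\ref{prop:monoidisom} and use that the units of $\mathsf{F}_d$ are $\Sym(d)$. You make it more explicit with $C^{\ic}(i,j)=(C^{(j)})^{-1}(i)$.

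For (iii), your objection is correct and matches what the paper itself says elsewhere. Since $\mmmult$ is componentwise, the units of $(\mathsf{MM}(m,n),\mmmult)$ are exactly $\mathsf{CL}(m,n)\times\mathsf{RL}(m,n)$. That is precisely Corollary~\ref{cor:CL-CR-monoids}(iii). The paper's proof of (b)$\Leftrightarrow$(c) in (iii) only says that bijections are the invertible functions. So what it actually establishes is your reading, with $\phi_{(C,R)}$ in the monoid of self-maps of $[m]\times[n]$. On your side question: $(C,R)\mapsto\phi_{(C,R)}$ is a bijection from $\mathsf{MM}(m,n)$ onto all self-maps of $[m]\times[n]$. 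Transporting composition along it therefore gives a product for which (c) is true. The paper never defines that product, and $\mmmult$ is not it.

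The witness you offer for the failure of the literal (c) is wrong, however. $\Idc$ is column-Latin: every column is $(1,\dots,m)^T$, and it is the identity of the monoid. For $C=\Idc$ the pairs are $(i,R(i,j))$. So $(\Idc,R)$ is orthogonal if and only if $R$ is row-Latin, which is exactly when $(\Idc,R)$ is a unit. There is no discrepancy there.

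Valid witnesses are easy to give. For $m=n\geq 2$, take $C(i,j)=j$ and $R(i,j)=i$. The pairs are $(j,i)$, so $(C,R)$ is orthogonal, yet $C$ is not column-Latin and $R$ is not row-Latin. In the other direction, take $C=R=L$ for a Latin square $L$ of order $n\geq 2$. This is a unit of $(\mathsf{MM}(n,n),\mmmult)$, but it produces only diagonal pairs, so it is not orthogonal. For $m=2$, $n=3$, the rows marked ``N N Y'' and ``Y Y N'' in Example~\ref{ex:CL-RL-O-independent} do the same job.

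With that example replaced, your proof of the corrected statement is complete.
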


\begin{proof}
Since $C$ and $R$ have $m$ rows and $n$ columns, conditions (a) and (b) are equivalent in (i) and (ii), 
where we view permutation in the sense of a rearrangement of a set.  
However, a permutation in this sense induces a permutation in the function sense via the construction of Proposition~\ref{prop:monoidisom}: 
The column/row is the one-row form of the permutation. 
This gives the equivalence of (b) and (c) in (i) and (ii).    
Since $(C,R)\subset[m]\times[n]$, equality holds if and only if every element of $[m]\times[n]$ appears exactly once in 
$\{(C(m,n), R(m,n)) \colon (m,n)\in[m]\times[n] \}$.  
Thus, (a) and (b) are equivalent in (iii).   
The equivalence of (b) and (c) in (iii) follows since permutations as rearrangements and functions coincide and permutations are exactly invertible functions.
\end{proof}

Conditions (ia), (iia), and (iiia) in Lemma~\ref{lem:col-row-Latin} are 
usually taken as the definitions of row-Latin, column-Latin, and orthogonal,
as in Section~\ref{sec:introduction}.

\begin{example}
\label{ex:CL-RL-O-independent}
The column-Latin, row-Latin, and orthogonal properties are independent. 

\begin{tabular}{|c|c|c|c|c||c|c|c|c|c|}  
\hline
    $C$ & $R$& CL & RL & O & $C$ & $R$& CL & RL & O \\
    \hline
    $\begin{bmatrix} 1&2&1\\2&1&2 \end{bmatrix}$& $\begin{bmatrix}1&2&3\\3&2&1 \end{bmatrix}$ & Y & Y & Y &
    $\begin{bmatrix} 1&1&2\\1&2&2 \end{bmatrix}$ & $\begin{bmatrix} 1&3&3\\1&2&2 \end{bmatrix}$ & N & N &N\\
    \hline
    $\begin{bmatrix}1&2&2\\1&2&1 \end{bmatrix}$ & $\begin{bmatrix} 1&2&3\\2&1&3 \end{bmatrix}$ & N &Y &Y&
    $\begin{bmatrix}1&1&2\\2&2&1 \end{bmatrix}$ & $\begin{bmatrix} 1&1&3\\2&3&2 \end{bmatrix}$ & Y &N &N\\
    \hline
    $\begin{bmatrix}1&1&2\\2&2&1 \end{bmatrix}$ & $\begin{bmatrix} 1&2&1\\2&3&3 \end{bmatrix}$ & Y &N &Y &
    $\begin{bmatrix}1&1&2\\2&1&2 \end{bmatrix}$ & $\begin{bmatrix} 1&2&3\\3&2&1 \end{bmatrix}$ & N &Y &N\\
    \hline
    $\begin{bmatrix}1&1&2\\2&2&1 \end{bmatrix}$ & $\begin{bmatrix}1&2&3\\2&3&1 \end{bmatrix}$ & Y & Y& N &
    $\begin{bmatrix} 1&1&2\\2&1&2 \end{bmatrix}$ & $\begin{bmatrix} 1&3&1\\2&2&3 \end{bmatrix}$ & N & N &Y\\
    \hline
\end{tabular}
\end{example}

Let  $\mathsf{CL}(m,n)$ and $\mathsf{RL}(m,n)$ respectively denote the sets of column- and row-Latin matrices of order $m\times n$.   
Observe that 
    $\mathsf{RL}(m,n)=\{C^\transp:C\in\mathsf{CL}(n,m)\}$ and  
    $\mathsf{CL}(m,n)=\{R^\transp:R\in\mathsf{RL}(n,m)\}$.
Let $\Sym(d)^e$ denote the direct product of $\Sym(d)$ with itself $e$ times.

\begin{corollary}
\label{cor:CL-CR-monoids}
The following hold.
\begin{enumerate}
\item  $\mathsf{CL}(m,n)$ is the group of units of the monoid $(\mathsf{M}_{m\times n}([m]), \colmult)$.
The monoid isomorphism of Proposition~\ref{prop:monoidisom-C} maps $\mathsf{CL}(m,n)$ onto the group of units $\Sym(m)^n$  of $\mathsf{F}_m^n$.
\item $\mathsf{RL}(m,n)$ is the group of units of the monoid $(\mathsf{M}_{m\times n}([n]), \rowmult)$.
The monoid isomorphism of Proposition~\ref{prop:monoidisom-R} maps $\mathsf{RL}(m,n)$ onto is the group of units $\Sym(n)^m$ of $\mathsf{F}_n^m$.
\item  $\mathsf{CL}(m,n)\times \mathsf{RL}(m,n)$  is the group of units of the monoid $(\mathsf{MM}(m,n),\mmmult)$ described in Lemma~\ref{lem:orthogonal}.  
The monoid isomorphism formed by combining the above maps  $\mathsf{CL}(m,n)\times \mathsf{RL}(m,n)$  onto the group of units $\Sym(m)^n\times \Sym(n)^m$ of $\mathsf{F}_m^n\times \mathsf{F}_n^m$.
\end{enumerate}
\end{corollary}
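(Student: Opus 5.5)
The three parts are handled in parallel, each by combining Lemma~\ref{lem:col-row-Latin} with the monoid isomorphisms of Proposition~\ref{prop:monoidisom}. The guiding fact is the standard one: if $\phi\colon M\to N$ is a monoid isomorphism, then $\phi$ restricts to a group isomorphism from the units of $M$ onto the units of $N$. So it will suffice to identify the units on each side.

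\textbf{Part (i).} By Lemma~\ref{lem:col-row-Latin}(i), the equivalence of (a) and (c) says that $C$ is invertible in $(\mathsf{M}_{m\times n}([m]),\colmult)$ exactly when $C$ is column-Latin. Hence $\mathsf{CL}(m,n)$ is the group of units of this monoid.

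The units of $\mathsf{F}_m^n$ are the $n$-tuples whose components are all units of $\mathsf{F}_m$, since inverses in a direct product are computed componentwise. These are the tuples of bijections, so the group of units is $\Sym(m)^n$.

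It remains to see that the isomorphism $C\mapsto(C^{(1)},\dots,C^{(n)})$ carries $\mathsf{CL}(m,n)$ onto $\Sym(m)^n$. One can cite the general fact above. Alternatively, one checks directly that each column of $C$ is a permutation precisely when each $C^{(j)}$ is a bijection, which is again Lemma~\ref{lem:col-row-Latin}(i)(b).

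\textbf{Part (ii).} This is identical to part (i), using Lemma~\ref{lem:col-row-Latin}(ii) and the isomorphism $R\mapsto(R_{(1)},\dots,R_{(m)})$ onto $\mathsf{F}_n^m$. Alternatively, it follows from part (i) by transposition, using the corollary on transposes together with the observation $\mathsf{RL}(m,n)=\{C^\transp : C\in\mathsf{CL}(n,m)\}$.

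\textbf{Part (iii).} Since $\mmmult$ is defined componentwise, $(C,R)$ is a unit in $(\mathsf{MM}(m,n),\mmmult)$ if and only if $C$ is a unit for $\colmult$ and $R$ is a unit for $\rowmult$. By parts (i) and (ii), the group of units is therefore $\mathsf{CL}(m,n)\times\mathsf{RL}(m,n)$. Combining the two isomorphisms gives the isomorphism of Proposition~\ref{prop:monoidisom}(vi), and it maps this group onto $\Sym(m)^n\times\Sym(n)^m$.

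One caution concerns the phrase ``described in Lemma~\ref{lem:orthogonal}''. Condition (iii)(c) of that lemma, invertibility of $(C,R)$, should not be confused with orthogonality in the combinatorial sense (iii)(a). The units of the direct product are simply the pairs of units, so the group of units is $\mathsf{CL}(m,n)\times\mathsf{RL}(m,n)$, and this identification does not depend on orthogonality. The main obstacle, such as it is, is to state this clearly rather than to rely on the lemma's labeling. Everything else is routine bookkeeping.
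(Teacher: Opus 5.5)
Your proof is correct and follows the paper's argument: Lemma~\ref{lem:col-row-Latin} identifies the units on the matrix side, the units of $\mathsf{F}_m^n$ are $\Sym(m)^n$ componentwise, the isomorphism of Proposition~\ref{prop:monoidisom} restricts to units, and (ii) and (iii) follow by the analogous argument and the direct product. Your caution about part (iii) of Lemma~\ref{lem:orthogonal} is well placed: because $\mmmult$ is componentwise, its units are exactly $\mathsf{CL}(m,n)\times\mathsf{RL}(m,n)$, which is not the set of orthogonal pairs (compare Example~\ref{ex:CL-RL-O-independent}), so your argument is right not to rely on that lemma's claimed equivalence of (a) and (c).
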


\begin{proof}
By Lemma~\ref{lem:col-row-Latin},   
$\mathsf{CL}(m,n)$ is precisely the set of invertible elements of the monoid $(\mathsf{M}_{m\times n}([m]),\, {\colmult })$. 
The set of invertible elements of $\mathsf{F}_m^n$ is precisely $\Sym(m)^n$. 
Taking only invertible elements gives a subgroup.  
Thus, (i) follows from Proposition~\ref{prop:monoidisom-C}.  
A similar argument gives (ii). Combining (i) and (ii) via the direct product  gives (iii).
\end{proof}

The preceding result is very similar to those appearing in~\cite{Norton:rowlatin} concerning row-Latin squares, which is noted in 
\cite[pp.~89--91]{Denes+Keedwell-LSa_SecondEdition_2015}.  
Row-Latin squares are also discussed in~\cite[Subsections 6.3--6.5]{Laywin+Mullen:discretemath}.  

We note that by Corollary~\ref{cor:CL-CR-monoids} there is a direct relationship between this terminology and the one introduced in~\cite{akhavi_klmp:finiteness_problem}, 
where Mealy automata are viewed as elements of $\mathsf{F}_m^n\times \mathsf{F}_n^m$, 
      invertible automata as elements of $\mathsf{F}_m^n\times \Sym(n)^m$, and 
      reversible automata as elements of $\Sym(m)^n\times \mathsf{F}_n^m$ 
(with slightly different notation).

\subsection{Matrix Action}
\label{subsec:matrixactions}

The matrix monoids $\mathsf{M}_{m\times n}([m])$ and $\mathsf{M}_{m\times n}([n])$ discussed in Subsection~\ref{subsec:matchedmatrix} act on each other.  
 We use the notation $\ltacti$ and $\rtacti$ for the actions, where the arrow points to the matrix being acted upon.
 
 \begin{proposition}
 \label{prop:monoidactions}
The monoids $(\mathsf{M}_{m\times n}([m]),\, {\colmult })$ and $(\mathsf{M}_{m\times n}([n]),\, \rowmult)$
act on each other as follows.  Let  $C\in\mathsf{M}_{m\times n}([m])$ and $R\in \mathsf{M}_{m\times n}([n])$.
\begin{enumerate}
\item  $C$ acts on $R$ by 
          \begin{equation}
                (R\rtacti C)(i,j) = R(C(i,j),j)
                    \qquad (i,j)\in[m]\times[n].
           \label{eq:CactonR}
           \end{equation}    
\item $R$ acts on $C$ by 
         \begin{equation}
               (R\ltacti C)(i,j)  = C(i,R(i,j))
                    \qquad (i,j)\in[m]\times[n].
         \label{eq:RactonC}      
         \end{equation}
\end{enumerate}
 \end{proposition}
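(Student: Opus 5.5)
The plan is to verify the monoid-action axioms directly, entry by entry, using the definitions of $\colmult$, $\rowmult$, $\Idc$, and $\Idr$ from Proposition~\ref{prop:monoidisom}. First I would check closure. For $C\in\mathsf{M}_{m\times n}([m])$ and $R\in\mathsf{M}_{m\times n}([n])$, the entry $R(C(i,j),j)$ is defined and lies in $[n]$ because $C(i,j)\in[m]$. Likewise $C(i,R(i,j))$ is defined and lies in $[m]$ because $R(i,j)\in[n]$. Hence $R\rtacti C\in\mathsf{M}_{m\times n}([n])$ and $R\ltacti C\in\mathsf{M}_{m\times n}([m])$.

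Next I would check the identity axioms:
\[
(R\rtacti \Idc)(i,j)=R(\Idc(i,j),j)=R(i,j),\qquad (\Idr\ltacti C)(i,j)=C(i,\Idr(i,j))=C(i,j).
\]

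The remaining step is the compatibility axiom, and the one point of care is the order of composition, which I expect to be the main (if minor) obstacle. Expanding, I get
\[
((R\rtacti C_2)\rtacti C_1)(i,j)=(R\rtacti C_2)(C_1(i,j),j)=R(C_2(C_1(i,j),j),j)=(R\rtacti(C_1\colmult C_2))(i,j).
\]
So the compatibility reads $R\rtacti(C_1\colmult C_2)=(R\rtacti C_2)\rtacti C_1$. Similarly,
\[
(R_1\ltacti(R_2\ltacti C))(i,j)=(R_2\ltacti C)(i,R_1(i,j))=C(i,R_2(i,R_1(i,j))),
\]
while
\[
((R_2\rowmult R_1)\ltacti C)(i,j)=C(i,R_2(i,R_1(i,j))).
\]
So this compatibility reads $(R_2\rowmult R_1)\ltacti C=R_1\ltacti(R_2\ltacti C)$.

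Thus each is a genuine monoid action, with the side of the action determined by the composition convention $st(y)=t(s(y))$ fixed in Section~\ref{sec:automata}. I would state explicitly which side each action is on so that the product rule matches these computations. Equivalently, one can pass through the isomorphisms of Proposition~\ref{prop:monoidisom}: column $j$ of $C$ precomposes row-coordinate lookups of column $j$ of $R$, and row $i$ of $R$ precomposes column-coordinate lookups in row $i$ of $C$. This reduces each axiom to associativity of function composition in $\mathsf{F}_m$ and $\mathsf{F}_n$. No further obstacle is expected.
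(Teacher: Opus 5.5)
Your entrywise verification of closure, the identity axioms, and the compatibility identities $R\rtacti(C_1\colmult C_2)=(R\rtacti C_2)\rtacti C_1$ and $(R_2\rowmult R_1)\ltacti C=R_1\ltacti(R_2\ltacti C)$ is correct and is essentially the paper's proof (the paper's second identity is yours with $R_1$ and $R_2$ interchanged). The paper then calls $\rtacti$ a right action and $\ltacti$ a left action, according to the side on which the acting matrix is written; this is the explicit side-labeling you said you would supply.
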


\begin{proof}
By construction $(R\rtacti C)\in \mathsf{M}_{m\times n}([n])$.
Verify that $R  \rtacti \Idc = R$ by computing
$(R\rtacti \Idc)(i,j) = R(\Idc(i,j),j)=R(i,j)$.
For $C_1$, $C_2\in \mathsf{M}_{m\times n}([m])$, compute
$(R \rtacti (C_1 \colmult  C_2))(i,j)
 = R((C_1 \colmult  C_2)(i,j),j)
 = R(C_2(C_1(i,j),j),j)$
and 
$((R\rtacti C_2)\rtacti C_1)(i,j)
  =(R\rtacti C_2)(C_1(i,j),j)
  = R(C_2(C_1(i,j),j),j)$.
Thus, $\rtacti$ defines a right action.  

By construction, $(R\ltacti C)\in \mathsf{M}_{m\times n}([m])$.
Verify that $\Idr\ltacti C = C$ by computing
$(\Idr \ltacti C)(i, j) = C(i,  \Idr(i, j)) = C(i,j)$.
For $R_1$, $R_2\in \mathsf{M}_{m\times n}([n])$, compute
$((R_1 \rowmult  R_2)\ltacti C)(i,j)
 = C(i, (R_1 \rowmult  R_2)(i,j))
 = C(i, R_1(i, R_2(i,j)))$ and
$(R_2\ltacti ( R_1\ltacti C))(i,j)
   =(R_1\ltacti C)(i, R_2(i,j))
   = C(i, R_1(i,R_2(i,j)))$.
Thus, $\ltacti$ defines a left action.   
\end{proof}

The actions in Proposition~\ref{prop:monoidactions} are consistent with the respective actions of $(\mathsf{M}_{m\times n}([m]),\, {\colmult })$ and $(\mathsf{M}_{m\times n}([n]),\, \rowmult)$ on themselves by left and right multiplication in Proposition~\ref{prop:monoidisom}.  
To discuss the special families of automata, we consider
$\mathsf{RL}(m,n)$ acting on $\mathsf{M}_{m\times n}([m])$ by inverses and  $\mathsf{CL}(m,n)$ acting on $\mathsf{M}_{m\times n}([n])$ by inverses. Doing so turns a left action into a right action, and vise versa.

\begin{lemma}
\label{lem:CLRLactions}
With reference to Proposition~\ref{prop:monoidactions}, the following hold.
\begin{enumerate}
\item Define a left action of $\mathsf{CL}(m,n)$ on $\mathsf{M}_{m\times n}([n])$ by
      $C\ltact R = R \rtacti C^\ic$:
   \[ (C\ltact R)(C(i,j),j ) = R(i,j)\qquad (i,j)\in[m]\times[n].\]
      Then column $j$ of $C\ltact R$ is formed by permuting column $j$ of $R$ with the permutation $C^{(j)}$ in column $j$ of $C$. 
\item Define a right action of $\mathsf{RL}(m,n)$ on $\mathsf{M}_{m\times n}([m])$ by 
      $C\rtact R = R^\ir \ltacti C$:
  \[ (C\rtact R)(i, R(i,j))=C(i,j)\qquad(i,j)\in[m]\times[n].\]
      Then row $i$ of $C\rtact R$ is formed by permuting row $i$ of $C$ with the permutation $R_{(i)}$ in row $i$ of $R$.      
\end{enumerate}
\end{lemma}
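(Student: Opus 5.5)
The plan is to derive both parts from Proposition~\ref{prop:monoidactions}, using only the group-of-units structure in Corollary~\ref{cor:CL-CR-monoids}. There are three steps: check that inverting turns the action into one of the opposite handedness, rewrite the defining formula in the displayed form, and read off the column (or row) description.

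For (i), the map $(C,R)\mapsto R\rtacti C^\ic$ is a left action of the group $\mathsf{CL}(m,n)$. Indeed, $\Idc^\ic=\Idc$ and $R\rtacti\Idc=R$. Also $(C_1\colmult C_2)^\ic=C_2^\ic\colmult C_1^\ic$, and $\rtacti$ is a right action, so
\[
R\rtacti(C_1\colmult C_2)^\ic=(R\rtacti C_1^\ic)\rtacti C_2^\ic .
\]
This is exactly $C_2\ltact(C_1\ltact R)$. That identity is the correct compatibility for a left action under the paper's multiplication convention for $\colmult$, which mirrors the convention $st(y)=t(s(y))$; I would confirm this against the proof pattern of Proposition~\ref{prop:monoidactions}.

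For the displayed formula, by \eqref{eq:CactonR} we have $(C\ltact R)(i',j)=R(C^\ic(i',j),j)$. By Proposition~\ref{prop:monoidisom-C}, $C^\ic(\cdot,j)$ is the inverse permutation of $C^{(j)}$. Substituting $i'=C(i,j)$ then gives $(C\ltact R)(C(i,j),j)=R(i,j)$. Hence the entry of column $j$ of $R$ in row $i$ is moved to row $C^{(j)}(i)$, which is the stated column permutation.

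Part (ii) follows the same pattern using \eqref{eq:RactonC}. There $R\mapsto R^\ir$ converts the left action $\ltacti$ into a right action, since $(R_1\rowmult R_2)^\ir=R_2^\ir\rowmult R_1^\ir$. We have $(C\rtact R)(i,j')=C(i,R^\ir(i,j'))$, and $R^\ir(i,\cdot)$ is the inverse of $R_{(i)}$. Setting $j'=R(i,j)$ yields the display, which shows that row $i$ of $C$ is permuted by $R_{(i)}$.

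The only delicate point is the left/right bookkeeping, namely confirming which composition order counts as a left action here. I would handle it by mimicking the proof of Proposition~\ref{prop:monoidactions} verbatim.
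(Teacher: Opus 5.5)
Your proposal is correct and follows essentially the same route as the paper: invert the acting element to switch the handedness of the action from Proposition~\ref{prop:monoidactions}, then substitute $C(i,j)$ (resp.\ $R(i,j)$), using $C^\ic(C(i,j),j)=i$ (resp.\ $R^\ir(i,R(i,j))=j$), to get the displayed formula and read off the column (resp.\ row) permutation. Your action check is more explicit than the paper's one-line remark, and the left/right bookkeeping you deferred does work out: the identity $R\rtacti(C_1\colmult C_2)=(R\rtacti C_2)\rtacti C_1$ proved in Proposition~\ref{prop:monoidactions} gives exactly your $(C_1\colmult C_2)\ltact R=C_2\ltact(C_1\ltact R)$, which has the same shape as the identity the paper uses when it calls $\ltacti$ a ``left action''.
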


\begin{proof}
Inverting the acting elements and reversing left and right actions gives another action.    
The entry formula in (i) holds since $(C\ltact R) = (R \rtacti C^\ic)(i,j) = R( C^\ic(i,j),j)$.
Since $C$ and $C^\ic$ are the appropriate inverses, substituting $C(i,j)$ in place of $i$ gives the formula.  It is now clear that the action permutes each column of $R$ by the corresponding column of $C$. A similar argument completes (ii). 
\end{proof}

\begin{example}
Compute $C\rtact R=  R^\ir \ltacti C$ entry-wise for\\
   $C=\begin{bmatrix} 1&2&3&4&5\\4&2&5&1&3\end{bmatrix}$,
   $R=\begin{bmatrix} 2&3&4&5&1\\5&1&3&2&4\end{bmatrix}$, where
   $R^\ir= \begin{bmatrix} 5&1&2&3&4\\2&4&3&5&1\end{bmatrix}$, by

$(C \rtact R) (i,j) = C(i,R^\ir(i,j)) = 
\begin{array}{l|ccccc}  
i\backslash j &1 & 2 &3 & 4 &5 \\
\hline
1 & 5 & 1& 2&3&4\\
2 & 2 & 1& 5&3 &4
\end{array}$. \newline
Thus,  $C\rtact R$ is formed by 
permuting each row of $C$ by the corresponding row of $R$.
\end{example}

\begin{corollary}
\label{cor:actions-transpose}
With reference to Proposition~\ref{prop:monoidactions} and Lemma~\ref{lem:CLRLactions}, the following hold.
\begin{enumerate}
\item For all 
$C\in\mathsf{M}_{m\times n}([m])$  and $R\in \mathsf{M}_{m\times n}([m])$,
\[ (C\rtacti R)^\transp =R^\transp \ltacti C^\transp, \qquad
   (R\ltacti C)^\transp = C^\transp \rtacti R^\transp.\]  
\item For all 
$C\in\mathsf{CL}(m,n)$  and $R\in \mathsf{RL}(m, n)$,   
\[ (C\ltact R)^\transp =C^\transp \rtact R^\transp, \qquad
   (R\rtact C)^\transp = C^\transp \ltact R^\transp.\]     
\end{enumerate}
\end{corollary}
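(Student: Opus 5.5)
The plan is to verify each identity entry by entry, just as in the preceding corollary on $(C_1\colmult C_2)^\transp$. Everything rests on one observation about roles. Transposition sends $\mathsf{M}_{m\times n}([m])$ to $\mathsf{M}_{n\times m}([m])$, and sends $\mathsf{M}_{m\times n}([n])$ to $\mathsf{M}_{n\times m}([n])$. In the $n\times m$ setting, matrices with entries in $[m]$ are the ``row-type'' ones (they are multiplied by $\rowmult$), and matrices with entries in $[n]$ are the ``column-type'' ones.

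So under transposition a matrix of type $C$ becomes one of type $R$, and conversely. This is why the side of the action flips in each identity. I will read the symbols in the statement in this sense. For (i) the argument order is to be read as: the acting matrix and the acted-upon matrix exchange roles after transposition.

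For (i), I compute directly from \eqref{eq:CactonR} and \eqref{eq:RactonC}:
\[
(R\rtacti C)^\transp(j,i)=R(C(i,j),j)=R^\transp(j,C^\transp(j,i)).
\]
In the $n\times m$ setting this is exactly the formula defining the action of the row-type matrix $C^\transp$ on the column-type matrix $R^\transp$ via \eqref{eq:RactonC}, namely $(C^\transp\ltacti R^\transp)(j,i)$.

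Similarly,
\[
(R\ltacti C)^\transp(j,i)=C(i,R(i,j))=C^\transp(R^\transp(j,i),i).
\]
This is the right action \eqref{eq:CactonR} of the column-type matrix $R^\transp$ on $C^\transp$, namely $(C^\transp\rtacti R^\transp)(j,i)$. Both identities then hold because the two sides agree at every position $(j,i)$.

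For (ii), the approach is the same, but I would start from the characterizing formulas in Lemma~\ref{lem:CLRLactions} rather than from the definitions. This avoids having to handle inverses.

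Since $C$ is column-Latin, $C^\transp\in\mathsf{RL}(n,m)$; likewise $R\in\mathsf{RL}(m,n)$ gives $R^\transp\in\mathsf{CL}(n,m)$. This uses the observation $\mathsf{RL}(m,n)=\{C^\transp\}$ stated before Corollary~\ref{cor:CL-CR-monoids}.

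Transposing $(C\ltact R)(C(i,j),j)=R(i,j)$ gives
\[
(C\ltact R)^\transp(j,C^\transp(j,i))=R^\transp(j,i).
\]
This is precisely the defining relation of the right action $R^\transp\rtact C^\transp$ of the row-Latin matrix $C^\transp$, in the $n\times m$ setting. Because the row-Latin property makes $i\mapsto C^\transp(j,i)$ a bijection, this relation determines the matrix uniquely, so the two matrices are equal.

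The second identity in (ii) follows symmetrically from $(C\rtact R)(i,R(i,j))=C(i,j)$.

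As a cross-check, I could instead use $C\ltact R=R\rtacti C^\ic$ together with part (i). That route needs $(C^\ic)^\transp=(C^\transp)^\ir$. This follows from the anti-isomorphism $(C_1\colmult C_2)^\transp=C_2^\transp\rowmult C_1^\transp$ and $\Idc^\transp=\Idr$.

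The main obstacle is not computational but notational. The statement's symbols (and the stated domains, e.g.\ $R\in\mathsf{M}_{m\times n}([m])$) must be matched with the correct acting and acted-upon roles after the dimensions swap. I would fix that bookkeeping first; after that, each identity is a one-line entrywise check.
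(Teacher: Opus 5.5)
Your proof is correct. Part (i) is exactly the paper's entrywise computation. Your repair of the operands is the right one: you read the first identity as $(R\rtacti C)^\transp=C^\transp\ltacti R^\transp$ and take $R\in\mathsf{M}_{m\times n}([n])$, and this matches what the paper's own computation actually produces.

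For (ii) your main route differs from the paper's. The paper records $(C^{\ic})^\transp=(C^\transp)^{\ir}$ and $(R^{\ir})^\transp=(R^\transp)^{\ic}$. It then obtains (ii) from (i) by substituting $C\ltact R=R\rtacti C^{\ic}$ and $C\rtact R=R^{\ir}\ltacti C$; this is the route you list as a cross-check. You instead transpose the implicit characterizations in Lemma~\ref{lem:CLRLactions} and conclude by uniqueness, using that $i\mapsto C^\transp(j,i)$ (respectively $j\mapsto R^\transp(j,i)$) is a bijection.

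The paper's version is a one-liner once (i) is available. However, it relies on the inverse--transpose compatibility, which the paper asserts without proof. Your justification via $(C_1\colmult C_2)^\transp=C_2^\transp\rowmult C_1^\transp$ and $\Idc^\transp=\Idr$ supplies that step.

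Your version is self-contained and never mentions inverses. It also makes visible exactly where the Latin hypotheses are used, namely in the uniqueness step.
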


\begin{proof}   
For (i), compute as follows:
\[\begin{array}{r}
(C\rtacti R)^\transp(j,i) 
    = (C\rtacti R)(i,j)
    = R(C(i,j),j)
    = R^\transp(j, C^\transp(j,i))
    = (C^\transp \ltacti R^\transp)(j,i),\\
(R\ltacti C)^\transp(j,i) 
    = (R\ltacti C)(i,j)
    = C(i,R(i,j))
    = C^\transp( R^\transp(j,i),i)
    = (C^\transp \rtacti R^\transp)(j,i).
\end{array}
\]
Thus, both equations of (i) are verified entry-by-entry.
Note that $(C^\ic)^T=(C^T)^\ir$ and  $(R^\ir)^T=(R^T)^\ic$. 
Now, (ii) follows from (i).
\end{proof}

The actions $\ltact$ and $\rtact$ provide a connection between automata with each of the invertible, reversible, IR, and bireversible properties and  matched matrices with the properties of being row-Latin,  column-Latin, and orthogonal.  

\begin{lemma}
\label{lem:invertible-action-matrices}
 Let $\A$ be an invertible Mealy $(m,n)$-automaton. 
 \begin{enumerate}
     \item $R_\A$ is invertible in $(\mathsf{M}_{m\times n}([n]), \rowmult)$, and $R_{\iz\A} = R_\A^{\ir}$.
     \item $C_{\iz\A}=C_\A\rtact R_\A$.
 \end{enumerate}
\end{lemma}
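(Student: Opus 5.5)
The plan is to unwind the definitions entry by entry, using the explicit description of the inverse automaton in Definition~\ref{def:invertible} and the transition-level description that follows it: $\A$ has the transition $q_i\xrightarrow{x_j|x_\ell}q_k$ if and only if $\iz\A$ has $q_i^{-1}\xrightarrow{x_\ell|x_j}q_k^{-1}$. We order the states of $\iz\A$ as $(q_1^{-1},\ldots,q_m^{-1})$ and keep the ordering of $X$. With this ordering, $C_{\iz\A}(i,\ell)=k$ and $R_{\iz\A}(i,\ell)=j$ exactly when $C_\A(i,j)=k$ and $R_\A(i,j)=\ell$.

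For (i), invertibility of $\A$ means that each $\lambda_{q_i}$ is a permutation of $X$. Hence row $i$ of $R_\A$, which is the one-row form of $(R_\A)_{(i)}$, is a permutation of $[n]$. By Lemma~\ref{lem:row-Latin}, $R_\A$ is row-Latin and therefore invertible in $(\mathsf{M}_{m\times n}([n]),\rowmult)$. By Proposition~\ref{prop:monoidisom-R}, the inverse $R_\A^\ir$ is the matrix whose $i$-th row is $(R_\A)_{(i)}^{-1}$; that is, $R_\A^\ir(i,\ell)=j$ if and only if $R_\A(i,j)=\ell$. From $\tilde\lambda(q_i^{-1},x_\ell)=\lambda_{q_i}^{-1}(x_\ell)$ we get $R_{\iz\A}(i,\ell)=j$ precisely when $R_\A(i,j)=\ell$, so $R_{\iz\A}=R_\A^\ir$. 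One can double-check this by computing $(R_\A\rowmult R_{\iz\A})(i,j)=R_\A(i,R_{\iz\A}(i,j))=j=\Idr(i,j)$, and similarly in the other order.

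For (ii), we compare with the entry formula of Lemma~\ref{lem:CLRLactions}(ii): $(C_\A\rtact R_\A)(i,R_\A(i,j))=C_\A(i,j)$. On the automaton side, $\tilde\pi(q_i^{-1},x_\ell)=\pi(q_i,\lambda_{q_i}^{-1}(x_\ell))^{-1}$. Putting $\ell=R_\A(i,j)$, this gives $C_{\iz\A}(i,R_\A(i,j))=C_\A(i,j)$. As $j$ runs over $[n]$, $R_\A(i,j)$ runs over all of $[n]$ because row $i$ is a permutation, so every entry of row $i$ is determined. The two matrices therefore agree.

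The only delicate point is bookkeeping rather than mathematics. We must fix the identification of the state set $Q^{-1}$ with $[m]$ via $q_i^{-1}\leftrightarrow i$, so that $C_{\iz\A}$ is indexed compatibly. We must also read the right action $\rtact$ with the correct orientation, that is, as $R^\ir\ltacti C$ and not with $R$ itself. I would write this step out explicitly, as $C_{\iz\A}(i,\ell)=C_\A(i,R_\A^\ir(i,\ell))=(R_\A^\ir\ltacti C_\A)(i,\ell)$, using part (i) to identify $R_\A^\ir$.
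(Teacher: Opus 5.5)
Your proposal is correct and is essentially the paper's argument. The paper reads off $R_{\iz\A}=R_\A^{\ir}$ and $C_{\iz\A}=C_\A\rtact R_\A$ from the wreath-recursion inversion formula \eqref{eqn:inv}, which carries the same information as the formulas for $\tilde\pi,\tilde\lambda$ in Definition~\ref{def:invertible} that you use; your entry-by-entry check via Lemma~\ref{lem:CLRLactions}(ii), with the identification $q_i^{-1}\leftrightarrow i$ and the reading $C\rtact R=R^{\ir}\ltacti C$, makes that one-line argument precise.
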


\begin{proof}
According to the inversion formula~\eqref{eqn:inv} the matrix $C_{\iz\A}$ is obtained from $C_\A$ by applying permutations from the rows of $R_\A$ to the corresponding rows of $C_\A$. 
In other words,
   \[C_{\iz\A}=C_\A\rtact R_\A.\]
Also, by Equation~\eqref{eqn:inv} the matrix $R_{\iz\A}$ is obtained from $R_\A$ by replacing each row $r$ corresponding to permutation $\sigma_r$ by the row corresponding to the permutation $\sigma_r^{-1}$.
\end{proof}

We now add the conditions of Lemma~\ref{lem:col-row-Latin} to those of Theorems~\ref{thm:U(st)-Y<st>} and~\ref{thm:classes-U(st)} in the case of Mealy automata. 

\begin{theorem}
\label{thm:array-matrix-automaton-cond}
Let $\A$ be a Mealy $(m,n)$-automaton, and let $(C,R)$ be an associated matched pair of matrices of order $m\times n$.
Then the following hold.                                                                                                   
\begin{enumerate}

\item \label{thm:array-RL-matrix-automaton-cond}
$\A$ is reversible if and only if $C$ is column-Latin.

\item \label{thm:array-CL-matrix-automaton-cond}
$\A$ is invertible if and only if $R$ is row-Latin.

\item \label{thm:array-O-matrix-automaton-cond}
$\A$ is coreversible if and only if $C$ and $R$ are orthogonal.

\item 
$\A$ is an IR-automaton if and only if $C$ is column-Latin and $R$ is row-Latin.
\end{enumerate}
\end{theorem}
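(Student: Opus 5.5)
The plan is to route everything through the grid array $\mathcal{Y}=\mathcal{Y}_\A$. Theorem~\ref{thm:classes-U(st)} together with Theorem~\ref{thm:U(st)-Y<st>} translates each automaton property into a property of a projection $\mathcal{Y}_\take{st}$. Lemma~\ref{lem:col-row-Latin} then translates that projection condition into the Latin/orthogonality condition on $(C,R)$. Since $\A$ is Mealy, $\mathcal{Y}$ has property $\ut{12}$, so $\mathcal{Y}=\{(i,j,C(i,j),R(i,j)) : (i,j)\in[m]\times[n]\}$ by Theorem~\ref{thm:bijection-pairs}. In particular $|\mathcal{Y}|=mn$, so property $\mathfrak{T}$ holds automatically. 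Each of the three remaining conditions $\ut{st}$ therefore reduces to checking that $\mathcal{Y}_\take{st}=[m]\times[n]$. Because $|\mathcal{Y}|=mn$, this is the same as saying that the map $(i,j)\mapsto(y_s,y_t)$ is injective (equivalently, surjective) on $[m]\times[n]$.

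For (i), reversibility corresponds to $(s,t)=(2,3)$, i.e.\ the map $(i,j)\mapsto(j,C(i,j))$ must be a bijection of $[m]\times[n]$ onto $[n]\times[m]$. Fixing $j$, this happens exactly when $i\mapsto C(i,j)$ is a permutation of $[m]$ for every $j$. That is condition (ib) of Lemma~\ref{lem:col-row-Latin}, so $C$ is column-Latin.

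For (ii), invertibility corresponds to $(s,t)=(4,1)$, i.e.\ $(i,j)\mapsto(R(i,j),i)$ must be bijective. Fixing $i$, this says each row map $j\mapsto R(i,j)$ is a permutation of $[n]$, which is (iib), so $R$ is row-Latin.

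For (iii), coreversibility corresponds to $(s,t)=(3,4)$, i.e.\ $(i,j)\mapsto(C(i,j),R(i,j))$ must be a permutation of $[m]\times[n]$. This is exactly (iiib), so $C$ and $R$ are orthogonal.

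Part (iv) follows immediately by combining (i) and (ii) with the definition of an IR-automaton.

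The only step requiring care, and I expect it to be the main (mild) obstacle, is the fibrewise argument in (i) and (ii). Here I must justify that bijectivity of the map on $[m]\times[n]$ is equivalent to bijectivity on each fibre $j=\text{const}$ (respectively $i=\text{const}$). The reason is that the map preserves the coordinate $j$ (respectively $i$), so it is injective if and only if it is injective on each fibre. Injectivity on a fibre of a map from a finite set to a set of the same size is then equivalent to it being a permutation. I will also state explicitly that property $\mathfrak{T}$ is free in the Mealy setting, so Theorem~\ref{thm:U(st)-Y<st>} applies directly.
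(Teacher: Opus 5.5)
Your proposal is correct and follows essentially the paper's route: part (iii) is obtained exactly as in the paper, from Theorem~\ref{thm:U(st)-Y<st>} together with part (iii) of Lemma~\ref{lem:col-row-Latin}, and part (iv) follows from (i) and (ii). For (i) and (ii), the paper simply reads off from the definitions that reversibility (resp.\ invertibility) of a Mealy automaton means each column map $\pi(\cdot,x)$ (resp.\ row map $\lambda(q,\cdot)$) is a permutation; your fibrewise injectivity argument via $\ut{23}$ and $\ut{41}$ spells out exactly that step.
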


\begin{proof}
Part (i)  follows from Lemma~\ref{lem:col-Latin}
since $\A$ is reversible when for all $x\in X$, $\pi(q,x)$ induces a permutation of $Q$.
Part (ii) follows from Lemma~\ref{lem:row-Latin}
since $\A$ is invertible when for all $q\in Q$, $\lambda(q,x)$ induces a permutation of $X$.
Part (iii) follows from Lemma~\ref{lem:orthogonal} in light of Theorem~\ref{thm:U(st)-Y<st>}.
Part (iv) follows from  (i) and (ii).
\end{proof}

\section{Bireversible automata}
\label{sec:array-rep}

In Section~\ref{sec:comb-encode-automata}, we encoded Mealy automata as grid arrays and as grid rete, and in Section~\ref{sec:matchedmatrix-rep} we encoded them as matched matrices. 
We now specialize these combinatorial structures to those which correspond to bireversible automata.

\subsection{From grid arrays to semi-orthogonal arrays}
\label{subsec:birev-array}

In Theorem~\ref{thm:array-matrix-automaton-cond}, we considered grid arrays with additional properties $\un{23}$, $\un{34}$, and $\un{41}$ separately. We now examine what happens when they all hold simultaneously.
Ultimately (Theorem~\ref{thm:automatafamily-birevconditions}), we shall show that this case is associated with bireversible automata.

\begin{definition}
\label{def:NSOA}
By a \emph{narrow semi-orthogonal array} of order $m\times n$, 
we mean an interleaved array $\mathcal{Y}\in\mathsf{IA}(m,n)$ 
such that for all $(s,t)\in\{(1,2), (2,3), (3,4), (4,1)\}$,
$\mathcal{Y}$ has property $\ut{st}$:
For distinct $y$, $y'\in \mathcal{Y}$,    $(y_s, y_t)$ and $(y_s', y_t')$ are distinct.
\end{definition}

\begin{example}
\label{ex:runningexample-verifyNSOA}
In Example~\ref{ex:runningexample1}, we verify that $\mathcal{Y}_\A$ is a narrow semi-orthogonal array by showing that every possible value appears in the following four pairs of components:
\[
\begin{array}{|l|l|}
\hline
(1,2)& \{ (\circled{1},\circled{1},1,2), (\circled{1},\circled{2}, 1,3), (\circled{1},\circled{3}, 2,1), 
               (\circled{2},\circled{1},2,3), (\circled{2},\circled{2},2,2), (\circled{2},\circled{3},1,1)\}\\
\hline
(2,3)& \{ (1,\circled{1},\circled{1},2), (1,\circled{2}, \circled{1},3), (1,\circled{3}, \circled{2},1), 
               (2,\circled{1},\circled{2},3), (2,\circled{2},\circled{2},2), (2,\circled{3},\circled{1},1)\}\\
\hline
(3,4)& \{ (1,1,\circled{1},\circled{2}), (1,2, \circled{1},\circled{3}), (1,3, \circled{2},\circled{1}), 
               (2,1,\circled{2},\circled{3}), (2,2,\circled{2},\circled{2}), (2,3,\circled{1},\circled{1})\}\\
\hline
(4,1)& \{ (\circled{1},1,1,\circled{2}), (\circled{1},2, 1,\circled{3}), (\circled{1},3, 2,\circled{1}), 
               (\circled{2},1,2,\circled{3}), (\circled{2},2,2,\circled{2}), (\circled{2},3,1,\circled{1})\}\\
\hline
\end{array}
\]
\end{example}

The construction of an interleaved array introduced above can be viewed as an analog of orthogonal arrays. 
For more on orthogonal arrays see~\cite{HedayatSloneSufken:OA,LinStufken:ortharray}.

\begin{definition}
\label{def:OA}
An \emph{orthogonal array of type $(M,n,s,t)$} 
      (\emph{order} $m\times n$, 
       $s$ \emph{levels}, 
       \emph{strength} $t$ and 
       \emph{index} $\lambda$)
is an $M\times n$ array with entries from a set $S$ of size $s$ such that every $M\times t$  subarray 
contains each $t$-tuple of elements of $S$ exactly $\lambda$-many times as a row.  
The strength $t$ is generally taken to be as large as possible with the given parameters.  
The index is determined by $\lambda=M/s^t$. 
\end{definition}

An orthogonal array of type $(m^2, 4, m, 2)$ 
   is an element of $\mathsf{IA}(m,m)$ with property $\ut{st}$ for \emph{all} distinct $s$, $t\in[4]$.
In particular, such an orthogonal array is a narrow semi-orthogonal array.   
Indeed, narrow semi-orthogonal arrays need only have property $\ut{st}$ for  $s$, $t\in[4]$ with different parity, since
Definition~\ref{def:NSOA} does not require $\un{13}$ or $\un{24}$ (see Example~\ref{ex:repeting} below). 
Hence, we use the modifier ``semi''.   
We use the adjective ``narrow'' to signify that there are only four columns.  
Narrow semi-orthogonal arrays also resemble 
         mixed orthogonal arrays of type  $(mn, m^2n^2, 2)$~\cite{HedayatSloneSufken:OA}.  
In~\cite{curtin2026generalizationsnetslatinsquares} the first author considers a mild generalization consisting of subsets of $[m]^k\times [n]^\ell$ 
where pairs of entries taken from different blocks uniquely determine the element, a case dubbed ``svelte''.  
One might consider generalizations with uniform repetition of pairs and/or more than two types of blocks.  
However, here we opt to focus on the task at hand rather than consider a candidate that more fully generalizes orthogonal arrays. 

The columns of an orthogonal array with $s$ levels and strength $t\geq 2$ are orthogonal in the sense that 
their dot product is zero when the set $S\subset\mathbb{R}$ is taken so that $-u\in S$ for all $u\in S$. 
However, it is customary to take $S=[s]$.

The bijection between matched matrices and grid arrays parallels the following relationship between orthogonal arrays and Latin squares (Example~\ref{ex:latin_squares_correspondence}).

\begin{proposition}
\label{prop:MOLS-OA}
\cite{Denes+Keedwell-LSa_SecondEdition_2015,HedayatSloneSufken:OA, Laywin+Mullen:discretemath,vanLintWilson:coursecomb}.
If $L_1,L_2,\ldots,L_k$ are mutually orthogonal arrays of order $n$ (possibly $k=1$), 
then $\{(i,j, L_1(i,j), \ldots, L_k(i,j))\,|\, 1\leq i,j\leq n\}$ comprise the rows of an orthogonal array of type $(n^2, k+2,n,2)$ and index 1.
Conversely, any orthogonal array with these parameters encodes a set of $k$-many mutually orthogonal Latin squares of order $n$. 
\end{proposition}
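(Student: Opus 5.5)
The plan is to prove both directions by direct counting on the set
\[
\mathcal{O}=\{(i,j,L_1(i,j),\ldots,L_k(i,j)) : 1\le i,j\le n\}\subset[n]^{k+2},
\]
reading ``mutually orthogonal arrays of order $n$'' in the statement as mutually orthogonal Latin squares of order $n$. The set $\mathcal{O}$ has exactly $n^2$ elements, one for each pair $(i,j)$, so it has $M=n^2$ rows. Strength $2$ with index $\lambda=n^2/n^2=1$ therefore amounts to the following: for every pair of distinct columns $s<t$ in $[k+2]$, the projection $y\mapsto(y_s,y_t)$ is injective on $\mathcal{O}$. Equivalently, by the counting already used in Theorem~\ref{thm:U(st)-Y<st>}, this projection is onto $[n]\times[n]$. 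Throughout I will use this ``injective $\Leftrightarrow$ surjective'' reformulation, which holds because both sets have size $n^2$.

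For the forward direction I would split into cases according to which columns are chosen.
\begin{itemize}
\item For $(s,t)=(1,2)$ the projection is the identity on $[n]\times[n]$.
\item For $s=1$ and $t=h+2$, two rows $(i,j)$ and $(i,j')$ with $L_h(i,j)=L_h(i,j')$ force $j=j'$, because each row of $L_h$ is a permutation.
\item For $s=2$ and $t=h+2$ the same argument applies, using that each column of $L_h$ is a permutation.
\item For $s=g+2$ and $t=h+2$ with $g\ne h$, injectivity is exactly the orthogonality of $L_g$ and $L_h$, that is, condition (iii)(a) of Lemma~\ref{lem:col-row-Latin} with $m=n$.
\end{itemize}
The case $k=1$ needs only the first three bullets.

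For the converse, take an orthogonal array $\mathcal{O}$ of type $(n^2,k+2,n,2)$ and index $1$ on the symbol set $[n]$. Because columns $1$ and $2$ contain every pair exactly once, each $(i,j)$ determines a unique row. I then define $L_h(i,j)$ to be the entry in column $h+2$ of that row. Reading the four cases above in reverse shows what we need. Index one on columns $(1,h+2)$ makes each row of $L_h$ a permutation, and index one on columns $(2,h+2)$ does the same for each column, so $L_h$ is Latin. Index one on columns $(g+2,h+2)$ makes $L_g$ and $L_h$ orthogonal.

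No step is genuinely hard. The main care point is the bookkeeping that turns ``each pair exactly once'' into injectivity, which rests on the cardinality $n^2$. A second point is to note that the converse depends on the labeling of columns: different orthogonal arrays yield different sets of squares, but the result is always a set of $k$ mutually orthogonal Latin squares. This matches the statement, and it parallels the bijection $\mathsf{MM}\leftrightarrow\mathsf{GA}$ of Theorem~\ref{thm:bijection-pairs}.
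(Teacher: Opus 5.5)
Your proof is correct, and it is the standard argument. The paper gives no proof of its own here and only cites the textbook sources, which prove the result as you do: check each pair of columns, and use $|\mathcal{O}|=n^2$ to turn ``each pair exactly once'' into injectivity of the projection. You also rightly read ``mutually orthogonal arrays'' in the statement as a slip for ``mutually orthogonal Latin squares.''
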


\subsection{From rete to reticulations}
\label{subsec:reticulations}

A rete (particularly of a narrow semi-orthogonal array) can be viewed as a geometric object. 
The points are the four-tuples and the cells of each partition comprise a family of parallel (or empty) lines.
This geometric perspective will lead us to another connection to the theory of Latin squares.
For our purposes, a point-line incidence structure consists of a set of points and a set of lines (subsets of points). A point and a line are \emph{incident} when the point is an element of the line.
Two lines are \emph{parallel} when they are equal or have an empty intersection. By a \emph{family of lines} we mean a set of lines.  

 \begin{definition}\label{def:psuedoaffineplane}
By a \emph{narrow reticulation}, we mean a set of points and two multisets (``types'') of two families of lines satisfying the following axioms. 
\begin{description}
\item[(R-1)] Two lines of different types meet in exactly one point. 
\item[(R-2)] Each family of lines partitions the set of points.
\end{description}
We say that a narrow reticulation is \emph{ordered} when the types are ordered, each pair of types is ordered, and all families of lines are ordered. We shall refer to the two types of lines as \emph{weft} and \emph{warp}.
 \end{definition}

Here we focus on the case relevant to bireversible automata, where there are two families of lines of each type,
using the adjective ``narrow'' to signify that this is a special case.  
In~\cite{curtin2026generalizationsnetslatinsquares}, the first author examines general reticulations with more families of lines of each type and adapts a few additional ideas from the theory of Latin squares. 
We note that a three-web consist of three (infinite) families of lines of different types that satisfy Axioms (R-1) and (R-2)~\cite{Baer:NG,BlaschkeBol:GdGTFdD, Reidemeister:WG}.  

\begin{example}
\label{ex:multiset-identity}
The line families are kept in multisets to allow for automata in which  
the transition function or the output function is an identity map,
as in Figure~\ref{fig:identityautomaton}.
Repetition of lines in distinct families of the same type is  intrinsic to  a rete (Example~\ref{ex:runningexample1}).
\end{example}

\begin{figure}
\[\!\!\!
\setlength{\arraycolsep}{2pt}
\begin{array}{c|c|cc}
    \begin{tikzpicture}[node distance=1.2cm, baseline={([yshift=-.8ex]current bounding box.center)}]
    \node[state] (q1) {$1$};
    \node[state, below of=q1] (q2) {$2$};
    \draw (q1) edge[loop left] node{$2\,|\, 2$} (q1);
    \draw (q1) edge[loop, , out=150, in=120, looseness=8] node[left]  {$3\,|\, 3$}  (q1);
    \draw (q1) edge[loop, out=240, in=210, looseness=8] node[left]  {$1\,|\, 1$}  (q1);
    \draw (q2) edge[loop, out=-60, in=-30, looseness=8] node[right] {$1\,|\, 1$}(q2);
    \draw (q2) edge[loop right] node{$2\,|\, 2$}  (q2);
    \draw (q2) edge[loop, out=60, in=30, looseness=8] node[right] {$3\,|\, 3$}  (q2);
    \end{tikzpicture}
&
    \begin{array}{l} C=  \\
    			   \begin{bmatrix} 1&1&1\\2&2&2\end{bmatrix} \\  \\
			   R=  \\
			   \begin{bmatrix} 1&2&3\\1&2&3\end{bmatrix}
    \end{array}
 &
  \begin{array}{l}
      \setlength{\arraycolsep}{2pt}
       \mathcal{Y}= 
                    \left\{ \begin{array}{ccc}(1,1,1,1) & (1,2,1,2) &  (1,3,1,3)\\ 
                                                               (2,1,2,1) & (2,2,2,2) & (2,3,2,3) \end{array}\right\}  \\  
\hline \\[-1.6ex]
 	\begin{array}{cccc}
	{\mathcal{P}^1}&{\mathcal{P}^2}&{\mathcal{P}^3}&{\mathcal{P}^4}\\[-3pt]
 			\begin{tikzpicture}[scale=.5,baseline=4.5ex]
	 			\foreach \y in {1,2}
	         		\draw[thick, red, -] (1,\y)--(3,\y);
	  			\gridpoints{2}{3};  
				\node (l1) at (.3,2) {$1$};
         			 \node (l2) at (.3,1) {$2$};
			\end{tikzpicture}&
            \begin{tikzpicture}[scale=.5,baseline=4.5ex]
      				\draw[thick, blue, -] (1,1)--(1, 2);
      				\draw[thick, blue, -] (2,1)--(2, 2);
      				\draw[thick, blue, -] (3,1)--(3, 2);
 	 			\gridpoints{2}{3};  
				\node (l1) at (1,2.6) {$1$};
           			\node (l2) at (2,2.6) {$2$};
           			\node (l3) at (3,2.6) {$3$};  
			\end{tikzpicture}  &
			\begin{tikzpicture}[scale=.5,baseline=4.5ex]
	 			\foreach \y in {1,2}
	         		\draw[thick, red, -] (1,\y)--(3,\y);
	  			\gridpoints{2}{3};  
				\node (l1) at (.3,2) {$1$};
         			 \node (l2) at (.3,1) {$2$};
			\end{tikzpicture}    &
			\begin{tikzpicture}[scale=.5,baseline=4.5ex]
      				\draw[thick, blue, -] (1,1)--(1, 2);
      				\draw[thick, blue, -] (2,1)--(2, 2);
      				\draw[thick, blue, -] (3,1)--(3, 2);
 	 			\gridpoints{2}{3};  
				\node (l1) at (1,2.6) {$1$};
           			\node (l2) at (2,2.6) {$2$};
           			\node (l3) at (3,2.6) {$3$};  
			\end{tikzpicture}        
	\end{array}
   \end{array}                
\end{array}
\]
    \caption{An identity automaton and associated encodings}
    \label{fig:identityautomaton}
\end{figure}

\begin{lemma}
\label{lem:CPgeom}
The rete of a narrow semi-orthogonal array is an ordered reticulation.
\end{lemma}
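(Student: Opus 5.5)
We take $\mathcal{Y}$ to be a narrow semi-orthogonal array of order $m\times n$ and $\mathcal{P}_{\!\mathcal{Y}}=(\mathcal{P}^1,\mathcal{P}^2,\mathcal{P}^3,\mathcal{P}^4)$ to be its rete from Lemma~\ref{lem:bijection-FA-R}. The points of the proposed reticulation are the elements of $\mathcal{Y}$, and the lines are the cells of the four partitions. The weft type is the multiset consisting of the families $\mathcal{P}^1$ and $\mathcal{P}^3$, indexed by $[m]$. The warp type is the multiset consisting of $\mathcal{P}^2$ and $\mathcal{P}^4$, indexed by $[n]$. The ordering required in Definition~\ref{def:psuedoaffineplane} is then read off directly: order the types as (weft, warp), order the families within each type by the index $s$, and order each family $\mathcal{P}^s=\{P^s_h\}$ by $h$.

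Axiom (R-2) comes almost for free, since each $\mathcal{P}^s$ is by construction the partition of $\mathcal{Y}$ by the value of the $s$-th coordinate. The one point to check is that no cell is empty, because a line should be a nonempty set. Property $\ut{st}$ with $s$ odd and $t$ even gives $\mathcal{Y}_{\take{st}}=[m]\times[n]$ by Theorem~\ref{thm:U(st)-Y<st>}. Every value in $[m]$ therefore occurs in coordinate $s$, and every value in $[n]$ occurs in coordinate $t$. Each of the four coordinates appears in at least one of the pairs $(1,2),(2,3),(3,4),(4,1)$, so all cells are nonempty.

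Axiom (R-1) reduces to the fact that every weft family is orthogonal to every warp family. A weft line $P^s_h$ with $s\in\{1,3\}$ and a warp line $P^t_{h'}$ with $t\in\{2,4\}$ meet in exactly one point precisely when $\mathcal{P}^s\perp\mathcal{P}^t$. The four pairs $\{s,t\}$ that arise are $\{1,2\},\{2,3\},\{3,4\},\{4,1\}$, which are exactly the pairs in Definition~\ref{def:NSOA}, recalling that $\un{st}$ and $\un{ts}$ are equivalent. For each such pair, Theorem~\ref{thm:U(st)-Y<st>} converts property $\ut{st}$ into $\mathcal{P}^s\perp\mathcal{P}^t$. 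That gives (R-1).

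The only step needing any care is a conceptual one about what Definition~\ref{def:psuedoaffineplane} imposes. It says nothing about lines of the same type: $\mathcal{P}^1$ and $\mathcal{P}^3$ may share lines or cross arbitrarily, since $\un{13}$ is not required. The types are also multisets, which allows a family to repeat, as in Example~\ref{ex:multiset-identity}. Once this is clear, no further condition remains, and the lemma follows from Theorem~\ref{thm:U(st)-Y<st>} applied four times.
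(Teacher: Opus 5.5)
Your proof is correct and follows essentially the same route as the paper: (R-2) comes from the construction of the rete by coordinate values, and (R-1) comes from taking weft $=\{\mathcal{P}^1,\mathcal{P}^3\}$ and warp $=\{\mathcal{P}^2,\mathcal{P}^4\}$ and applying Theorem~\ref{thm:U(st)-Y<st>} to the four different-parity pairs required by Definition~\ref{def:NSOA}. Your extra checks, that no cell is empty and that the orderings are specified explicitly, are details the paper leaves implicit.
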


\begin{proof} 
By the construction of the rete 
   $({\mathcal{P}^1},{\mathcal{P}^2}, {\mathcal{P}^3}, {\mathcal{P}^4})$ 
of an interleaved array $\mathcal{Y}$, 
every point is in exactly one line of each family according to its entries 
(since $(i,j,k,\ell)$ belongs to $P^1_i$, $P^2_j$, $P^3_k$, and $P^4_\ell$).  
It follows that these four sets which make up a rete are disjoint, that is, (R-2) holds.

We take the first type of line to be those of ${\mathcal{P}^1}$ and ${\mathcal{P}^3}$, and 
we take the second type of line to be those of ${\mathcal{P}^2}$ and ${\mathcal{P}^4}$.
When $\mathcal{Y}$ is a narrow semi-orthogonal array, 
conditions $\un{st}$ hold for all $s$, $t\in[4]$ with different parity. 
Arguing as in the proof of Lemma~\ref{lem:GA-EAorth} gives that two lines of 
different types meet in exactly one point, that is, (R-1) holds.
\end{proof}

Note that the line families $({\mathcal{P}^1},{\mathcal{P}^2}, {\mathcal{P}^3}, {\mathcal{P}^4})$ of a rete are grouped into two types, which we dub weft and warp,  
  $\mathcal{R}_{\mathrm{weft}}=\{ {\mathcal{P}^1},{\mathcal{P}^3}\}$ and $\mathcal{R}_{\mathrm{warp}}=\{{\mathcal{P}^2}, {\mathcal{P}^4}\}$ 
in a reticulation. 

\begin{example}
\label{ex:runningexample-retic}
To verify that the rete in Example~\ref{ex:runningexample1} is a reticulation, we compute intersections of lines:
\[
\begin{array}{c|ccc}
\cap & P^4_1 & P^4_2 & P^4_3\\
\hline
P^1_1 & 1321 & 1112 & 1213\\
P^1_2 & 2311 & 2222 & 2123 
\end{array}
\quad
\begin{array}{c|ccc}
\cap & P^2_1 & P^2_2 & P^2_3\\
\hline
P^3_1 & 1112 & 1213 & 2311\\
P^3_2 & 2123 & 2222 & 1321 
\end{array}
\quad
\begin{array}{c|ccc}
\cap & P^4_1 & P^4_2 & P^4_3\\
\hline
P^3_1 & 2311 & 1112 & 1213\\
P^3_2 & 1321 & 2222 & 2123 
\end{array}
\]
\end{example}

Note that in a reticulation 
the size of both families of a given type equals the size of any line of the other type 
since the lines of any family are disjoint but meet the given line in just one point. 
Furthermore, the total number of points is the product of these two sizes.

\begin{theorem}
\label{thm:retic2nsoa}
Let $(\mathcal{R}_\mathrm{points}, \mathcal{R}_{\mathrm{weft}}=\{{\mathcal{P}^1},{\mathcal{P}^3}\}, \mathcal{R}_{\mathrm{warp}}=\{{\mathcal{P}^2}, {\mathcal{P}^4} \})$ be a narrow  reticulation.
Then the following construction yields a narrow semi-orthogonal array $\mathcal{Y}$.
\begin{enumerate}
\item Allow for the possibility of swapping $\mathcal{R}_{\mathrm{weft}}$ and $\mathcal{R}_{\mathrm{warp}}$.
\item Enumerate the elements of the two families of warp lines: \\ \qquad
         ${\mathcal{P}^1}=\{P^1_i\}_{i\in[m]}$ and  ${\mathcal{P}^3}=\{P^3_k\}_{k\in[m]}$.

\item Enumerate the elements of the two families of weft lines: \\ \qquad
         ${\mathcal{P}^2}=\{P^2_j\}_{j\in[n]}$ and ${\mathcal{P}^4}=\{P^4_\ell\}_{\ell\in[n]}$.
\item Let $p_{i,j}\in\mathcal{R}_\mathrm{points}$ be the unique point of intersection of $P^1_i$ and $P^2_j$ $(i\in[m], j\in[n])$.
\item Let $\mathcal{Y}=\{ (i, j, k, \ell): (i,j)\in[m]\times[n], P^3_k\cap  P^2_\ell=\{p_{i,j}\} \}$.
\end{enumerate}

 \end{theorem}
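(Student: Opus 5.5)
We read the last step of the construction as $P^3_k\cap P^4_\ell=\{p_{i,j}\}$; the printed $P^2_\ell$ is evidently a typo, since $\ell$ indexes $\mathcal{P}^4$. We also read the two families indexed by $[m]$ as $\mathcal{P}^1,\mathcal{P}^3$ (one type) and those indexed by $[n]$ as $\mathcal{P}^2,\mathcal{P}^4$ (the other type). The swap allowed in step~(i) only decides which type receives the labels $1,3$. The plan has three parts: check that the indexing is consistent, check that $\mathcal{Y}$ is well defined with exactly one tuple for each $(i,j)$, and verify the four properties $\un{st}$ of Definition~\ref{def:NSOA} directly from the axioms (R-1) and (R-2).

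\emph{Step 1: counting.} Fix a line $L$ of one type. By (R-2), the lines of any family $\mathcal{F}$ of the other type partition the points. By (R-1), each line of $\mathcal{F}$ meets $L$ in exactly one point. Hence $|\mathcal{F}|=|L|$. Applying this with $L$ ranging over lines of the second type shows $|\mathcal{P}^1|=|\mathcal{P}^3|$; call this common value $m$. Symmetrically, $|\mathcal{P}^2|=|\mathcal{P}^4|=n$, and there are $mn$ points in total (this is the remark preceding the theorem). So the enumerations in steps (ii) and (iii) make sense.

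\emph{Step 2: well-definedness.} By (R-1), $p_{i,j}$ is well defined. The map $(i,j)\mapsto p_{i,j}$ is a bijection onto the points: by (R-2) each point lies on a unique $P^1_i$ and a unique $P^2_j$. Again by (R-2), $p_{i,j}$ lies on exactly one $P^3_k$ and exactly one $P^4_\ell$. Two distinct lines of the same family are disjoint, so $P^3_k\cap P^4_\ell$ is a single point by (R-1). Therefore for each $(i,j)$ there is exactly one pair $(k,\ell)$ with $P^3_k\cap P^4_\ell=\{p_{i,j}\}$. It follows that $\mathcal{Y}\in\mathsf{IA}(m,n)$ and $|\mathcal{Y}|=mn$, which is property $\mathfrak{T}$.

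\emph{Step 3: the properties $\un{st}$.} The key observation is that each tuple $(i,j,k,\ell)\in\mathcal{Y}$ records exactly the four lines through the single point $p_{i,j}$.
\begin{itemize}
\item For $\un{12}$: the pair $(i,j)$ determines the tuple, by Step~2.
\item For $\un{23}$: if $y,z\in\mathcal{Y}$ share $j$ and $k$, then $p_{y_1,j}$ and $p_{z_1,j}$ both lie in $P^2_j\cap P^3_k$. By (R-1) this intersection is a single point, so the two points coincide. Since $(i,j)\mapsto p_{i,j}$ is injective, $y_1=z_1$ and hence $y=z$.
\item For $\un{34}$ and $\un{41}$: the same argument applies, using $P^3_k\cap P^4_\ell$ and $P^4_\ell\cap P^1_i$ respectively.
\end{itemize}
Combined with $\mathfrak{T}$, this gives $\ut{st}$ for $(s,t)\in\{(1,2),(2,3),(3,4),(4,1)\}$, so $\mathcal{Y}$ is a narrow semi-orthogonal array. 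Alternatively, once $\mathfrak{T}$ is known we may invoke Theorem~\ref{thm:U(st)-Y<st>}, noting that the rete of $\mathcal{Y}$ is just the given line families transported along the bijection $p_{i,j}\mapsto(i,j,k,\ell)$, and orthogonality of two families of different types is exactly (R-1).

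The only real obstacle is bookkeeping rather than mathematics. One must fix the intended reading of steps (ii), (iii) and (v), and establish the counting fact of Step~1: that both families of one type have the same size as every line of the other type. This is needed so that the index sets $[m]$ and $[n]$ are consistent and so that $\mathfrak{T}$ holds with the correct product $mn$.
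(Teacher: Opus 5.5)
Your proof is correct and is essentially the paper's argument. The paper gets the four properties from the existence half of (R-1), noting that each pair projection is onto $[m]\times[n]$, and then counting the $mn$ points to get uniqueness; you use the uniqueness half of (R-1) to get injectivity directly. Your Steps~1--2, and your reading of $P^2_\ell$ as $P^4_\ell$, make explicit the well-definedness and counting that the paper leaves to the remark before the theorem.
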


\begin{proof}
Since every weft line meets every warp line in exactly one point, as 
$(i,j,k,\ell)$ runs over the elements of $\mathcal{Y}$, the pairs $(i,j)$, $(i,\ell)$, $(k, \ell)$, and $(j, k)$ each take on every value in $[m]\times[n]$.  Since there are exactly $mn$-many points, the value of any of these pairs uniquely determines the whole tuple, that is, conditions $\ut{12}$, $\ut{23}$, $\ut{34}$, and $\ut{41}$ hold.
Hence, $\mathcal{Y}$ is a narrow semi-orthogonal array.
\end{proof}

The preceding two results give a correspondence between the rete of any narrow semi-orthogonal array and  a narrow reticulation.
The effect of choosing different orderings is addressed in Sections~\ref{sec:parastrophism} and~\ref{sec:isotopism}.

\begin{definition}
A \emph{$(k,n)$-net} (also referred to as a \emph{$k$-net of order $n$}  or \emph{Bruck net of order $k$ and degree $n$}) is an incidence structure consisting of $n^2$-many points and $nk$-many lines such that 
the lines comprise $k$-many parallel classes, each containing $n$-many lines (so each parallel class partitions the set of points), 
every line contains $n$-many points and every point is incident with $k$-many lines, one from each parallel class.
\end{definition}

Splitting the parallel classes of  an  $m$-net of size four into two families of size two yields a reticulation. 
The bijection between grid arrays and rete parallels the following relationship between orthogonal arrays and $(k,n)$-nets.

\begin{proposition}
\label{prop:OA-net}
\cite[Section 11.1]{Denes-Keedwell-LSnewdev}~%
\cite[Subsection 1.4.1]{Evans:orthLSbasedongroups}~%
\cite[Section 15.3]{Laywin+Mullen:discretemath} 
\cite[Section 8.2]{Denes+Keedwell-LSa_SecondEdition_2015}.
Let  $\mathcal{Y}=\{(i,j, L_1(i,j), \ldots, L_k(i,j)\,|\, 1\leq i,j\leq n\}$ be the orthogonal array from Proposition~\ref{prop:MOLS-OA}.  
For $1\leq h\leq k+2$, define  ${N}^h=\{N_i^h\}_{i\in [n]}$, 
      where $N_i^h=\{y\in \mathcal{Y} : y_h=i\}$.  
Then $\{{N}^h\}_{h\in[k+2]}$ is a $(k+2, n)$-net.
In particular, any $(k+2,n)$-net is equivalent to a set of $k$-many mutually orthogonal Latin squares of order $n$. 
\end{proposition}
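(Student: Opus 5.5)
The proposition combines Proposition~\ref{prop:MOLS-OA} with the observation that the families $N^h$ are exactly the cells of the rete of $\mathcal{Y}$, as in Lemma~\ref{lem:bijection-FA-R}. I would check the net axioms directly from the orthogonal-array property and then invert the construction for the converse.

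\textbf{Forward direction.} By Proposition~\ref{prop:MOLS-OA}, $\mathcal{Y}$ is an orthogonal array of type $(n^2,k+2,n,2)$ and index $1$. So $|\mathcal{Y}|=n^2$, and for any distinct $h,h'\in[k+2]$ every pair $(a,b)\in[n]^2$ occurs exactly once as $(y_h,y_{h'})$. I take the points to be the elements of $\mathcal{Y}$ and the lines to be the sets $N^h_i$.

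First, each $N^h$ partitions $\mathcal{Y}$, since every $y$ lies in exactly one cell, namely $N^h_{y_h}$. Hence every point lies on exactly one line of each class, which gives incidence with exactly $k+2$ lines.

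Next, each $N^h_i$ is nonempty and has exactly $n$ points. To see this, choose any $h'\neq h$. The index-one property says that for each $b\in[n]$ there is a unique $y$ with $y_h=i$ and $y_{h'}=b$, so $|N^h_i|=n$. This is exactly the argument of Theorem~\ref{thm:U(st)-Y<st>}: $N^h\perp N^{h'}$ for all $h\neq h'$.

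It follows that each class contains $n$ distinct lines, and the total is $n(k+2)$ lines. Lines from different classes are distinct, because they meet in exactly one point while each has $n\ge 2$ points. The mild subtlety is whether the definition of a net implicitly requires distinct lines. I would note that orthogonality of the classes settles this.

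\textbf{Converse.} Start from a $(k+2,n)$-net and enumerate its parallel classes as $N^1,\dots,N^{k+2}$, and the lines within each class by $[n]$. Two lines from different classes meet in at most one point. The usual counting argument shows they meet in exactly one: a line $L$ of class $h$ has $n$ points, and each of these lies on exactly one line of class $h'$. Two points of $L$ cannot share a line of class $h'$, since two lines through two common points would coincide. So $L$ meets all $n$ lines of class $h'$, each exactly once.

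\textbf{The main obstacle.} The net definition as stated does not explicitly say that two lines meet in at most one point. I would either read this as part of ``incidence structure'' (the standard convention) or derive it from $n^2$ points, $n$ points per line and the parallel-class structure. I expect this to be the step needing care. My plan is to invoke the cited standard references for this convention rather than reprove it.

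Granting it, classes $1$ and $2$ are orthogonal, so each point is $p_{i,j}=N^1_i\cap N^2_j$. I then define $L_r(i,j)$ to be the index of the line of class $r+2$ through $p_{i,j}$.

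Each $L_r$ is a Latin square. For fixed $i$, the $n$ points of $N^1_i$ lie on distinct lines of class $r+2$, so row $i$ is a permutation; columns are handled the same way using $N^2_j$. For $r\neq r'$, orthogonality of classes $r+2$ and $r'+2$ makes $(L_r,L_{r'})$ hit every pair exactly once, so $L_r$ and $L_{r'}$ are orthogonal.

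Finally, the two constructions are mutually inverse up to the chosen enumerations. This mirrors the bijection between $\mathsf{GA}$ and $\mathsf{GR}$ in Theorem~\ref{thm:bijection-pairs}, and gives the claimed equivalence.
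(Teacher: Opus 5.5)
Your proof is correct. The paper gives no argument for this proposition, only citations to the standard literature, and your proof is the standard one found there. Index one gives $N^h\perp N^{h'}$ for all $h\neq h'$, from which every net axiom follows. Conversely, two orthogonal classes coordinatize the points, and the remaining classes become the mutually orthogonal Latin squares.

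You are right that the paper's definition of a $(k,n)$-net, taken literally, omits the axiom that lines of different classes meet in exactly one point, and that the ``in particular'' needs it. Take the points $p_{ab}$ $(a,b\in[3])$ with three classes: the rows, the columns, and the partition $\{p_{11},p_{12},p_{21}\}$, $\{p_{13},p_{22},p_{31}\}$, $\{p_{23},p_{32},p_{33}\}$. This meets every stated condition. Yet the first row meets $\{p_{11},p_{12},p_{21}\}$ in two points, so the structure comes from no Latin square.

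Your two remedies therefore behave differently. Adopting the standard convention works in general, and your forward direction already verifies it via $N^h\perp N^{h'}$. Deriving it by counting from $n^2$ points, $n$ points per line and the parallel-class structure works only under one reading. You need ``parallel classes'' to mean the equivalence classes of the paper's relation ``equal or disjoint,'' so that lines in different classes must intersect. Then a line of $n$ points that meets each of $n$ pairwise disjoint lines meets each in exactly one point. If a parallel class only means a set of mutually disjoint lines, the example above shows that no such derivation is possible.
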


\begin{example}
The $(4,3)$-net associated with the pair of orthogonal Latin squares of Examples~\ref{ex:LatinSquare} and~\ref{ex:latin_squares_correspondence} has the following parallel families of lines.
 \[
    \begin{array}{cccc}
 			\begin{tikzpicture}[scale=.5,baseline=4.5ex]
	 			\foreach \y in {1,2,3}
	         		\draw[thick, -] (1,\y)--(3,\y);
	  			\gridpoints{3}{3};  
			\end{tikzpicture}&
			\begin{tikzpicture}[scale=.5,baseline=4.5ex]
	 			\draw[thick, -] (1,3)--(2, 1)--(3,2);
                \draw[thick, -] (1,2)--(2, 3)--(3,1);
                \draw[thick, -] (1,1)--(2, 2)--(3,3);
	  			\gridpoints{3}{3};  
			\end{tikzpicture}    &
			\begin{tikzpicture}[scale=.5,baseline=4.5ex]
      			\draw[thick, -] (1,1)--(1, 3);
      			\draw[thick, -] (2,1)--(2, 3);
      			\draw[thick, -] (3,1)--(3, 3);
 	 			\gridpoints{3}{3};  
			\end{tikzpicture}  &
			\begin{tikzpicture}[scale=.5,baseline=4.5ex]
      			\draw[thick, -] (3,1)--(2, 2) -- (1,3);
      			\draw[thick, -] (2,3)--(3, 2) -- (1,1);
      			\draw[thick, -] (3,3)--(1, 2) -- (2,1);
 	 			\gridpoints{3}{3};  
			\end{tikzpicture}        
	\end{array}
\]      
\end{example}

Given a bireversible automaton $\A$, its associated rete is a reticulation.  
In this case, one may also represent $\A$ with a combinatorial structure analogous to a transversal design, 
which is the geometric dual of a $(k,n)$-net.  
We postpone further consideration to~\cite{curtin2026generalizationsnetslatinsquares}.

\subsection{From matched matrices to cooperative pairs}

We study matched matrices with the following properties.

\begin{definition}\label{def:cooppair}
A pair $(C,R)\in \mathsf{MM}(m, n)$ is said to be a \emph{cooperative pair} of order $m\times n$ whenever the following hold.
\begin{description}
\item[(CP-1)] $C\in\mathsf{CL}(m,n)$.
\item[(CP-2)]  $R\in\mathsf{RL}(m,n)$.
\item[(CP-3)] $C\rtact R\in\mathsf{CL}(m,n)$.
\item[(CP-4)] $C\ltact R\in\mathsf{RL}(m,n)$.
\end{description}
Let $\mathsf{CP}(m,n)$ denote the set of all cooperative pairs of order $m\times n$.
\end{definition}

\begin{example}
\label{ex:runningexample-CP}
In Example~\ref{ex:runningexample1}, 
   the automaton $\A$ is bireversible, 
   the pair $(C_\A,R_\A)$ is cooperative, 
   $\mathcal{Y}_\A$  is a narrow semi-orthogonal array,  and 
   the rete is a reticulation.
Indeed, $C_\A$ is column-Latin (CP-1), and  $R_\A$ is row-Latin (CP-2). 
The respective results of the operations in (CP-3) and (CP-4) above are
\[
C_\A\rtact R_\A=\begin{bmatrix} 2&1&1\\1&2&2\end{bmatrix}
\qquad \hbox{and}\qquad
C_\A\ltact R_\A=\begin{bmatrix} 2&3&1\\3&2&1\end{bmatrix},
\]
which are respectively column-Latin (CP-3) and row-Latin (CP-4).
Note that $C_\A$ and $R_\A$ are orthogonal.
\end{example}
    
\begin{theorem}
\label{thm:automatafamily-birevconditions}
Let $\A$, $(C,R)$,  $\mathcal{Y}$, and $\mathcal{P}=({\mathcal{P}^1},{\mathcal{P}^2}, {\mathcal{P}^3}, {\mathcal{P}^4})$ be associated Mealy $(m,n)$-automaton, matched pair of matrices, grid array, and grid rete.
The following are equivalent.
\begin{enumerate}
\item $\A$ is bireversible. 
\item $(C, R)$ is a cooperative pair.
\item $C$ is column-Latin, $R$ is row-Latin, and $C\rtact R$ is column-Latin.
\item $C$ is column-Latin, $R$ is row-Latin, and $C\ltact R$ is row-Latin.
\item $C$ is column-Latin, $R$ is row-Latin, and $C$ and $R$ are orthogonal.
\item $\mathcal{Y}$ is a narrow semi-orthogonal array.
\item $\mathcal{P}$ is a narrow reticulation. 
\end{enumerate}
\end{theorem}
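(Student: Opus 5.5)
The plan is to route every condition through the interleaved array, where each property is one of the conditions $\ut{st}$ of Theorem~\ref{thm:U(st)-Y<st>}. The two matrix conditions (CP-3) and (CP-4) are the only ones not yet phrased this way, so they need separate treatment.

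First, (i)$\Leftrightarrow$(v)$\Leftrightarrow$(vi) follow directly from earlier results. Since $\A$ is Mealy, $\mathcal{Y}$ already has $\ut{12}$ by Theorem~\ref{thm:classes-U(st)}. Bireversibility adds invertibility, reversibility and coreversibility, which by Theorem~\ref{thm:classes-U(st)} are exactly $\ut{41}$, $\ut{23}$ and $\ut{34}$; together these say that $\mathcal{Y}$ is a narrow semi-orthogonal array (Definition~\ref{def:NSOA}). By Theorem~\ref{thm:array-matrix-automaton-cond}, the same three properties are equivalent to $R$ being row-Latin, $C$ being column-Latin, and $C$, $R$ being orthogonal, which gives (v). For (vi)$\Leftrightarrow$(vii), Lemma~\ref{lem:CPgeom} shows that the rete of a narrow semi-orthogonal array is a reticulation. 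Conversely, if $\mathcal{P}$ is a narrow reticulation, axiom (R-1) gives $|P^s_h\cap P^t_{h'}|=1$ for $s,t$ of different parity, that is $\mathcal{P}^s\perp\mathcal{P}^t$, and Theorem~\ref{thm:U(st)-Y<st>} turns this into $\ut{st}$. Theorem~\ref{thm:retic2nsoa} could be cited here as well, but the direct argument avoids any reordering issues.

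The main step is to translate (CP-3) and (CP-4) into array language, assuming (CP-1) and (CP-2). By Lemma~\ref{lem:CLRLactions}(ii), $(C\rtact R)(i,R(i,j))=C(i,j)$, so $C\rtact R$ is the matrix whose $(i,\ell)$ entry is the $k$ with $(i,j,k,\ell)\in\mathcal{Y}$. This is well defined because $R$ is row-Latin, so $(i,\ell)$ determines $j$. Column $\ell$ of $C\rtact R$ is therefore the map $i\mapsto k$ along the tuples with $y_4=\ell$. It is a permutation exactly when each pair $(k,\ell)$ occurs once, which is $\ut{34}$, equivalently orthogonality of $C$ and $R$. Symmetrically, Lemma~\ref{lem:CLRLactions}(i) gives $(C\ltact R)(C(i,j),j)=R(i,j)$, so $C\ltact R$ sends $(k,j)\mapsto\ell$. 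Row $k$ of it is a permutation if and only if each $(k,\ell)$ occurs once, which is again $\ut{34}$.

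Hence, under (CP-1) and (CP-2), each of (CP-3) and (CP-4) is equivalent to orthogonality. This gives (iii)$\Leftrightarrow$(v)$\Leftrightarrow$(iv), and also (ii)$\Leftrightarrow$(v), since (ii) is the conjunction of (iii) and (iv). Alternatively one could use Lemma~\ref{lem:invertible-action-matrices}: $C\rtact R=C_{\iz\A}$, so (CP-3) says that $\iz\A$ is reversible, i.e.\ that $\A$ is coreversible. The dual statement for $C\ltact R$ would follow by transposition using Corollary~\ref{cor:actions-transpose}.

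The main obstacle is the index bookkeeping in this translation: making sure that the column (respectively row) of $C\rtact R$ (respectively $C\ltact R$) indexed by $\ell$ (respectively $k$) really corresponds to the fibre $P^4_\ell$ (respectively $P^3_k$). It also requires a counting argument, that an $m$-entry column containing each value at most once is a permutation, mirroring Lemma~\ref{lem:UNsizebound}.
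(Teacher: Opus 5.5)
Your proof is correct. Its core step, (iii)$\Leftrightarrow$(v)$\Leftrightarrow$(iv), is essentially the paper's argument. The paper observes that $C$ and $R$ fail to be orthogonal exactly when two positions $(i,j)\neq(i',j')$ carry the same pair of values; since $i\neq i'$ by row-Latinness, this puts a repeated value in column $R(i,j)$ of $C\rtact R$. Your reading of column $\ell$ of $C\rtact R$ as the fibre $P^4_\ell$ of $\mathcal{Y}$ is that same argument phrased in array language.

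The genuine difference is how you bring in (i). The paper links (i) to (iii) through the inverse automaton. It uses that $\A$ is bireversible iff it is invertible, reversible and $\iz\A$ is reversible, together with $C_{\iz\A}=C_\A\rtact R_\A$ (Lemma~\ref{lem:invertible-action-matrices}), and it spells out only the direction (i)$\Rightarrow$(iii). You instead take the definition of bireversibility literally (Mealy, invertible, reversible, coreversible) and read off (i)$\Leftrightarrow$(vi) from Theorem~\ref{thm:classes-U(st)}. This gives both directions at once and makes $\mathcal{Y}$ the hub for every equivalence. The paper's route, which you mention as an alternative, has the advantage of giving (CP-3) a direct automaton-theoretic meaning: it is exactly reversibility of $\iz\A$.

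Likewise, for (vii) you prove the converse directly from (R-1) and Theorem~\ref{thm:U(st)-Y<st>}, instead of going through the relabelling construction of Theorem~\ref{thm:retic2nsoa}. This is cleaner, since the grid rete already comes indexed. Like the paper, you implicitly fix the two types as $\{\mathcal{P}^1,\mathcal{P}^3\}$ and $\{\mathcal{P}^2,\mathcal{P}^4\}$.
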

    
\begin{proof} 
Clearly, (ii) implies each of (iii) and (iv), and together (iii) and (iv) imply (ii).  
Suppose that $C$ is column-Latin and that $R$ is row-Latin.
The action of $R$ on $C$ maps the entry of $C$ in position $(i,j)$ to position $(i,R(i,j))$ of $C\rtact R$.
That is, the $(i,R(i,j))$ entry of $C\rtact R$ is $C(i,j)$.
Since $C$ is column-Latin, $C$ has $n$ copies of each value in $[m]$, and
since $R$ is row-Latin, $R$ has $m$ copies of each value of $[n]$.
Now  $C$ and $R$ are not orthogonal if and only if there are distinct $(i,j)$, $(i', j')\in[m]\times [n]$ 
such that $C(i,j)=C(i',j')$ and $R(i,j)=R(i',j')$ (note $i\not=i'$ since $R$ is row-Latin).
This occurs if and only if there are distinct $(i,j)$, $(i', j')\in[m]\times [n]$ such that
$R(i,j)=R(i',j')$ of $C\rtact R$ contains the same value $C(i,j)=C(i',j')$ in distinct rows
if and only if $C\rtact R$ is not column-Latin.  
Thus, (iii) and (v) are equivalent.  
A similar argument gives that (iv) and (v) are equivalent.

By Theorems~\ref{thm:classes-U(st)} and~\ref{thm:array-matrix-automaton-cond}, 
properties $\ut{12}$, $\ut{23}$, $\ut{34}$, and $\ut{41}$ 
are, respectively, equivalent to  
     $C$ and $R$ being matrices, 
     $C$ being column-Latin, 
     $C$ and $R$ being orthogonal, and 
     $R$ being row-Latin.
Thus, (v) is equivalent to (vi) (Definition~\ref{def:NSOA}).
By Lemma~\ref{lem:GA-EAorth} and Theorem~\ref{thm:array-matrix-automaton-cond}
(the equivalences of parts (b) and (e)), part (v) holds if and only if part (vii) holds by Lemma~\ref{lem:CPgeom} and Theorem~\ref{thm:retic2nsoa}.  Thus, (ii) through (vii) are equivalent.

We now prove that (i) is equivalent to (iii).
Note that $\A$ is bireversible when it is invertible, reversible, and $\iz \A$ is reversible.  
According to the inversion formula~\eqref{eqn:inv} the matrix $C_{\iz\A}$ is obtained from $C_\A$ by applying permutations from the rows of $R_\A$ to the corresponding rows of $C_\A$. 
In other words,
   \[C_{\iz\A}=C_\A\rtact R_\A.\]
By Equation~\eqref{eqn:inv} the matrix $R_{\iz\A}$ is obtained from $R_\A$ by replacing row $i$ corresponding to the permutation $\sigma_i$ by the row corresponding to the permutation $\sigma_i^{-1}$ $(i\in[m])$. 
If $\A$ is bireversible, then $C_\A$ is column-Latin and $R_\A$ is row-Latin by Theorem~\ref{thm:array-matrix-automaton-cond} since $\A$ is both invertible and reversible. 
Since $\iz \A$ is reversible, $C_{\iz A}$ is column-Latin by (ii), which gives $C\rtact R$ is column-Latin.
Thus, $(C_\A,R_\A)$ is a cooperative pair.
\end{proof}

\begin{corollary}
The bijective correspondences of Lemma~\ref{lem:bijection-automata-matrices} 
restrict to bijective correspondences between the following families of Mealy automata.
\[
\begin{array}{|l|c|l|l|c|l|}
    \hline
    \hbox{\bf\begin{tabular}{c}Automaton\\family \end{tabular}} & \hbox{\bf \begin{tabular}{c}Matched\\matrices \end{tabular}} & \hbox{\bf \begin{tabular}{c}Grid array\\properties \end{tabular} } & \ \ \hbox{\bf \begin{tabular}{c}Rete\\properties \end{tabular}}\\
    \hline
    \hline
    \hbox{\emph{Mealy}}              &\mathsf{MM}(m,n)                   & \ut{12}; \mathsf{GA}  & {\mathcal{P}^1}\perp{\mathcal{P}^2}\\
    \hline
    \hbox{\emph{Invertible}}            &\mathsf{M}(m,n)\times \mathsf{RL}(m,n)  & \ut{12}, \ut{41} & {\mathcal{P}^1}\perp{\mathcal{P}^2}, {\mathcal{P}^4}\perp{\mathcal{P}^1}\\
    \hline
    \hbox{\emph{Reversible} }           &\mathsf{CL}(m,n)\times\mathsf{M}(m,n)   & \ut{12}, \ut{23}& {\mathcal{P}^1}\perp{\mathcal{P}^2}, {\mathcal{P}^2}\perp{\mathcal{P}^3}\\
    \hline
    \hbox{\emph{Coreversible}}           &\hbox{\em orthogonal}                    & \ut{12}, \ut{34}& {\mathcal{P}^1}\perp{\mathcal{P}^2}, {\mathcal{P}^3}\perp{\mathcal{P}^4} \\
    \hline
    \hbox{\emph{Invertible-}} &\mathsf{CL}(m,n)\times\mathsf{RL}(m,n)  & \ut{12}, \ut{23},  &{\mathcal{P}^1}\perp{\mathcal{P}^2}, {\mathcal{P}^4}\perp{\mathcal{P}^1},\\
    \hbox{\emph{reversible}}&&\ut{41}&{\mathcal{P}^2}\perp{\mathcal{P}^3}\\
    \hline
    \hbox{\emph{Bireversible}}          &\mathsf{CP}(m,n)                   & \ut{12}, \ut{23}       & {\mathcal{P}^1}\perp{\mathcal{P}^2}, {\mathcal{P}^2}\perp{\mathcal{P}^3}\\
                                 &                              & \ut{34}, \ut{41}; \mathsf{NSOA} & {\mathcal{P}^3}\perp{\mathcal{P}^4}, {\mathcal{P}^4}\perp{\mathcal{P}^1}\\
    \hline
\end{array}
\]
\end{corollary}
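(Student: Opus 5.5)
The plan is to obtain each row of the table by restricting the bijections of Theorem~\ref{thm:bijection-pairs} (Lemma~\ref{lem:bijection-automata-matrices}). I will check, row by row, that a Mealy automaton lies in the named family exactly when its associated matched matrices, grid array, and grid rete have the listed properties. Since the maps are bijections between $\mathsf{MA}(m,n)$, $\mathsf{MM}(m,n)$, $\mathsf{GA}(m,n)$ and $\mathsf{GR}(m,n)$, such an ``if and only if'' immediately shows that they restrict to bijections between the corresponding subsets.

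The Mealy row is Theorem~\ref{thm:bijection-pairs} itself. Every Mealy automaton is encoded by an element of $\mathsf{MM}(m,n)$, by definition of $\mathsf{GA}(m,n)$ it has property $\ut{12}$, and the condition $\mathcal{P}^1\perp\mathcal{P}^2$ is equivalent to $\ut{12}$ by Theorem~\ref{thm:U(st)-Y<st>}.

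The invertible, reversible, and coreversible rows then combine two facts for the relevant index pair ($(4,1)$, $(2,3)$, $(3,4)$ respectively). First, Theorem~\ref{thm:classes-U(st)} together with Theorem~\ref{thm:U(st)-Y<st>} translates the automaton property into property $\ut{st}$ of the grid array and into orthogonality $\mathcal{P}^s\perp\mathcal{P}^t$ of the rete. Second, Theorem~\ref{thm:array-matrix-automaton-cond} translates it into $R\in\mathsf{RL}(m,n)$, $C\in\mathsf{CL}(m,n)$, or orthogonality of $(C,R)$. For the IR row I will take the conjunction of the invertible and reversible rows, using part (iv) of Theorem~\ref{thm:array-matrix-automaton-cond}.

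The bireversible row is Theorem~\ref{thm:automatafamily-birevconditions}. The equivalence (i)$\Leftrightarrow$(ii) gives $\mathsf{CP}(m,n)$, (i)$\Leftrightarrow$(vi) gives narrow semi-orthogonal arrays, i.e.\ the four properties $\ut{st}$, and the four orthogonalities follow termwise from Theorem~\ref{thm:U(st)-Y<st>}. The only point needing care is bookkeeping: property $\un{st}$ is symmetric in $s,t$, so $\mathcal{P}^4\perp\mathcal{P}^1$ is the same as $\mathcal{P}^1\perp\mathcal{P}^4$. I will also read $\mathsf{M}(m,n)$ in the table as the appropriate full matrix set $\mathsf{M}_{m\times n}([m])$ or $\mathsf{M}_{m\times n}([n])$. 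I expect no real obstacle, since everything reduces to previously proved equivalences.
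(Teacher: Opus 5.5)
Your proposal is correct and follows essentially the same route as the paper, whose proof simply declares the corollary clear from Theorem~\ref{thm:bijection-pairs} and Theorem~\ref{thm:automatafamily-birevconditions}; your row-by-row appeal to Theorems~\ref{lem:GA-EAorth}, \ref{thm:classes-U(st)}, and~\ref{thm:array-matrix-automaton-cond} just makes explicit what that one-line argument leaves implicit. For the bireversible row you could even bypass (i)$\Leftrightarrow$(ii) of Theorem~\ref{thm:automatafamily-birevconditions}: by definition, bireversible means Mealy, invertible, reversible, and coreversible, so your first four rows give condition (v) directly, and the combinatorial equivalences (ii)$\Leftrightarrow$(v)$\Leftrightarrow$(vi) proved there finish the job.
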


\begin{proof}
    Clear from Lemma~\ref{lem:bijection-automata-matrices} and Theorem~\ref{thm:automatafamily-birevconditions}.
\end{proof}

We view cooperative pairs as analogous to Latin squares, as discussed in
    Definition~\ref{def:latin-square}, 
    Proposition~\ref{prop:OLS-coop}, and 
    Example~\ref{ex:LatinSquare}.
We consider the corresponding analogs of mutually orthogonal Latin squares in~\cite{curtin2026generalizationsnetslatinsquares}.

\section{Parastrophy}
\label{sec:parastrophism}

We adapt the notion of parastrophy for Latin squares~\cite[Section 1.4]{Denes+Keedwell-LSa_SecondEdition_2015}~\cite[Chapter 17]{vanLintWilson:coursecomb} to  interleaved arrays in Subsection~\ref{subsec:parastrophy-basic}.
In Subsection~\ref{subsec:parastrophism-automata}, we relate parastrophisms of cooperative pairs to duals and inverses of the corresponding finite automata.

\subsection{Elements and examples}
\label{subsec:parastrophy-basic}

Parastrophisms are most naturally defined in terms of interleaved arrays.  
Let $\tau\in\Sym(4)$ act on a 4-tuple $y$ by permuting the components of $y$:
$(y^\tau)_{s} = y_{(\tau(s))}$.
Note that $y^\tau=(y_{\tau^{-1}(1)}, y_{\tau^{-1}(2)},  y_{\tau^{-1}(3)},  y_{\tau^{-1}(4)})$.
Let $\tau$ act on sets of 4-tuples $\mathcal{Y}$ by $\mathcal{Y}^{\tau}=\{y^\tau : y\in \mathcal{Y}\}$.
We are only interested in the permutations of components that preserve the interleaved array property (and ultimately the narrow semi-orthogonal array property).
Such permutations form a group that is isomorphic to the dihedral group $D_4$, the symmetries of the square with vertices $\{1, 2, 3, 4\}$.

\begin{definition}
By a \emph{parastrophism} we mean one of the following permutations of $[4]$ acting on four-tuples by permuting components:
\[
\begin{array}{|c|c|c|c|c|c|c|c|}
    \hline
    \iota & \sharp & \dagger\sharp\dagger & \sharp\dagger\sharp\dagger & \dagger &  \dagger\sharp & \sharp\dagger & \sharp\dagger\sharp\\
    \hline
    ()   &(2,4)    & (1,3)                 & (1,3)(2,4)                   & (1,2)(3,4) &  (1,2,3,4)       &(4,3,2,1)         &(1,4)(2,3)\\
    \hline
\end{array}
\]
We call the image of $\mathcal{Y}\in\mathsf{IA}(m,n)$  under a parastrophism a \emph{parastrophe} of $\mathcal{Y}$.
\end{definition}

\begin{lemma}
\label{lem:parastrophes}
Let $\mathcal{Y}\in\mathsf{IA}(m,n)$. Then the following hold.
\begin{enumerate}
\item $\mathcal{Y}^\iota$, 
      $\mathcal{Y}^\sharp$, 
      $\mathcal{Y}^{\dagger\sharp\dagger}$, 
      $\mathcal{Y}^{\sharp\dagger\sharp\dagger} \in \mathsf{IA}(m,n)$.
\item $\mathcal{Y}^\dagger$, 
      $\mathcal{Y}^{\dagger\sharp}$, 
      $\mathcal{Y}^{\sharp\dagger}$,    
      $\mathcal{Y}^{\sharp\dagger\sharp}\in \mathsf{IA}(n,m)$.
\end{enumerate}
\end{lemma}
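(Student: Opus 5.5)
The plan is to verify each case directly from the definition of the action, $y^\tau=(y_{\tau^{-1}(1)},y_{\tau^{-1}(2)},y_{\tau^{-1}(3)},y_{\tau^{-1}(4)})$. The key observation is that $\mathcal{Y}\subseteq[m]\times[n]\times[m]\times[n]$. So $\mathcal{Y}^\tau$ lies in $\mathsf{IA}(m,n)$ exactly when each position $s$ of $y^\tau$ draws from a position $\tau^{-1}(s)$ of $y$ with the same range. Since odd positions range over $[m]$ and even positions over $[n]$, this amounts to $\tau$ preserving parity. Likewise, $\mathcal{Y}^\tau$ lies in $\mathsf{IA}(n,m)$ exactly when $\tau$ reverses parity, i.e.\ swaps the sets $\{1,3\}$ and $\{2,4\}$. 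No hypotheses on $\mathcal{Y}$ beyond being an interleaved array are needed. Indeed, the case $m=n$ shows these parity conditions are only sufficient in general, and sufficiency is all we need.

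The key step is to classify the eight listed permutations by their effect on the partition $\{\{1,3\},\{2,4\}\}$. The permutations $()$, $(2,4)$, $(1,3)$ and $(1,3)(2,4)$ each fix $\{1,3\}$ and $\{2,4\}$ setwise, so they preserve parity. The permutations $(1,2)(3,4)$, $(1,2,3,4)$, $(4,3,2,1)$ and $(1,4)(2,3)$ each send every odd index to an even one and vice versa. The inverse of each permutation has the same property, which matters because the action uses $\tau^{-1}$. This can be checked on the generators $\sharp=(2,4)$ and $\dagger=(1,2)(3,4)$: $\sharp$ preserves parity and $\dagger$ reverses it. Hence parity is preserved by words with an even number of $\dagger$'s and reversed by words with an odd number. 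This matches the split in (i) and (ii), namely $\iota,\sharp,\dagger\sharp\dagger,\sharp\dagger\sharp\dagger$ versus $\dagger,\dagger\sharp,\sharp\dagger,\sharp\dagger\sharp$.

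For (i), take $y=(i,j,k,\ell)\in\mathcal{Y}$. Entries $y^\tau_1$ and $y^\tau_3$ come from odd positions of $y$ and so lie in $[m]$, while $y^\tau_2$ and $y^\tau_4$ lie in $[n]$. Thus $\mathcal{Y}^\tau\subseteq[m]\times[n]\times[m]\times[n]$. For (ii), the same reasoning puts odd positions of $y^\tau$ in $[n]$ and even positions in $[m]$, giving $\mathcal{Y}^\tau\in\mathsf{IA}(n,m)$.

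There is no real obstacle here. The only care needed is bookkeeping: the component action uses $\tau^{-1}$, and the cycle notation for the composite words should be checked against the table. Since membership is decided by parity preservation or reversal, and inverses share that property, neither issue affects the conclusion.
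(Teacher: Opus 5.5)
Your proof is correct and follows essentially the same argument as the paper: the listed parastrophisms are permutations of $[4]$ preserving the pairing $\{\{1,3\},\{2,4\}\}$, with those fixing each block landing in $\mathsf{IA}(m,n)$ and those swapping the blocks landing in $\mathsf{IA}(n,m)$. Your explicit checks simply spell out what the paper's one-line proof leaves implicit: counting the $\dagger$'s in each word, and noting that $\tau^{-1}$ has the same parity behavior as $\tau$.
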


\begin{proof}
The parastrophes correspond to the permutations of components that preserve the property that 
    the first and third components take entries from the same set and 
    the second and fourth components take entries from the same set.  
Thus, (i) and (ii) hold.
\end{proof}

\begin{proposition}
\label{prop:para-un}
Let $\mathcal{Y}$ be an interleaved array.  If $\mathcal{Y}$ has property $\un{st}$, 
then for all parastrophisms $\flat$, $\mathcal{Y}^\flat$ has property $\un{\flat(s), \flat(t)}$.    
In particular, if $\mathcal{Y}$ is a narrow semi-orthogonal array, then so are all its parastrophes.
\end{proposition}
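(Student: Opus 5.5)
The first claim is a direct unwinding of the definitions. By the convention $(y^\flat)_{\flat(s)} = y_s$, the component of $y$ in position $s$ moves to position $\flat(s)$ in $y^\flat$.

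Take $u, v \in \mathcal{Y}^\flat$ with $u_{\flat(s)} = v_{\flat(s)}$ and $u_{\flat(t)} = v_{\flat(t)}$. Write $u = y^\flat$ and $v = z^\flat$ with $y, z \in \mathcal{Y}$. Then $y_s = z_s$ and $y_t = z_t$. Property $\un{st}$ of $\mathcal{Y}$ gives $y = z$, hence $u = v$. So $\mathcal{Y}^\flat$ has property $\un{\flat(s),\flat(t)}$.

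Along the way we record that $y \mapsto y^\flat$ is injective, since permuting components is a bijection of $4$-tuples. Therefore $|\mathcal{Y}^\flat| = |\mathcal{Y}|$, and property $\mathfrak{T}$ is preserved with $mn = nm$. This holds whether $\mathcal{Y}^\flat$ lies in $\mathsf{IA}(m,n)$ or in $\mathsf{IA}(n,m)$, by Lemma~\ref{lem:parastrophes}. Hence $\ut{st}$ for $\mathcal{Y}$ yields $\ut{\flat(s),\flat(t)}$ for $\mathcal{Y}^\flat$.

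For the narrow semi-orthogonal case, I would use that $\mathcal{Y}$ has $\ut{st}$ for every unordered pair $\{s,t\}$ in $\{\{1,2\},\{2,3\},\{3,4\},\{4,1\}\}$. These are the edges of the $4$-cycle $1$--$2$--$3$--$4$--$1$, equivalently the pairs of opposite parity. The remark after property $(\mathfrak{U})$ lets us treat pairs as unordered.

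The key observation is that each of the eight parastrophisms is a symmetry of the square with vertices $1,2,3,4$ in cyclic order. So it maps the edge set of this $4$-cycle onto itself; equivalently, it either preserves parity of every position or swaps the two parity classes. One can check this from the table: $(2,4)$, $(1,3)$ and their product fix the parity classes, while $(1,2)(3,4)$, $(1,2,3,4)$, $(4,3,2,1)$ and $(1,4)(2,3)$ swap them. In all cases, opposite-parity pairs go to opposite-parity pairs.

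Applying the first part to each of the four edges then gives all four required properties $\ut{12}$, $\ut{23}$, $\ut{34}$, $\ut{41}$ for $\mathcal{Y}^\flat$. By Lemma~\ref{lem:parastrophes}, $\mathcal{Y}^\flat$ is an interleaved array of the appropriate order, so it is a narrow semi-orthogonal array. The only point needing care is this verification that the eight permutations preserve the $4$-cycle, which is a finite check; there is no real obstacle.
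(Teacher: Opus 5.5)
Your argument is correct and is exactly the definition-unwinding the paper leaves as ``Clear''. You pull equal entries in positions $\flat(s)$, $\flat(t)$ back to equal entries in positions $s$, $t$, note that $y\mapsto y^\flat$ is injective so $\mathfrak{T}$ is preserved, and check that every parastrophism permutes the edges $\{1,2\},\{2,3\},\{3,4\},\{4,1\}$ of the square.

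Your convention $(y^\flat)_{\flat(s)}=y_s$ matches the paper's ``Note'' form $y^\tau=(y_{\tau^{-1}(1)},\ldots)$, not its conflicting first formula $(y^\tau)_s=y_{\tau(s)}$. It is the convention the statement requires: under the other one the first claim only gives $\un{\flat^{-1}(s),\flat^{-1}(t)}$, which differs from $\un{\flat(s),\flat(t)}$ for the $4$-cycles $(1,2,3,4)$ and $(4,3,2,1)$.
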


\begin{proof}
Clear.    
\end{proof}

In the case of a grid array, we can extend the notion of parastrophy to the associated matched matrices and rete.

\begin{definition}
Let $\flat$ be a parastrophism. Let $\mathcal{Y}$ be a grid array  with associated matched matrices $(C,R)$ and  rete $\mathcal{P}=({\mathcal{P}^1},{\mathcal{P}^2}, {\mathcal{P}^3}, {\mathcal{P}^4})$.
Assume that $\mathcal{Y}$ has property $\un{\flat^{-1}(1),\flat^{-1}(2)}$, so $\mathcal{Y}^\flat$ has property $\un{12}$. 
In this case, $\mathcal{Y}^\flat$ has associated 
      matched matrices $(C,R)^\flat = (C^\flat, R^\flat)$ and 
      rete  $\mathcal{P}^\flat=({\mathcal{P}^1}^\flat,{\mathcal{P}^2}^\flat,
                                {\mathcal{P}^3}^\flat, {\mathcal{P}^4}^\flat)$.   
We refer to $(C,R)^\flat$ and $\mathcal{P}^\flat$ as \emph{parastrophes} of $(C,R)$ and $\mathcal{P}$, respectively. 
\end{definition}

\begin{example}
\label{ex:repeting}
The eight parastrophes of an interleaved array need not be distinct.  
The narrow semi-orthogonal array $\mathcal{Y}$ in Example~\ref{ex:multiset-identity} only has two distinct parastrophes.  They are $\mathcal{Y}=\mathcal{Y}^\sharp=\mathcal{Y}^{\dagger\sharp\dagger}=\mathcal{Y}^{\sharp\dagger\sharp\dagger}$ 
    and
 $\mathcal{Y}^\dagger=\mathcal{Y}^{\dagger\sharp}=\mathcal{Y}^{\dagger\sharp}=\mathcal{Y}^{\sharp\dagger\sharp}$.
Here, the associated cooperative pair is $(C=\Idc, R=\Idr)$.
\end{example}

As is the case for Latin squares~\cite{Pflugfleder:QGLI}, 
parastrophisms of cooperative pairs can be understood in terms of natural transformations of the geometry of the associated rete. 
Swapping $\mathcal{R}_{\mathrm{weft}}$ and $\mathcal{R}_{\mathrm{warp}}$ corresponds to the parastrophism $\dagger$,
and swapping $\mathcal{P}^2$ and $\mathcal{P}^4$ corresponds to the parastrophism $\sharp$.  
We present an example here and give the full result in Theorem~\ref{thm:birev-all-parastrophes} below.

\begin{example}
\label{ex:runningexample-para}
The parastrophism $\sharp$ transforms the narrow semi-orthogonal array $\mathcal{Y}$  in Example~\ref{ex:runningexample1}  
by swapping the second and fourth components. 
The associated automaton is the inverse $\iz\A$ of $\A$ (the roles of input and output are swapped).
\[\!\!
\begin{array}{c|c|c}
    \begin{array}{l}
      \mathcal{Y}^\sharp=\\
      \left\{\!\!\! \begin{array}{ccc}(1,1,2,3) & (1,2,1,1) &  (1,3,1,2)\\ 
                                (2,1,1,3) & (2,2,2,2) & (2,3,2,1) \end{array} \!\!\!\right\}
    \end{array}
 \!\! &  \!\!
    \begin{array}{ll}
    C^\sharp = & R\sharp=\\
    \begin{bmatrix} 2&1&1\\1&2&2\end{bmatrix} & \begin{bmatrix} 3&1&2\\3&2&1\end{bmatrix}
    \end{array}
 \!\! & \!\!
    \begin{tikzpicture}[node distance=1.5cm, baseline={([yshift=-.8ex]current bounding box.center)}]
    \node[state] (q1) {$1$};
    \node[state, below of=q1] (q2) {$2$};
    
    \draw (q1) edge[right, bend left] node{$1\,|\,3$} (q2);
    \draw (q1) edge[loop left] node{$2\,|\, 1$} (q1);
    \draw (q1) edge[loop right] node{$3\,|\, 2$}  (q1);
    
    \draw (q2) edge[left, bend left] node (e) {$1\,|\,3$} (q1);
    \draw (q2) edge[loop left] node{$3\,|\, 1$} (q2);
    \draw (q2) edge[loop right] node{$2\,|\, 2$}  (q2);

    \node[left=.4cm of e] (A) {$\iz\A$};
    \end{tikzpicture}
\end{array}
\]
The associated reticulation is formed by swapping ${\mathcal{P}^2}$ and ${\mathcal{P}^4}$ and then sorting the columns so that the first warp family consists of vertical lines in increasing order.
\[ 
\begin{array}{l|cccc}
&{\mathcal{P}^1}^\sharp=\mathcal{P}^1&{\mathcal{P}^3}^\sharp=\mathcal{P}^3&{\mathcal{P}^2}^\sharp=\mathcal{P}^4&{\mathcal{P}^4}^\sharp=\mathcal{P}^2\\
\hline
\hbox{permute families} &
 			\begin{tikzpicture}[scale=.5,baseline=4.5ex]
	 			\foreach \y in {1,2}
	         		\draw[thick, red, -] (1,\y)--(3,\y);
	  			\gridpoints{2}{3};  
				\node (l1) at (.3,2) {$1$};
         		 \node (l2) at (.3,1) {$2$};
			\end{tikzpicture}&
			\begin{tikzpicture}[scale=.5,baseline=4.5ex]
	   			\draw[thick, red, -] (1,1)--(2,1)--(3,2); 
       			\draw[thick, red, -] (1,2)--(2,2)--(3,1); 
       			\gridpoints{2}{3};
				\node (l1) at (.3,2) {$1$};
          		\node (l2) at (.3,1) {$2$};
			\end{tikzpicture}     &
			\begin{tikzpicture}[scale=.5,baseline=4.5ex]
      			\draw[thick, blue, -] (1,1)--(2, 2);
      			\draw[thick, blue, -] (1,2)--(2, 1);
      			\draw[thick, blue, -] (3,1)--(3, 2);
 	 			\gridpoints{2}{3};  
				\node (l1) at (1,2.6) {$3$};
           		\node (l2) at (2,2.6) {$1$};
           		\node (l3) at (3,2.6) {$2$};  
			\end{tikzpicture}     &
			\begin{tikzpicture}[scale=.5,baseline=4.5ex]
      			\draw[thick, blue, -] (1,1)--(1, 2);
      			\draw[thick, blue, -] (2,1)--(2, 2);
      			\draw[thick, blue, -] (3,1)--(3, 2);
 	 			\gridpoints{2}{3};  
				\node (l1) at (1,2.6) {$1$};
           		\node (l2) at (2,2.6) {$2$};
           		\node (l3) at (3,2.6) {$3$};  
			\end{tikzpicture} \\
\hline
\hbox{sort row entries}&
 			\begin{tikzpicture}[scale=.5,baseline=4.5ex]
	 			\foreach \y in {1,2}
	         		\draw[thick, red, -] (1,\y)--(3,\y);
	  			\gridpoints{2}{3};  
				\node (l1) at (.3,2) {$1$};
         		 \node (l2) at (.3,1) {$2$};
			\end{tikzpicture}&
			\begin{tikzpicture}[scale=.5,baseline=4.5ex]
	   			\draw[thick, red, -] (1,1)--(2,2)--(3,2); 
       			\draw[thick, red, -] (1,2)--(2,1)--(3,1); 
       			\gridpoints{2}{3};
				\node (l1) at (.3,2) {$2$};
          		\node (l2) at (.3,1) {$1$};
			\end{tikzpicture}     &
			\begin{tikzpicture}[scale=.5,baseline=4.5ex]
      			\draw[thick, blue, -] (1,1)--(1, 2);
      			\draw[thick, blue, -] (2,1)--(2, 2);
      			\draw[thick, blue, -] (3,1)--(3, 2);
 	 			\gridpoints{2}{3};  
				\node (l1) at (1,2.6) {$1$};
           		\node (l2) at (2,2.6) {$2$};
           		\node (l3) at (3,2.6) {$3$};  
			\end{tikzpicture}     &
			\begin{tikzpicture}[scale=.5,baseline=4.5ex]
      			\draw[thick, blue, -] (1,1)--(1, 2);
      			\draw[thick, blue, -] (2,1)--(3, 2);
      			\draw[thick, blue, -] (3,1)--(2, 2);
 	 			\gridpoints{2}{3};  
				\node (l1) at (1,2.6) {$3$};
           		\node (l2) at (2,2.6) {$1$};
           		\node (l3) at (3,2.6) {$2$};  
			\end{tikzpicture} 
	\end{array}
\]

\end{example}

\begin{example}
The parastrophism $\dag$ transforms the narrow semi-orthogonal array $\mathcal{Y}$ in Example~\ref{ex:runningexample1}  by applying the permutation (1,2)(3,4) to the components.  
The associated automaton  is the dual $\dz\A$ of $\A$ (the roles of states and inputs are reversed).
\[
\begin{array}{c|c|c}
    \begin{array}{l}
      \mathcal{Y}^\dag=\\
      \left\{\!\!\! \begin{array}{ccc}(1,1,2,1) & (1,2,3,2) \\
                                      (2,1,3,1)&  (2,2,2,2)\\ 
                                      (3,1,1,2) & (3,2,1,1) \end{array} \!\!\!\right\}
    \end{array}
  & 
    \begin{array}{ll}
    C^\dag = & R^\dag=\\
    \begin{bmatrix} 2&3\\3&2 \\1&1 \end{bmatrix} & \begin{bmatrix} 1&2\\1&2\\ 2&1\end{bmatrix}
    \end{array}
 &  
    \begin{tikzpicture}[node distance=2cm, baseline={([yshift=-.8ex]current bounding box.center)}]
    \node[state] (q2) {$2$};
    \node[state, below left of=q2] (q1) {$1$};
    \node[state, below right of=q2] (q3) {$3$};

    \draw (q2) edge[loop right] node{$2\,|\, 2$}  (q2);
    \draw (q2) edge[right]  node{$1\,|\, 1$} (q3);
    
    \draw (q1) edge[below, bend left] node{$2\,|\,2$} (q3);
    \draw (q1) edge[left] node{$1\,|\,1$} (q2);

    \draw (q3) edge[below, bend left] node{$1\,|\,2$} (q1);
    \draw (q3) edge[below] node{$2\,|\,1$} (q1);    

    \node[left=.3cm of q2] (A) {$\dz\A$};
    
    \end{tikzpicture}
\end{array}
\]
The associated reticulation is formed by swapping ${\mathcal{P}^1}$ and ${\mathcal{P}^2}$ and by swapping ${\mathcal{P}^3}$ and ${\mathcal{P}^2}$. 
The orientation is changed  in the process, but here sorting is not needed.
\[ 
\begin{array}{l|cccc}
&{\mathcal{P}^1}^\dag=\mathcal{P}^2 & {\mathcal{P}^3}^\dag=\mathcal{P}^4 & {\mathcal{P}^2}^\dag=\mathcal{P}^1 & {\mathcal{P}^4}^\dag=\mathcal{P}^3\\
\hline
\hbox{permute families/reorient} &
 			\begin{tikzpicture}[scale=.5,baseline=5ex]
	 			\foreach \y in {1,2, 3}
	         		\draw[thick, red, -] (1,\y)--(2,\y);
	  			\gridpoints{3}{2};  
                \node (l1) at (.3,3) {$1$};
				\node (l1) at (.3,2) {$2$};
         		 \node (l2) at (.3,1) {$3$};
			\end{tikzpicture}&
			\begin{tikzpicture}[scale=.5,baseline=4.5ex]
	   			\draw[thick, red, -] (1,1)--(2,1); 
       			\draw[thick, red, -] (1,2)--(2,3); 
                \draw[thick, red, -] (1,3)--(2,2); 
       			\gridpoints{3}{2};
				\node (l1) at (.3,3) {$2$};
          		\node (l2) at (.3,2) {$3$};
                \node (l2) at (.3,1) {$1$};
			\end{tikzpicture}     &
			\begin{tikzpicture}[scale=.5,baseline=5ex]
      			\draw[thick, blue, -] (1,1)--(1,3);
      			\draw[thick, blue, -] (2,1)--(2, 3);

 	 			\gridpoints{3}{2};  
				\node (l1) at (1,3.6) {$1$};
           		\node (l2) at (2,3.6) {$2$};

			\end{tikzpicture}     &
			\begin{tikzpicture}[scale=.5,baseline=4.5ex]
      			\draw[thick, blue, -] (1,3)--(1, 2)--(2,1);
      			\draw[thick, blue, -] (2,3)--(2, 2)--(1,1);

 	 			\gridpoints{3}{2};  
				\node (l1) at (1,3.6) {$1$};
           		\node (l2) at (2,3.6) {$2$};

			\end{tikzpicture} 
	\end{array}
\]
\end{example}

We now return to Example~\ref{def:latin-square} and Proposition~\ref{prop:OLS-coop}, which showed that a pair of orthogonal Latin squares encode a bireversible Mealy automaton. 
This class admits additional parastrophism-like operations.  
Here, the symmetries extend beyond $D_4$ to all of $\Sym(4)$.

\begin{proposition}
Let $\A=(Q=[m],X=[m],E)$ be a Mealy automaton.  Then
$\A^\flat$ is a bireversible Mealy $(m,m)$-automaton for all $\flat\in \Sym(4)$
if and only if 
$C_\A$ and $R_\A$ are orthogonal Latin squares. 
\end{proposition}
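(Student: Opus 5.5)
Since $m=n$, every $\flat\in\Sym(4)$ sends $\mathcal{Y}_\A\in\mathsf{IA}(m,m)$ to an interleaved array in $\mathsf{IA}(m,m)$, so $\A^\flat$ is a letter transducer on $[m]$ and $[m]$. The plan is to reduce both directions to the behaviour of the properties $\ut{st}$ on $\mathcal{Y}=\mathcal{Y}_\A$, exactly as in Proposition~\ref{prop:para-un}, whose proof carries over verbatim to every $\flat\in\Sym(4)$: $\mathcal{Y}$ has $\un{st}$ if and only if $\mathcal{Y}^\flat$ has $\un{\flat(s)\flat(t)}$. Property $\mathfrak{T}$ is always preserved. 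By Theorem~\ref{thm:automatafamily-birevconditions} together with Theorem~\ref{thm:classes-U(st)}, $\A^\flat$ is a bireversible Mealy automaton if and only if $\mathcal{Y}^\flat$ has $\ut{st}$ for the four ``adjacent'' pairs $\{1,2\},\{2,3\},\{3,4\},\{4,1\}$. Pulling back, this holds if and only if $\mathcal{Y}$ has $\ut{st}$ for the four pairs $\{\flat^{-1}(1),\flat^{-1}(2)\},\ldots,\{\flat^{-1}(4),\flat^{-1}(1)\}$, which are the edges of the $4$-cycle $\flat^{-1}(1)\flat^{-1}(2)\flat^{-1}(3)\flat^{-1}(4)$ in $K_4$.

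The combinatorial heart is that the union of these 4-cycles over all $\flat\in\Sym(4)$ is all six edges of $K_4$. In particular, taking $\flat=(2,3)$ yields the cycle $1\,3\,2\,4$, which contains the diagonals $\{1,3\}$ and $\{2,4\}$. Hence the condition ``$\A^\flat$ is bireversible for all $\flat$'' is equivalent to $\mathcal{Y}$ having $\ut{st}$ for all six pairs $s\neq t$. That is, $\mathcal{Y}$ is an orthogonal array of type $(m^2,4,m,2)$, which is the remark made after Definition~\ref{def:OA}.

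It then remains to translate into matrices, using $\mathcal{Y}=\{(i,j,C_\A(i,j),R_\A(i,j))\}$ and the convention that pairs read as coordinates of this tuple:
\begin{itemize}
\item $\ut{12}$ holds since $\A$ is Mealy.
\item $\ut{13}$ says each value occurs once in each row of $C_\A$.
\item $\ut{23}$ says each value occurs once in each column of $C_\A$ (column-Latin, Theorem~\ref{thm:array-matrix-automaton-cond}).
\item $\ut{14}$ says $R_\A$ is row-Latin.
\item $\ut{24}$ says $R_\A$ is column-Latin.
\item $\ut{34}$ says $C_\A$ and $R_\A$ are orthogonal.
\end{itemize}
Together these say exactly that $C_\A,R_\A$ are orthogonal Latin squares. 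Equivalently, one may cite Proposition~\ref{prop:MOLS-OA} with $k=2$.

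For the converse, if $C_\A,R_\A$ are orthogonal Latin squares, all six $\ut{st}$ hold, hence every $\mathcal{Y}^\flat$ has all six as well. In particular each $\mathcal{Y}^\flat$ has $\ut{12}$, so it is a grid array of a Mealy $(m,m)$-automaton, and it satisfies the four adjacent conditions, so this automaton is bireversible by Theorem~\ref{thm:automatafamily-birevconditions}.

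The only delicate point is bookkeeping. One must check that the pullback of the adjacency pairs under $\flat$ is computed with the paper's convention $(y^\tau)_s=y_{\tau(s)}$, since the action conventions may introduce $\flat$ versus $\flat^{-1}$. This is harmless here because we range over all of $\Sym(4)$. One must also confirm that exhibiting one $\flat$ whose cycle contains both diagonals suffices for the forward direction. Everything else is a direct application of the stated theorems.
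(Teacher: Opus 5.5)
Your proof is correct and follows the same chain as the paper's proof: $C_\A,R_\A$ are orthogonal Latin squares $\Leftrightarrow$ $\mathcal{Y}_\A$ is an orthogonal array of type $(m^2,4,m,2)$ $\Leftrightarrow$ every $\mathcal{Y}_\A^\flat$ is a narrow semi-orthogonal array $\Leftrightarrow$ every $\A^\flat$ is bireversible. You also make explicit two points the paper only cites: the middle step, which the paper attributes to Definition~\ref{def:OA}, follows because $\flat=\iota$ and $\flat=(2,3)$ together force all six pairwise conditions, and you translate these conditions entry by entry rather than relying only on Proposition~\ref{prop:MOLS-OA}.
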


\begin{proof}
By Proposition~\ref{prop:MOLS-OA}, 
$C_\A$ and $R_A$ are orthogonal Latin squares of order $m$ 
    if and only if 
the associated grid array $\mathcal{Y}_{\!\A}$ is an orthogonal array of type $(m^2, 4,m,2)$.
By Definition~\ref{def:OA}, 
this is the case 
    if and only if 
$\mathcal{Y}_{\!\A}^\flat$ is a semi-orthogonal array for all   $\flat\in \Sym(4)$.  
By Theorem~\ref{thm:automatafamily-birevconditions}, 
this occurs 
    if and only if 
$\A^\flat$ is a bireversible Mealy $(m,m)$-automaton for all $\flat\in \Sym(4)$.
\end{proof}

The parastrophism $\dagger$ is defined for all matched matrices and their associated rete since $\un{12}$ holds if and only if $\un{21}$ holds.     
We now state a corollary to Lemma~\ref{lem:parastrophes}.

\begin{corollary}
\label{cor:dagger}
Let $(C,R)\in \mathsf{MM}(m,n)$. 
\begin{enumerate}
    \item  $(C,R)^\dagger=(R^\transp,C^\transp)\in  \mathsf{MM}(n,m)$.
    \item If $(C,R)\in \mathsf{CP}(m,n)$, then $(C,R)^\dagger\in\mathsf{CP}(n,m)$.
            In particular, $\dagger$ is a bijection between cooperative pairs of order $m\times n$ and cooperative pairs of order $n\times m$.

\end{enumerate}
\end{corollary}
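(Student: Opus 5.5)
The plan is to work directly with the grid array and its image under $\dagger=(1,2)(3,4)$. Write $\mathcal{Y}=\mathcal{Y}_{(C,R)}=\{(i,j,C(i,j),R(i,j)) : (i,j)\in[m]\times[n]\}$. By the convention $y^\tau=(y_{\tau^{-1}(1)},\dots,y_{\tau^{-1}(4)})$ and the fact that $\dagger$ is an involution, each tuple $(i,j,k,\ell)$ is sent to $(j,i,\ell,k)$. Hence
\[
\mathcal{Y}^\dagger=\{(j,i,R(i,j),C(i,j)) : (i,j)\in[m]\times[n]\}.
\]
By Lemma~\ref{lem:parastrophes} this set lies in $\mathsf{IA}(n,m)$.

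For part (i), $\un{12}$ holds for $\mathcal{Y}$ if and only if $\un{21}$ does. By Proposition~\ref{prop:para-un} together with $|\mathcal{Y}^\dagger|=|\mathcal{Y}|=mn$, the array $\mathcal{Y}^\dagger$ has property $\ut{12}$, so it is a grid array of order $n\times m$. Reading off its matched matrices via Definition~\ref{def:bijections}, the entry at position $(j,i)$ of the first matrix is $R(i,j)=R^\transp(j,i)$. The entry at position $(j,i)$ of the second matrix is $C(i,j)=C^\transp(j,i)$. This gives $(C,R)^\dagger=(R^\transp,C^\transp)\in\mathsf{MM}(n,m)$, with $R^\transp\in\mathsf{M}_{n\times m}([n])$ and $C^\transp\in\mathsf{M}_{n\times m}([m])$ as required.

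For part (ii), suppose $(C,R)\in\mathsf{CP}(m,n)$. By Theorem~\ref{thm:automatafamily-birevconditions}, $\mathcal{Y}$ is a narrow semi-orthogonal array. We use that $\dagger$ maps the set of pairs $\{\{1,2\},\{2,3\},\{3,4\},\{4,1\}\}$ to itself: $\{1,2\}\mapsto\{2,1\}$, $\{2,3\}\mapsto\{1,4\}$, $\{3,4\}\mapsto\{4,3\}$, $\{4,1\}\mapsto\{3,2\}$. Proposition~\ref{prop:para-un} together with size $mn$ then shows that $\mathcal{Y}^\dagger$ has all four properties $\ut{st}$, so it is a narrow semi-orthogonal array of order $n\times m$. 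Applying Theorem~\ref{thm:automatafamily-birevconditions} again gives $(C,R)^\dagger\in\mathsf{CP}(n,m)$.

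As a sanity check, one can verify this directly via condition (v) of that theorem. $R$ row-Latin is equivalent to $R^\transp$ column-Latin, and $C$ column-Latin is equivalent to $C^\transp$ row-Latin. Orthogonality of $(R^\transp,C^\transp)$ is just orthogonality of $(C,R)$ with the pair coordinates swapped.

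For bijectivity, $\dagger$ is an involution on tuples, so $((C,R)^\dagger)^\dagger=(C,R)$; equivalently $(R^\transp,C^\transp)\mapsto(C,R)$. The same argument with $m$ and $n$ interchanged shows that $\dagger$ maps $\mathsf{CP}(n,m)$ into $\mathsf{CP}(m,n)$. Hence $\dagger$ is a bijection between the two sets, being its own inverse.

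The main obstacle I expect is bookkeeping rather than substance. One must get the index convention for $y^\tau$ right, and confirm that the order-$n\times m$ matched matrices are read with rows indexed by the new first component $j$. Otherwise a stray transpose would be placed on the wrong matrix.
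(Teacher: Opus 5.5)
Your proposal is correct and follows the paper's route. For (i), you compute $\mathcal{Y}^\dagger=\{(j,i,R(i,j),C(i,j))\}$ and identify it with the grid array of $(R^\transp,C^\transp)$, exactly as the paper does. For (ii), you make explicit what the paper only gestures at by citing Lemma~\ref{lem:parastrophes}: since $\dagger$ permutes the four adjacent pairs, Proposition~\ref{prop:para-un} and Theorem~\ref{thm:automatafamily-birevconditions} carry the cooperative property over, and bijectivity follows because $\dagger$ is an involution.
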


\begin{proof}
For (i), it suffices to show that  $(R^\transp,C^\transp)$ and $(C,R)^\dagger$ have the same associated narrow semi-orthogonal array by Theorem~\ref{thm:bijection-pairs}.
By definition, $(C,R)^\dagger$ has associated narrow semi-orthogonal array
    $\{(j,i,R(i,j), C(i,j)  :  (i,j)\in[m]\times[n]\}$
and $(R^\transp, C^\transp)$ has associated narrow semi-orthogonal array
    $\{(j,i,R^\transp(j,i), C^\transp(j,i))  : (j,i)\in[n]\times[m]\}$.
These arrays are equal, so (i) holds.    
Part (ii) follows from the definition of $(C,R)^\dagger$ and Lemma~\ref{lem:parastrophes}.
\end{proof}

The actions under taking the transpose/dual are described in Corollary~\ref{cor:actions-transpose}.

\subsection{Parastrophisms  and automata}
\label{subsec:parastrophism-automata}

\begin{proposition}
\label{prop:dagger-dual-correspond}
    For any automaton $\A$, we have $(C_{\dz\A},R_{\dz\A})=(R_\A^\transp, C_\A^\transp)=(C_\A,R_\A)^\dagger$. In particular, the cooperative pairs of an automaton and its dual are parastrophes of each other.
\end{proposition}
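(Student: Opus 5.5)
The plan is a direct entrywise computation from the definition of the dual automaton, followed by an appeal to Corollary~\ref{cor:dagger}. Write $\A=(Q,X,\pi,\lambda)$ with $Q=[m]$ and $X=[n]$. Then $\dz\A=(X,Q,\hat\lambda,\hat\pi)$ is an $(n,m)$-automaton. In it, $X$ plays the role of the state set and $Q$ the role of the alphabet, so its transition function is $\hat\lambda\colon X\times Q\to X$ and its output function is $\hat\pi\colon X\times Q\to Q$. By Definition~\ref{def:bijections}, $C_{\dz\A}$ and $R_{\dz\A}$ are $n\times m$ matrices. Their $(j,i)$ entries are the indices of $\hat\lambda(x_j,q_i)$ and $\hat\pi(x_j,q_i)$, respectively.

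First, I compute the two matrices entrywise. The definition of the dual gives $\hat\lambda(x_j,q_i)=\lambda(q_i,x_j)=x_{R_\A(i,j)}$. Hence $C_{\dz\A}(j,i)=R_\A(i,j)=R_\A^\transp(j,i)$. Likewise $\hat\pi(x_j,q_i)=\pi(q_i,x_j)=q_{C_\A(i,j)}$, so $R_{\dz\A}(j,i)=C_\A^\transp(j,i)$. This proves the first equality $(C_{\dz\A},R_{\dz\A})=(R_\A^\transp,C_\A^\transp)$.

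Second, the equality $(R_\A^\transp,C_\A^\transp)=(C_\A,R_\A)^\dagger$ is exactly Corollary~\ref{cor:dagger}(i). As a cross-check, one can also argue at the level of grid arrays. The grid array of $\dz\A$ is $\{(j,i,R_\A(i,j),C_\A(i,j))\}$, and this set equals $\mathcal{Y}_\A^\dagger$ because $\dagger=(1,2)(3,4)$ swaps the first two components and the last two. This matches the transition description $x\xrightarrow{q|r}y$ of $\dz\A$.

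Finally, for the ``in particular'' statement, assume $\A$ is bireversible, so that $(C_\A,R_\A)$ is a cooperative pair by Theorem~\ref{thm:automatafamily-birevconditions}. Corollary~\ref{cor:dagger}(ii) then shows that $(C_\A,R_\A)^\dagger=(C_{\dz\A},R_{\dz\A})$ is also a cooperative pair. So the two pairs are parastrophes of each other, using the fact that $\dagger$ is an involution. The only point needing care is the index bookkeeping: one must read $C_{\dz\A}$ from $\hat\lambda$, the transition function of the dual, and not from $\hat\pi$. I do not expect a genuine obstacle beyond that.
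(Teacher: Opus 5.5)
Your proposal is correct and follows essentially the same route as the paper: an entrywise verification from the definition of $\dz\A$, reading $C_{\dz\A}$ off the dual's transition function $\hat\lambda$, followed by Corollary~\ref{cor:dagger}(i) for the equality with $(C_\A,R_\A)^\dagger$. The only difference is that the paper checks just the first component and obtains the second from $\dz\dz\A=\A$, whereas you compute both components directly; this also keeps your index bookkeeping cleaner than in the paper's write-up.
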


\begin{proof}
Since $\dz\dz\A=\A$, to prove the first equality, it is enough to verify that the first components of pairs are equal. 
Suppose that $R^\transp_\A(i,j)=k$. 
Then $R_\A(j,i)=k$, and since the $j$-th row of $R_\A$ contains the permutation of the $j$-th state of $\A$,  there is a following transition in $\A$ of the form
\begin{center}
    \begin{tikzpicture}
    \node[state] (q1) {$i$};
    \node[state, right of=q1] (q2) {$\bigstar$};
    \draw (q1) edge[above] node{$j\,|\,k$} (q2);
    \end{tikzpicture}
\end{center}
for some state $\bigstar$ of $\A$. 
But then, by the definition of the dual automaton, there is a transition in $\dz A$ of the form
\begin{center}
    \begin{tikzpicture}
    \node[state] (q1) {$j$};
    \node[state, right of=q1] (q2) {$k$};
    \draw (q1) edge[above] node{$i\,|\,\bigstar$} (q2);
    \end{tikzpicture},
\end{center}
which means that the $i$-th entry of the $j$-th row of $C_{\dz A}$ is also $k$. 
That is to say,
\[C_{\dz\A}(i,j)=k=R_\A(j,i)=R^\transp_\A(i,j).\]
Finally, Corollary~\ref{cor:dagger} contains the second equality in the statement.
\end{proof}

Thus, the operation $\dagger$ on grid arrays corresponds to the operation $\dz$ on the associated Mealy automata.  Corollary~\ref{cor:dagger} corresponds to the claim that the $\dz$ operation defines a bijection between the sets of $(m,n)$- and $(n,m)$-automata.

\begin{proposition}
\label{prop:sharp-inv-correspond}
    For any invertible automaton $\A$ we have $(C_{\iz\A},R_{\iz\A})=(C_\A,R_\A)^\sharp$. In particular, the cooperative pairs of an invertible automaton and its inverse are parastrophes of each other.
\end{proposition}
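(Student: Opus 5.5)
The plan is to work at the level of grid arrays and then pass to matched matrices via the bijections of Theorem~\ref{thm:bijection-pairs}, mirroring the proof of Corollary~\ref{cor:dagger}. Let $\A$ be an invertible Mealy $(m,n)$-automaton with grid array $\mathcal{Y}_\A=\{(i,j,C_\A(i,j),R_\A(i,j)) : (i,j)\in[m]\times[n]\}$. By the definition of $\iz\A$, a transition $(q,x,r,y)$ of $\A$ corresponds exactly to the transition $(q^{-1},y,r^{-1},x)$ of $\iz\A$. After identifying $Q^{-1}$ with $[m]$ through $q_i^{-1}\mapsto i$, this gives $\mathcal{Y}_{\iz\A}=\{(i,\ell,k,j) : (i,j,k,\ell)\in\mathcal{Y}_\A\}=\mathcal{Y}_\A^\sharp$, since $\sharp=(2,4)$ swaps the second and fourth components.

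Next I check that $\mathcal{Y}_\A^\sharp$ is a grid array, so that $(C,R)^\sharp$ is defined. Invertibility of $\A$ gives property $\ut{41}$ by Theorem~\ref{thm:classes-U(st)}. Equivalently, $\mathcal{Y}_\A$ has $\un{\sharp^{-1}(1),\sharp^{-1}(2)}=\un{14}$. By Proposition~\ref{prop:para-un}, $\mathcal{Y}_\A^\sharp$ then has $\un{12}$, and it has size $mn$. This is exactly the hypothesis in the definition of parastrophes of matched matrices. So $(C_\A,R_\A)^\sharp$ is the matched pair associated with $\mathcal{Y}_\A^\sharp=\mathcal{Y}_{\iz\A}$, and by Theorem~\ref{thm:bijection-pairs} it equals $(C_{\iz\A},R_{\iz\A})$.

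As a consistency check, I would confirm the result against Lemma~\ref{lem:invertible-action-matrices}. An element $(i,\ell,k,j)$ of $\mathcal{Y}_\A^\sharp$ with $\ell=R_\A(i,j)$ has third entry $C_\A(i,j)$ at position $(i,R_\A(i,j))$, which is exactly $C_\A\rtact R_\A$. Its fourth entry is $j=R_\A^\ir(i,\ell)$. This matches $C_{\iz\A}=C_\A\rtact R_\A$ and $R_{\iz\A}=R_\A^\ir$.

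The "in particular" clause then follows from the definition of parastrophe. If $\A$ is bireversible, the pairs are cooperative by Theorem~\ref{thm:automatafamily-birevconditions}, and $\iz\A$ is bireversible by Proposition~\ref{prop:para-un}.

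The only delicate point is the relabelling of $Q^{-1}$ as $[m]$ in the same order as $Q$. It must be fixed explicitly so that the identity $\mathcal{Y}_{\iz\A}=\mathcal{Y}_\A^\sharp$ is literal and not just up to isotopy.
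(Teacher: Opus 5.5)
Your proof is correct, but it reaches the key identity $\mathcal{Y}_{\iz\A}=\mathcal{Y}_\A^\sharp$ by a different route from the paper.

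\textbf{The paper's route.} It starts from Lemma~\ref{lem:invertible-action-matrices}, i.e.\ from the matrix formulas $(C_{\iz\A},R_{\iz\A})=(C_\A\rtact R_\A,R_\A^{\ir})$, which come from the wreath-recursion inversion formula~\eqref{eqn:inv}. It then reads off that $(i,j,k,\ell)\in\mathcal{Y}_\A$ forces $(i,\ell,k,j)\in\mathcal{Y}_{(C_{\iz\A},R_{\iz\A})}$. Finally it appeals to $\iz\iz\A=\A$ to get the reverse inclusion.

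\textbf{Your route.} You read the identity directly off the transducer-level definition of $E^{-1}$. That definition is already an ``if and only if'', so no reversal step is needed, and the lemma serves only as a cross-check.

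\textbf{What your approach buys.} It is more elementary. It shows the array identity is essentially definitional and holds for arbitrary letter transducers; the Mealy and invertibility hypotheses enter only to make the matched matrices exist. You also make explicit two points the paper leaves implicit:
\begin{itemize}
\item the identification $q_i^{-1}\leftrightarrow i$;
\item the check, via $\un{14}$ and Proposition~\ref{prop:para-un}, that $\mathcal{Y}_\A^\sharp$ is a grid array, so that $(C_\A,R_\A)^\sharp$ is defined at all.
\end{itemize}

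\textbf{What the paper's approach buys.} Its argument directly exhibits the matrix description $(C\rtact R,R^{\ir})$ of the parastrophism $\sharp$. That description is what feeds into the table of Theorem~\ref{thm:birev-all-parastrophes}. Your consistency check recovers it as well, so nothing is lost.
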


\begin{proof}
By Lemma~\ref{lem:invertible-action-matrices}, $(C_{\iz\A},R_{\iz\A}) = (C_\A\rtact R_\A,R_\A^{\ir})$, 
which allows us to relate the narrow semi-orthogonal arrays associated with $(C_{\iz\A},R_{\iz\A})$ and $(C_\A,R_\A)$.
Suppose $(i,j,k,\ell)\in\mathcal{Y}_{\A}$.  
Then the $(i,j)$-entries of $C_\A$ and $R_\A$ are equal to $k$ and $\ell$, respectively.
By the above, the $(i,\ell)$-entries of $C_{\iz\A}$ and $R_{\iz\A}$ are equal to $k$ and $j$, respectively.
Thus, $(i,\ell, k,j)\in \mathcal{Y}_{(C_{\iz\A},R_{\iz\A})}$.  
It is clear that the argument can be reversed because $\iz\iz\A=\A$, 
so $(i,j,k,\ell)\in\mathcal{Y}_{\A}$ if and only if $(i,\ell, k,j)\in \mathcal{Y}_{(C_{\iz\A},R_{\iz\A})}$.
Thus, $\mathcal{Y}_{\A}^\sharp = \mathcal{Y}_{(C_{\iz\A},R_{\iz\A})}$.
\end{proof}

We are now able to relate the 8 bireversible automata defined by the parastrophes of a semi-orthogonal array associated with a given bireversible automaton (Lemma~\ref{lem:parastrophes})  (see, for example,~\cite{akhavi_klmp:finiteness_problem}).

\begin{theorem}
\label{thm:birev-all-parastrophes}
Let $\A$ be a letter transducer.  
The parastrophes of the associated narrow semi-orthogonal arrays, cooperative pairs, and rete are related as follows, 
where we assume that $\A$ is such that the corresponding automaton operations are well-defined and the associated objects are defined.

{\em 
\[
\begin{array}{|c||c|c|c|c|c|c|c|c|}
\hline
\hbox{Parastrophism}
&\iota & \sharp \\
\hline
\hbox{Automaton}
&\A    & \iz\A    \\
\hline
\hbox{\begin{tabular}{c} Array component\\ permutation\end{tabular}}
&()  &(24)      \\
   \hline
\hbox{Cooperative pair}
 & (C,R)  & (C \rtact R,R^{\ir})\\
   \hline
\hbox{\begin{tabular}{c} Reticulation\\ line families\end{tabular} }
 & \begin{array}{c|c}
  \mathcal{R}_\mathrm{weft} & \mathcal{R}_\mathrm{warp}\\ 
  {\mathcal{P}^1},  {\mathcal{P}^3}&{\mathcal{P}^2},{\mathcal{P}^4} 
  \end{array}
 & \begin{array}{c|c}
  \mathcal{R}_\mathrm{weft} & \mathcal{R}_\mathrm{warp}\\
  {\mathcal{P}^1},  {\mathcal{P}^3}&{\mathcal{P}^4},{\mathcal{P}^2} 
  \end{array}
\\
\hline
\hline
\hbox{Parastrophism}
& \dagger & \sharp\dagger\\
\hline
\hbox{Automaton}
&\dz\A  & \iz\dz\A     \\
\hline
\hbox{\begin{tabular}{c} Array component\\ permutation\end{tabular}}
& (1,2)(3,4) &  (1,2,3,4)      \\
   \hline
\hbox{Cooperative pair}
 & (R^\transp, C^\transp) & (R^\transp\rtact C^\transp,C^{\transp\ir}=C^{\ic \transp})\\
   \hline
\hbox{\begin{tabular}{c} Reticulation\\ line families\end{tabular} }
 & \begin{array}{c|c}
  \mathcal{R}_\mathrm{weft} & \mathcal{R}_\mathrm{warp}\\
  {\mathcal{P}^2},  {\mathcal{P}^4}&{\mathcal{P}^1},{\mathcal{P}^3} 
  \end{array}
 & \begin{array}{c|c}
  \mathcal{R}_\mathrm{weft} & \mathcal{R}_\mathrm{warp}\\
  {\mathcal{P}^3}, {\mathcal{P}^1}&  {\mathcal{P}^2},{\mathcal{P}^4} 
  \end{array}\\
\hline
\hline
\hbox{Parastrophism}
&\dagger\sharp & \sharp\dagger\sharp \\
\hline
\hbox{Automaton}
& \dz\iz\A     & \iz\dz\iz\A        \\
\hline
\hbox{\begin{tabular}{c} Array component\\ permutation\end{tabular}}
& (1,4)(2,3)     & (4,3,2,1)              \\
   \hline
\hbox{Cooperative pair}
 & (R^{\ir \transp},(C\rtact R)^\transp)  & (R^{\ir \transp} \rtact (C\rtact R)^\transp,(C\rtact R)^{\transp\ir}) \\
   \hline
\hbox{\begin{tabular}{c} Reticulation\\ line families\end{tabular} }
 &
  \begin{array}{c|c}
  \mathcal{R}_\mathrm{weft} & \mathcal{R}_\mathrm{warp}\\
  {\mathcal{P}^4},  {\mathcal{P}^2}&{\mathcal{P}^3},{\mathcal{P}^1} 
  \end{array}
  & 
  \begin{array}{c|c}
  \mathcal{R}_\mathrm{weft} & \mathcal{R}_\mathrm{warp}\\
  {\mathcal{P}^2},{\mathcal{P}^4}&{\mathcal{P}^3},{\mathcal{P}^1} 
  \end{array}
\\
\hline
\hline
\hbox{Parastrophism}
& \dagger\sharp\dagger   & \sharp\dagger\sharp\dagger\\
\hline
\hbox{Automaton}
& \dz\iz\dz\A            & \iz\dz\iz\dz\A\\
\hline
\hbox{\begin{tabular}{c} Array component\\ permutation\end{tabular}}
& (1,3)                   & (1,3)(2,4) \\
   \hline
\hbox{Cooperative pair}
& (C^{\ic},C\ltact R) & (C^{\ic}\rtact (C\ltact R),(C\ltact R)^{\ir})\\
   \hline
\hbox{\begin{tabular}{c} Reticulation\\ line families\end{tabular} }
 &
  \begin{array}{c|c}
  \mathcal{R}_\mathrm{weft} & \mathcal{R}_\mathrm{warp}\\
  {\mathcal{P}^3},  {\mathcal{P}^1}&{\mathcal{P}^2},{\mathcal{P}^4} 
  \end{array}
  & 
  \begin{array}{c|c}
  \mathcal{R}_\mathrm{weft} & \mathcal{R}_\mathrm{warp}\\
  {\mathcal{P}^3},  {\mathcal{P}^1}&{\mathcal{P}^4},{\mathcal{P}^2} 
  \end{array}
\\
\hline
\end{array}
\]   
}
\end{theorem}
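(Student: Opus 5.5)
The plan is to reduce every entry of the table to two generating facts, namely Proposition~\ref{prop:sharp-inv-correspond} ($\sharp$ corresponds to $\iz$) and Proposition~\ref{prop:dagger-dual-correspond} ($\dagger$ corresponds to $\dz$), and then propagate them through compositions. First, at the level of interleaved arrays the statement holds for arbitrary letter transducers. Directly from the definitions of $\iz\A$ and $\dz\A$ (edge sets $E^{-1}$ and $E'$), $\mathcal{Y}_{\iz\A}=\mathcal{Y}_\A^{(2,4)}$ and $\mathcal{Y}_{\dz\A}=\mathcal{Y}_\A^{(1,2)(3,4)}$. Since the action $y\mapsto y^\tau$ satisfies $(\mathcal{Y}^{\tau_1})^{\tau_2}=\mathcal{Y}^{\tau_1\tau_2}$ for the appropriate composition convention, the automaton obtained by applying the word in $\iz,\dz$ has array $\mathcal{Y}_\A$ permuted by the corresponding product of $(2,4)$ and $(1,2)(3,4)$.

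The next step is a finite check that these products are the listed permutations: $(1,2,3,4)$, $(4,3,2,1)$, $(1,4)(2,3)$, $(1,3)$, and $(1,3)(2,4)$. I would carry out this check by tracking a generic tuple $(i,j,k,\ell)$. For example, $\iz\dz\A$ sends $(i,j,k,\ell)\mapsto(j,i,\ell,k)\mapsto(j,k,\ell,i)$. I expect the main obstacle here to be purely bookkeeping: the order of composition in the words (for instance, whether $\sharp\dagger$ means ``apply $\dagger$ first'') and the convention $(y^\tau)_s=y_{\tau(s)}$ versus $y_{\tau^{-1}(s)}$. I would fix conventions by requiring that the table agree with the two generators and with $\sharp\dagger$ being $(1,2,3,4)$, then verify the remaining entries against that convention. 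Here $\iz\dz\iz\dz\A=\dz\iz\dz\iz\A$ reflects that the rotation has order $4$.

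For the cooperative-pair row, I would assume $\A$ is bireversible, so that every parastrophe is again a narrow semi-orthogonal array by Proposition~\ref{prop:para-un} and the matched matrices are defined. The $\sharp$ column is Lemma~\ref{lem:invertible-action-matrices}, and the $\dagger$ column is Corollary~\ref{cor:dagger}. Every other column follows by substituting one of these two formulas into the other iteratively. For example, $\sharp\dagger$ gives $(R^\transp\rtact C^\transp,(C^\transp)^\ir)$, and $(C^\transp)^\ir=(C^\ic)^\transp$ holds as noted in the proof of Corollary~\ref{cor:actions-transpose}. The $(1,3)$ column can be double-checked directly. Swapping components $1$ and $3$ of $(i,j,C(i,j),R(i,j))$ gives the tuple with entries $C^\ic$ in position $(k,j)$ and $R(i,j)=(C\ltact R)(k,j)$, matching Lemma~\ref{lem:CLRLactions}(i). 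Corollary~\ref{cor:actions-transpose}(ii) would then be used to rewrite the expressions into the displayed forms.

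Finally, for the reticulation row, the rete of $\mathcal{Y}^\tau$ has $s$-th family equal to $\mathcal{P}^{\tau^{-1}(s)}$ (or $\mathcal{P}^{\tau(s)}$, depending on the convention fixed above), because cell $P^s_h$ of $\mathcal{Y}^\tau$ consists of permuted tuples whose corresponding original component equals $h$. Weft is then families $1,3$ and warp is families $2,4$ of the new array. Reading off these families for each of the eight permutations gives the listed line families, which completes the table.
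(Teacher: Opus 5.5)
Your strategy is the paper's own: reduce to $\sharp\leftrightarrow\iz$ and $\dagger\leftrightarrow\dz$, compose, and recheck the cooperative pairs for $\sharp$ and $(1,3)$ by reparametrizing. Deriving the two generating facts directly from the edge sets $E^{-1}$ and $E'$ is, if anything, cleaner than going through Lemma~\ref{lem:invertible-action-matrices}, and it works for all letter transducers. Your substitution argument does reproduce the automaton and cooperative-pair rows exactly as printed.

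The gap is your claim that, once conventions are fixed, the finite check ``confirms the listed permutations'' and that reading off families ``gives the listed line families.'' That step fails, and no choice of convention rescues it. By your own tuple tracking, $\dz\iz\A$ arises from $(i,j,k,\ell)\mapsto(\ell,i,j,k)$ and $\iz\dz\iz\A$ from $(i,j,k,\ell)\mapsto(\ell,k,j,i)$. The first is a cyclic shift of positions, hence a $4$-cycle under either reading of $y^\tau$. The second is $(1,4)(2,3)$ under either reading. The table assigns these the other way round. The labels are impossible too: $\dagger\sharp$ is a product of $(1,2)(3,4)$ and $(2,4)$, hence a $4$-cycle in either composition order. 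By contrast, $\sharp\dagger\sharp$ is conjugate to $\dagger$ and therefore an involution; the paper's own definition of parastrophism lists $\sharp\dagger\sharp=(1,4)(2,3)$.

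The line-family row has the same defect. Take weft to mean the new families $1,3$, which is the reading forced by the $\iota$ and $\dagger$ columns.
\begin{itemize}
\item $\iz\dz\A$ (tuples $(j,k,\ell,i)$) has weft $\mathcal{P}^2,\mathcal{P}^4$ and warp $\mathcal{P}^3,\mathcal{P}^1$. The printed weft $\mathcal{P}^3,\mathcal{P}^1$ cannot be right even as a set: the states of $\iz\dz\A$ are the old letters, so for $m\neq n$ those families have the wrong number of lines.
\item $\dz\iz\A$ has warp $\mathcal{P}^1,\mathcal{P}^3$, whereas the table prints $\mathcal{P}^3,\mathcal{P}^1$.
\item $\iz\dz\iz\A$ has weft $\mathcal{P}^4,\mathcal{P}^2$, whereas the table prints $\mathcal{P}^2,\mathcal{P}^4$.
\end{itemize}

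So, as written, your proposal certifies entries that are false. The method is sound, but you must actually carry out the check. The conclusion is that in the $\sharp\dagger$, $\dagger\sharp$ and $\sharp\dagger\sharp$ columns, the permutation and line-family rows have to be corrected to agree with the automaton and cooperative-pair rows. For example, take $(y^\tau)_s=y_{\tau(s)}$, the convention under which $\iz\dz\A\leftrightarrow(1,2,3,4)$. Then the correct permutations are $(4,3,2,1)$ for $\dz\iz\A$ and $(1,4)(2,3)$ for $\iz\dz\iz\A$. The paper's proof misses this as well, because it verifies only the cooperative-pair row.
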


\begin{proof}
Compose the actions of the parastrophisms $\dagger$  and $\sharp$  given in  Propositions~\ref{prop:dagger-dual-correspond} and~\ref{prop:sharp-inv-correspond}.  
To see that the associated cooperative pairs are as claimed, we may also argue as follows.
Since $R$ is row-Latin, $R^\ir$ is also, so  as $j$ runs over $[n]$, so does $R^\ir(i,j)$ for each fixed $i$.
Thus, 
\[\begin{array}{rcl}
\mathcal{Y} & =& \{(i,j, C(i,j), R(i,j)) :(i,j)\in[m]\times [n]\}  \\
            &=& \{(i,R^\ir(i,j), C(i,R^\ir(i,j)), R(i,R^\ir(i,j))) :(i,j)\in[m]\times [n]\}  \\
           &= &\{(i,R^\ir(i,j), (C\rtact R)(i,j)), j)  :(i,j)\in[m]\times [n]\}.
\end{array}
\]
Hence, $\mathcal{Y}^\sharp$ is associated with $((C\rtact R),R^\ir)$.
A similar argument also gives 
\[
     \mathcal{Y} = \{(C^\ir(i,j),j, i, (C\ltact R)(i,j))  :(i,j)\in[m]\times [n]\},
\]     
so $\mathcal{Y}^{\dagger\sharp\dagger}$ is associated with $((C\rtact R),R^\ir)$. 
Composing these two transformations gives that $\mathcal{Y}^{\sharp\dagger\sharp\dagger}$ is associated with $(C^{\ic},C\ltact R), (C^{\ic}\rtact (C\ltact R),(C\ltact R)^{\ir})$. 
\end{proof}

\section{Isotopy}
\label{sec:isotopism}

We adapt the notion of isotopy for Latin squares~\cite[Section 1.1]{Evans:orthLSbasedongroups}~\cite[Section 1.3]{Denes+Keedwell-LSa_SecondEdition_2015}~\cite[Chapter 17]{vanLintWilson:coursecomb} to interleaved arrays in Subsection~\ref{subsec:isotopism-basics}, and 
discuss the significance of isotopy of interleaved arrays for automata in Subsection~\ref{subsec:isotopism-automata}. 

\subsection{Elements and examples}
\label{subsec:isotopism-basics}

We begin our discussion of isotopy with interleaved arrays. 

\begin{definition}
\label{def:isotopismFA}
By an {\em isotopism} of $\mathsf{IA}(m,n)$, we mean a four-tuple
   ${\sigma}=({\rho},{\kappa},{\delta},{\epsilon})\in \Sym(m)\times \Sym(n)\times \Sym(m)\times \Sym(n)$
acting on $\mathsf{IA}(m,n)$ by
   $\sigma(\mathcal{Y}) = \{\sigma(y) : y\in \mathcal{Y}\}$ for $\mathcal{Y}\in\mathsf{IA}(m,n)$, where
\begin{equation}
     \sigma(y) = (\rho(y_1), \kappa(y_2), \delta(y_3), \epsilon(y_4)).
       \label{eq:isotopismofNSOA}
\end{equation}
\end{definition}
 
We record a few basic properties of isotopisms.

\begin{lemma}
Let $\mathcal{Y}\in\mathsf{IA}(m,n)$, and
let ${\sigma}=({\rho},{\kappa},{\delta},{\epsilon})\in \Sym(m)\times \Sym(n)\times \Sym(m)\times \Sym(n)$.
\begin{enumerate}
    \item \label{lem:isotopy-preserve-FA} 
        $\sigma(\mathcal{Y})\in\mathsf{IA}(m,n)$.
    \item Let
    $\sigma_1=(\rho, \iota_n, \iota_m,\iota_n)$, 
    $\sigma_2=(\iota_m,\kappa, \iota_m,\iota_n)$, 
    $\sigma_3=(\iota_m, \iota_n, \delta,\iota_n)$, and 
    $\sigma_4=(\iota_m,\iota_n, \iota_m, \epsilon)$, 
      where $\iota_m$ and $\iota_n$ are the identity permutations in $\Sym(m)$ and $\Sym(n)$, respectively. 
    Then $\sigma_1$, $\sigma_2$, $\sigma_3$, $\sigma_4$ commute, and 
         $\sigma$ is the composition of $\sigma_1$, $\sigma_2$, $\sigma_3$, $\sigma_4$.
    \item \label{lem:isotpy-preserve-U}
          Pick distinct $s$, $t\in[4]$.
          If $\mathcal{Y}$ has property $\un{st}$, then so does $\sigma(\mathcal{Y})$.
\end{enumerate}    
\end{lemma}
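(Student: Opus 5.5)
The whole argument rests on one observation: $\sigma$ acts on $[m]\times[n]\times[m]\times[n]$ componentwise by permutations, so it is a bijection of that set. I would prove the three parts in order.

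For (i), each $\sigma(y)$ again lies in $[m]\times[n]\times[m]\times[n]$, because $\rho,\delta\in\Sym(m)$ and $\kappa,\epsilon\in\Sym(n)$ map $[m]$ and $[n]$ to themselves. Hence $\sigma(\mathcal{Y})$ is a subset of this product, which is exactly what it means to be an element of $\mathsf{IA}(m,n)$. Since $\sigma$ is a bijection on tuples, we also get $|\sigma(\mathcal{Y})|=|\mathcal{Y}|$; this is worth recording, since it shows property $\mathfrak{T}$ is preserved too.

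For (ii), I would apply each $\sigma_h$ to an arbitrary tuple $y$. The map $\sigma_h$ changes only component $h$, using the corresponding permutation, and leaves the other components fixed. Two maps acting on different components therefore commute. Applying $\sigma_1,\sigma_2,\sigma_3,\sigma_4$ in any order to $y$ yields $(\rho(y_1),\kappa(y_2),\delta(y_3),\epsilon(y_4))=\sigma(y)$. The statement then passes from tuples to sets $\mathcal{Y}$, since the action on sets is elementwise.

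For (iii), suppose $\mathcal{Y}$ has property $\un{st}$, and take $\sigma(y),\sigma(z)\in\sigma(\mathcal{Y})$ with $\sigma(y)_s=\sigma(z)_s$ and $\sigma(y)_t=\sigma(z)_t$. The $s$-th and $t$-th component maps of $\sigma$ are permutations, hence injective, so $y_s=z_s$ and $y_t=z_t$. Property $\un{st}$ of $\mathcal{Y}$ then gives $y=z$, and therefore $\sigma(y)=\sigma(z)$.

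Nothing here is a genuine obstacle. The only point needing care is in (iii): every element of $\sigma(\mathcal{Y})$ must be written as $\sigma(y)$ for some $y\in\mathcal{Y}$, and we then use injectivity of the coordinate permutations. Combined with the size preservation noted in (i), the same argument shows that $\ut{st}$ is also preserved.
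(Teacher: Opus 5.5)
Your proposal is correct and matches the paper's proof essentially step for step. Part (i) holds because $\sigma$ only permutes entries within components, part (ii) because the $\sigma_h$ act on different direct factors, and part (iii) because injectivity of the coordinate permutations turns $\sigma(y)_s=\sigma(z)_s$, $\sigma(y)_t=\sigma(z)_t$ into $y_s=z_s$, $y_t=z_t$, after which $\un{st}$ for $\mathcal{Y}$ applies. Your extra observation that $|\sigma(\mathcal{Y})|=|\mathcal{Y}|$, so that $\ut{st}$ and not just $\un{st}$ is preserved, is a useful addition: the paper relies on it only implicitly when it concludes that isotopisms preserve the Mealy, invertible, reversible, and coreversible classes.
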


\begin{proof}
Note that $\sigma(\mathcal{Y})\in\mathsf{IA}(m,n)$ since the $\mathcal{Y}\in\mathsf{IA}(m,n)$ and $\sigma$ only permutes entries within components. 
Also, the operations of ${\rho}$, ${\kappa}$, ${\delta}$, ${\epsilon}$ on $[m]\times[n]\times[m]\times[n]$ clearly commute since they operate on different direct factors.
Typical elements of $\sigma(Y)$ are 
   $\sigma(y) = (\rho(y_1), \kappa(y_2), \delta(y_3), \epsilon(y_4))$ and 
   $\sigma(z) = (\rho(z_1), \kappa(z_2), \delta(z_3), \epsilon(z_4))$
for $y$, $z\in \mathcal{Y}$.
If $\sigma(y)_s=\sigma(z)_s$, then $y_s=z_s$ since $\rho$, $\kappa$, $\delta$, and $\epsilon$ are permutations.
A similar argument treats the $t$ components.  
Now $y=z$ by property $\un{st}$ for $\mathcal{Y}$, so $\sigma(y)=\sigma(z)$.  
Thus, $\sigma(\mathcal{Y})$ has property $\un{st}$.
\end{proof}

An isotopism of interleaved arrays induces a map on the set of letter transducers.

\begin{corollary}
\label{cor:family-correspondence}
Let $\A$ be a letter transducer, and let $\mathcal{Y}$ be the associated interleaved array.
Let  $\sigma$ be an isotopism of $\mathsf{IA}(m,n)$.
If $\A$ is Mealy, invertible, reversible, coreversible, IR, or bireversible, then so is the 
 letter transducer $\sigma(A)$ associated with $\sigma(\mathcal{Y})$. 
\end{corollary}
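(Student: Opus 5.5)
The plan is to reduce each automaton property to the combinatorial properties $\ut{st}$ of the associated interleaved array, and then apply the preceding lemma on isotopisms. By Theorem~\ref{thm:classes-U(st)} (with Theorem~\ref{thm:U(st)-Y<st>}), the letter transducer $\A$ is Mealy, reversible, coreversible, or invertible exactly when $\mathcal{Y}$ has property $\ut{12}$, $\ut{23}$, $\ut{34}$, or $\ut{41}$, respectively. IR means $\ut{23}$ and $\ut{41}$ together. Bireversible means all four hold, equivalently that $\mathcal{Y}$ is a narrow semi-orthogonal array. So it suffices to show that $\sigma$ preserves each property $\ut{st}$.

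Property $\ut{st}$ is the conjunction of $\un{st}$ and $\mathfrak{T}$. For $\un{st}$ I will invoke part~(\ref{lem:isotpy-preserve-U}) of the preceding lemma directly. For $\mathfrak{T}$ I note that $\sigma$ acts on $[m]\times[n]\times[m]\times[n]$ as a bijection, since it is a product of permutations of the direct factors. Hence $|\sigma(\mathcal{Y})|=|\mathcal{Y}|$, and in particular $|\sigma(\mathcal{Y})|=mn$ whenever $|\mathcal{Y}|=mn$. Part~(\ref{lem:isotopy-preserve-FA}) of the lemma ensures that $\sigma(\mathcal{Y})\in\mathsf{IA}(m,n)$. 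Proposition~\ref{prop:FA-EFLT-bijection} then makes the letter transducer $\sigma(\A)$ associated with $\sigma(\mathcal{Y})$ well defined, with states $[m]$ and letters $[n]$.

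Combining these facts, every property $\ut{st}$ enjoyed by $\mathcal{Y}$ is enjoyed by $\sigma(\mathcal{Y})$. Translating back through Theorem~\ref{thm:classes-U(st)} gives each listed class, and for the conjunctive classes (IR and bireversible) we simply apply this to each constituent property.

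There is no real obstacle; the argument is bookkeeping. The only point deserving care is that the lemma records preservation of $\un{st}$ but not of $\mathfrak{T}$, so the cardinality step must be stated explicitly via injectivity of $\sigma$.
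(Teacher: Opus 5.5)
Your proposal is correct and follows the paper's route. The paper's proof simply cites the lemma's preservation of $\un{st}$ and translates back through Theorem~\ref{thm:classes-U(st)}. Your explicit observation that $\mathfrak{T}$ is preserved, because $\sigma$ acts bijectively on $[m]\times[n]\times[m]\times[n]$, fills in a step that the paper's one-line proof leaves unstated.
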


\begin{proof}
By Lemma~\ref{lem:isotpy-preserve-U}, isotopy preserves the properties $\un{st}$. 
\end{proof}

We now specialize to isotopisms of grid arrays, which correspond to  Mealy automata and have associated matched matrices and rete.

\begin{theorem}
\label{thm:isotopy}
Fix an associated automaton $\A=([m],[n], \pi,\lambda)$, 
     matched pair of matrices $(C,R)$,  
     grid array $\mathcal{Y}$, and 
     ordered rete $({\mathcal{P}^1}=\{P^1_i\}_{i\in[m]},{\mathcal{P}^2}=\{P^2_j\}_{j\in[n]},
       {\mathcal{P}^3}=\{P^3_k\}_{k\in[m]}, {\mathcal{P}^4}=\{P^4_\ell\}_{\ell\in[n]})$. 
Let ${\sigma}=({\rho},{\kappa},{\delta},{\epsilon})\in \Sym(m)\times \Sym(n)\times \Sym(m)\times \Sym(n)$.
Then the following are associated automaton, matched pair of matrices, grid array, and ordered rete. 
\begin{enumerate}
\item The isotopy $\sigma$ transforms each transition  
          $i  \xrightarrow{j|\lambda(i,j)}  \pi(i,j)$ 
      of $\A$ into the transition
          $\rho(i) \xrightarrow{\kappa(j)|\epsilon(\lambda(i,j))} \delta(\pi(i,j))$
      of $\sigma(\A)$.
    In particular, 
\begin{eqnarray*}
    \sigma(\A) & =&  (Q^\sigma, X^\sigma, \pi^\sigma, \lambda^\sigma), \hbox{ where }\\
    Q^\sigma&=&\{q^\sigma_1, q^\sigma_2,\ldots q^\sigma_m\} \hbox{ with } q^\sigma_i = \rho(i),\\
    X^\sigma &=& \{x^\sigma_1, x^\sigma_2,\ldots x^\sigma_n\} \hbox{ with } x^\sigma_j = \kappa(j),\\
    \pi^\sigma(q^\sigma_i, x^\sigma_j) &=& \delta ( \pi(\rho^{-1}(q^\sigma_i), \kappa^{-1}(x^\sigma_j))),\\
    \lambda^\sigma(q^\sigma_i, x^\sigma_j) &=& \epsilon ( \lambda(\rho^{-1}(q^\sigma_i), \kappa^{-1}(x^\sigma_j)).
\end{eqnarray*}    
\item The isotopy $\sigma$ acts on matched matrices by 
\begin{description}
    \item[\bf{(I-R)}]  permuting the order of the rows of $C$ and $R$ with $\rho$,
    \item[\bf{(I-C)}]  permuting the order of the columns of $C$ and $R$ with $\kappa$, 
    \item[\bf{(I-Ec)}] permuting the entries of $C$ with $\delta$, and 
    \item[\bf{(I-Er)}] permuting the entries of $R$ with $\epsilon$.
\end{description}
 In particular, 
    $\sigma(C,R) = (\sigma(C), \sigma(R))$, where
    \begin{equation}
    {\sigma}(C)(i,j) = \delta(C(\rho^{-1}(i),\kappa^{-1}(j)))
              \hbox{\quad and \quad}  
    {\sigma}(R)(i,j) = \epsilon(R(\rho^{-1}(i),\kappa^{-1}(j))).
    \label{eq:isotopismmap}
    \end{equation} 
\item $\sigma(\mathcal{Y}) = \{\sigma(y) : y\in \mathcal{Y}\}$, where
    \begin{equation}
         \sigma(y) = (\rho(y_1), \kappa(y_2), \delta(y_3), \epsilon(y_4)).
    \end{equation}
\item \label{thm:isotopy:geom}
The isotopy $\sigma$ acts on ordered rete by permuting the indices of the lines in
     $\mathcal{P}^1$, $\mathcal{P}^2$, $\mathcal{P}^3$, and $\mathcal{P}^4$ by $\rho$, $\kappa$, $\delta$, $\epsilon$, respectively.
     In particular, 
\[
    \sigma({\mathcal{P}^1})_i = \sigma( P^1_{\rho^{-1}(i)}), \,
    \sigma({\mathcal{P}^2})_j =  \sigma(P^2_{\kappa^{-1}(j)}), \,
    \sigma({\mathcal{P}^3})_k =  \sigma(P^3_{\delta^{-1}(k)}, \,
    \sigma({\mathcal{P}^4})_\ell = \sigma( P^4_{\epsilon^{-1}(\ell)}).
\]
  Note: the points should be placed in the natural arrangement when drawn.
\end{enumerate}
\end{theorem}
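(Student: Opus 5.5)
The plan is to take part (iii) as the definition of $\sigma$ on grid arrays (Definition~\ref{def:isotopismFA}). Then I would show that each of the other three objects is exactly the one associated with $\sigma(\mathcal{Y})$ under the bijections of Theorem~\ref{thm:bijection-pairs}. First, $\sigma(\mathcal{Y})$ is again a grid array: by part~\ref{lem:isotpy-preserve-U} of the preceding lemma it keeps property $\un{12}$, and $|\sigma(\mathcal{Y})|=|\mathcal{Y}|=mn$ because $\sigma$ is injective on four-tuples. So $\sigma(\mathcal{Y})$ has property $\ut{12}$, and it has an associated Mealy automaton, matched pair, and grid rete. It then suffices to compute each of these from $\sigma(\mathcal{Y})$ and compare with the formulas in (i), (ii), and (iv).

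For (ii), fix $(i,j)\in[m]\times[n]$. The unique element of $\sigma(\mathcal{Y})$ with first two components $(i,j)$ is $\sigma(y)$, where $y=(\rho^{-1}(i),\kappa^{-1}(j),C(\rho^{-1}(i),\kappa^{-1}(j)),R(\rho^{-1}(i),\kappa^{-1}(j)))$. Reading off its third and fourth entries gives equation~\eqref{eq:isotopismmap} directly, via the map $\mathcal{Y}\mapsto(C_\mathcal{Y},R_\mathcal{Y})$. The verbal descriptions (I-R), (I-C), (I-Ec), (I-Er) are then just interpretations of this formula. Entry $(\rho^{-1}(i),\kappa^{-1}(j))$ is moved to position $(i,j)$, so rows are reordered by $\rho$ and columns by $\kappa$. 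The values are relabeled by $\delta$ in $C$ and by $\epsilon$ in $R$.

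For (i), I would apply the bijection $\mathcal{Y}\mapsto\A_\mathcal{Y}$ of Proposition~\ref{prop:FA-EFLT-bijection}. A transition $(i,j,\pi(i,j),\lambda(i,j))$ becomes $(\rho(i),\kappa(j),\delta(\pi(i,j)),\epsilon(\lambda(i,j)))$, which is the stated transformed transition. The formulas for $\pi^\sigma,\lambda^\sigma$ follow by substituting $i=\rho^{-1}(q^\sigma_i)$ and $j=\kappa^{-1}(x^\sigma_j)$; equivalently, they coincide with the matrices from (ii) under the correspondence $\A\mapsto(C_\A,R_\A)$.

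For (iv), I would use Lemma~\ref{lem:bijection-FA-R}. The cell of $\sigma(\mathcal{Y})$ indexed by $h$ in coordinate $s$ is $\{\sigma(y): y\in\mathcal{Y},\ \sigma_s(y_s)=h\}$, where $\sigma_s$ is the $s$-th component of $\sigma$. This equals $\sigma(P^s_{\sigma_s^{-1}(h)})$, which gives the displayed formulas.

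I expect the main obstacle to be bookkeeping rather than mathematics. The direction of the inverses must stay consistent: $\rho^{-1},\kappa^{-1}$ appear in the matrix and rete formulas, but not in the tuple formula. One also has to check that the stated $Q^\sigma,X^\sigma$ relabelings agree with the natural arrangement. I would handle this by always computing from $\sigma(\mathcal{Y})$, so that every other statement reduces to a single change of variables.
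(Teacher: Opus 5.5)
Your proposal is correct and follows essentially the same route as the paper: you treat $\sigma(\mathcal{Y})$ as the primary object, confirm it is a grid array, and read off the associated automaton, matched matrices, and rete through the bijections of Theorem~\ref{thm:bijection-pairs}, and your rete computation $\sigma(\mathcal{P}^s)_h=\sigma(P^s_{\sigma_s^{-1}(h)})$ matches the paper's. Your justification that $\sigma(\mathcal{Y})$ is a grid array (preservation of $\mathfrak{U}(12)$, plus injectivity of $\sigma$ giving size $mn$) is slightly more complete than the paper's, which cites only the part of the lemma asserting $\sigma(\mathcal{Y})\in\mathsf{IA}(m,n)$.
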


\begin{proof}
By Lemma~\ref{lem:isotopy-preserve-FA}, $\sigma(\mathcal{Y})$ is a grid array. 
We appeal to Definition~\ref{def:bijections} and Theorem~\ref{thm:bijection-pairs} to show 
the given automaton, matched pair of matrices, grid array, and ordered rete are associated.  
Observe that 
$\sigma(\mathcal{Y}) = \{(\rho(i), \kappa(j), \delta(k),\epsilon(\ell)) : (i,j,k,\ell)\in\mathcal{Y}\}$.
The associated matched matrices are $\sigma(C)$, $\sigma(R)$ given by (\ref{eq:isotopismmap}).
The associated automaton has transitions given in (i).
Note that $\sigma({\mathcal{P}^1})_i =\{(a,b,c,d)\in\sigma(\mathcal{Y}) : \rho(y_1)=a=i\}
 = \sigma(\{y\in\mathcal{Y} : y_1=\rho^{-1}(i)\}) = \sigma(\mathcal{P}^{1}_{\rho^{-1}(i)})$.  
The other three families are treated similarly.
\end{proof}

\begin{definition}
\label{def:isotopism}
By an \emph{isotopism} of a finite automaton, matched pair of matrices, grid array, or ordered rete, 
we mean a four-tuple of permutations 
     ${\sigma}=({\rho},{\kappa},{\delta},{\epsilon})\in \Sym(m)\times \Sym(n)\times \Sym(m)\times \Sym(n)$
acting as in Theorem~\ref{thm:isotopy}.
We say that $X$ is \emph{isotopic} to ${\sigma}(X)$ or that $X$ and $\sigma(X)$ are \emph{isotopes} of one another
for any $X$ among the above objects.
\end{definition}

The geometric interpretation of isotopisms in Theorem~\ref{thm:isotopy:geom} is analogous to that of isotopisms for Latin squares in terms of $(k,n)$-nets~\cite[Chapter 2]{Pflugfleder:QGLI}.

Isotopisms respect the various properties considered in previous sections.

\begin{corollary}
\label{cor:family-correspondence-matrix-rete}
With reference to Theorem~\ref{thm:isotopy}, the following hold.
\begin{enumerate}
    \item If $C$ is column-Latin, $R$ is row-Latin, or $C$ and $R$ are orthogonal, then $\sigma(C)$ is column-Latin, $\sigma(R)$ is row-Latin, or $\sigma(C)$ and $\sigma(R)$ are orthogonal, respectively.
    \item If two line families of different types are orthogonal, then the images of these families under $\sigma$ are also orthogonal.
\end{enumerate}    
\end{corollary}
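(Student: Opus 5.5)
The plan is to translate each matrix or rete property into one of the properties $\un{st}$ (or $\ut{st}$) of the associated grid array. Once that is done, the result follows from Lemma~\ref{lem:isotpy-preserve-U}, because isotopisms preserve every property $\un{st}$. Isotopisms also preserve the size $|\mathcal{Y}|$, since $\sigma$ acts injectively on tuples. Hence they preserve $\mathfrak{T}$ and therefore each $\ut{st}$.

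For part (i), by Theorem~\ref{thm:isotopy}, $(\sigma(C),\sigma(R))$ is the matched pair associated with the grid array $\sigma(\mathcal{Y})$. Theorems~\ref{thm:classes-U(st)} and~\ref{thm:array-matrix-automaton-cond} give the following dictionary for a grid array with associated pair $(C,R)$:
\begin{itemize}
\item $C$ is column-Latin if and only if $\ut{23}$ holds;
\item $R$ is row-Latin if and only if $\ut{41}$ holds;
\item $C$ and $R$ are orthogonal if and only if $\ut{34}$ holds.
\end{itemize}
The passage through the automaton (reversible, invertible, coreversible) is only a conduit here. Alternatively, Theorem~\ref{thm:U(st)-Y<st>} can be applied directly to the grid array. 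Applying the dictionary to $\mathcal{Y}$, then invariance under $\sigma$, then the dictionary to $\sigma(\mathcal{Y})$, yields each claim.

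As a sanity check independent of the arrays, I would also verify (i) directly from formula~\eqref{eq:isotopismmap}. Column $j$ of $\sigma(C)$ is column $\kappa^{-1}(j)$ of $C$, with rows reordered by $\rho$ and entries relabeled by $\delta$; hence it is still a permutation of $[m]$. The argument for the rows of $\sigma(R)$ is the same. For orthogonality, the map $(i,j)\mapsto(\sigma(C)(i,j),\sigma(R)(i,j))$ equals $(\delta\times\epsilon)\circ(C,R)\circ(\rho^{-1}\times\kappa^{-1})$, which is a composition of bijections of $[m]\times[n]$, so Lemma~\ref{lem:orthogonal} applies.

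For part (ii), two families $\mathcal{P}^s,\mathcal{P}^t$ of different types (that is, $s$ and $t$ of different parity) are orthogonal if and only if $\mathcal{Y}$ has $\ut{st}$, by Theorem~\ref{thm:U(st)-Y<st>}. By Theorem~\ref{thm:isotopy}(\ref{thm:isotopy:geom}), the image families are exactly the $s$- and $t$-partitions of the rete of $\sigma(\mathcal{Y})$. Since $\ut{st}$ transfers to $\sigma(\mathcal{Y})$, the image families are orthogonal. One can also see this directly: $|\sigma(P^s_{\alpha^{-1}(a)})\cap\sigma(P^t_{\beta^{-1}(b)})|=|P^s_{\alpha^{-1}(a)}\cap P^t_{\beta^{-1}(b)}|=1$, where $\alpha,\beta$ are the relevant components of $\sigma$ and $\sigma$ is injective on points.

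The only mildly delicate step is bookkeeping. One must confirm that the pair associated with $\sigma(\mathcal{Y})$ is exactly the one given by~\eqref{eq:isotopismmap}, and that property $\mathfrak{T}$ is preserved along with $\un{st}$. Both follow immediately from Theorem~\ref{thm:isotopy} and the injectivity of $\sigma$, so I expect no real obstacle.
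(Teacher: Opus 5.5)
Your proposal is correct and follows essentially the paper's route: the paper notes that isotopisms preserve every property $\un{st}$ (Lemma~\ref{lem:isotpy-preserve-U}) and then reads off the matrix and rete statements through the dictionary behind Corollary~\ref{cor:family-correspondence}. Your explicit remark that $\sigma$ also preserves $|\mathcal{Y}|$, and hence $\mathfrak{T}$ and each $\ut{st}$, supplies a small step the paper leaves implicit; your direct checks via \eqref{eq:isotopismmap} and the injectivity of $\sigma$ on points are valid independent confirmations.
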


\begin{proof}
By Lemma~\ref{lem:isotpy-preserve-U}, isotopy preserves the properties $\un{st}$.
The result follows from Corollary~\ref{cor:family-correspondence}.    
\end{proof}

\begin{example}
\label{ex:runningexample-isotopism-rho}
Apply the isotopism $\sigma=( (1,2), \iota_n, \iota_m, \iota_n)$ to the objects in Example~\ref{ex:runningexample1}.
Permute the values in the first component of every element of $\mathcal{Y}$ by $\rho=(1,2)$.  
Permute the rows of both $C$ and $R$ by $\rho^{-1}$.
The permutation $\rho$ is applied to the indices of the lines in $\mathcal{P}^1$; 
sorting the points into the natural arrangement is done by permuting the rows of all points by $\rho^{-1}$.  
The transitions of the automaton are $\rho(i) \xrightarrow{j|\lambda(i,j)} \pi(i,j)$.
\[
\begin{array}{c|c}
    \begin{array}{l}
      \mathcal{Y}_1=\\
      \left\{\!\!\! \begin{array}{ccc} (2,1,1,2) & (2,2,1,3) & (2,3,2,1)\\
                                       (1,1,2,3) & (1,2,2,2) &  (1,3,1,1)\\ \end{array} \!\!\!\right\}
    \end{array}
  &  
    \begin{array}{ll}
    C_1 = & R_1=\\
    \begin{bmatrix} 2&2&1\\1&1&2\end{bmatrix} & \begin{bmatrix} 3&2&1\\2&3&1\end{bmatrix}
    \end{array}
\\
\noalign{\vskip 2pt}
\hline
\noalign{\vskip 2pt}
\begin{array}{cccc}
\mathcal{P}^1_1& \mathcal{P}^2_1&\mathcal{P}^3_1&\mathcal{P}^4_1\\
			\begin{tikzpicture}[scale=.5,baseline=7ex]
	   			\draw[thick, red, -] (1,1)--(2,1)--(3,1); 
       				\draw[thick, red, -] (1,2)--(2,2)--(3,2); 
       				\gridpoints{2}{3};
				\node (l1) at (.3,2) {$2$};
          			\node (l2) at (.3,1) {$1$};
			\end{tikzpicture}
&
			 \begin{tikzpicture}[scale=.5,baseline=7ex]
      				\draw[thick, blue, -] (1,1)--(1, 2);
      				\draw[thick, blue, -] (2,1)--(2, 2);
      				\draw[thick, blue, -] (3,1)--(3, 2);
 	 			\gridpoints{2}{3};  
				\node (l1) at (1,2.6) {$1$};
           			\node (l2) at (2,2.6) {$2$};
           			\node (l3) at (3,2.6) {$3$};  
			\end{tikzpicture}        
&
			\begin{tikzpicture}[scale=.5,baseline=7ex]
	   			\draw[thick, red, -] (1,1)--(2,1)--(3,2); 
       				\draw[thick, red, -] (1,2)--(2,2)--(3,1); 
       				\gridpoints{2}{3};
				\node (l1) at (.3,2) {$1$};
          			\node (l2) at (.3,1) {$2$};
			\end{tikzpicture}
&
			 \begin{tikzpicture}[scale=.5,baseline=7ex]
      				\draw[thick, blue, -] (1,1)--(2, 2);
      				\draw[thick, blue, -] (1,2)--(2, 1);
      				\draw[thick, blue, -] (3,1)--(3, 2);
 	 			\gridpoints{2}{3};  
				\node (l1) at (1,2.6) {$2$};
           			\node (l2) at (2,2.6) {$3$};
           			\node (l3) at (3,2.6) {$1$};  
			\end{tikzpicture}
\\
			\begin{tikzpicture}[scale=.5,baseline=7ex]
	   			\draw[thick, red, -] (1,1)--(2,1)--(3,1); 
       				\draw[thick, red, -] (1,2)--(2,2)--(3,2); 
       				\gridpoints{2}{3};
				\node (l1) at (.3,2) {$1$};
          			\node (l2) at (.3,1) {$2$};
			\end{tikzpicture}
&
			 \begin{tikzpicture}[scale=.5,baseline=7ex]
      				\draw[thick, blue, -] (1,1)--(1, 2);
      				\draw[thick, blue, -] (2,1)--(2, 2);
      				\draw[thick, blue, -] (3,1)--(3, 2);
 	 			\gridpoints{2}{3};  
				\node (l1) at (1,2.6) {$1$};
           			\node (l2) at (2,2.6) {$2$};
           			\node (l3) at (3,2.6) {$3$};  
			\end{tikzpicture}
&
			\begin{tikzpicture}[scale=.5,baseline=7ex]
	   			\draw[thick, red, -] (1,1)--(2,1)--(3,2); 
       				\draw[thick, red, -] (1,2)--(2,2)--(3,1); 
       				\gridpoints{2}{3};
				\node (l1) at (.3,2) {$2$};
          			\node (l2) at (.3,1) {$1$};
			\end{tikzpicture}
&
			 \begin{tikzpicture}[scale=.5,baseline=7ex]
      				\draw[thick, blue, -] (1,1)--(2, 2);
      				\draw[thick, blue, -] (1,2)--(2, 1);
      				\draw[thick, blue, -] (3,1)--(3, 2);
 	 			\gridpoints{2}{3};  
				\node (l1) at (1,2.6) {$3$};
           			\node (l2) at (2,2.6) {$2$};
           			\node (l3) at (3,2.6) {$1$};  
			\end{tikzpicture}         
\end{array} 
&
    \begin{tikzpicture}[node distance=1.5cm, baseline={([yshift=-.8ex]current bounding box.center)}]
    \node[state] (q1) {$1$};
    \node[state, below of=q1] (q2) {$2$};
    
    \draw (q1) edge[right, bend left] node{$\begin{array}{c}1\,|\,3\\ 2\,|\,2\end{array}$} (q2);
    \draw (q1) edge[loop left] node{$3\,|\, 1$} (q1);
    
    \draw (q2) edge[left, bend left] node (e) {$\begin{array}{c}1\,|\,2 \\ 2\,|\,3\end{array}$} (q1);
    \draw (q2) edge[loop left] node{$3\,|\, 1$} (q2);

    \node[left=.4cm of e] (A) {$\A_1$};
    \end{tikzpicture}
\end{array}
\]
\end{example}

\begin{example}
\label{ex:runningexample-isotopism-kappa}
Apply the isotopism $\sigma=( \iota_m, (123), \iota_m, \iota_n)$ to the objects in Example~\ref{ex:runningexample1}.
Permute the values in the second component of every element of $\mathcal{Y}$ by $\kappa=(123)$.  
Permute the columns of both $C$ and $R$ by $\kappa^{-1}$.
The permutation $\kappa$ is applied to the indices of the lines in $\mathcal{P}^2$; 
sorting the points into the natural arrangement is done by permuting the columns of all points by $\kappa^{-1}$.  
The transitions of the automaton are $i \xrightarrow{\kappa(j)|\lambda(i,j)} \pi(i,j)$.
\[
\begin{array}{c|c}
    \begin{array}{l}
      \mathcal{Y}_2=\\
      \left\{\!\!\! \begin{array}{ccc}(1,2,1,2) & (1,3,1,3) &  (1,1,2,1)\\ 
                                (2,2,2,3) & (2,3,2,2) & (2,1,1,1) \end{array} \!\!\!\right\}
    \end{array}
  &  
    \begin{array}{ll}
    C_2 = & R_2=\\
    \begin{bmatrix} 2&1&1\\ 1&2&2 \end{bmatrix} & \begin{bmatrix} 1&2&3\\ 1&3&2\end{bmatrix}
    \end{array}
\\
\noalign{\vskip 2pt}
\hline
\noalign{\vskip 2pt}
\begin{array}{cccc}
\mathcal{P}^1_2& \mathcal{P}^2_2& \mathcal{P}^3_2&\mathcal{P}^4_2\\
			\begin{tikzpicture}[scale=.5,baseline=7ex]
	   			\draw[thick, red, -] (1,1)--(2,1)--(3,1); 
       				\draw[thick, red, -] (1,2)--(2,2)--(3,2); 
       				\gridpoints{2}{3};
				\node (l1) at (.3,2) {$1$};
          			\node (l2) at (.3,1) {$2$};
			\end{tikzpicture}
&
			 \begin{tikzpicture}[scale=.5,baseline=7ex]
      				\draw[thick, blue, -] (1,1)--(1, 2);
      				\draw[thick, blue, -] (2,1)--(2, 2);
      				\draw[thick, blue, -] (3,1)--(3, 2);
 	 			\gridpoints{2}{3};  
				\node (l1) at (1,2.6) {$2$};
           			\node (l2) at (2,2.6) {$3$};
           			\node (l3) at (3,2.6) {$1$};  
			\end{tikzpicture}
&
			\begin{tikzpicture}[scale=.5,baseline=7ex]
	   			\draw[thick, red, -] (1,1)--(2,1)--(3,2); 
       				\draw[thick, red, -] (1,2)--(2,2)--(3,1); 
       				\gridpoints{2}{3};
				\node (l1) at (.3,2) {$1$};
          			\node (l2) at (.3,1) {$2$};
			\end{tikzpicture}        
&
 			 \begin{tikzpicture}[scale=.5,baseline=7ex]
      				\draw[thick, blue, -] (1,1)--(2, 2);
      				\draw[thick, blue, -] (1,2)--(2, 1);
      				\draw[thick, blue, -] (3,1)--(3, 2);
 	 			\gridpoints{2}{3};  
				\node (l1) at (1,2.6) {$2$};
           			\node (l2) at (2,2.6) {$3$};
           			\node (l3) at (3,2.6) {$1$};  
			\end{tikzpicture}  
\\
			\begin{tikzpicture}[scale=.5,baseline=7ex]
	   			\draw[thick, red, -] (1,1)--(2,1)--(3,1); 
       				\draw[thick, red, -] (1,2)--(2,2)--(3,2); 
       				\gridpoints{2}{3};
				\node (l1) at (.3,2) {$1$};
          			\node (l2) at (.3,1) {$2$};
			\end{tikzpicture}
&
			 \begin{tikzpicture}[scale=.5,baseline=7ex]
      				\draw[thick, blue, -] (1,1)--(1, 2);
      				\draw[thick, blue, -] (2,1)--(2, 2);
      				\draw[thick, blue, -] (3,1)--(3, 2);
 	 			\gridpoints{2}{3};  
				\node (l1) at (1,2.6) {$1$};
           			\node (l2) at (2,2.6) {$2$};
           			\node (l3) at (3,2.6) {$3$};  
			\end{tikzpicture} 
&
			\begin{tikzpicture}[scale=.5,baseline=7ex]
	   			\draw[thick, red, -] (1,1)--(2,2)--(3,2); 
       				\draw[thick, red, -] (1,2)--(2,1)--(3,1); 
       				\gridpoints{2}{3};
				\node (l1) at (.3,2) {$2$};
          			\node (l2) at (.3,1) {$1$};
			\end{tikzpicture}
&
			\begin{tikzpicture}[scale=.5,baseline=7ex]
      				\draw[thick, blue, -] (3,1)--(2, 2);
      				\draw[thick, blue, -] (3,2)--(2, 1);
      				\draw[thick, blue, -] (1,1)--(1, 2);
 	 			\gridpoints{2}{3};  
				\node (l1) at (1,2.6) {$1$};
           			\node (l2) at (2,2.6) {$2$};
           			\node (l3) at (3,2.6) {$3$};  
			\end{tikzpicture}   
\end{array} 
&
    \begin{tikzpicture}[node distance=1.5cm, baseline={([yshift=-.8ex]current bounding box.center)}]
    \node[state] (q1) {$1$};
    \node[state, below of=q1] (q2) {$2$};
    
    \draw (q1) edge[right, bend left] node{$\begin{array}{c}1\,|\,1\end{array}$} (q2);
    \draw (q1) edge[loop left] node{$2\,|\, 2$} (q1);
    \draw (q1) edge[loop right] node{$3\,|\, 3$}  (q1);
    
    \draw (q2) edge[left, bend left] node (e) {$\begin{array}{c}1\,|\,1\end{array}$} (q1);
    \draw (q2) edge[loop left] node{$3\,|\, 2$} (q2);
    \draw (q2) edge[loop right] node{$2\,|\, 3$}  (q2);

    \node[left=.4cm of e] (A) {$\A_2$};
    \end{tikzpicture}
\end{array}
\]
\end{example}  

\begin{example}
\label{ex:runningexample-isotopism-delta}
Apply the isotopism $\sigma=( \iota_m, \iota_n, (1,2), \iota_n)$ to the objects in Example~\ref{ex:runningexample1}.
Permute the values in the third component of every element of $\mathcal{Y}$ by $\delta=(1,2)$.  
Permute the entries of $C$ by $\delta$.
Permute the indices of the rows in $\mathcal{P}^3$ by $\delta$.
The transitions of the automaton are $i \xrightarrow{j|\lambda(i,j)} \delta(\pi(i,j))$.
\[
\begin{array}{c|c}
    \begin{array}{l}
      \mathcal{Y}_1=\\
      \left\{\!\!\! \begin{array}{ccc} (1,1,2,2) & (1,2,2,3) & (1,3,1,1)\\
                                       (2,1,1,3) & (2,2,1,2) &  (2,3,2,1)\\ \end{array} \!\!\!\right\}
    \end{array}
  &  
    \begin{array}{ll}
    C_3 = & R_3=\\
    \begin{bmatrix} 2&2&1\\1&1&2\end{bmatrix} & \begin{bmatrix} 2&3&1\\3&2&1\end{bmatrix}
    \end{array}
\\
\noalign{\vskip 2pt}
\hline
\noalign{\vskip 2pt}
\begin{array}{cccc}
\mathcal{P}^1_3& \mathcal{P}^2_3&\mathcal{P}^3_3&\mathcal{P}^4_3\\
			\begin{tikzpicture}[scale=.5,baseline=7ex]
	   			\draw[thick, red, -] (1,1)--(2,1)--(3,1); 
       				\draw[thick, red, -] (1,2)--(2,2)--(3,2); 
       				\gridpoints{2}{3};
				\node (l1) at (.3,2) {$1$};
          			\node (l2) at (.3,1) {$2$};
			\end{tikzpicture}
&
			 \begin{tikzpicture}[scale=.5,baseline=7ex]
      				\draw[thick, blue, -] (1,1)--(1, 2);
      				\draw[thick, blue, -] (2,1)--(2, 2);
      				\draw[thick, blue, -] (3,1)--(3, 2);
 	 			\gridpoints{2}{3};  
				\node (l1) at (1,2.6) {$1$};
           			\node (l2) at (2,2.6) {$2$};
           			\node (l3) at (3,2.6) {$3$};  
			\end{tikzpicture}        
&
			\begin{tikzpicture}[scale=.5,baseline=7ex]
	   			\draw[thick, red, -] (1,1)--(2,1)--(3,2); 
       				\draw[thick, red, -] (1,2)--(2,2)--(3,1); 
       				\gridpoints{2}{3};
				\node (l1) at (.3,2) {$2$};
          			\node (l2) at (.3,1) {$1$};
			\end{tikzpicture}
&
			 \begin{tikzpicture}[scale=.5,baseline=7ex]
      				\draw[thick, blue, -] (1,1)--(2, 2);
      				\draw[thick, blue, -] (1,2)--(2, 1);
      				\draw[thick, blue, -] (3,1)--(3, 2);
 	 			\gridpoints{2}{3};  
				\node (l1) at (1,2.6) {$2$};
           			\node (l2) at (2,2.6) {$3$};
           			\node (l3) at (3,2.6) {$1$};  
			\end{tikzpicture}       
\end{array} 
&
    \begin{tikzpicture}[node distance=1.5cm, baseline={([yshift=-.8ex]current bounding box.center)}]
    \node[state] (q1) {$1$};
    \node[state, below of=q1] (q2) {$2$};
    
    \draw (q1) edge[right, bend left] node{$\begin{array}{c}1\,|\,2\\ 2\,|\,3\end{array}$} (q2);
    \draw (q1) edge[loop left] node{$3\,|\, 1$} (q1);
    
    \draw (q2) edge[left, bend left] node (e) {$\begin{array}{c}1\,|\,3 \\ 2\,|\,2\end{array}$} (q1);
    \draw (q2) edge[loop left] node{$3\,|\, 1$} (q2);

    \node[left=.4cm of e] (A) {$\A_3$};
    \end{tikzpicture}
\end{array}
\]
\end{example}

\begin{example}
\label{ex:runningexample-isotopism-epsilon}
Apply the isotopism $\sigma=(\iota_m, \iota_n, \iota_m, (123))$ to the objects in Example~\ref{ex:runningexample1}.
Permute the values in the fourth component of every element of $\mathcal{Y}$ by $\epsilon=(123)$.  
Permute the entries of $R$ by $\epsilon$.
Permute the indices of the columns in $\mathcal{P}^4$ by $\epsilon$.
The transitions of the automaton are $i \xrightarrow{j|\epsilon(\lambda(i,j))} \pi(i,j)$.
\[\!\!
\begin{array}{c|c}
    \begin{array}{l}
      \mathcal{Y}_4=\\
      \left\{\!\!\! \begin{array}{ccc}(1,1,1,3) & (1,2,1,1) &  (1,3,2,2)\\ 
                                (2,1,2,1) & (2,2,2,3) & (2,3,1,2) \end{array} \!\!\!\right\}
    \end{array}
    &
    \begin{array}{ll}
    C_4 = & R_4=\\
    \begin{bmatrix} 1&1&2\\2&2&1\end{bmatrix} & \begin{bmatrix} 3&1&2\\1&3&2\end{bmatrix}
    \end{array}
\\
\noalign{\vskip 2pt}
\hline
\noalign{\vskip 2pt}
\begin{array}{cccc}
\mathcal{P}^4_1& \mathcal{P}^4_2&\mathcal{P}^4_3& \mathcal{P}^4_4\\
\raisebox{0pt}[0pt][0pt]{
			\begin{tikzpicture}[scale=.5,baseline=8ex]
	   			\draw[thick, red, -] (1,1)--(2,1)--(3,1); 
       				\draw[thick, red, -] (1,2)--(2,2)--(3,2); 
       				\gridpoints{2}{3};
				\node (l1) at (.3,2) {$1$};
          			\node (l2) at (.3,1) {$2$};
			\end{tikzpicture}}
&
\raisebox{0pt}[0pt][0pt]{
			 \begin{tikzpicture}[scale=.5,baseline=8ex]
      				\draw[thick, blue, -] (1,1)--(1, 2);
      				\draw[thick, blue, -] (2,1)--(2, 2);
      				\draw[thick, blue, -] (3,1)--(3, 2);
 	 			\gridpoints{2}{3};  
				\node (l1) at (1,2.6) {$1$};
           			\node (l2) at (2,2.6) {$2$};
           			\node (l3) at (3,2.6) {$3$};  
			\end{tikzpicture}}     
&
\raisebox{0pt}[0pt][0pt]{
			\begin{tikzpicture}[scale=.5,baseline=8ex]
	   			\draw[thick, red, -] (1,1)--(2,1)--(3,2); 
       				\draw[thick, red, -] (1,2)--(2,2)--(3,1); 
       				\gridpoints{2}{3};
				\node (l1) at (.3,2) {$1$};
          			\node (l2) at (.3,1) {$2$};
			\end{tikzpicture}}
&
\raisebox{0pt}[0pt][0pt]{
			 \begin{tikzpicture}[scale=.5,baseline=8ex]
      				\draw[thick, blue, -] (1,1)--(2, 2);
      				\draw[thick, blue, -] (1,2)--(2, 1);
      				\draw[thick, blue, -] (3,1)--(3, 2);
 	 			\gridpoints{2}{3};  
				\node (l1) at (1,2.6) {$3$};
           			\node (l2) at (2,2.6) {$1$};
           			\node (l3) at (3,2.6) {$2$};  
			\end{tikzpicture}} 
\end{array}     
&
 \begin{tikzpicture}[node distance=1.5cm, baseline={([yshift=.2ex]current bounding box.center)}]
    \node[state] (q1) {$1$};
    \node[state, below of=q1] (q2) {$2$};
    
    \draw (q1) edge[right, bend left] node{$3\,|\,2$} (q2);
    \draw (q1) edge[loop left] node{$1\,|\, 3$} (q1);
    \draw (q1) edge[loop right] node{$2\,|\, 1$}  (q1);
    \draw (q2) edge[left, bend left] node (e) {$3\,|\,2$} (q1);
    \draw (q2) edge[loop left] node{$1\,|\, 1$} (q2);
    \draw (q2) edge[loop right] node{$2\,|\, 3$}  (q2);

    \node[left=.4cm of e] (A) {$\A'$};
    \end{tikzpicture}
\end{array}    
\]
\end{example}

Write $\mathfrak{I}=\Sym(m)\times \Sym(n)\times \Sym(m)\times \Sym(n)$ viewed as the group of isotopisms of $[m]\times[n]\times[m]\times[n]$.
Write $\mathfrak{P}$ to denote the set of parastrophisms of  $[m]\times[n]\times[m]\times[n]$.
Call a parastrophism of $[m]\times[n]\times[m]\times[n]$ \emph{shape preserving} when its image is in this set,
and let $\mathfrak{P}^\ast$ be the group of shape-preserving parastrophisms.  
Note that
$\mathfrak{P}^\ast=\{\iota, \sharp, \dagger\sharp\dagger, \sharp\dagger\sharp\dagger\}$ 
when $m\not=n$,  and 
$\mathfrak{P}^\ast=\mathfrak{P}$
when $m=n$.

\begin{proposition}
Let ${\mathfrak{S}}$ be the group generated by $\mathfrak{I}$ and ${\mathfrak{P}}^\ast$ in the group of symmetries of $[m]\times[n]\times[m]\times[n]$.  Then $\mathfrak{S}=\mathfrak{I}\rtimes {\mathfrak{P}}^\ast$. 
\end{proposition}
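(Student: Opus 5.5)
Both $\mathfrak{I}$ and $\mathfrak{P}^\ast$ are subgroups of $\Sym([m]\times[n]\times[m]\times[n])$, the ambient group of symmetries. The plan is to verify the three standard conditions for an internal semidirect product: $\mathfrak{I}$ is normalized by $\mathfrak{P}^\ast$, $\mathfrak{I}\cap\mathfrak{P}^\ast$ is trivial, and $\mathfrak{S}=\mathfrak{I}\mathfrak{P}^\ast$.

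\textbf{Normalization (the key computation).} Fix $\flat\in\mathfrak{P}^\ast$, acting by $(y^\flat)_s=y_{\flat(s)}$ as in Subsection~\ref{subsec:parastrophy-basic}. Fix $\sigma=(\sigma_1,\sigma_2,\sigma_3,\sigma_4)\in\mathfrak{I}$, acting by $\sigma(y)_s=\sigma_s(y_s)$. Composing, one gets for every $y$
\[
\bigl(\sigma(y^\flat)\bigr)_s=\sigma_s(y_{\flat(s)}),\qquad \bigl(\sigma'(y)\bigr)^\flat_s=\sigma'_{\flat(s)}(y_{\flat(s)}).
\]
So setting $\sigma'_{\flat(s)}=\sigma_s$, that is $\sigma'=(\sigma_{\flat^{-1}(1)},\dots,\sigma_{\flat^{-1}(4)})$, gives $\sigma\circ\flat=\flat\circ\sigma'$, and hence $\flat^{-1}\sigma\flat\in\mathfrak{I}$.

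The one point to check is that $\sigma'$ is again an isotopism, i.e.\ that $\sigma'_1,\sigma'_3\in\Sym(m)$ and $\sigma'_2,\sigma'_4\in\Sym(n)$. This is exactly where shape preservation is used. When $m\neq n$, each $\flat\in\mathfrak{P}^\ast=\{\iota,\sharp,\dagger\sharp\dagger,\sharp\dagger\sharp\dagger\}$ fixes the parity classes $\{1,3\}$ and $\{2,4\}$ setwise (this is Lemma~\ref{lem:parastrophes}). When $m=n$, all eight factors lie in $\Sym(m)$, so there is nothing to check. Thus $\flat^{-1}\mathfrak{I}\flat=\mathfrak{I}$, so $\mathfrak{I}$ is normal in $\mathfrak{S}$ and $\mathfrak{S}=\mathfrak{I}\mathfrak{P}^\ast$. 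The main subtlety I expect here is bookkeeping: using the right convention, $\flat$ versus $\flat^{-1}$, in the index shift.

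\textbf{Trivial intersection.} Suppose $\flat\in\mathfrak{P}^\ast$ equals an isotopism $\sigma$ as a map. Every isotopism preserves the sets of tuples whose coordinates are all equal to $1$, and more generally it maps coordinate-$s$ level sets to coordinate-$s$ level sets. Test on a tuple such as $y=(1,1,2,2)$, adjusted when $m$ or $n$ equals $1$ (though by convention $m,n\ge 2$ as sizes of state set and alphabet). A nontrivial coordinate permutation $\flat$ moves some coordinate $s$ to $t\ne s$. Then choose $y,z$ agreeing in coordinate $t$ but differing in coordinate $s$, while $\flat(y),\flat(z)$ agree in coordinate $s$. An isotopism cannot do this, since it preserves equality in each coordinate separately. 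Hence $\flat=\iota$, so $\mathfrak{I}\cap\mathfrak{P}^\ast=\{1\}$.

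\textbf{Conclusion.} With normality, trivial intersection, and $\mathfrak{S}=\mathfrak{I}\mathfrak{P}^\ast$ established, $\mathfrak{S}=\mathfrak{I}\rtimes\mathfrak{P}^\ast$, where $\mathfrak{P}^\ast$ acts on $\mathfrak{I}$ by permuting the four factors according to $\flat$. This mirrors the classical result for Latin squares, where the autoparatopism group is isotopisms extended by $\Sym(3)$.
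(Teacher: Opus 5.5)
Your proof is correct and takes essentially the same route as the paper: you show $\mathfrak{I}\cap\mathfrak{P}^\ast$ is trivial, show $\mathfrak{I}$ is normal because conjugating an isotopism by a shape-preserving parastrophism only permutes its four component permutations, and conclude $\mathfrak{S}=\mathfrak{I}\mathfrak{P}^\ast$, making explicit the conjugation formula and the trivial-intersection argument that the paper only sketches. One slip to delete: isotopisms do not in general fix $(1,1,1,1)$ or preserve constant tuples, but your argument never uses that claim, only that isotopisms preserve equality in each coordinate separately, which is correct.
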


\begin{proof}
Observe that $\mathfrak{I}\cap \mathfrak{P}^\ast= \{\iota\}$ (the identity subgroup) 
since they move different parts of $[m]\times[n]\times[m]\times[n]$.
To show that  $\mathfrak{I}\lhd {\mathfrak{S}}$, 
it is enough to show that $p^{-1} s p\in \mathfrak{I}$ for all $p\in  {\mathfrak{P}}^\ast$ and $s\in\mathfrak{I}$.  
Since $p$ is shape preserving, the permutations of values within components are defined after applying $p$.  
Now permuting components, applying permutations to values within the components, and 
then undoing the permutation of components  has the effect of applying permutations to values within the components 
(the net effect is that the permutations of the isotopism are applied in different components). 
Thus, the result is an isotopism.  
Hence, $p^{-1} s p\in \mathfrak{I}$, so  $\mathfrak{I}\lhd {\mathfrak{S}}$.  
Since ${\mathfrak{S}}$ is generated by $\mathfrak{I}$ and $\mathfrak{P}^\ast$, 
every element of $\mathfrak{S}$ is a word in elements of $\mathfrak{I}$ and $\mathfrak{P}^\ast$.  
The normality of $\mathfrak{I}$ in $\mathfrak{S}$ allows the interchange of a product $ps$ 
($s\in\mathfrak{I}$, $p\in\mathfrak{P}^\ast$) for a product $s'p$ for some $s'\in\mathfrak{I}$. 
In particular, $\mathfrak{S}=\mathfrak{I}\mathfrak{P}^\ast$.  
Hence, $\mathfrak{S}=\mathfrak{I}\rtimes \mathfrak{P}^\ast$.
\end{proof}

\subsection{Isotopisms and automata}
\label{subsec:isotopism-automata}

Let $\A=(Q, X, E)$ be a letter transducer, and fix orderings $(q_1, \ldots, q_m)$ of $Q$ and $(x_1, \ldots, x_n)$ of $X$.
An isotopism $\sigma=(\rho,\kappa,\delta, \epsilon)$ of $\mathsf{IA}(m,n)$ induces an action of $\sigma$ on $\A$ by
$\sigma(\A) = (Q, X, \sigma(E))$, where 
    $\sigma(E) = \{(q_{\rho(i)}, x_{\kappa(j)}, r_{\delta(k)}, y_{\epsilon(\ell)})  : (q_i,x_j,r_k,y_\ell)\in E\}$.

\begin{theorem}
\label{thm:isotopism_and_automata}
Let $\A=(Q, X, E)$ be a letter transducer, and fix orderings $(q_1, \ldots, q_m)$ of $Q$ and $(x_1, \ldots, x_n)$ of $X$.
\begin{enumerate}
\item Let $\rho\in\Sym(m)$, and let $\sigma=(\rho, \iota_n, \rho, \iota_n)$.
Then $\sigma(\A)$ is related to $\A$ by permuting the set of states $Q$ of $\A$ with $\rho$. 

\item Let $\kappa\in\Sym(n)$, and let $\sigma=(\iota_m, \kappa,  \iota_m, \kappa)$.
Then $\sigma(\A)$ is related to $\A$ by permuting the alphabet $X$ of $\A$ with $\kappa$. 
\end{enumerate}
\end{theorem}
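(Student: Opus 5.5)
The proof will be a direct unwinding of definitions: compare the transition set $\sigma(E)$ with the transition set obtained from $\A$ by renaming states (respectively letters). First I make precise what ``permuting the set of states $Q$ by $\rho$'' means. It is the letter transducer $\A^\rho=(Q,X,E^\rho)$ in which state $q_i$ is relabeled $q_{\rho(i)}$. Concretely, $E^\rho=\{(q_{\rho(i)},x_j,q_{\rho(k)},x_\ell) : (q_i,x_j,q_k,x_\ell)\in E\}$. In the same way, ``permuting $X$ by $\kappa$'' gives $E^\kappa=\{(q_i,x_{\kappa(j)},q_k,x_{\kappa(\ell)}) : (q_i,x_j,q_k,x_\ell)\in E\}$. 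Equivalently, $\A^\rho$ is the image of $\A$ under the bijection $q_i\mapsto q_{\rho(i)}$ of $Q$, applied simultaneously to the source and target of every transition. The relabeled transducer is isomorphic to $\A$: its transition diagram is the same graph with renamed vertices, and each initial transducer $\A^\rho_{q_{\rho(i)}}$ defines the same rational relation as $\A_{q_i}$.

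For (i), I will substitute $\sigma=(\rho,\iota_n,\rho,\iota_n)$ into the definition of the induced action given just before the theorem. This yields $\sigma(E)=\{(q_{\rho(i)},x_j,q_{\rho(k)},x_\ell) : (q_i,x_j,q_k,x_\ell)\in E\}$, which is exactly $E^\rho$. At the level of interleaved arrays this says $\sigma(\mathcal{Y}_\A)=\mathcal{Y}_{\A^\rho}$. By the bijection of Proposition~\ref{prop:FA-EFLT-bijection}, the transducer associated with $\sigma(\mathcal{Y}_\A)$ is therefore $\A^\rho$. Part (ii) is identical with $\sigma=(\iota_m,\kappa,\iota_m,\kappa)$: it gives $\sigma(E)=E^\kappa$.

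For Mealy automata I will also state the result in the language of Theorem~\ref{thm:isotopy}(i). There the formulas become $\pi^\sigma(\rho(i),j)=\rho(\pi(i,j))$ and $\lambda^\sigma(\rho(i),j)=\lambda(i,j)$. These say exactly that $\rho$ is an isomorphism of automata from $\A$ to $\sigma(\A)$, so the generated (semi)groups are the same up to renaming of generators. For letters, the corresponding formulas say $\kappa$ conjugates the action on $X^*$.

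The only real obstacle is conceptual rather than computational: the statement leaves ``permuting the states'' implicit. I need to fix the convention that the same permutation acts on both source and target (components 1 and 3), and likewise on input and output (components 2 and 4). It is this simultaneous action that forces $\delta=\rho$ and $\epsilon=\kappa$. A lone $\rho$ in the first component would not be a relabeling; it would change the automaton, as in Example~\ref{ex:runningexample-isotopism-rho}. Once that convention is fixed, the proof is the one-line substitution above.
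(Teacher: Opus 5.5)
Your proposal is correct and matches the paper's proof, which is the one-line substitution showing that each transition $(q_i,x_j,q_k,x_\ell)$ is sent to $(q_{\rho(i)},x_j,q_{\rho(k)},x_\ell)$ in (i) and to $(q_i,x_{\kappa(j)},q_k,x_{\kappa(\ell)})$ in (ii). Your explicit fixing of what ``permuting the states'' means, and your remarks on Mealy automata, go beyond the paper's proof but are accurate.
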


\begin{proof}
In (i), each transition      $(q_i,x_j,r_k,y_\ell)$
      is transformed into    $(q_{\rho(i)}, x_{j}, r_{\rho(k)}, y_{\ell})$.
In (ii), each transition     $(q_i,x_j,r_k,y_\ell)$ 
      is transformed into    $(q_i,x_{\kappa(j)},r_k,y_{\kappa(\ell)})$.         
\end{proof}

\begin{corollary}
Permuting the states  or letters of the alphabet  of a letter transducer corresponds to an isotopism of the associated interleaved array.
\end{corollary}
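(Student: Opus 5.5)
The plan is to reduce the corollary to Theorem~\ref{thm:isotopism_and_automata} together with the bijection of Proposition~\ref{prop:FA-EFLT-bijection}.

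First, fix orderings $(q_1,\ldots,q_m)$ of $Q$ and $(x_1,\ldots,x_n)$ of $X$. A permutation of the states is identified with some $\rho\in\Sym(m)$ acting on indices. A permutation of the letters is identified with some $\kappa\in\Sym(n)$.

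Relabeling $\A$ by $\rho$ replaces each transition $(q_i,x_j,q_k,x_\ell)\in E$ by $(q_{\rho(i)},x_j,q_{\rho(k)},x_\ell)$. Under the correspondence of Proposition~\ref{prop:FA-EFLT-bijection}, the associated interleaved array then consists of the tuples $(\rho(i),j,\rho(k),\ell)$ for $(i,j,k,\ell)\in\mathcal{Y}_\A$. This is exactly $\sigma(\mathcal{Y}_\A)$ for the isotopism $\sigma=(\rho,\iota_n,\rho,\iota_n)$ of Definition~\ref{def:isotopismFA}, which is part (i) of Theorem~\ref{thm:isotopism_and_automata}.

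In the same way, part (ii) of that theorem shows that relabeling the letters by $\kappa$ corresponds to the isotopism $(\iota_m,\kappa,\iota_m,\kappa)$.

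A simultaneous permutation of states and letters is the composite of these two relabelings. The corresponding isotopisms commute and multiply to $(\rho,\kappa,\rho,\kappa)$, using the componentwise decomposition lemma on isotopisms. So the combined operation is again an isotopism.

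The only point needing care is the identification of the two notions of ``permuting.'' Relabeling means transporting the transition set $E$ along the bijection on $Q$ (respectively $X$). It must be applied consistently to both the source and target state, and to both the input and output letter; this is exactly why $\rho$ appears in components $1$ and $3$, and $\kappa$ in components $2$ and $4$. Once that convention is fixed, the corollary is immediate from Theorem~\ref{thm:isotopism_and_automata}, so I expect no real obstacle.
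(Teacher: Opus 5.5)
Your proposal is correct and follows the paper's route: the corollary is read off directly from Theorem~\ref{thm:isotopism_and_automata}, whose proof is exactly your componentwise check that relabeling by $\rho$ (resp.\ $\kappa$) sends each transition $(q_i,x_j,q_k,x_\ell)$ to $(q_{\rho(i)},x_j,q_{\rho(k)},x_\ell)$ (resp.\ $(q_i,x_{\kappa(j)},q_k,x_{\kappa(\ell)})$). Your remark that a simultaneous relabeling gives the isotopism $(\rho,\kappa,\rho,\kappa)$ is not needed for the statement, but it is correct.
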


It follows from the above corollary that the isotopisms  $(\rho, \iota_n, \rho, \iota_n)$ and $(\iota_m, \kappa,  \iota_m, \kappa)$ do not change the (semi)group structure of the associated automaton (semi)group. These isotopisms were referred to as \emph{automata symmetries} in the paper~\cite{bondarenko_gkmnss:full_clas32_short} studying and partially classifying groups generated by an invertible Mealy (3,2)-automata. 

In general, isotopisms do not respect automaton structure. The automata in Examples~\ref{ex:runningexample-isotopism-rho},~\ref{ex:runningexample-isotopism-kappa},~\ref{ex:runningexample-isotopism-delta} are not isomorphic to the automaton in Example~\ref{ex:runningexample1}.  It is merely coincidence that of Example~\ref{ex:runningexample-isotopism-epsilon} is; however, this does illustrate that this possibility can arise.  For enumerative purposes, it is helpful to find isotopy class representatives and compute the stabilizer of the representative in the isotopy group. 
The isotopisms $((1,2), \iota_n, (1,2), (2,3))$ and $(\iota_m, (1,2), \iota_m, (2,3))$ stabilize the objects in Example~\ref{ex:runningexample1}.

It is well-known that a Latin square can be viewed as the Cayley table of a quasigroup~\cite[Section 1.3]{Denes+Keedwell-LSa_SecondEdition_2015}.
In general, an isotopism does not preserve the quasigroup structure.  
While arising from a very different construction, automaton groups associated with isotopic automata are generally not isomorphic, as shown in the example below. 
For more on quasigroups, see, for example,~\cite{SmithRomanowsak:PMA,Smith:IQGR}.

\begin{example}
\label{ex:bellaterra_aleshin}
Let $\A$ be the Aleshin bireversible (3,2)-automaton generating the free group $F_3$ of rank $3$ (see~\cite{vorobets:aleshin}). Then applying the isotopism $\bigl((1,2),(1,2),\iota_3,\iota_2\bigr)$ to the interleaved array corresponding to $\A$ produces the interleaved array corresponding to another bireversible automaton, called the Bellaterra automaton $\mathcal B_3$, which generates the free product $(\Z/2\Z)*(\Z/2\Z)*(\Z/2\Z)$ of three groups of order 2 (as proved by Muntyan and the second author,~\cite{nekrash:self-similar}). See Figure~\ref{fig:aleshin_bellaterra} for details.
\end{example}

\begin{figure}[ht]
    \centering

\begin{tikzpicture}
  \matrix[ampersand replacement=\&, row sep=1cm, column sep=0.6cm] {
     \node[blocknf] (groupA) {\fbox{$\mathbb{G}(\A)\cong F_3$}}; \& 
 \& \& 
 %
    \node[blocknf] (groupB) {\fbox{$\mathbb{G}(\mathcal B_3)\cong (\Z/2\Z)*(\Z/2\Z)*(\Z/2\Z)$}};
\\
  %
    \node[blocknf] (autA) { \fbox{
    \begin{tikzpicture}[node distance=3cm, baseline={([yshift=-.8ex]current bounding box.center)}]
    \node[state] (q1) {$1$};
    \node[state, below left of=q1,xshift=13pt,yshift=-20pt] (q2) {$2$};
    \node[state, below right of=q1,xshift=-13pt,yshift=-20pt] (q3) {$3$};

    \draw (q1) edge [out=290,in=130]  node[left]{$2\,|\, 1$} (q3);
    \draw (q1) edge node[left]{$1\,|\, 2$} (q2);
    
    \draw (q2) edge node[above,yshift=-2pt]{$1\,|\,2$} (q3);
    \draw (q2) edge [loop left] node[below,xshift=2pt,yshift=-3pt]{$2\,|\, 1$}  (q2);

    \draw (q3) edge [out=110,in=310] node[right,xshift=-5pt,yshift=7pt]{$1\,|\,1$} node[right,xshift=4pt,yshift=-7pt]{$2\,|\,2$} (q1);
    
    \node[left=.9cm of q1] (A) {$\A$};
    
    \end{tikzpicture}
     }}; \& 
 \& \&
    \node[blocknf] (autB) {  \fbox{
    \begin{tikzpicture}[node distance=3cm, baseline={([yshift=-.8ex]current bounding box.center)}]
    \node[state] (q1) {$1$};
    \node[state, below left of=q1,xshift=13pt,yshift=-20pt] (q2) {$2$};
    \node[state, below right of=q1,xshift=-13pt,yshift=-20pt] (q3) {$3$};

    \draw (q1) edge [out=290,in=130]  node[left]{$2\,|\, 2$} (q3);
    \draw (q1) edge node[left]{$1\,|\, 1$} (q2);
    
    \draw (q2) edge node[above,yshift=-2pt]{$1\,|\,1$} (q3);
    \draw (q2) edge [loop left] node[below,xshift=2pt,yshift=-3pt]{$2\,|\, 2$}  (q2);

    \draw (q3) edge [out=110,in=310] node[right,xshift=-5pt,yshift=7pt]{$1\,|\,2$} node[right,xshift=4pt,yshift=-7pt]{$2\,|\,1$} (q1);
    
    \node[left=.9cm of q1] (A) {$\mathcal B_3$};
    
    \end{tikzpicture}
     }};\\
 %
     \node[blocknf] (arrA) {\fbox{$ \begin{array}{l}
      \mathcal{Y}_\A=\\
      \left\{\!\!\!\begin{array}{c} 
      ( 1, 1, 2, 2 ), ( 1, 2, 3, 1 ),\\ 
      (2, 1, 3, 2 ), ( 2, 2, 2, 1 ),\\ 
      ( 3, 1, 1, 1 ), ( 3, 2, 1, 2 )\end{array} \!\!\!\right\}                            
    \end{array}$ }}; \& 
 \& \& 
 %
    \node[blocknf] (arrB) {\fbox{$ \begin{array}{l}
      \mathcal{Y}_{\mathcal B_3}=\\
      \left\{\!\!\!\begin{array}{c} 
      ( 2, 2, 2, 2 ), ( 2, 1, 3, 1 ),\\ 
      ( 1, 2, 3, 2 ), ( 1, 1, 2, 1 ),\\
      ( 3, 2, 1, 1 ), ( 3, 1, 1, 2 )\end{array} \!\!\!\right\}                            
    \end{array}$ }};
\\
  };

\draw[connector,<->] (groupA) --  (autA);

\draw[connector,<->] (groupB) --  (autB);

\draw[connector,<->] (autA) --  (arrA);

\draw[connector,<->] (arrA) -- node[midway, above]{Isotopism} node[midway, below]{$\bigl((1,2),(1,2),\iota_3,\iota_2\bigr)$}(arrB); 

\draw[connector,<->] (arrB) --  (autB);

\end{tikzpicture}
    \caption{Isotopism between the Aleshin automaton $\A$ and the Bellaterra automaton $\mathcal B_3$.}
    \label{fig:aleshin_bellaterra}
\end{figure}

\addcontentsline{toc}{section}{References}
\bibliographystyle{alpha}
\bibliography{mylib}


\newpage
\section*{Notation}
\addcontentsline{toc}{section}{Notation}

We adopt some general notational conventions involving font choice:
\begin{itemize}
\setlength{\itemsep}{0pt}
  \setlength{\parskip}{0pt}
\item $\mathsf{Math\ Sans\ Serif}$ for families of objects
\item $\mathcal{MATH\ CALIGRAPHIC}$ for a single such family (and a few groups of permutations)
\item $MATH\ UPPER\ CASE$ for sub-objects when they are sets or matrices 
\item $math\ lower\ case$ for numbers/tuples of numbers/elements of sets
\item Greek letters are generally used for functions
\end{itemize}

\smallskip

\begin{tabular}{|l|l|ll}
\hline
\bf \large General & \\ \hline
$[d]$& the set $\{1,2,\ldots, d\}$\\
$m$, $n$ & fixed positive integers\\
$s$, $t$ & elements of $[4]$\\
$i$, $k$ & elements of $[m]$\\
$j$, $\ell$ & elements of $[n]$\\
\hline
\hline
\bf \large Functions & \\ \hline
$\mathsf{F}_d=\{f\colon[d]\rightarrow[d]\}$& monoid of functions from $[d]$ to $[d]$ with composition\\
$\mathsf{F}_d^e$ &  direct product of $\mathsf{F}_d$ with itself $e$ times \\
$\Sym(d)$ & symmetric group on $[d]$\\
$\Sym(d)^e$ & direct product of symmetric group on $[d]$ $e$-many times\\

\hline
\hline
\bf \large Automata & \\ \hline
$\mathsf{MA}(m,n)$ & set of $(m,n)$-automata (states $[m]$, letters $[n]$)\\
\quad $\A=(Q, X, \pi,\lambda)$ & Mealy automaton \\
        & \qquad
			\begin{tabular}{lll} 
               $Q$       & states, fixed order  & $\{q_1, \ldots q_m\}$ \\
			   $X$       & letters, fixed order & $\{x_1, \ldots x_n\}$ \\
			   $\pi$     & transition function  & $\pi\colon Q\times X\rightarrow Q$\\
			   $\lambda$ & output function      & $\lambda\colon Q\times X\rightarrow X$
            \end{tabular}\\
$\iz\A$ & Inverse of $\A$\\
$\dz\A$ & Dual of $\A$\\
\hline
\hline
\bf \large Arrays & \\ \hline
$\mathsf{IA}(m,n)$  &  $m\times n$ interleaved arrays \\ 
                    &  \quad subset of $[m]\times[n]\times[m]\times[n]$\\
$\mathsf{GA}(m,n)$  &  $m\times n$  grid arrays \\ 
                    &  \quad interleaved arrays of size $mn$ with property $\mathsf{U}(12)$.\\
\quad $\mathcal{Y} = \{ (i,j,k,\ell)\}$ & interleaved array (``Y'' is the last letter of array)\\
\quad $y_s$         & entry in position $s$ of $y\in \mathcal{Y}$\\
\quad $\mathcal{Y}_\take{st}$ & Restriction of elements of $\mathcal{Y}$ to entries $s$, $t$ \\
$\un{st}$ & Property of interleaved arrays: \\
   & pairs of entries in components $s$, $t$ cover $[m]\times[n]$ or $[n]\times[m]$.\\
\hline
\end{tabular}

\begin{tabular}{|l|l|ll}

\hline
\bf \large Partitions & \\ \hline
$\mathsf{P}(E,d)$  & Partitions  of $E$ into $d$-many cells, some possibly empty \\
$\mathsf{R}(S,m,n)$ & Rete \\
\hline
\bf \large Rete & \\ \hline
$\mathcal{P}$ & Rete\\
			&$\begin{array}{c|c|c|c} \mathcal{P}^1 & \mathcal{P}^2 & \mathcal{P}^3 &  \mathcal{P}^4 \\
			            \{P^1_i\}_{i\in[m]} &\{P^2_j\}_{j\in[n]} &\{P^3_k\}_{k\in[m]} & \{P^4_\ell\}_{\ell\in[n]}
               \end{array}$\\
\quad $\perp$ & orthogonal in the sense of collections of subsets     \\         
\hline
\hline
\bf \large Matrices & \\ \hline
$\mathsf{M}_{m\times n}(E)$ & $m\times n$, entries from $E$\\
$\mathsf{CL}(m,n)$  & Column-Latin matrices \\
$\mathsf{RL}(m,n)$  & Row-Latin matrices \\
$\mathsf{MM}(m,n)$  & Matched matrices $\mathsf{M}_{m\times n}([m])\times\mathsf{M}_{m\times n}([n])$\\
$\mathsf{CP}(m,n)$  & Cooperative pairs \\
\quad $(C,R)$  & matched pair of matrices\\
\quad $C^{(j)}$ & function defined by column $j$ of $C$\\
\quad $R_{(i)}$ & function defined by row $i$ of $R$\\

\hline
\bf \large Matrix Operations & \\ \hline
  $ M^T$ & transpose of $M$\\
  $\colmult$ & column composition product of $\mathsf{M}_{m\times n}([m])$\\
  \quad $\ic$ & inverse for $\colmult$\\
  \quad $\Idc$ & Identity for $\colmult$\\
  $\rowmult$ & row  composition product of $\mathsf{M}_{m\times n}([n])$\\
  \quad $\ir$ & inverse for $\rowmult$\\
  \quad $\Idc$ & Identity for $\rowmult$\\
  $\mmmult$ & product of $\mathsf{MM}(m,n)$:
$(C_1,R_1) \mmmult  (C_2,R_2)=(C_1 \colmult  C_2,  R_1\rowmult R_2)$\\
  $\ltacti$ & left action: columns  of $\mathsf{M}_{m\times n}([m])$ on $\mathsf{M}_{m\times n}([n])$\\
  $\rtacti$ & right action: rows of $\mathsf{M}_{m\times n}([n])$ on $\mathsf{M}_{m\times n}([m])$\\
  $\ltact$   &  left action: columns of $\mathsf{CL}(m,n)$ on $\mathsf{M}_{m\times n}([n])$\\
  $\rtact$   & right action: rows of $\mathsf{RL}(m.n)$ on $\mathsf{M}_{m\times n}([m])$\\

\hline

\end{tabular}

\begin{tabular}{|l|l|ll}
\hline

\bf \large Parastrophy & \\ \hline
$\mathfrak{P}$ & set of parastrophisms\\
$\mathfrak{P}^\ast$ & set of order preseverving parastrophies\\
\quad $\sharp$ & permute components $(2,4)$\\
\quad $\dag$    & permute components $(1,2)(3,4)$\\
\quad $\mathcal{Y}^\flat$ & apply parastrophism $\flat$ to every element of $\mathcal{Y}$\\
\hline
\hline
\bf \large Isotopy & \\ \hline
$\mathfrak{I}$ & set of isotopisms ($m$, $n$ implicit)\\
\quad $\sigma=(\rho,\kappa, \delta,\epsilon)$ & isotopism\\
 & \begin{tabular}{ll}
     $\rho$ & permutation of $[m]$/first component of $y$  \\
     $\kappa$ &  permutation of $[n]$/second component of $y$ \\
     $\delta$ &  permutation of $[m]$/third component of $y$ \\
     $\epsilon$ &  permutation of $[n]$/fourth component of $y$ 
 \end{tabular}\\
\hline 
\end{tabular}

\end{document}